\def\@p#1{\mathrel{\ooalign{\hfil$\mapstochar\mkern
      5mu$\hfil\cr$#1$}}}
\def \pfun      {\@p\fun}
\let \fun       \rightarrow
\newcommand{\comp}{\circ}
\newcommand{\lfun}{\longrightarrow}
\newcommand{\imp}{\implies}
\newcommand{\defs}{\triangleq}
\newcommand{\num}{\mathbb{Z}}
\newcommand{\Set}{\mathit{set}}
\newcommand{\set}[2]{\{#1 \sqcup #2\}}
\renewcommand{\Cup}{\mathit{un}}
\newcommand{\Ncup}{\mathit{nun}}
\newcommand{\Pfun}{\mathit{pfun}}
\newcommand{\disj}{\parallel}
\newcommand{\Disj}{\mathit{disj}}
\newcommand{\Ndisj}{\not\disj}
\newcommand{\Diff}{\mathit{diff}}
\renewcommand{\Cap}{\mathit{inters}}
\newcommand{\Ncap}{\mathit{ninters}}
\newcommand{\Ndiff}{\mathit{ndiff}}
\newcommand{\Apply}{\mathit{apply}}
\newcommand{\nat}{\mathbb{N}}
\renewcommand{\mod}{\mathbin{\mathrm{mod}}}
\newcommand{\true}{\mathit{true}}
\newcommand{\false}{\mathit{false}}
\newcommand{\Comp}{\mathit{comp}}
\newcommand{\Is}{\mathbin{\mathsf{is}}}
\newcommand{\Neq}{\mathbin{\mathsf{neq}}}
\newcommand{\In}{\mathbin{\mathsf{in}}}
\newcommand{\Nin}{\mathbin{\mathsf{nin}}}
\newcommand{\Div}{\mathbin{\mathsf{div}}}
\newcommand{\Mod}{\mathbin{\mathsf{mod}}}
\newcommand{\Ris}{\mathsf{ris}}
\newcommand{\Forall}{\mathsf{foreach}}
\newcommand{\Int}{\mathsf{int}}
\renewcommand{\And}{\mathbin{\&}}
\newcommand{\Or}{\mathbin{\mathsf{or}}}
\newcommand{\Ncomp}{\mathit{ncomp}}
\newcommand{\setlog}{$\{log\}$\xspace}
\newcommand{\CLPSET}{CLP($\mathcal{SET}$)\xspace}
\newcommand{\CLPFD}{CLP($\mathcal{FD}$)\xspace}
\newcommand{\CLPRIS}{$\LRIS$\xspace}
\newcommand{\SATRIS}{\mathit{SAT}_\mathcal{RIS}}
\newcommand{\SATX}{\mathit{SAT}_\mathcal{X}}
\newcommand{\LRIS}{\mathcal{L}_\mathcal{RIS}}
\newcommand{\LX}{\mathcal{L}_\mathcal{X}}
\newcommand{\CRIS}{\mathcal{C}_\mathcal{RIS}}
\newcommand{\TRIS}{\mathcal{T}_\mathcal{RIS}}
\newcommand{\TX}{\mathcal{T}_\mathcal{X}}
\newcommand{\Var}{\mathcal{V}}
\newcommand{\sU}{\mathsf{X}}
\newcommand{\sSet}{\mathsf{Set}}
\newcommand{\Ur}{\mathcal{X}}
\newcommand{\FUr}{\mathcal{F}_\Ur}
\newcommand{\sS}{\mathsf{S}}
\newcommand{\sB}{\mathsf{Bool}}
\newcommand{\FRIS}{\Phi_\mathcal{RIS}}
\newcommand{\FX}{\Phi_\mathcal{X}}
\newcommand{\PiSB}{\Pi_\mathcal{S}}
\newcommand{\isx}{\mathit{isX}}
\newcommand{\iS}{\mathcal{R}}
\newcommand{\iF}[1]{(#1)^\iS}
\newcommand{\plus}{\mathbin{\scriptstyle\sqcup}}
\newcommand{\ww}{\{\cdot \plus \cdot\}}
\newcommand{\e}{\emptyset}
\newcommand{\cross}{\mathsf{\:x\:}}
\newcommand{\arr}{\texttt{\:->\:}}
\newcommand{\risnopattern}[3]{\{ #1 : #2 | #3\}}
\newcommand{\risnocp}[2]{\{ #1 | #2\}}
\newcommand{\riss}[3]{\{ #1 | #2 @ #3\}}
\newcommand{\defris}[1]{\riss{#1}{\Fpv}{\Ppv}}
\newcommand{\defrisinit}[0]{\defris{\set{d}{D}}}
\newcommand{\svf}{\mathcal{S}}
\newcommand{\cvf}{\mathcal{C}}
\newcommand{\idom}{\mathcal{D}}
\newcommand{\flt}{\phi}   % Filter
\newcommand{\ptt}{u}      % Pattern
\newcommand{\ct}{c}       % Control term
\newcommand{\vp}{\vec{p}}
\newcommand{\vv}{\vec{v}}
\newcommand{\Fpv}{\phi}
\newcommand{\Ppv}{u}
\newcommand{\Gpv}{\gamma}
\newcommand{\Qpv}{v}
\newcommand{\F}{\Fpv}
\renewcommand{\P}{\Ppv}
\newcommand{\Q}{\Qpv}
\newcommand{\G}{\Gpv}
\newcommand{\Fd}{\Fpv(d)}
\newcommand{\Fz}{\Fpv(x)}
\newcommand{\Gz}{G(x,\vv)}
\newcommand{\Pd}{\Ppv(d)}
\newcommand{\Ge}{G(e,\vv)}
\newcommand{\Qe}{Q(e,\vv)}
\newcommand{\Fdd}{\Fpv(d)}
\newcommand{\nFdd}{\Fpv(d)}
\newcommand{\Pdd}{\Ppv(d)}
\newcommand{\Pz}{\Ppv(x)}
\newcommand{\Qz}{Q(x,\vv)}
\newcommand{\bi}[1]{\textbf{\textit{#1}}}
\newcommand{\why}[1]{\tag*{{\footnotesize [by #1]}}}
\newcommand{\by}[1]{{\footnotesize [by #1]}}
\renewcommand{\iff}{\Leftrightarrow}
\newcommand{\q}{\text{\normalfont\'{}}}
\newcommand{\ql}{\hspace{5pt}\text{\normalfont\'{}}}
\newcommand{\qr}{\text{\normalfont\'{}}\hspace{5pt}}
\newcommand{\Applysl}{\mathsf{apply}}
\newcommand{\True}{\mathsf{true}}
\newcommand{\False}{\mathsf{false}}
\newcommand{\Notinsl}{\mathbin{\mathsf{nin}}}
\newcommand{\Cupsl}{\mathsf{un}}
\newcommand{\Ncupsl}{\mathsf{un}}
\newcommand{\Disjsl}{\mathsf{disj}}
\newcommand{\Ndisjsl}{\mathsf{ndisj}}
\newcommand{\Neqsl}{\mathbin{\mathsf{neq}}}
\newcommand{\type}{\mathcal{T}}
\newcommand{\stype}{\mathsf{S}}
\newcommand{\ptype}{\mathsf{P}}
\newcommand{\utype}{\mathsf{U}}
\newcommand{\hlg}[1]{\sethlcolor{green}\hl{#1}}
\definecolor{formula}{gray}{0.9}
\title{Automated Reasoning with Restricted Intensional Sets}
\author{%
MAXIMILIANO CRISTI\'{A} \affil{Universidad Nacional de Rosario and CIFASIS, Argentina}
GIANFRANCO ROSSI \affil{Universit\`a di Parma, Italy}}
\begin{abstract}
Intensional sets, i.e., sets given by a property rather than by
enumerating elements, are widely recognized as a key feature to describe
complex problems (see, e.g., specification languages such as B and Z).
Notwithstanding, very few tools exist
%providing the ability to perform
supporting high-level automated reasoning on general formulas involving
intensional sets. In this paper we present a decision procedure for a
first-order logic language offering both extensional and (a restricted form of)
intensional sets (RIS). RIS are introduced as first-class citizens of the %logic
language and set-theoretical operators on RIS are dealt with as
constraints. % in that language.
Syntactic restrictions on %the form of
RIS guarantee that the denoted sets are finite, though unbounded. The language
of RIS, called $\LRIS$, is parametric with respect to any first-order theory
$\mathcal{X}$ providing at least equality and a decision procedure %to check for satisfiability of
for $\mathcal{X}$-formulas. In particular, we consider the instance of $\LRIS$
when $\mathcal{X}$ is the theory of hereditarily finite sets
%as is supported by the language \CLPSET.
and binary relations. %described in a previous paper.
We also present a working implementation of this instance as part of the
\setlog tool and we show through a number of examples and two case studies
that, although RIS are a subclass of general intensional sets, they are still
sufficiently expressive as to encode and solve many interesting problems.
Finally, an extensive empirical evaluation provides evidence that the tool can
be used in practice.
\end{abstract}
\keywords{constraint programming, declarative programming, intensional sets, set comprehensions, set theory}
\begin{document}
\begin{bottomstuff}
Authors' addresses:
M. Cristi\'a, Universidad Nacional de Rosario and CIFASIS, Pellegrini 250, 2000 Rosario, Argentina; email: \url{cristia@cifasis-conicet.gov.ar}
; G. Rossi, Universit\`a di Parma, Dip. di Matematica, Via M.
D'Azeglio, 85/A, 43100 Parma, Italy; email: \url{gianfranco.rossi@unipr.it}
\end{bottomstuff}

\maketitle

\section{Introduction}\label{sec:intro}

In the practice of mathematics very often a set is denoted by providing a
property that the elements must satisfy, rather than by explicitly enumerating
all its elements. This is usually achieved by the well-known mathematical
notation
\[
\{\,x\,|\:\varphi(x)\,\}
\]
where $x$ is a variable (the \emph{control variable}) and $\varphi$
is a first-order formula containing $x$\/. The logical meaning of
such notation is the set $S$ such that $\forall x (x \in  S \iff
\varphi(x))$, that is, the set of all instances of $x$ for which
$\varphi(x)$ holds. Sets defined by properties are also known as
\emph{set comprehensions} or \emph{set abstractions} or
\emph{intensional defined sets}; hereafter, we will refer to such
sets simply as \emph{intensional sets}.

Intensional sets are widely recognized as a key feature to describe
complex problems, possibly leading to more readable and compact
programs than those based on conventional data abstractions. As a
matter of fact, various specification or modeling languages provide
intensional sets as first-class entities. For example, some form of
intensional sets are available in notations such as Z
\cite{Woodcock00} and B \cite{schneider2001b}, Alloy
\cite{Jackson00}, MiniZinc \cite{DBLP:conf/cp/NethercoteSBBDT07},
and ProB \cite{Leuschel00}.  Also, a few
programming languages support intensional sets. Among them, SETL
\cite{DBLP:books/daglib/0067831}, Python, and the logic programming
language G\"odel \cite{DBLP:books/daglib/0095081}.
%\footnote{Conventional Prolog systems provide some facilities for
%intensional sets in the form of the {\tt setof} built-in predicate.}

However, as far as we know, none of these proposals
provides the ability to perform high-level automated reasoning on
general formulas involving intensional sets. No direct support for
reasoning about intensional sets seem to be readily available even
in state-of-the-art satisfiability solvers, as noted, for instance,
by Lam and Cervesato with reference to SMT solvers
\cite{DBLP:conf/smt/LamC14}. Generally speaking, such reasoning
capabilities would be of great interest to several communities, such
as programming, constraint solving, automated theorem proving, and
formal verification.

Some form of automated reasoning about intensional
sets is provided by (Constraint) Logic Programming (CLP) languages
with sets, such as LDL \cite{DBLP:journals/jlp/BeeriNST91} and CLP(SET) \cite{Dovier00}.
The processing of intensional
sets in these proposals is based on the availability of a
\emph{set-grouping} mechanism capable of collecting into an
extensional set all the elements satisfying the property
characterizing the given intensional definition. Restrictions are
often put on the form of the collected sets, such as non-emptiness
and/or groundness. Allowing more general forms of set-grouping,
however, would require the ability to deal with logical negation in
a rather non-trivial way (see
\cite{DBLP:conf/iclp/BruscoliDPR94,DBLP:journals/ngc/DovierPR01} for
an analysis of the relationship between intensional sets and
negation).

Actually, set-grouping is not always necessary to deal with
intensional sets, and sometimes it is not desirable, at all. For
instance, given the formula $t \in \{x | \varphi\}$, one could check
whether it is satisfiable or not by simply checking satisfiability
of $\varphi[x \mapsto t]$, i.e., of the instance of $\varphi$ which
is obtained by substituting $x$ by $t$.

In this paper, we present a complete constraint solver which can act as a
decision procedure for an important fragment of a first-order logic language
offering both extensional and intensional sets. Intensional sets are introduced
as first-class citizens of the logic language, and set-theoretical
operators on intensional sets, such as membership, union, etc., are dealt with
as \emph{constraints} in that language. Complex formulas involving intensional
sets are processed and solved by a suitable constraint solver by using a sort
of \emph{lazy partial evaluation}. That is, an intensional set $S$ is treated
as a block until it is necessary to identify one of its elements, say $d$. When
that happens, $S$ is transformed into an extensional set of the form $\{d\}
\cup R_S$, where $R_S$ is the ``rest'' of $S$. At this point, classic set
constraint rewriting (in particular set unification) is applied.

A similar approach, based on intensional set constraint solving, has
been proposed in \cite{DBLP:conf/iclp/DovierPR03}. Differently from
that work, however, we avoid problems arising if general intensional
sets are allowed (namely, the use of logical negation and possibly
infinite sets), by considering a narrower form of intensional sets,
called \emph{Restricted Intensional Sets} (RIS). RIS have similar
syntax and semantics to the set comprehensions available in the
formal specification language Z, i.e.,
\[
\{ x : D | \flt @ \ptt(x) \}
\]
where $D$ is a set, $\flt$ is a formula, and $\ptt$ is a term
containing $x$. The intuitive semantics of $\{ x : D | \flt @
\ptt(x) \}$ is ``the set of terms $\ptt(x)$ such that $x$ belongs to
$D$ and $\flt$ holds for $x$''.\footnote{In this notation, as in Z,
$x:D$ is interpreted as $x \in D$.} $\flt$ is a quantifier-free
formula over a first-order theory $\Ur$, for which we assume a
complete satisfiability solver is available. Moreover, RIS have the
restriction that $D$ must be a \emph{finite set}. This fact, along
with a few restrictions on variables occurring in $\flt$ and $\ptt$,
guarantees that the RIS is a finite set, given that it is at most as
large as $D$. It is important to note that, although RIS are
guaranteed to denote finite sets, nonetheless, RIS can be not
completely specified. In particular, as the domain can be a variable
or a partially specified set, RIS are finite but \emph{unbounded}.

In previous work, we have proposed a constraint language dealing with RIS,
called $\LRIS$ \cite{DBLP:conf/cade/CristiaR17}. $\LRIS$ formulas are Boolean
combinations of constraints representing set equality and set membership and
whose set terms can be both extensional sets and RIS. Furthermore, $\LRIS$ is
parametric w.r.t. the language of $\Ur$, in the sense that set elements and the
formulas inside RIS are $\Ur$-terms and quantifier-free $\Ur$-formulas,
respectively. In particular, $\Ur$ can be the theory of sets and binary
relations described in \cite{Cristia2019}. Besides, $\LRIS$ is endowed with a
complete solver, called $\SATRIS$, which can act as a decision procedure for an
important fragment of $\LRIS$ formulas (provided a decision procedure for $\Ur$
is available). In this paper we extend our previous work on RIS in several
ways:
\begin{enumerate}
\item The Boolean algebra of sets is now supported; this is achieved by extending
the collection of primitive set constraints admitting RIS to union ($\cup$) and
disjointness ($\disj$).
%That is, concerning RIS, now the logic supports equality, set
%membership, union, disjointness, intersection, difference and all their
%negations. In other words, the Boolean algebra of RIS is fully supported.
\item The decision procedure for $\LRIS$, $\SATRIS$,
 and its Prolog implementation are
extended accordingly.
\item Formulas inside RIS can be existentially quantified $\Ur$-formulas as
long as the quantified variables are special parameters of functional
predicates.
\item An extensive empirical evaluation of these extensions, based on problems
drawn from the TPTP library, is also made.
\end{enumerate}

Though RIS are \emph{restricted}, we aim to show that in spite of
these restrictions $\LRIS$ is still a very expressive language. In
particular, we will show that it allows \emph{Restricted Universal
Quantifiers} (RUQ) on \emph{finite} domains to be expressed as
$\LRIS$ formulas, thus allowing an important class of quantified
formulas to lay inside the decision procedure. Besides encoding RUQ,
RIS provide a sort of second-order language of sets as they allow to
iterate over sets of sets and are useful to express partial
functions, as well as a programming facility similar to list/set
comprehensions available in programming languages.

The paper is organized as follows. In Section \ref{lris} the syntax and
semantics of $\LRIS$ are introduced; in particular Section \ref{informal}
presents some basic examples of formulas involving RIS terms. Section
\ref{solver} lists and discusses the rewrite rules of $\SATRIS$, in particular
those for set union and disjointness. In Section \ref{deci} we precisely
characterize the admissible $\LRIS$ formulas involving RIS, and we prove that
$\SATRIS$ is a decision procedure for such formulas. Section \ref{sec:uses}
shows two important applications for RIS: RUQ and partial functions. The
extension allowing existentially quantified formulas inside a RIS term is
motivated and discussed in Section \ref{sec:extension}. Section \ref{impl}
presents the implementation of our approach as part of the \setlog tool
(pronounced 'setlog') \cite{setlog} where two case studies are also discussed:
the first one presents an automated proof of a non-trivial security property,
while the second  elaborates on the possibility to iterate over collections of
sets and shows how \setlog can be used as a programming tool and as a
verification tool. The results of the empirical evaluation of \setlog are
presented in Section \ref{experiments}. A comparison of our approach with other
works and our conclusions are presented in Sections \ref{relwork} and
\ref{concl}, respectively.

\section{Formal Syntax and Semantics}\label{lris}

This section describes the syntax and semantics of the language of Restricted
Intensional Sets, $\LRIS$. A gentle, informal introduction is provided in
Section \ref{informal}.

$\LRIS$ is a first-order predicate language with terms of two sorts: terms
designating sets and terms designating ur-elements. The latter are provided by
an external first-order theory $\Ur$ (i.e., $\LRIS$ is parametric with respect
to $\Ur$). $\Ur$ must include: a class $\Phi_\Ur$ of admissible $\Ur$-formulas
based on a set of function symbols $\FUr$ and a set of predicate symbols
$\Pi_\Ur$ (providing at least equality); an interpretation structure
$\mathcal{I}_\Ur$ with domain $\idom_\sU$ and interpretation function
${(\cdot)}^{\mathcal{I}_\Ur}$; and a decision procedure $\SATX$ for
$\Ur$-formulas. When useful, we will write $\LRIS(\Ur)$ to denote the instance
of $\LRIS$ based on theory $\Ur$.
On the other hand, $\LRIS$ provides special set constructors, and a handful of
reserved predicate symbols endowed with a pre-designated set-theoretic meaning.
Set constructors are used to construct both RIS and extensional sets. Set
elements are the objects provided by $\Ur$, which are manipulated through the
primitive operators that $\Ur$ offers. Hence, $\LRIS$ sets represent
\emph{untyped unbounded finite hybrid sets}, i.e., unbounded finite sets whose
elements are of arbitrary sorts. $\LRIS$ formulas are
built in the usual way by using conjunction, disjunction and negation of atomic
formulas. A number of complex operators (in the form of predicates) are defined
as $\LRIS$ formulas, thus making it simpler for the user to write complex
formulas.

\subsection{Syntax}\label{syntax}

Syntax is defined primarily by giving the signature upon which terms and
formulas of the language are built.

\begin{definition}[Signature]\label{signature}
The signature $\Sigma_\mathcal{RIS}$ of $\LRIS$ is a triple $\langle
\mathcal{F},\Pi,\Var\rangle$ where:
\begin{itemize}
\item $\mathcal{F}$ is the set of function symbols, partitioned as
$\mathcal{F} = \mathcal{F}_\mathcal{S}  \cup \FUr$, where
$\mathcal{F}_\mathcal{S}$ contains $\e$, $\ww$ and $\{\cdot:\cdot | \cdot\, @\,
\cdot\}$, while $\FUr$ contains the function symbols provided by the theory
$\Ur$ (at least, a constant and the binary function
symbol $(\cdot,\cdot)$).

\item $\Pi$ is the set of \emph{primitive} predicate symbols, partitioned as
$\Pi = \Pi_\mathcal{S} \cup \Pi_\mathcal{T} \cup \Pi_\Ur$, where
$\Pi_\mathcal{S}
\defs \{=_\mathcal{S}, \neq_\mathcal{S}, \in_\mathcal{S}, \not\in_\mathcal{S},
\Cup_\mathcal{S}, \disj_\mathcal{S}\}$, $\Pi_\mathcal{T}
\defs \{\Set, \isx\}$, while $\Pi_\Ur$
contains the predicate symbols provided by the theory $\Ur$ (at least $=_\Ur$).
%
%Note that now $=$ is set equality and so $\neq$ can be a defined
%constraint, can't it? We have $=_\Ur$ as a separate constraint whose negation
%is managed by $\SATX$

\item $\Var$ is a denumerable set of variables, partitioned as
$\Var = \Var_\mathcal{S} \cup \Var_\Ur$. \qed
\end{itemize}
\end{definition}

Intuitively, $\e$ represents the empty set, $\{ t \plus A \}$ represents the
set $\{t\} \cup A$, and $\{\ct(\vec{x}):D | \flt(\vec{x}) @ \ptt(\vec{x})\}$,
where $\vec{x} \defs \langle x_1,\dots,x_n\rangle$, $n > 0$, is the vector of
all variables occurring in $\ct$, represents the set of all instances of $\ptt$
such that $\ct$ belongs to $D$ and $\flt$ holds. $=_\Ur$ is interpreted as
the identity in $\idom_\sU$, while $(\cdot,\cdot)$ will be used to represent
ordered pairs.

$\LRIS$ defines two sorts, $\sSet$ and $\sU$, which intuitively represent the
domain of set objects and the domain of non-set objects (or ur-elements). We
also assume a sort $\sB$ is available, representing the two-valued domain of
truth values $\{\mathsf{false},\mathsf{true}\}$.

To each variable in $\Var$ and each constant in $\mathcal{F}$ we associate a
sort $\mathsf{S}$, while to each function symbol in $\mathcal{F}$ of arity $n
\ge 1$ we associate a string $\mathsf{S}_1 \cross \dots \cross \mathsf{S}_n
\arr \mathsf{S}$, where $\mathsf{S}, \mathsf{S}_i \in \{\sSet,\sU\}$. Moreover,
if $t$ is a variable or a constant and $\mathsf{S}$ is the associated sort, or
$t$ is a term $h(t_1,\dots,t_n)$, $n \ge 1$, and $\mathsf{S}_1 \cross \dots
\cross \mathsf{S}_n \arr \mathsf{S}$ is the sort associated to $h$, then we say
that $t$ is of sort $\mathsf{S}$ and we write $t:\mathsf{S}$.
%We will also use the symbol $\sA$ to indicate that $t$ can be either of sort
%$\sSet$ or $\sU$.

\begin{definition}[Sorts of function symbols]\label{d:sorts}
The sorts of the symbols defined in $\mathcal{F}$ are as follows:
\begin{gather*}
\e: \sSet \\
\ww: \sU \cross \sSet \arr \sSet \\
%[\cdot,\cdot]: \sU \times \sU \fun \sSet \\
\{\cdot : \cdot | \cdot\, @\, \cdot\}: \sU \cross \sSet \cross \sB \cross \sU \arr \sSet \\
s:\overbrace{\sU \cross \dots\cross \sU}^{n_s} \arr \sU\text{, if $s \in \FUr$,
for some $n_s \geq 0$} \\
v:\sSet\text{, if $v \in \Var_\mathcal{S}$} \\
v:\sU\text{, if $v \in \Var_\Ur$}
\end{gather*}
   \qed
\end{definition}

In view of the intended interpretation, terms of sort $\sSet$ are called
\emph{set terms}; in particular, set terms of the form $\ww$ are
\emph{extensional set terms}, whereas set terms of the form $\{\cdot : \cdot |
\cdot\, @\, \cdot\}$ are \emph{RIS terms}\footnote{The form of RIS terms is
borrowed from the form of set comprehension expressions available in Z.};
variable set terms are simply called \emph{set
variables}.

Note that terms that constitute the elements of sets are all of sort $\sU$.
Also note that in a RIS term, $\{\ct:D | \flt @ \ptt\}$, $D$ (called \emph{domain})
is a set term, $\flt$ (called \emph{filter}) is a $\Ur$-formula, while $\ct$
(called \emph{control term}) and $\ptt$ (called \emph{pattern}) are
$\Ur$-terms.

%The first parameter of an extensional set term of the form $\{t_1 \plus t_2\}$ is called
%\emph{element part} and the second is called \emph{set part}.

Terms of $\LRIS$---called \emph{$\mathcal{RIS}$-terms}---are built from symbols
in $\mathcal{F}$ and $\mathcal{V}$ as follows.

\begin{definition}\label{RIS-terms}
[$\mathcal{RIS}$-terms] \label{d:terms} Let $\TRIS^0$ be the set of
terms generated by the following grammar:
\begin{gather*}
\TRIS^0 ::= \mathit{Elem} \hspace{2pt}|\hspace{2pt} \mathit{Set} \\
\mathit{Elem}::= \TX \hspace{2pt}|\hspace{2pt} \Var_\Ur \\
\begin{split}
\mathit{Set} ::=
  & \q\e\qr \\
  & \hspace{2pt}|\hspace{2pt} \mathit{Ris} \\
  & \hspace{2pt}|\hspace{2pt}
      \q\{\qr \mathit{Elem}
           \ql\hspace{-2pt}\plus\hspace{-2pt}\qr \mathit{Set} \ql\}\q \\
  & \hspace{2pt}|\hspace{2pt} \Var_\mathcal{S}
\end{split} \\
\mathit{Ctrl} ::=
  \Var_\Ur
  \hspace{2pt}| \ql(\q~~\mathit{Ctrl}~~\ql,\q~~\mathit{Ctrl} \ql)\q \\
\mathit{Pattern} ::=
  \Var_\Ur
  \hspace{2pt}| \ql(\q~~\mathit{Ctrl}~~\ql,\q~~\TX \ql)\q \\
\mathit{Ris} ::=
  \ql\{\qr \mathit{Ctrl} \ql:\qr \mathit{Set}
  \ql\hspace{-2pt}|\hspace{-2pt}\qr \FX \ql@\qr \mathit{Pattern} \ql\}\qr
\end{gather*}
where $\TX$ and $\FX$ represent the set of non-variable $\Ur$-terms and the set
of $\Ur$-formulas built using symbols in $\mathcal{F}_\Ur$ and $\Pi_\Ur$,
respectively.

The set of \emph{$\mathcal{RIS}$-terms}, denoted by $\TRIS$, is the maximal
subset of $\TRIS^0$ complying with the sorts as given in Definition
\ref{d:sorts} and respecting the following restriction on RIS terms: if $c$ is
the control term of a RIS, then its pattern can be either $c$ or $(c,t)$.
%or $(t,c)$, where $t$ is any $\Ur$-term
\qed
\end{definition}

The special form of control terms and patterns will be
precisely motivated and discussed in Section \ref{sec:extension}.
%Note that the grammar allows for nested ordered pairs.}

Sets denoted by both extensional set terms and RIS terms can be \emph{partially
specified} because elements and sets can be variables. In particular, RIS can
have a variable domain.

\begin{definition}[Variable-RIS]
A RIS term is a \emph{variable-RIS} if its domain is a variable or
(recursively) a variable-RIS; otherwise it is a \emph{non-variable RIS}. \qed
\end{definition}

Variable-RIS will be of special interest since they can be easily turned
into the empty set by substituting their domains by the empty set.

\begin{definition}[Sorts of predicate symbols]\label{d:pred_sorts}
The sorts of the predicate symbols in $\Pi_\mathcal{S} \cup
\Pi_\mathcal{T}$ are as follows (symbols $=$, $\neq$, $\in$, $\notin$ and
$\disj$ are infix; all other symbols are prefix):
\begin{gather*}
%\true \fun \sB \\
=_\mathcal{S}, \neq_\mathcal{S}: \sSet \cross \sSet \arr \sB \\
\in_\mathcal{S}, \not\in_\mathcal{S}: \sU \cross \sSet  \arr \sB \\
\Cup_\mathcal{S}: \sSet \cross \sSet \cross \sSet \arr \sB \\
\disj_\mathcal{S}: \sSet \cross \sSet \arr \sB \\
%=_\Ur: \sU \times \sU  \fun \sB \\
\Set, \isx: \sSet \cup \sU  \arr \sB
\end{gather*}
   \qed
\end{definition}

\begin{definition}[$\mathcal{RIS}$-constraints]\label{primitive-constraint}
A (primitive) \emph{$\mathcal{RIS}$-constraint} is any atom built from symbols
in $\Pi_\mathcal{S} \cup \Pi_\mathcal{T}$ complying with the sorts as given in
Definition \ref{d:sorts}. The set of $\mathcal{RIS}$-constraints is denoted by
$\CRIS$. $\mathcal{RIS}$-constraints based on a symbol $\pi$ (resp., on a set
of symbols $\Pi$) are called \emph{$\pi$-constraints} (resp.,
\emph{$\Pi$-constraints}). \qed
\end{definition}

Formulas of $\LRIS$---called \emph{$\mathcal{RIS}$-formulas}---are built
from $\mathcal{RIS}$-constraints and $\Ur$-formulas as follows.

\begin{definition}[$\mathcal{RIS}$-formulas]\label{formula}
The set of $\mathcal{RIS}$-formulas, denoted by $\FRIS$, is given by the
following grammar.
\begin{gather*}
\FRIS ::= \true | \CRIS | \FRIS \land \FRIS |
\FRIS \lor \FRIS | \mathit{\FX}
\end{gather*}
%where $\CRIS$ represents any element belonging to the set of
%$\mathcal{RIS}$-constraints and $\FX$ represents any element of the set of
%$\Ur$-formulas.
$\mathcal{RIS}$-formulas not containing any $\Ur$-formula are called
\emph{pure $\mathcal{RIS}$-formulas}.
 \qed
\end{definition}

\begin{remark}[Notation]\label{r:notation}
The following notational conventions are used throughout this paper.
$s,t,u,a,b,c,d$ (possibly subscripted) stand for arbitrary
terms of sort $\sU$; while $A,B,C,D$ stand for arbitrary
terms of sort $\sSet$ (either extensional or intensional, variable or not).
When it is useful to be more specific, we will use
$\bar{A},\bar{B},\bar{C},\bar{D}$ to represent either variables of sort $\sSet$
or variable-RIS; $X,Y,Z,N$ to represent variables of sort $\sSet$
that are not variable-RIS; and $x,y,z,n$ to represent variables of
sort $\sU$. $\Phi, \Gamma, \Lambda$ stand for $\mathcal{RIS}$-formulas, while
$\phi, \gamma, \lambda$ stand for $\Ur$-formulas; $p,q,r$ stand for atomic
formulas/constraints.

Moreover, a number of special notational conventions are used for the sake of
conciseness. Specifically, we will write $\{t_1 \plus \{t_2 \plus \cdots \{ t_n
\plus A\}\cdots\}\}$ (resp., $\{t_1 \plus \{t_2 \plus \cdots \{ t_n \plus
\e\}\cdots\}\}$) as $\{t_1,t_2,\dots,t_n \plus A\}$ (resp.,
$\{t_1,t_2,\dots,t_n\}$). We will also use the notation $[m,n]$, $m$ and $n$
integer constants, as a shorthand for $\{m,m+1,\dots,n\}$. As concerns RIS,
when the pattern is the control term and the filter is $\true$, they can be
omitted (as in Z), although one must be present. \qed
\end{remark}

\subsection{Semantics}

Sorts and symbols in $\Sigma_\mathcal{RIS}$ are interpreted according to
the interpretation structure $\iS = \langle \idom,\iF{\cdot}\rangle$, where
$\idom$ and $\iF{\cdot}$ are defined as follows.

\begin{definition} [Interpretation domain] \label{def:int_dom}
The interpretation domain $\idom$ is partitioned as $\idom = \idom_\sSet \cup \idom_\sU$ where:
\begin{itemize}
\item $\idom_\sSet$: the collection of all finite sets built from elements in
%$D_\sU \cup D_\sSet$
$\idom_\sU$

\item $\idom_\sU$: a collection of any other objects (not in $\idom_\sSet$).
\end{itemize}
\end{definition}

\begin{definition} [Interpretation function] \label{def:int_funct}
The interpretation function $\iF{\cdot}$ for symbols in $\Sigma_\mathcal{RIS}$
is defined as follows:

\begin{itemize}
\item Each sort $\sS \in \{\sU,\sSet\}$ is mapped to the domain $\idom_\sS$.

\item $\iS$ coincides with $\mathcal{I}_\Ur$ for symbols in $\FUr$ and $\Pi_\Ur$.

\item For each sort $\sS$, each variable $x$ of sort $\sS$ is mapped to
      an element $x^\iS$ in $\idom_\sS$.

\item The constant and function symbols in $\mathcal{F}_\mathcal{S}$ are
interpreted as follows:
  \begin{itemize}
  \item $\e$ is interpreted as the empty set $\emptyset$;
  \item $\{ t \plus A \}$ is interpreted as the set $\{t^\iS\} \cup A^\iS$;
  \item Let $\vv$ be a vector of free variables and $\vec{x}$ the vector
  of variables occurring in $\ct$, then the set
     $\{\ct(\vec{x}):D | \flt(\vec{x},\vv) @ \ptt(\vec{x},\vv)\}$
  is interpreted as the set
\[
\{y : \exists \vec{x} (\ct(\vec{x}) \in_\Ur D \land \flt(\vec{x},\vv) \land y
=_\Ur \ptt(\vec{x},\vv))\}
\]

Note that in RIS terms, $\vec{x}$ are bound variables whose scope is the RIS
itself, while $\vv$ are free variables possibly occurring in the formula where
the RIS is participating in.
  \end{itemize}

\item The predicate symbols in $\PiSB$ are interpreted as follows:
  \begin{itemize}
   \item $A =_\mathcal{S} B$ is interpreted as $A^\iS = B^\iS$,
   where $=$ is the identity relation in $\idom_\sSet$;
   \item $t \in_\mathcal{S} A$ is interpreted as $t^\iS \in A^\iS$,
    where $\in$ is the set membership relation in $D_\sSet$;
   \item $\Cup_\mathcal{S}(A,B,C)$ is interpreted as $A^\iS \cup B^\iS = C^\iS$,
    where $\cup$ is the set union operator in $D_\sSet$;
   \item $A \disj_\mathcal{S} B$ is interpreted as $A^\iS \cap B^\iS = \emptyset$,
    where $\cap$ is the set intersection operator in $\idom_\sSet$;
   \item $\isx(t)$ is interpreted as $t^\iS \in \idom_\sU$;
   \item $\Set(t)$ is interpreted as $t^\iS \in \idom_\sSet$;
   \item $A \neq B$ and $t \notin A$ are
   interpreted as $\lnot (A^\iS = B^\iS)$ and $\lnot (t^\iS \in A^\iS)$,
   respectively.
   \end{itemize}

\end{itemize}
\end{definition}

Equality between extensional set terms is
regulated by the following two equational axioms \cite{Dovier00}:
\begin{gather}
\{x, x \plus A\} = \{x \plus A\} \tag{$Ab$} \label{Ab} \\
\{x, y \plus A\} = \{y, x \plus A\} \tag{$C\ell$} \label{Cl}
\end{gather}
which state that duplicates in a set term do not matter (\emph{Absorption
property}) and that the order of elements in a set term is irrelevant
(\emph{Commutativity on the left}), respectively. These two properties capture
the intuitive idea that, for instance, the set terms $\{1,2\}$, $\{2,1\}$, and
$\{1,2,1\}$ all denote the same set. In other words, duplicates do not occur in
a set, but they may occur in the set term that denotes it.
%Basically, set equality between extensional set terms in $\LRIS$ is implemented
%by \emph{$(Ab)(C\ell)$-set unification} \cite{Dovier03}.
On the other hand, equality between $\Ur$-terms is assumed to be managed by the
parameter theory $\Ur$.
%In this paper we extend the set unification algorithm of \cite{Dovier03} to
%deal with also RIS terms.

\subsection{Examples of $\LRIS$ formulas using RIS}\label{informal}

In this section we present some simple examples of $\LRIS$ formulas, in
particular those where RIS terms play a major role, with the objective to help
the reader to understand the notation. A deeper discussion of the applicability
of $\LRIS$ can be found in Sections \ref{sec:uses} and \ref{impl}.

For the sake of presentation, in coming examples, we will assume that the
language of $\Ur$, $\LX$, provides the constant, function and predicate symbols
of the theory of the integer numbers. Moreover, we will
write $=$ (resp., $\neq, \in, \Cup$) in place of $=_\Ur$ and
$=_\mathcal{S}$ (resp., $\in_\Ur, \neq_\Ur, \Cup_\Ur$ and $\in_\mathcal{S},
\neq_\mathcal{S}, \Cup_\mathcal{S}$) whenever is clear from context.

\begin{example}\label{ex:first}
The following is an atomic $\mathcal{RIS}$-formula:
\[
\{x:[-2,2] | x \mod 2 = 0 @ x\} = \{-2,0,2\}
\]
This formula is clearly
satisfiable with respect to the intended interpretation $\iS$. The RIS term can be written more
compactly by omitting the pattern as it is the control variable:
\[
\{x:[-2,2] | x \mod 2 = 0\} = \{-2,0,2\}
\]
\qed
\end{example}

\begin{example}
Set membership is also a viable operation on RIS:
\[
(5,y) \in \{x:D | \true @ (x,x*x)\}
%[5,Y] in ris(X in D,true,[X,Z], Z is X*X).
\]
where $D$ is a variable. This formula is satisfiable with respect to $\iS$
provided $y$ is assigned the value $25$ and $D$ is bound to any set containing
$5$. The RIS can also be written as $(5,y) \in \{x:D @ (x,x*x)\}$ because the
filter is $\true$. Note that the control term and the pattern verify the
condition stated in Definition \ref{RIS-terms}.
 \qed
\end{example}

The negation of a set membership constraint, as all negations in $\LRIS$, is
written by using the corresponding predicate symbol and not the logical negation
which, anyway, is not part of $\LRIS$.

\begin{example}
A $\mathcal{RIS}$-formula containing negative constraints:
\[
(5,0) \notin \{(x,y):\{z \plus Z\} | y \neq 0 @ (x,y)\}
% [5,0] nin ris([X,Y] in {P / R}, Y neq 0)
\]
where $z$ and $Z$ are free variables. This formula is satisfiable
with respect to $\iS$ for any $z$ and $Z$. As above, the pattern
can be omitted. Observe that in $y \neq 0$
inequality corresponds to that of the theory $\Ur$. \qed
\end{example}

\begin{example}\label{ex:second}
The following is a non-atomic $\mathcal{RIS}$-formula involving RIS terms:
\[
A = \{x:D | x \neq 0\} \land \Cup(A,B,C) \land A \disj C \land A \neq \emptyset
%A=ris(X in D, X nin R) & un(A,B,C) & disj(A,C) & A neq {}.
\]
This formula turns out to be unsatisfiable with respect to $\iS$ since
$\Cup(A,B,C) \land A \disj C$ is satisfiable only if $A = \emptyset$, while the
input formula constraints $A$ to be distinct from $\emptyset$.
\qed
\end{example}

Finally, note that we allow RIS terms to be the set part of extensional set
terms, e.g., $\{z \plus \{x:A | x \neq y\}\}$, as well as to be the domain of
other RIS. $\LRIS$ also defines two constraints that are mainly used internally
by the solver, namely $\Set(t)$ and $\isx(t)$. Each one asserts that its
parameter is of sort $\sSet$ and $\sU$, respectively.

\subsection{\label{expressiveness}{Derived Constraints and Expressiveness}}
%{\colorbox{yellow}{and Negative}}

Dovier et al. \cite{Dovier00} proved that the collection of predicate symbols
in $\PiSB$ is sufficient to define constraints implementing other common set
operators, such as $\cap$ and $\subseteq$.

Specifically, let us consider the following predicate symbols: $\subseteq,
\Cap, \Diff$, along with their interpretations: $A \subseteq B$ is interpreted
as $A^\iS \subseteq B^\iS$; $\Cap(A,B,C)$ is interpreted as $C^\iS = A^\iS \cap
B^\iS$; $\Diff(A,B,C)$ is interpreted as $C^\iS = A^\iS \backslash B^\iS$.
Dovier et al. \cite{Dovier00} prove that these predicates can be made available
in $\LRIS$ without having to add them to the collection of its primitive
constraints. As an example, $A \subseteq B$ can be defined by the $\LRIS$
formula $\Cup(A,B,B)$.

With a little abuse of terminology, we say that atoms based on these
predicates are \emph{derived constraints}. Whenever a formula contains a
derived constraint, the constraint is replaced by its definition turning the
given formula into a $\LRIS$ formula. Precisely, if $\Phi$ is the ${\cal
RIS}$-formula defining the constraint $c$, then $c$ is replaced by $\Phi$ and
the solver checks satisfiability of $\Phi$ to determine satisfiability of $c$.
Thus, we can completely ignore the presence of derived constraints in the
subsequent discussion about constraint solving and formal properties of our
solver (namely, soundness and termination).

The negated versions of set operators can be introduced as derived constraints,
as well. Specifically, if $p$ is a predicate symbol in $\Pi$, a new
predicate $p'$ is associated to $p$ to obtain a positive form $p'(\vec t)$ for
literals $\neg p(\vec t)$. Then, the definition of $p'$ for $p \in
\{\Cup,\disj\}$ can be given as a ${\cal RIS}$-formula, hence, as a derived
constraint. The derived constraint for $\lnot\cup$ and $\lnot\disj$ (called
$\Ncup$ and $\Ndisj$, respectively) are shown in \cite{Dovier00}.
%\cite{Dovier00,DBLP:conf/cav/CristiaR16}.
For example, $\lnot A \cup B = C$ is introduced as:
\begin{equation}\label{e:nun}
\Ncup(A,B,C) \defs
     (n \in C \land n \notin A \land n \notin B)
     \lor (n \in A \land n \notin C)
     \lor (n \in B \land n \notin C)
\end{equation}

The same approach can be used to define  the derived constraint for
$\lnot\subseteq$, $\lnot\cap$ and  $\lnot\backslash$, called $\not\subseteq$,
$\Ncap$ and $\Ndiff$, respectively. With a little abuse of terminology, we will
refer to atoms based on these predicates as \emph{negative constraints}. Thanks
to the availability of negative constraints, classical negation is not strictly
necessary in $\LRIS$.

As concerns expressiveness, now $\LRIS$ supports equality, set membership,
union, disjointness, intersection, difference and all their negations. Hence,
the Boolean algebra of sets is fully supported.

%%%%%%%%%%%%%%%%%%%%%%%%%%%%%%%%%%%%%%%%%%%%%%%%%%%%%%%%%%%%%%%%%%%%%%%%%%%
\section{A Solver for $\LRIS$}\label{solver}

In this section we present a constraint solver for $\LRIS$, called $\SATRIS$.
The solver provides a collection of rewrite rules for rewriting $\LRIS$
formulas that are proved to be a decision procedure for a large class of
$\LRIS$ formulas.
%formulas that are proved to preserve satisfiability of the original formula.
%Actually, $\SATRIS$ is able not only to check satisfiability of $\LRIS$
%formulas, but also to return a finite representation of all the concrete (or
%ground) solutions of any satisfiable input formula. Then, in Section \ref{deci}
%we will prove that $\SATRIS$ is a correct, complete and terminating solver (i.e.,
%a decision procedure) for a large class of $\LRIS$ formulas.
As already observed, however, checking the satisfiability of
$\mathcal{RIS}$-formulas depends on the existence of a decision procedure for
$\Ur$-formulas (i.e., formulas over $\LX$).

\subsection{The solver}

$\SATRIS$ is a rewriting system whose global organization is shown in Algorithm
\ref{glob}, where $\textsf{STEP}$ is the core of the algorithm.

\textsf{sort\_infer} is used to automatically add $\Pi_\mathcal{T}$-constraints
to the input formula $\Phi$ to force arguments of $\mathcal{RIS}$-constraints
in $\Phi$ to be of the proper sorts (see Remark \ref{rsort} below).
\textsf{sort\_infer} is called twice in Algorithm \ref{glob}: first, at the
beginning of the Algorithm, and second, within the procedure $\textsf{STEP}$
for the constraints that are generated during constraint processing.
\textsf{sort\_check} checks $\Pi_\mathcal{T}$-constraints occurring in $\Phi$:
if they are satisfiable, then $\Phi$ is returned unchanged; otherwise, $\Phi$
is rewritten to $\false$.

\algtext*{EndIf}
\algrenewtext{EndProcedure}{\textbf{return} }
\begin{algorithm}
\begin{tabular}{lr}
\begin{minipage}{.55\textwidth}
\begin{algorithmic}[0]
\Procedure{$\mathsf{STEP}$}{$\Phi$}
\State \hspace{5mm}
 %      $\Phi \gets \mathsf{rw}_\in(\mathsf{rw}_{\notin}
 %      (\mathsf{rw}_{\neq}(\mathsf{rw}_=(\Phi))))$
       \textsf{for all} $\pi \in \Pi_\mathcal{S} \cup \Pi_\mathcal{T}:
            \Phi \gets \mathsf{rw}_{\pi}(\Phi)$;
\State \hspace{5mm}
       $\Phi \gets \mathsf{sort\_check}(\mathsf{sort\_infer}(\Phi))$
\EndProcedure $\Phi$
\end{algorithmic}
\begin{algorithmic}[0]
 \Procedure{$\mathsf{rw}_\pi$}{$\Phi$}
 \If{$\false \in \Phi$}
 \State\Return{$\false$}
 \Else
 \Repeat
 \State \textsf{select a $\pi$-constraint $c$ in $\Phi$}
 \State \textsf{apply any applicable rule to $c$}
 \Until{\textsf{no rule applies to any $\pi$-constraint }}
 \EndIf
 \EndProcedure $\Phi$
\end{algorithmic}
\end{minipage}
&
\begin{minipage}{.35\textwidth}
\begin{algorithmic}[0]
\Procedure{$\SATRIS$}{$\Phi$}
 \State $\Phi \gets \textsf{sort\_infer}(\Phi)$
 \Repeat
 \State $\Phi' \gets \Phi$
 \Repeat
   \State $\Phi'' \gets \Phi$
   \State $\Phi \gets \textsf{STEP}(\Phi)$
 \Until{$\Phi = \Phi''$}
 \State $\Phi \gets \textsf{remove\_neq}(\Phi)$
 \Until{$\Phi = \Phi'$}
 \State $\Phi \textbf{ is } \Phi_\mathcal{S} \land \Phi_\Ur$
 \State $\Phi \gets \Phi_\mathcal{S} \land \SATX(\Phi_\Ur)$
\EndProcedure $\Phi$
\end{algorithmic}
\end{minipage}
\end{tabular}
\caption{\label{glob}The $\SATRIS$ solver. $\Phi$ is the input formula.}
\end{algorithm}

\textsf{remove\_neq} deals with the elimination of $\neq$-constraints involving
variables of sort $\sSet$, possibly occurring in
the formula $\Phi$ at the end of the innermost loop of $\SATRIS$.
Its motivation and definition will be made evident later in Section
\ref{removeneq}.

\textsf{STEP} applies specialized rewriting procedures to the current formula
$\Phi$ and returns either $\false$ or the modified formula. Each rewriting
procedure applies a few non-de\-ter\-mi\-ni\-stic rewrite rules which reduce
the syntactic complexity of $\mathcal{RIS}$-constraints of one kind. Procedure
$\mathsf{rw}_\pi$ in Algorithm \ref{glob} represents the rewriting procedure
for ($\Pi_\mathcal{S} \cup \Pi_\mathcal{T}$)-constraints. The execution of
$\textsf{STEP}$ is iterated until a fixpoint is reached---i.e., the formula
cannot be simplified any further. \textsf{STEP} returns $\false$ whenever (at
least) one of the procedures in it rewrites $\Phi$ to $\false$. In this case, a
fixpoint is immediately detected, since $\textsf{STEP}(\false)$ returns
$\false$. The rewrite rules for $\mathcal{RIS}$-constraints involving only
extensional set terms are described in Section \ref{rulesext}, while those for
$\mathcal{RIS}$-constraints involving both extensional and RIS terms are
detailed in Section \ref{ruleris}.

$\SATX$ is the constraint solver for $\Ur$-formulas. The formula $\Phi$ can be
seen, without loss of generality, as $\Phi_\mathcal{S} \land \Phi_\Ur$, where
$\Phi_\mathcal{S}$ is a pure $\mathcal{RIS}$-formula and $\Phi_\Ur$ is a
$\Ur$-formula. $\SATX$ is applied only to the $\Phi_\Ur$ conjunct of $\Phi$.
Note that, conversely, \textsf{STEP} rewrites only $\mathcal{RIS}$-constraints,
while it leaves all other atoms unchanged. Nonetheless,
as the rewrite rules show, $\SATRIS$ generates $\Ur$-formulas that are
conjoined to $\Phi_\Ur$ so they are later solved by $\SATX$.

As we will show in Section \ref{deci}, when
all the non-deterministic computations of $\SATRIS(\Phi)$ terminate with
$\false$, then we can conclude that $\Phi$ is unsatisfiable; otherwise, when
all the computations terminate and at least one of them does not return
$\false$, then we can conclude that $\Phi$ is satisfiable and each solution of
the formulas returned by $\SATRIS$ is a solution of $\Phi$, and vice versa.

\begin{remark}\label{rsort}
$\LRIS$ does not provide variable declarations. The sort of variables are
enforced by adding suitable \emph{sort constraints} to the formula to be
processed. Sort constraints are automatically added by the solver.
Specifically, a constraint $\Set(y)$ (resp., $\isx(y)$) is added for each
variable $y$ which is required to be of sort $\sSet$ (resp., $\sU$). For
example, given $X = \{y \plus A\}$, \textsf{sort\_infer} conjoins the sort
constraints $\Set(X)$, $\isx(y)$ and $\Set(A)$. If the set of function and
predicate symbols of $\mathcal{RIS}$ and $\Ur$ are disjoint, there is a unique
sort constraint for each variable in the formula.
\end{remark}

The rewrite rules used by $\SATRIS$ are defined as follows.

\begin{definition}[Rewrite rules]\label{d:rw_rules}
If $\pi$ is a symbol in $\Pi_\mathcal{S} \cup \Pi_\mathcal{T}$ and $p$ is a
$\mathcal{RIS}$-constraint based on $\pi$, then a \emph{rewrite rule for
$\pi$-constraints} is a rule of the form $p \lfun \Phi_1 \lor \dots \lor
\Phi_n$, where $\Phi_i$, $i \ge 1$, are $\mathcal{RIS}$-formulas. Each
atom matching $p$ is non-deterministically rewritten to
one of the $\Phi_i$. Variables appearing in the right-hand side but not in the
left-hand side are assumed to be fresh variables, implicitly existentially
quantified over each $\Phi_i$. \qed
\end{definition}
%$\Phi$ is a $\mathcal{RIS}$-formula in Disjunctive Normal Form.
%non-deterministically rewritten to one of the disjunct composing $\Phi$.

A \emph{rewriting procedure} for $\pi$-constraints consists of the collection of
all the rewrite rules for $\pi$-constraints. For each rewriting procedure,
\textsf{STEP} selects rules in the order they are listed in the figures below.
The first rule whose left-hand side matches the input $\pi$-constraint is used
to rewrite it. If no rule applies, then the input constraint is left unchanged
(i.e., it is irreducible).

\subsection{\label{rules}Rewrite rules for extensional set terms}\label{rulesext}

When $\LRIS$ formulas contain only extensional set terms, all the operators
defined in $\Pi_\mathcal{S}$ are dealt with the rewrite rules given in \cite{Dovier00}.
%,Dovier03}.
In particular, set equality between extensional set terms is implemented by
\emph{$(Ab)(C\ell)$-set unification} \cite{Dovier03}. One of the key rewrite
rules of set unification is the following, which implements axioms \eqref{Ab}
and \eqref{Cl} (adapted from \cite{Dovier00}):
\begin{alignat}{2}\label{e:eseteq}
\{t_1 \plus A\} & = \{t_2 \plus B\} \lfun  \notag\\
  & t_1 =_\Ur t_2 \land A = B \quad            & \lor\quad  & t_1 =_\Ur t_2
                                               \land \{t_1 \plus A\} = B \quad\qquad \lor \\
  & t_1 =_\Ur t_2 \land A = \{t_2 \plus B\} \quad & \lor\quad  & A = \{t_2 \plus N\}
                                               \land \{t_1 \plus N\} = B \notag
\end{alignat}
where $A$ and $B$ are extensional sets, $t_1$ and $t_2$ are arbitrary $\Ur$-terms, $N$ is a new variable (of sort $\sSet$), and
$=_\Ur$ is the equality provided by the theory $\Ur$. This means that every
time $\LRIS$ finds an atom such as the left-hand side
of rule \eqref{e:eseteq}, it attempts to find a solution for it in four
different ways. In some cases one or more will fail (i.e., they will be
$\false$) but in general $\LRIS$ will compute all the four solutions.
%Implicitly, this rule says that \CLPSET assumes the existence of a solver
%capable of determining the validity of formulas such as $x = y$, which depends
%on the nature of the set elements.

In this paper we will extend the set unification algorithm of
\cite{Dovier03} to deal with also RIS terms.

%Dealing with set membership is governed by the following rewrite rules:
%\begin{gather*}
%x \in A \lfun A = \{x \plus N\} \hspace{2.5cm} %\label{e:esetmem1}
%x \in \{y \plus B\} \lfun x = y \lor x \in B   %\label{e:esetmem2}
%\end{gather*}
%where $A$ is a variable, $B$ is a set term, and $N$ is a new variable
%(of sort $\sSet$).

Given that the rules for extensional set terms has been extensively studied by
the authors
\cite{Dovier00,Dovier03,DBLP:journals/tplp/CristiaRF15,Cristia2019},
they are not shown here, and can be found in \cite{calculusBR}.

\subsection{Rewrite rules for RIS}\label{ruleris}

When a $\LRIS$ formula includes RIS terms, the rewrite rules for extensional
sets are extended in a rather natural way. The following is an example to
intuitively show how $\SATRIS$ works when RIS are involved.
\begin{example}\label{ex:rewriting}
%R = {[a,1],[b,3],[c,4]} & ris([X,Y],{a,c},[X,Y] in R,[X,Y]) = {[a,1],[c,4]}.
%R = {[a,1],[a,2],[b,3],[c,4]} & DR is ris([X,Y],{a,c},[X,Y] in R,[X,Y]).
If $S$ is a variable and $a,b$ are constants belonging to $\FUr$, then
$\SATRIS$ can determine the satisfiability of:
\[
\{(x,y):\{(a,1) \plus S\} | x \in \{a,b\}\} = \{(a,1)\}
\]
by proceeding as follows (the arrow indicates a rewriting step; some steps are
simplified):
\begin{gather*}
\{(x,y):\{(a,1) \plus S\} | x \in \{a,b\}\} = \{(a,1)\} \\
\fun (x,y) = (a,1)
     \land x \in \{a,b\}
     \land \{(a,1) \plus \{(x,y):S | x \in \{a,b\}\}\} = \{(a,1)\} \\
\quad{}\lor (x,y) = (a,1)
     \land x \notin \{a,b\}
     \land \{(x,y):S | x \in \{a,b\}\} = \{(a,1)\} \\
\fun x = a \land y = 1 \land (x = a \lor x = b) \\
\qquad{}\land (\{(x,y):S | x \in \{a,b\}\} = \emptyset
              \lor \{(x,y):S | x \in \{a,b\}\} = \{(a,1)\}) \\
\quad{}\lor \false \\
\fun x = a \land y = 1 \\
\quad{}\land (S = \emptyset
              \lor S = \{(a,1) \plus N\} \land (a,1) \notin N \land
              \{(x,y):N | x \in \{a,b\}\} = \emptyset)
\end{gather*}
where $N$ is a fresh variable. At this point, Algorithm \ref{glob}
stops thus returning a disjunction of two $\LRIS$ formulas:
\begin{gather*}
x = a \land y = 1 \land S = \emptyset \\
\lor x = a \land y = 1 \land
     S = \{(a,1) \plus N\} \land (a,1) \notin N \land \{(x,y):N | x \in \{a,b\}\} = \emptyset
\end{gather*}
which are surely satisfiable. The first one because all variables are bound to
terms and the second one because:
\[
(a,1) \notin N \land \{(x,y):N | x \in \{a,b\}\} = \emptyset
\]
is satisfiable by substituting $N$ by the empty set.
\qed
\end{example}

Hence, intuitively, the key idea behind the extension of the rewriting rules
for extensional set terms to RIS terms is a sort of \emph{lazy partial
evaluation} of RIS. That is, a RIS term is treated as a block until it is
necessary to identify one of its elements. When that happens, the RIS is
transformed into an extensional set whose element part is the identified
element and whose set part is the rest of the RIS. At this point, classic set
constraint rewriting (in particular set unification) can be applied. More
precisely, when a term such as $\{x:\{d \plus D\} | \flt @
\ptt\}$ is processed, two cases are considered:
\begin{itemize}
\item $x = d$ and $\flt$ holds for $d$, in which case the
RIS is rewritten as the extensional set $\{\ptt(d) \plus \{x:D | \flt @
\ptt\}\}$; and
\item $x = d$ but $\flt$ does not hold for $d$, i.e., $\neg\flt(d)$ holds,
in which case the RIS is rewritten as $\{x:D | \flt @ \ptt\}$.
\end{itemize}

The main rewrite rules for $\Pi_\mathcal{S}$-constraints dealing with RIS terms
are given in Figures \ref{f:eq}-\ref{f:disj} and are discussed below; rules for
special cases are presented in Appendix \ref{ap:rules}. In order to make the
presentation more accessible: \emph{a}) the rules are given for RIS whose
domain is not another RIS, in particular, the domain of a variable-RIS is a
single variable; and \emph{b}) the control term of RIS is always a
\emph{variable}. The generalization to cases in which these restrictions are
removed is discussed in Appendix \ref{ap:rules}. In these figures, a RIS of the
form $\{x:D | \flt @ \ptt\}$ is abbreviated as $\{D | \flt @ \ptt\}$. Besides,
$\Fpv(d)$, $\Gpv(d)$, $\Ppv(d)$ and $\Qpv(d)$ are shorthands for $\Fpv[x \mapsto d]$, $\Gpv[x \mapsto d]$,
$\Ppv[x \mapsto d]$ and $\Qpv[x \mapsto d]$, respectively, where $[x \mapsto
d]$ represents variable substitution. In all
rules, $\varnothing$ represents either $\emptyset$ or a RIS with empty domain
(e.g., $\{\emptyset | \flt @ \ptt\}$). Recall the notation used across the
paper given in Remark \ref{r:notation}.

\subsubsection{Equality}

%the special rule for = where the domain is the set part of the extensional set
%remains in the appendix; bring it here if you think is necessary

Figure \ref{f:eq} lists the main rewrite rules applied by
\textsf{STEP} to deal with constraints of the form $A = B$ and $A \neq B$, where
$A$ and $B$ are $\LRIS$ set terms and at least one of them is a RIS term.

Equality between a RIS and an extensional set is governed by rules
\eqref{e:empty1}--\eqref{e:v=e}. In particular:
\begin{itemize}
\item Rule \eqref{e:empty2} deals with the case in which a RIS with a non-empty
domain must be equal to the empty set. It turns out that to force a RIS $\{D |
\Fpv @ \Ppv\}$ to be empty it is enough that the filter $\flt$ is false for all
elements in $D$, i.e., $\forall x \in D: \lnot \flt$ (see Proposition
\ref{prop:2} in the appendix). This restricted universal quantification is
conveniently implemented through recursion, by extracting one element $d$ at a
time from the RIS domain.

\item Rule \eqref{e:eset} deals with the case illustrated by
Example \ref{ex:rewriting}, i.e., equality between a non-variable RIS and
any other non-empty set (including variables and RIS).

\item Rule \eqref{e:v=e} deals with equality between a variable-RIS and an
extensional set. The intuition behind this rule is as follows. Given that $\{t
\plus A\}$ is not empty, then $\bar D$ must be not empty in which case it is
equal to $\{n \plus N\}$ for some $n$ and $N$. Furthermore, $n$ must satisfy
$\F$ and $\ptt(n)$ must be equal to $t$. As the first element of $\{t \plus
A\}$ belongs to the RIS, then the rest of the RIS must be equal to $A$. It is
not necessary to consider the case where $\lnot \flt(n)$, as in rule
\eqref{e:eset}, because $n$ is a fresh variable.
\end{itemize}

Other rules deal with the symmetric cases (e.g., $B = \{\{d \plus D\} | \F @
\P\}$, $\{t \plus A\} = \{\bar D | \F @ \P\}$), and with special cases where
some variables are shared between the two terms to be compared (e.g.,  $\{\bar
D | \F @ \P\} = \{t \plus \bar D\}$), see Appendix \ref{ap:rules}. The
cases not considered by any rule, namely, equality between a variable and a
variable-RIS, between a variable-RIS and the empty set, and between two
variable-RIS, are dealt with as irreducible (see Section \ref{solved}).

Negated equality is governed by rule \eqref{e:neq}. Note that this rule serves
for all combinations of set terms and it is based on the definition of set
inequality. In fact, $D$ can be either a variable, or a variable-RIS, or a set
term of the form $\{d \plus D\}$, while $A$ can be any term, including another
RIS. In this case, the problem is handed over to the rules for set membership.

\begin{figure}
\hrule
\begin{gather}
\{\e | \F @ \P\} = \varnothing \fun \true  \label{e:empty1} \\[2mm]
\{\e | \F @ \P\} = \{t \plus A\} \fun \false  \label{e:false} \\[2mm]
\{\{d \plus D\} | \F @ \P\} = \varnothing \fun
  \lnot \nFdd \land \{D | \F @ \P\} = \e
  \label{e:empty2} \\[2mm]
\begin{split}
& \text{If $B$ is any set term except $\varnothing$:} \\
& \{\{d \plus D\} | \F @ \P\} = B \fun \\
& (\Fdd \land \{\Pdd \plus \{D | \F @ \P\}\} = B)
 \lor (\lnot \nFdd \land \{D | \F @ \P\} = B)
\end{split}  \label{e:eset} \\[2mm]
\begin{split}
& \{\bar D | \F @ \P\} = \{t \plus A\} \fun \\
& \bar D = \{n \plus N\} \land \flt(n) \land t =_\mathcal{X} \ptt(n) \land \{N
| \F @ \P\} = A
\end{split}  \label{e:v=e} \\[2mm]
\begin{split}
& \{D | \F @ \P\} \neq A  \fun \\
   & (n \in \{D | \F @ \P\} \land n \notin A)
    \lor (n \notin \{D | \F @ \P\} \land n \in A)
\end{split}  \label{e:neq}
\end{gather}
\hrule \caption{\label{f:eq}Rewrite rules for $A = B$ and $R \neq B$; $A$ or
$B$ RIS terms}
\end{figure}

\subsubsection{Membership}

Rules dealing  with constraints of the form $t \in A$ and $t \notin A$, where
$t$ is a $\LX$ term and $A$ is a RIS term, are listed in Figure \ref{f:in}. The
case $t \notin \bar A$ where $\bar A$ is a variable-RIS is dealt with as
irreducible (Section \ref{solved}), while constraints of the form $t \in A$ are
eliminated in all cases.

\begin{figure}
 \hrule
\begin{gather}
t \in \{\e | \F @ \P\} \lfun \false \label{in:e}\\[2mm]
t \in \{ D | \F @ \P\} \lfun
  n \in  D \land \F(n) \land t =_\Ur \P(n) \label{in:V} \\[2mm]
%
%\begin{split}
%& t \in \{\{d \plus D\} | \F @ \P\} \lfun \\
%& (\Fdd \land t \in \{\Pdd \plus \{D | \F @ \P\}\}) \lor (\lnot \nFdd \land t
%\in \{D | \F @ \P\})
%\end{split} \label{in:domext} \\[2mm]
%
t \notin \{\e | \F @ \P\} \lfun \true  \label{nin:e} \\[2mm]
\begin{split}
& t \notin \{\{d \plus D\} | \F @ \P\} \lfun \\
& (\Fdd \land t \neq_\Ur \Pdd \land t \notin \{D | \F @ \P\}) \lor (\lnot \nFdd
\land t \notin \{D | \F @ \P\})
\end{split} \label{nin:nV}
\end{gather}
\hrule \caption{\label{f:in}Rewrite rules for $t \in A$ and $t \notin A$; $A$
RIS term}
\end{figure}

\subsubsection{Union}

The rules for the union operator are given in Figure \ref{f:un}.

\begin{itemize}
\item Rules \eqref{un:empty}-\eqref{un:empty3} extend
similar rules dealing with extensional set terms
to the case where one argument is a RIS whose domain is
the empty set.
In these cases, $\Cup$-constraints are rewritten to equality constraints
and rules of Figure \ref{f:eq} are called into play.

\item Rules \eqref{un:ext1}-\eqref{un:ext} are the same used for extensional
set terms; the use of set unification, however, forces arguments which are RIS
terms to be possibly converted into extensional sets.

\item Finally, rule \eqref{un:ris} is added to deal with constraints where at
least one argument is a non variable-RIS and cannot be dealt with by the
previous rules. For example, rule \eqref{un:ext} cannot be used when the last
argument is a RIS, then rule \eqref{un:ris} is used instead.

In rule \eqref{un:ris}, each non variable-RIS is rewritten into either an
extensional set or a new RIS with one element less in its domain, plus the
constraints that assert or negate the filter. This is formalized by functions
$\svf$ and $\cvf$.
\end{itemize}

The cases not considered in Figure \ref{f:un}, that is, when the three arguments
are either variables or variables-RIS and the first two are not the same
variable, are dealt with as irreducible (Section \ref{solved}).

The rule for $\Ncup$ is the standard rule for extensional set terms (cf.
rule \eqref{e:nun}).

\begin{figure}
\hrule
\begin{gather}
\Cup(X,X,B) \rightarrow X = B \label{un:equalvars} \\[2mm]
\Cup(A,B,\varnothing) \rightarrow A = \emptyset \land B = \emptyset \label{un:empty} \\[2mm]
\Cup(\varnothing, A, B) \rightarrow B = A \label{un:empty2} \\[2mm]
\Cup(A, \varnothing, B) \rightarrow B = A \label{un:empty3} \\[2mm]
\begin{split}
\Cup&(\set{t}{C}, A, \bar{B}) \rightarrow \\
    & \set{t}{C} = \set{t}{N_1} \land t \notin N_1 \land \bar{B} = \set{t}{N} \\
    & \land (t \notin A \land \Cup(N_1, A, N)
             \lor A = \set{t}{N_2} \land t \notin N_2 \land \Cup(N_1, N_2, N))
\end{split} \label{un:ext1} \\[2mm]
%\begin{split} %DNF
%\Cup&(\set{t}{C}, A, \bar{B}) \rightarrow \\
%    & t \notin A \land \set{t}{C} = \set{t}{N_1} \land \bar{B} = \set{t}{N} \land \Cup(N_1, A, N)\\
%    & \lor  A = \set{t}{N_2} \land \set{t}{C} = \set{t}{N_1} \land \bar{B} = \set{t}{N} \land \Cup(N_1, N_2, N))
%\end{split} \label{un:ext1} \\[2mm]
%
\begin{split}
\Cup&(A, \set{t}{C}, \bar{B}) \rightarrow \\
    & \set{t}{C} = \set{t}{N_1} \land t \notin N_1 \land \bar{B}= \set{t}{N} \\
    & \land (t \notin A \land \Cup(N_1, A, N)
             \lor A = \set{t}{N_2} \land t \notin N_2 \land \Cup(N_1, N_2, N))
\end{split} \label{un:ext2} \\[2mm]
\begin{split}
\Cup&(A, B, \set{t}{C}) \rightarrow \\
    & \set{t}{C} = \set{t}{N} \\
    & \land (A = \set{t}{N_1} \land t \notin N_1 \land \Cup(N_1, B, N) \\
    & \qquad \lor B = \set{t}{N_1} \land t \notin N_1 \land \Cup(A, N_1, N) \\
    & \qquad \lor A = \set{t}{N_1} \land t \notin N_1
         \land B = \set{t}{N_2} \land t \notin N_2 \land \Cup(N_1, N_2, N))
\end{split} \label{un:ext} \\[2mm]
\text{If at least one of $A,B,C$ is not a variable nor a variable-RIS:} \notag \\
\Cup(A, B, C) \rightarrow
    \Cup(\svf(A), \svf(B), \svf(C))
    \land \cvf(A) \land \cvf(B) \land \cvf(C) \label{un:ris} \\
    \quad\text{where $\svf$ is a set-valued function
               and $\cvf$ is a constraint-valued function} \notag \\
\quad
\svf(\sigma) =
\begin{cases}
N_\sigma & \text{if $\sigma \equiv\defris{\set{d}{D}}$} \notag \\
\sigma & \text{otherwise}
\end{cases} \\[2mm]
\quad
\cvf(\sigma) =
\begin{cases}
N_\sigma = (\set{\Pdd}{\defris{D}} \land \Fdd)       & \text{if $\sigma \equiv\defris{\set{d}{D}}$} \notag \\
\quad{}\lor (N_\sigma = \defris{D} \land \lnot \Fdd) & \\
\true & \text{otherwise}
\end{cases}
\end{gather}
\hrule
\caption{\label{f:un}Rewrite rules for $\Cup(A,B,C)$; $A$, $B$ or $C$ a RIS term}
\end{figure}

\subsubsection{Disjointness}

The rules for the $\disj$ operator are listed in Figure \ref{f:disj}. As with
the other operators, some of the standard rules for extensional set terms are
extended to deal with RIS and new rules are added. In particular:

\begin{itemize}
\item Rule \eqref{disj:rempty} is simply extended to RIS whose domain is
the empty set.

\item Rule \eqref{disj:rris} is a new rule dealing with the case when the second
argument is a non variable-RIS. As can be seen, two cases are considered
depending on whether the filter holds or not for one of the elements of the
domain. In the first case, the rules for $\notin$ are called and then a
recursion is started; in the second case only the recursive call is made.
\end{itemize}

Note that $A \disj B$ is not further rewritten if $A$ and $B$ are distinct
variables or variable-RIS. The rule for $\Ndisj$ is the standard rule for
extensional set terms (see \cite{calculusBR}).

\begin{figure}
\hrule
\begin{gather}
X \disj X \rightarrow X = \e \label{disj:id} \\[2mm]
A \disj \varnothing \rightarrow \true \label{disj:rempty} \\[2mm]
\varnothing \disj A \rightarrow \true \label{disj:lempty} \\[2mm]
A \disj \set{t}{B} \rightarrow t \notin A \land A \disj B \label{disj:risext} \\[2mm]
\set{t}{B} \disj A \rightarrow t \notin A \land A \disj B \label{disj:extris} \\[2mm]
\begin{split}
A \disj & \riss{\set{d}{D}}{\Fpv}{\Ppv} \rightarrow \\
        & \Fdd \land \Pdd \notin A \land A \disj \riss{D}{\Fpv}{\Ppv}
          \lor \lnot\Fdd \land A \disj \riss{D}{\Fpv}{\Ppv}
\end{split} \label{disj:rris} \\[2mm]
\begin{split}
\{\{d &\plus D\} | \Fpv @ \Ppv\} \disj A \rightarrow \\
      & \Fdd \land \Pdd \notin A \land \riss{D}{\Fpv}{\Ppv} \disj A
        \lor \lnot \Fdd \land \riss{D}{\Fpv}{\Ppv} \disj A
\end{split} \label{disj:rrisSym}
\end{gather}
\hrule
\caption{\label{f:disj}Rewrite rules for $A \disj B$; $A$ or $B$ a RIS term}
\end{figure}

\subsection{Procedure \textsf{remove\_neq}}\label{removeneq}

The ${\cal RIS}$-formula returned by Algorithm \ref{glob} when \textsf{STEP}
reaches a fixpoint is not guaranteed to be satisfiable due to the presence of
inequalities involving set variables. For example, the formula $D \neq \e \land \{x:D | x=x\} =
\e$ is dealt with by $\SATRIS$ as irreducible but it is clearly
unsatisfiable. In order to guarantee satisfiability, all such inequalities must
be removed from the final formula returned by $\SATRIS$. This is performed (see
Algorithm \ref{glob}) by executing the procedure \textsf{remove\_neq}, which
applies the rewrite rule described by the generic rule scheme of Figure
\ref{fig:rules_neq_elim}. Basically, this rule exploits set extensionality to
state that non-equal sets can be distinguished by asserting that a fresh element
($n$) belongs to one but not to the other.

\begin{figure}
\hrule \vspace{8pt}
 If $A \in \Var_{S}$, $t : \sU$ and $\Phi$ is the input formula then:
\begin{equation}
\begin{split}
&\text{If $A$ is an argument of a $\Cup$-constraint or the domain of a variable-RIS}\\
&\text{occurring as the argument of either a $=$ or $\notin$ or $\Cup$ constraint in $\Phi$:}\\
& A \neq t \rightarrow (n \in A \land n \notin t) \lor (n \in t \land n \notin
A)
%\tag{$\neq_S$}
 \label{rule:neq_elim}
\end{split} %\tag{$\neq_S$}
\end{equation}
 \hrule
 \caption{Rule scheme for $\neq$-constraint elimination rules}
\label{fig:rules_neq_elim}
\end{figure}

\begin{example}
Given the formula $D \neq \e \land \{x:D | \flt @ \ptt\} = \e$,
\textsf{remove\_neq} rewrites $D \neq \e$ as $n \in D$, where $n$ is a fresh
variable. In turn, $n \in D$ is rewritten as $D = \{n \plus N\}$ for another
fresh variable $N$. Finally, the whole formula is rewritten as $D = \{n \plus
N\} \land \{x:\{n \plus N\} | \flt @ \ptt\} = \e$, which fires one of the rules
given in Section \ref{rules}. This rewriting chain is fired only because $D$ is
the domain of a RIS; otherwise \textsf{remove\_neq} does nothing with $D \neq
\e$.
 \qed
\end{example}

%%%%%%%%%%%%%%%%%%%%%%%%%%%%%%%%%%%%%%%%%%%%%%%%%%%%%%%%%%%%%%%%%%%%%%

\section{Decidability of \CLPRIS Formulas}\label{deci}

In this section we show that $\SATRIS$ can serve as a decision procedure for a
large fragment of $\LRIS$. To this end, first we precisely characterize the
admissible $\LRIS$ formulas involving RIS; then we prove that: \emph{(a)} when
$\SATRIS$ terminates the answer is either $\false$ or a disjunction of $\LRIS$
formulas of a very particular form; \emph{(b)} that disjunction is always
satisfiable, and a solution can be trivially found; \emph{(c)} the set of
solutions of the computed formula is the same of the input formula; and
\emph{(d)} $\SATRIS$ terminates for every admissible formula. Detailed proofs
are given in Appendix \ref{ap:proofs}.

\subsection{Admissible formulas} \label{ssec:admissible}

RIS are intended to denote finite sets. For this reason, RIS have the
restriction that the RIS domain must be a finite---though unbounded---set.
However, this is not enough to guarantee RIS finiteness.

\begin{example}\label{ex:infinite_set}
The $\LRIS$ formula
\begin{equation}\label{eq:infinite_set}
A = \{1 \plus \{x : A | \true @ (x,y)\}
\end{equation}
admits only the solution binding $A$ to the infinite set $\{1,(1,y),((1,y),y),$
$(((1,y),y),y),\dots\}$. If $\SATRIS$ is requested to solve this formula, it
will loop forever trying to build the extensional set term representing this
infinite set.
%This kind of formula is dealt with with rule $=_{18}$ shown in Appendix B}
%By the way, note that $\SATRIS$ trivially terminates for a slightly different
%formula:
%\begin{equation}
%A = \{1 \plus \{x : B | \true @ (x,y)\} \notag
%\end{equation}
\qed
\end{example}

In the rest of this section, we will analyze this kind of ``cyclic''
definitions, introducing a few restrictions on $\mathcal{RIS}$-formulas that
prevent them from defining infinite sets.

First of all, observe that we have assumed that the set $\Var_\mathcal{S}$ of
variables ranging over $\mathcal{RIS}$-terms (i.e., set variables) and the set
$\Var_\Ur$ of variables ranging over $\Ur$-terms are disjoint sets. Since RIS
filters are $\Ur$-formulas, then set variables cannot occur in them. This fact
prevents us from creating \emph{recursively defined} RIS of the form $S = \{\ct
: D | \flt(S) @ \ptt\}$.  Moreover, since RIS patterns, as well as elements of
RIS domains, are $\Ur$-terms, then no cyclic definitions of the form  $S =
\{\ct : D | \flt @ \ptt(S)\}$ or $S = \{\ct : \{t_1,\dots,S,\dots,t_n \plus A\}
| \flt @ \ptt\}$ can be written in $\LRIS$.

Hence, the only possible way to create a cyclic definition involving a RIS is
through the set variable possibly occurring as the set part of the RIS domain,
e.g., $S = \{\ct : \{t_1,\dots,t_n \plus S\} | \flt @ \ptt\}$. When the pattern
is the same as the control term, an equation such as $S = \{\ct :
\{t_1,\dots,t_n \plus S\} | \flt\}$ admits the trivial solution in which $S$ is
bound to the (finite) set containing all $t_i$, $i=1,\dots,n$, such that
$\flt(t_i)$ holds. For example,  $S = \{x : \{-1,0,1 \plus S\} | x \neq 0\}$
has the solution $S = \{-1,1 \plus N\}$, where $N$ is a fresh variable. This is
no longer true when the pattern is not the control term, for instance as in
formula \eqref{eq:infinite_set}. In other words, when the pattern is the
control term, then the RIS is a subset of the domain, while in all other cases
this relation does not hold.

As a consequence, processing a formula such as \eqref{eq:infinite_set} may
cause $\SATRIS$ to enter into an infinite loop. In order to allow $\SATRIS$ to
be a decision procedure for $\LRIS$, formulas such as \eqref{eq:infinite_set}
should be discarded.

In what follows, we provide sufficient (syntactic) conditions characterizing a
sub-language of $\LRIS$ for which $\SATRIS$ will be
proved to terminate and so to be a decision procedure for
that sub-language.

First, we define a transformation $\tau$ of $\mathcal{RIS}$-formulas that
allows us to restrict our attention to a single kind of constraints. Let $\Phi
= \Phi_\mathcal{S} \land \Phi_\Ur$ be the input formula, where
$\Phi_\mathcal{S}$ is a pure $\mathcal{RIS}$-formula and $\Phi_\Ur$ is a
$\Ur$-formula; $\Phi_\Ur$ is removed from $\Phi$ and so we only consider its
pure RIS part. Without loss of generality, $\Phi_\mathcal{S}$ can be seen as
$\Phi_1 \lor \dots \lor \Phi_n$, where the $\Phi_i$'s are conjunctions of
primitive $\mathcal{RIS}$-constraints (i.e., all derived constraints have been
replaced by their definitions and the corresponding DNF has been built). Then,
each $\Phi_i$ is transformed into $\Phi_i'$ as follows:
\begin{itemize}
\item constraints of the form $\Cup(A,B,C)$, where $A,B,C$ are either variables
or variable-RIS whose innermost domain variables are $D_A$, $D_B$, $D_C$, and
none of these domain variables occur elsewhere in $\Phi_i$, are removed from
the formula
\item constraints of the form $A = B$, where neither is $\varnothing$, are
rewritten into $\Cup(A,B,B) \land \Cup(B,A,A)$
\item If one of the arguments of a $\Cup$ constraint is of the form
$\{x_1,\dots, x_n \plus B\}$ then it is replaced by a new variable, $N$, and
$\Cup(\{x_1,\dots, x_n\},B,N)$ is conjoined to the formula.
\item all the $\neq$, $\in$, $\notin$, $\disj$, and $\Set$ constraints,
and all the remaining $=$ constraints, are removed from the formula.
\end{itemize}
Hence, $\tau(\Phi) = \Phi_1' \lor \dots \lor \Phi_n'$, where each $\Phi_i'$ is
a conjunction of $\Cup$-constraints.

The following function allows to classify set terms occurring as arguments of
$\Cup$-constraints.

\begin{definition}
Let $\type$ be a function that takes a set term $T$ and returns an element in
$\{\stype,\ptype,\utype\}$, where $\ptype$ depends on one argument belonging to
$\{\stype,\ptype,\utype\}$ and $\utype$ depends on two arguments belonging to
$\{\stype,\ptype,\utype\}$. For each constraint of the form $\Cup(A,B,C)$ the
function $\type(T)$ is defined as follows (note that the definition of $\type$
depends on the position of the argument in the constraint):
\begin{enumerate}
\item If $T$ is $C$, then:
$\type(C) = \utype(\type(A),\type(B))$
\item If $T$ is either $A$ or $B$, then:
\begin{gather}
\type(\e) = \stype  \\
% \type(V) = \stype \text{, if $V \in \Var$} \\
\type(\{\cdot \plus V\}) = \type(V) \\
%\type(\risnopattern{\cdot}{D}{\cdot}) = \type(D) \\
\type(\risnopattern{c}{D}{F}) = \type(D)  \\
%\type(\riss{\cdot : D}{\cdot}{(\cdot,\cdot)}) = \ptype(\type(D))
\type(\riss{c : D}{F}{P}) = \ptype(\type(D)), c \not\equiv P
\end{gather}
and $\type(T)$ remains undefined when $T$ is a variable.
\end{enumerate} \qed
\end{definition}

\begin{example}\label{ex:type_function}
If $\Phi_i$ is $\riss{x : D}{F}{(x,y)} \subseteq D$, then $\Phi_i'$ is
\[
\Cup(\riss{x : D}{F}{(x,y)},D,D)
\]
and the $\type$ function for this constraint is:
\begin{gather*}
\type(D) = \utype(\type(D),\type(\riss{x : D}{F}{(x,y)})) \\
\iff \type(D) = \utype(\type(D),\ptype(\type(D)))
\end{gather*}
where the computation of $\type$ stops because $D$ is a variable. \qed
\end{example}

\begin{definition}
$\ptype^*(D)$ denotes a $\ptype$ that at some point depends on variable $D$.
\qed
\end{definition}

\begin{definition}[Admissible $\mathcal{RIS}$-formula]\label{d:admissible_formula}
Let $\Phi$ be a $\mathcal{RIS}$-formula not in solved form and $\mathcal{E}$ be
the collection of equalities computed by (recursively) applying the $\type$
function to all the $\Cup$-constraints in $\tau(\Phi)$ and performing all
possible term substitutions. Then $\Phi$ is \emph{non-admissible} iff
$\mathcal{E}$ contains at least one equality of the form $X = \utype(Y,Z)$ such
that:
\begin{itemize}
\item If $X$ depends on $\ptype^*(D)$, for some variable $D$, then $Y$ or $Z$
does not depend on $\ptype^*(D)$; and
\item If $Y$ or $Z$ depend on $\ptype^*(D)$, for some variable $D$, then $X$
does not depend on $\ptype^*(D)$.
\end{itemize}
All other $\mathcal{RIS}$-formulas are \emph{admissible}. \qed
\end{definition}

\begin{example}
Given the formula $\Phi$
\[
\riss{x : D}{F}{(x,y)} \subseteq D \land D \neq \e
\]
then $\tau(\Phi)$ is
\[
\Cup(\riss{x : D}{F}{(x,y)},D,D)
\]
that is exactly the formula of Example \ref{ex:type_function}. So $\Phi$ is
classified as non-admissible. Conversely, if $\Phi$ is just $\riss{x :
D}{F}{(x,y)} \subseteq D$, that is, $D$ is a variable not occurring elsewhere
in $\Phi$, then $\mathcal{E}$ is empty and $\Phi$ is classified as admissible.
\qed
\end{example}

\begin{example}
Given the formula $\Phi$
\[
\riss{x : D}{F}{(x,y)} \subseteq \risnopattern{x}{A}{G} \land A \subseteq D
\land D \neq \e
\]
$\tau(\Phi)$ is
\[
\Cup(\riss{x : D}{F}{(x,y)},\risnopattern{x}{A}{G},\risnopattern{x}{A}{G})
\land \Cup(A,D,D)
\]
Then, the collection $\mathcal{E}$ for $\Phi$ is
\begin{gather*}
\{\type(\risnopattern{x}{A}{G}) = \utype(\type(\riss{x :
D}{F}{(x,y)}),\type(\risnopattern{x}{A}{G})),
\type(D) = \utype(\type(A),\type(D))\} \\
\iff \\
\{\type(A) = \utype(\ptype(\type(D)),\type(A)),
\type(D) = \utype(\type(A),\type(D))\} \\
\iff \why{substitution} \\
\{\type(A) = \utype(\ptype(\utype(\type(A),\type(D))),\type(A)), \type(D) =
\utype(\type(A),\type(D))\}
\end{gather*}
So $\Phi$ is classified as non-admissible. Given a formula similar to the
previous one, but where the second RIS is a set of pairs, i.e.,
\[
\riss{x : D}{F}{(x,y)} \subseteq \riss{h : A}{G}{(h,w)} \land A \subseteq D
\land D \neq \e
\]
then the final collection $\mathcal{E}$ for this formula is
\[ \{\ptype(\type(A))
= \utype(\ptype(\utype(\type(A),\type(D))),\ptype(\type(A))), \type(D) =
\utype(\type(A),\type(D))\}
\]
so it is classified as admissible. \qed
\end{example}

From the above
definitions, it is evident that, if the given formula $\Phi$ does not contain
any RIS term, or if all RIS terms possibly occurring in it have pattern
identical to its control term, then $\Phi$ is surely classified as admissible.

Intuitively, non-admissible formulas are those where a variable $A$ is the
domain of a RIS representing a function and, at the same, time $A$ is either a
sub or a superset of that function. For example, the
non-admissible formula $\riss{x : D}{\true}{(x,y)} \subseteq D \land D \neq
\e$ implies that if $a \in D$ then $(a,n) \in D$ and
so $((a,n),n_1) \in D$ and so forth, thus generating an infinite $\Ur$-term. In
other words, non-admissible formulas, in a way or another, require to compare a
function and its domain through the subset relation. Note, however, that $\mathcal{RIS}$-formulas that are classified
as non-admissible are, in a sense, quite unusual formulas. Clearly, in most
typed formalisms, such formulas would not type-check (e.g., in B and Z
\cite{Spivey00,Abrial00}). Therefore, missing them from the decision procedure
implemented by $\SATRIS$ does not seem to be a real lack.

\subsection{\label{solved}Satisfiability of solved form}

As stated in the previous section, the formula $\Phi$ handled by $\SATRIS$ can
be written as $\Phi_\mathcal{S} \land \Phi_\Ur$ where $\Phi_\mathcal{S}$ is a
pure $\mathcal{RIS}$-formula and $\Phi_\Ur$ is a $\Ur$-formula. Right before
Algorithm \ref{glob} calls $\SATX$, $\Phi_\mathcal{S}$ is in a particular form
referred to as \emph{solved form}. This fact can be easily proved by inspecting
the rewrite rules given in Section \ref{rules}. The constraints in solved form
are selected as to allow trivial verification of satisfiability of the formula
as a whole.

\begin{definition}[Solved form]\label{def:solved}
Let $\Phi_\mathcal{S}$ be a pure admissible $\mathcal{RIS}$-formula; let
$\bar C$, $\bar D$ and $\bar E$ be either variables of sort $\sSet$ or
variable-RIS, $X$ and $Y$ be variables of sort $\sSet$ but not variable-RIS,
and $x$ a variable of sort $\sU$; let $t$ be any term of sort $\sU$, and $S$
any term of sort $\sSet$ but not a RIS. An atom $p$ in $\Phi_\mathcal{S}$ is in
\emph{solved form} if it has one of the following forms:
\begin{enumerate}
\item $\true$

\item\label{i:icfirst} $X = S$ or $X = \{Y | \flt @ \ptt\}$,
and $X$ does not occur in $S$ nor in $\Phi_\mathcal{S} \setminus \{p\}$

\item\label{i:solvedRIS1} $\{X | \flt @ \ptt\} = \varnothing$ or
$\varnothing = \{X | \flt @ \ptt\}$

\item\label{i:solvedRIS2} $\{X | \flt_1 @ \ptt_1\} = \{Y | \flt_2 @ \ptt_2\}$.

\item \label{i:solvedneq} $X \neq S$, and $X$ does not occur in $S$ nor as
the domain of a RIS which is the argument of a $=$ or $\notin$ or $\Cup$
constraint in $\Phi_\mathcal{S}$ \footnote{This is guaranteed by procedure
\textsf{remove\_neq} (see Section \ref{solver}).}

\item \label{i:solvednin} $t \notin \bar D$
%or $t \notin \{\bar D | \flt @ \ptt\}$

\item $\Cup(\bar C,\bar D,\bar E)$, and if $\bar C, \bar D \in \Var_{S}$
then $\bar C \not\equiv \bar D$

\item $\bar C \disj \bar D$, and if $\bar C,\bar D \in \Var_{S}$
then $\bar C \not\equiv \bar D$

\item $\Set(X)$, and no constraint $\isx(X)$ is in $\Phi_\mathcal{S}$

\item $\isx(X)$
\end{enumerate}
$\Phi_\mathcal{S}$ is in solved form if all its atoms
are in solved form. \qed
\end{definition}

\begin{example}
The following are $\LRIS$ atoms in solved form,
occurring in a formula $\Phi$ (where $X$, $D$ and $D_i$ are variables):
\begin{itemize}
\item $X = \{x : D | x \neq 0\}$ and $X$ does not occur
elsewhere in $\Phi$ (note that $X$ and $D$ can be the same variable)
\item $1 \notin \{x : D | x \neq 0\}$
\item $\{x : D_1 | x \mod 2 = 0 @ (x,x)\} = \{x : D_2 | x > 0 @ (x,x+2)\}$
\item $\Cup(X,\riss{D_1}{F}{P},\riss{D_2}{G}{Q})$ and, for any $t$,
there are no constraints $D_1 \neq t$ nor $D_2 \neq t$ in $\Phi$.
\qed
\end{itemize}
\end{example}

Right before Algorithm \ref{glob} calls $\SATX$, $\Phi_\mathcal{S}$ is either
$\false$ or it is in solved form, and in this case it is
satisfiable.

\begin{theorem}[Satisfiability of solved form]\label{satisf}
Any (pure) $\mathcal{RIS}$-formula in solved form is satisfiable w.r.t. the
interpretation structure of $\LRIS$.
\end{theorem}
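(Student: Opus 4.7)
The plan is to exhibit, for each solved-form $\Phi_\mathcal{S}$, an interpretation $\iS$ satisfying every conjunct, and I would proceed in three passes. First, I eliminate the case-\ref{i:icfirst} atoms $X = S$: since such an $X$ occurs neither in $S$ nor elsewhere in $\Phi_\mathcal{S}$, I set $X^\iS \defs S^\iS$ (recursively evaluated), drop the equation, and continue with the remaining formula, which is still in solved form over strictly fewer variables.

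Second, I assign a value to each remaining set variable $X$ by a simple rule: if $X$ appears as a side of some case-\ref{i:solvedneq} inequality, I reserve a fresh ur-element $a_X$ (distinct from every constant occurring in $\Phi_\mathcal{S}$ and from every other $a_{X'}$) and set $X^\iS \defs \{a_X\}$; otherwise I set $X^\iS \defs \emptyset$. Ur-element variables are interpreted as arbitrary pairwise distinct elements of $\idom_\sU$.

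Third, I verify each atom under this valuation. The sort atoms (cases 9 and 10) and $\true$ hold by construction. For the RIS-specific cases \ref{i:solvedRIS1} and \ref{i:solvedRIS2}, the key observation is that any variable-RIS whose innermost domain variable is interpreted as $\emptyset$ denotes $\emptyset$, so both sides of these equalities collapse to $\emptyset$. The same observation takes care of case \ref{i:solvednin} (the variable-RIS contains nothing, so $t \notin \bar D$ holds) and of the $\disj$-atoms involving variable-RIS. Case \ref{i:solvedneq} holds because $a_X \in X^\iS$ while $a_X \notin S^\iS$ by freshness. Case 8 ($\bar C \disj \bar D$ with $\bar C \not\equiv \bar D$) holds because any singletons used are built on distinct fresh ur-elements.

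The main obstacle is case 7, $\Cup(\bar C,\bar D,\bar E)$, where the three arguments may be shared across several triples and the naive singleton/empty-set rule need not make $\bar C^\iS \cup \bar D^\iS = \bar E^\iS$ hold. Here I invoke the admissibility restriction of Section \ref{ssec:admissible}: it forbids exactly the cyclic $\utype$-equations that would prevent a consistent simultaneous assignment. Walking variables in the well-founded dependency order induced by the admissibility condition, I fix $\bar C^\iS$ and $\bar D^\iS$ first and then override the default for $\bar E$ by assigning $\bar E^\iS \defs \bar C^\iS \cup \bar D^\iS$; the freshness of the chosen ur-elements keeps this override compatible with every remaining inequality and disjointness atom in which $\bar E$ also participates. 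The novelty over the purely extensional construction of Dovier et al. is thus confined to the trivial but essential observation that a variable-RIS becomes empty as soon as its domain variable does, which handles all the RIS-bearing solved-form shapes uniformly.
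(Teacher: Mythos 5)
Your overall valuation (empty sets by default, something non-empty only where a $\neq$-atom forces it, and the observation that a variable-RIS collapses to $\emptyset$ once its domain does) is in the spirit of the paper's construction, but the step you yourself flag as the crux --- case 7, the $\Cup$-atoms --- rests on a false premise. Admissibility (Definition \ref{d:admissible_formula}) does \emph{not} induce a well-founded dependency order on the arguments of $\Cup$-constraints: a conjunction such as $\Cup(A,B,C) \land \Cup(C,D,A)$ contains no RIS, hence no $\ptype$ in its $\type$-equations, and is therefore admissible and in solved form, yet its dependency graph is cyclic, so your ``walk the variables in the well-founded order and override $\bar E^\iS \defs \bar C^\iS \cup \bar D^\iS$'' is not defined on it. What actually makes this case go through is item \ref{i:solvedneq} of Definition \ref{def:solved}, enforced by \textsf{remove\_neq}: in a solved-form formula, no variable that is an argument of a $\Cup$-constraint (nor the domain of a variable-RIS occurring as such an argument) carries a $\neq$-constraint. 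Consequently your default rule already sends \emph{every} argument of \emph{every} $\Cup$-atom to $\emptyset$, each such atom becomes $\Cup(\emptyset,\emptyset,\emptyset)$, and no override, dependency order, or appeal to admissibility is needed. This is exactly how the paper argues the case; your worry about the override clashing with inequalities on $\bar E$ is moot for the same reason, and the fact that you raise it shows the structural guarantee was not used.

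Two secondary points. First, the paper realizes witnesses as pure nested sets $\{\emptyset\}^{n_i}$, choosing the depths $n_i$ as a solution of a system of integer disequations extracted from the $\neq$- and $\notin$-atoms via the depth function $\mathit{find}$; your alternative of reserving a fresh ur-element $a_X$ per $\neq$-constrained variable presupposes that $\idom_\sU$ contains enough elements distinct from the values of all $\Ur$-terms occurring in $\Phi_\mathcal{S}$, which Definition \ref{def:int_dom} does not grant (and note that the right-hand side $S$ of a $\neq$-atom may contain arbitrary compound $\Ur$-terms and other set variables, so ``distinct from every constant'' is not enough). Second, your first pass should be performed \emph{after} the other variables are valued, since $S^\iS$ in $X = S$ may mention them; the paper makes this ordering explicit by splitting $\Phi$ into $\Phi_= \land \Phi_r$ and defining the valuation $\sigma$ for the left-hand-side variables only once $\theta$ is fixed.
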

\begin{proof}[sketch]
Basically, the proof of this theorem uses the fact that, given a (pure)
$\mathcal{RIS}$-formula $\Phi$ in solved form, it is possible to guarantee the
existence of a successful assignment of values to all variables of $\Phi$ using
pure sets only (in particular, the empty set for set variables), with the
exception of the variables $X$ occurring in atoms of the form $X = t$. \qed
\end{proof}

Moreover, if $\Phi_\mathcal{S}$ is not $\false$, the satisfiability of $\Phi$
depends only on $\Phi_\Ur$.

\begin{theorem}[Satisfiability of $\Phi_\mathcal{S} \land \Phi_\Ur$] \label{satisf2}
Let $\Phi$ be $\Phi_\mathcal{S} \land \Phi_\Ur$ right before Algorithm
\ref{glob} calls $\SATX$. Then either $\Phi_\mathcal{S}$ is $\false$ or the
satisfiability of $\Phi$ depends only on the satisfiability of $\Phi_\Ur$.
\end{theorem}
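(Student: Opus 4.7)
The plan is to reduce Theorem \ref{satisf2} to Theorem \ref{satisf} by showing that the canonical satisfying assignment for a solved-form $\Phi_\mathcal{S}$ can be chosen uniformly in the interpretation of the $\Ur$-variables. First I would split on whether $\Phi_\mathcal{S}$ is $\false$, in which case the claim is vacuous; otherwise, by inspection of Algorithm \ref{glob}, $\Phi_\mathcal{S}$ must be in solved form according to Definition \ref{def:solved}. The direction \emph{$\Phi$ satisfiable} $\Rightarrow$ \emph{$\Phi_\Ur$ satisfiable} is then immediate, since any model of the conjunction is a model of each conjunct.

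For the converse, I would start from a model $\sigma_\Ur$ of $\Phi_\Ur$ and extend it to a model $\sigma$ of the whole $\Phi$. The crucial observation is the partition $\Var = \Var_\mathcal{S} \cup \Var_\Ur$ from Definition \ref{signature}, combined with the fact that $\Phi_\Ur$ is a pure $\Ur$-formula: no set variable of $\Phi_\mathcal{S}$ occurs in $\Phi_\Ur$, so the choices made for set variables do not interfere with the already-satisfied $\Phi_\Ur$. Reusing the construction sketched for Theorem \ref{satisf}, I would assign $\emptyset$ to every set variable except those $X$ bound by a form-(2) atom $X = S$ (which receives $\sigma(S)$) and those $X$ appearing in a form-(5) atom $X \neq S$ (which receives any pure set distinct from $\sigma(S)$, possible since $\idom_\sSet$ is infinite and $X$ does not occur in $S$ by Definition \ref{def:solved}).

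Then I would verify each solved-form clause in turn: clauses (1), (9), (10) are trivial; clauses (3)--(4), which involve RIS with empty domain, evaluate to $\emptyset$ on both sides by the semantics of Definition \ref{def:int_funct}; clauses (6), (7) and (8) are vacuously satisfied since all relevant variable-RIS domains and their outermost set variables have been set to $\emptyset$, making membership, union and disjointness conditions trivial; clause (2) holds by construction; clause (5) holds by the distinguishing choice above. The main obstacle is guaranteeing that this assignment, made independently of $\sigma_\Ur$, still works for whatever $\Ur$-values $\sigma_\Ur$ produces inside terms $\flt$, $\ptt$ or $S$ occurring in $\Phi_\mathcal{S}$. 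The point is that by the design of the solved form the $\SATRIS$ rewrite rules of Section \ref{rules} have already migrated every non-trivial $\Ur$-obligation into $\Phi_\Ur$, so once the set variables are fixed by the canonical choices above the residual $\mathcal{RIS}$-constraints impose no further conditions on the $\Ur$-variables. Hence $\sigma$ satisfies $\Phi$, and satisfiability of $\Phi$ is equivalent to that of $\Phi_\Ur$.
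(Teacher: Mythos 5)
Your proposal is correct and follows essentially the same route as the paper: the trivial direction is immediate, and for the converse both arguments observe that the only places where $\sU$-variables can occur in a solved-form $\Phi_\mathcal{S}$ are the constraint shapes of Definition \ref{def:solved}, each of which remains satisfied under the canonical set-variable assignment of Theorem \ref{satisf} no matter which values a model of $\Phi_\Ur$ gives those $\sU$-variables. Your clause-by-clause check mirrors the paper's enumeration of the solved-form atoms that may contain a $\sU$-variable, so the two proofs coincide in substance.
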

\begin{proof}[sketch]
The proof is based on the observation that assigning values to
variables of sort $\Ur$ possibly occurring in $\Phi_\mathcal{S}$ does not
affect the solved form of $\Phi_\mathcal{S}$, i.e., $\Phi_\mathcal{S}$ remains
in solved form, hence satisfiable, disregarding of the values assigned to the
$\Ur$ variables.
\qed
\end{proof}

\subsection{Equisatisfiability}

In order to prove that Algorithm \ref{glob} is correct and complete,
we prove that it preserves the set of solutions of the
input formula.
%We can now state equisatisfiability of Algorithm \ref{glob} as follows.

\begin{theorem}[Equisatisfiability]\label{equisatisfiable}
Let $\Phi$ be an admissible $\mathcal{RIS}$-formula and
$\{\Phi_i\}_{i=1}^{n}$ be the collection of $\mathcal{RIS}$-formulas returned
by $\SATRIS(\Phi)$. Then $\bigvee_{i=1}^{n} \Phi_i$ is equisatisfiable w.r.t.
$\Phi$, that is, every possible solution\footnote{More precisely, each solution
of $\Phi$ expanded to the variables occurring in $\Phi_i$ but not in $\Phi$, so
to account for the possible fresh variables introduced into $\Phi_i$.} of
$\Phi$ is a solution of one of $\{\Phi_i\}_{i=1}^{n}$ and, vice versa, every
solution of one of these formulas is a solution for $\Phi$.
\end{theorem}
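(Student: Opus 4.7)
The plan is to reduce the global equisatisfiability claim to a collection of \emph{local} equisatisfiability lemmas, one per rewrite rule used by $\SATRIS$, and then lift these to arbitrary computations by induction. Concretely, I would first prove that every rewrite rule $p \lfun \Phi_1 \lor \dots \lor \Phi_n$ used by \textsf{STEP} (Figures \ref{f:eq}--\ref{f:disj} and their symmetric variants in Appendix \ref{ap:rules}) and by \textsf{remove\_neq} satisfies: for every assignment $\sigma$ over the variables of $p$, $\sigma \models p$ iff there exist $i$ and an extension $\sigma'$ of $\sigma$ to the fresh variables of $\Phi_i$ such that $\sigma' \models \Phi_i$. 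The direction ``$\Leftarrow$'' is pure soundness and follows by unwinding the semantics in Definition \ref{def:int_funct}; the direction ``$\Rightarrow$'' requires exhibiting a witnessing extension, and is the place where fresh variables enter the picture.

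Second, I would state a congruence lemma: if $\Phi = \Gamma[p]$ and $p$ is equisatisfiable with $\Phi_1 \lor \dots \lor \Phi_n$ (modulo fresh variables), then $\Gamma[p]$ is equisatisfiable with $\Gamma[\Phi_1] \lor \dots \lor \Gamma[\Phi_n]$. This holds because $\LRIS$-formulas are built from constraints by $\land$ and $\lor$ only (Definition \ref{formula}), so substituting an equisatisfiable subformula preserves the set of solutions after projection onto the original variables. Repeated application of the congruence lemma along any single \textsf{STEP} invocation, together with the associativity and commutativity of $\land$ and $\lor$, shows that each \textsf{STEP} call preserves equisatisfiability. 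The same argument covers \textsf{remove\_neq}, whose rule (Figure \ref{fig:rules_neq_elim}) is exactly set extensionality witnessed by a fresh element $n$, and trivially covers \textsf{sort\_infer}/\textsf{sort\_check}, since those just make explicit the sort information that is implicit in the formula's well-formedness (Remark \ref{rsort}). Finally, the call to $\SATX$ on $\Phi_\Ur$ preserves equisatisfiability by assumption on the parameter theory $\Ur$, and by Theorem \ref{satisf2} it only affects the $\Ur$-conjunct without disturbing $\Phi_\mathcal{S}$.

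Third, I would combine the per-step claim into a global induction on the length of the non-deterministic derivation. The base case is immediate. For the inductive step, let $\Phi \lfun \Phi'$ be one rewriting step along some non-deterministic branch; by the local lemma and the congruence lemma, $\Phi$ is equisatisfiable with the disjunction of the formulas produced by this step across all branches; by induction each branch's terminal disjunction is equisatisfiable with the formula it started from. Since $\SATRIS$ collects the successful leaves of the derivation tree as the output disjunction $\{\Phi_i\}_{i=1}^n$ (with $\false$ branches absorbed), we obtain $\Phi \equiv \bigvee_{i=1}^n \Phi_i$ in the equisatisfiability sense of the statement.

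The main obstacle I expect is the local lemma for the more intricate rules, especially rule \eqref{e:v=e} for equality of a variable-RIS with an extensional set and the $\svf/\cvf$ machinery in rule (\ref{un:ris}) for union involving RIS. For \eqref{e:v=e} one must justify introducing fresh $n$ and $N$ and show that every $\iS$-model of $\{\bar D | \F @ \P\} = \{t \plus A\}$ can be extended by choosing a witness $n \in \bar D$ whose image under $\ptt$ is $t$; this uses the semantic definition of RIS where the denoted set is the image under $\ptt$ of $\{x \in D : \flt\}$, hence the existence of at least one preimage of $t$ when the RIS is non-empty. For (\ref{un:ris}) one must argue that peeling off a single element $d$ from the domain of each non-variable-RIS argument, together with the two-case split in $\cvf$ on whether $\Fdd$ holds, exactly captures the elementwise behaviour of $\cup$, and I would invoke (or reprove) the auxiliary property used in the appendix for rule \eqref{e:empty2} to the effect that splitting the domain commutes with the RIS denotation. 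Once these local facts are verified (the remaining rules being routine case analyses), the structural lifting is mechanical.
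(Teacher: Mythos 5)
Your plan follows the same route as the paper's proof: a local solution-preservation lemma for each rewrite rule, stated as an ``iff up to extension to the fresh variables,'' lifted through the $\land/\lor$ structure of formulas and through the iteration of \textsf{STEP} and \textsf{remove\_neq}. The paper's appendix is precisely such a collection of per-rule equivalence lemmas, so the skeleton is right, and your identification of rules \eqref{e:v=e} and \eqref{un:ris} as the delicate cases is also correct.

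There is, however, one concrete gap in your sketch of the local lemma for rule \eqref{e:v=e}. The existence of a preimage $n$ of $t$ under the pattern is not the issue (it is immediate from the semantics of RIS); the issue is the residual equation requiring the RIS over the reduced domain $N = \bar D \setminus \{n\}$ to equal $A$. If $\ptt$ is not injective on $\{x \in \bar D : \flt(x)\}$, some other element of $N$ may still be mapped to $t$, so the left-hand side of that residual equation may contain $t$ while $A$ need not; the forward direction of the rule then fails, and your witnessing argument breaks exactly there. This is why the paper restricts patterns syntactically (Definition \ref{RIS-terms}) and proves the theorem only under the hypothesis that all patterns are \emph{bijective} and pairwise \emph{co-injective} (Section \ref{sec:controlPatterns}); injectivity is used explicitly in the proofs of rules \eqref{e:v=e}, \eqref{special4} and \eqref{special6}, and pairwise co-injectivity is what makes equations between two distinct RIS sharing a domain variable admit a correct finite analysis. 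Your local lemmas must therefore be stated and proved under these hypotheses on patterns; as written, the plan tacitly assumes they hold for arbitrary patterns, which is false.
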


\begin{proof}[sketch]
The proof rests on a series of lemmas each showing that the set of solutions of
left and right-hand sides of each rewrite rule is the same. For all cases
dealing with extensional set terms the proofs can be found in \cite{Dovier00}.
The proofs of equisatisfiability for the rules shown in Section \ref{solver}
and in Appendix \ref{ap:rules} can be found in Appendix \ref{ap:proofs}.  \qed
\end{proof}

Observe that Theorems \ref{satisf} and \ref{equisatisfiable} imply that
$\SATRIS$, not only determines whether the input formula
is satisfiable or not but also, when the input formula is satisfiable, it
computes a finite representation of all the (possibly infinitely many)
solutions.

\subsection{\label{equi}Termination}

Termination of $\SATRIS$ can be proved for admissible formulas as is stated by
the following theorem.

\begin{theorem}[Termination]\label{termination-glob}
The $\SATRIS$ procedure can be implemented in such a way it terminates for
every input admissible $\mathcal{RIS}$-formula $\Phi$.
\end{theorem}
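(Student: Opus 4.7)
The plan is to define a well-founded complexity measure on $\mathcal{RIS}$-formulas that strictly decreases under every application of a rewrite rule inside $\textsf{STEP}$, so the inner fixpoint loop $\mathsf{rw}_\pi$ terminates. Combined with termination of \textsf{remove\_neq} and termination of $\SATX$ (available by assumption on $\Ur$), this yields termination of the whole procedure in Algorithm \ref{glob}. Since for extensional-only rewrite rules termination has already been established elsewhere (\cite{Dovier00,Dovier03,Cristia2019}), the novel part of the argument concentrates on the RIS-specific rules of Figures \ref{f:eq}--\ref{f:disj} together with those in Appendix \ref{ap:rules}.

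The measure I would use is a tuple $\langle m_1(\Phi), m_2(\Phi), m_3(\Phi), m_4(\Phi)\rangle$ compared in lexicographic order, where $m_1$ is (roughly) the sum over all non-variable RIS of the number of elements syntactically known to lie in their domains, $m_2$ is the number of RIS occurrences in $\Phi$, $m_3$ is the number of primitive $\Pi_\mathcal{S}$-constraints not yet in solved form (Definition \ref{def:solved}), and $m_4$ is the total syntactic size of extensional set terms appearing as arguments of such constraints. A case analysis over the rules shows that: rules \eqref{e:empty1}, \eqref{e:false}, \eqref{in:e}, \eqref{nin:e}, \eqref{un:empty}--\eqref{un:empty3} and the $\disj$-rules \eqref{disj:id}--\eqref{disj:lempty} remove a constraint outright; rules \eqref{e:empty2}, \eqref{e:eset}, \eqref{nin:nV}, \eqref{disj:rris}, \eqref{disj:rrisSym}, and the unfolding in rule \eqref{un:ris} (via $\cvf$) strictly decrease $m_1$ by peeling off one element $d$ from the known part of a RIS domain, while rules \eqref{e:v=e}, \eqref{in:V}, and the extensional unification rule \eqref{e:eseteq} trade a RIS/extensional constraint for constraints whose extensional arguments are strictly smaller ($m_4$ drops, with $m_1$ and $m_2$ unchanged because the freshly-introduced RIS has the same-or-smaller domain structure).

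The hard part will be the rules that appear to re-introduce a RIS of the same syntactic shape as the one being eliminated, notably \eqref{e:v=e}, \eqref{in:V} and \eqref{un:ris}. In each of these, a new variable-RIS involving a fresh set variable $N$ is produced. Termination of $\mathsf{rw}_\pi$ on such constraints is not a problem individually (the accompanying extensional set shrinks), but when these freshly-introduced RIS feed back into further $\Cup$/$=$-constraints they could, without additional control, cause $m_1$ to grow again as the domain variable $N$ is eventually instantiated. This is precisely where the admissibility condition of Definition \ref{d:admissible_formula} is needed. I would prove a lemma stating that the $\tau$-image of an admissible formula is preserved under every rewrite rule (up to renaming of fresh variables) and that the associated $\type$-equation collection $\mathcal{E}$ never acquires a forbidden $X = \utype(Y,Z)$ pattern. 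Consequently, the variable that a rule like \eqref{e:v=e} instantiates never participates itself in a cyclic $\ptype^*$-chain, ruling out the Example \ref{ex:infinite_set} scenario and ensuring that, after finitely many rewriting steps, no RIS whose domain is instantiated to a non-empty extensional set remains.

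Finally, I would discharge the outer levels of Algorithm \ref{glob}. Procedure \textsf{remove\_neq} applies rule \eqref{rule:neq_elim} only to $\neq$-constraints whose left-hand side is a set variable satisfying the side condition; each application strictly reduces the number of such constraints and introduces only $\in$/$\notin$-constraints on fresh variables, which cannot trigger further applications of the same rule. Hence the outer repeat-until loop of $\SATRIS$ runs finitely many times. The final call to $\SATX$ terminates by hypothesis, and the \textsf{sort\_infer}/\textsf{sort\_check} passes add only a bounded number of sort constraints. Putting these pieces together yields termination of $\SATRIS$ on every admissible $\Phi$, and the detailed case analyses are deferred to Appendix \ref{ap:proofs}.
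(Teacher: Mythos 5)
Your plan takes a genuinely different route from the paper's: you propose one global lexicographic measure that decreases under every rule, whereas the paper first proves \emph{local} termination of each rewriting procedure by ad hoc size arguments organised around the intra-procedure call graphs (Figures \ref{f:eq-calls}--\ref{f:disj-calls}), and then derives global termination from the acyclicity of the inter-procedure dependency graph (Figure \ref{f:calls}). A single well-founded measure would be cleaner if it worked, but as stated yours does not decrease under all rules. Rule \eqref{e:neq} replaces one unsolved $\neq$-constraint by two unsolved $\in$/$\notin$-constraints while leaving $m_1$ and $m_2$ untouched, so the tuple strictly \emph{increases} at $m_3$; rule \eqref{e:v=e} leaves $m_1$ and $m_2$ unchanged (the RIS domain is a variable before and after) and introduces a fresh extensional term $\{n \plus N\}$, so neither $m_3$ nor $m_4$ is guaranteed to drop. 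The first defect is repairable by stratifying the count of unsolved constraints along the dependency order $\neq \to \in \to {=} \to \notin$ (which is essentially what the paper's Figure \ref{f:calls} encodes), but it must be done explicitly.

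The more serious gap is that you place the entire burden of controlling the feedback loops on admissibility, and admissibility does not exclude the loops that actually arise. The constraint $\{x : D \mid \true \,@\, x\} = \{1 \plus D\}$ is admissible --- its pattern equals its control term, so by the discussion following Definition \ref{d:admissible_formula} it is always classified as admissible and no $\ptype$ ever enters $\mathcal{E}$ --- yet applying rule \eqref{e:v=e} to it reproduces a structurally identical constraint after the substitution $D \mapsto \{n \plus N\}$: none of $m_1,\dots,m_4$ decreases and the rewriting loops forever. The paper copes with exactly this by introducing dedicated shared-variable rules (\eqref{special2}, \eqref{special3}, \eqref{special5}, \eqref{special6}, \eqref{e:v=e2}, \eqref{special4}), which take precedence over the generic equality rules and whose termination is argued separately; your plan never mentions them, and the admissibility-preservation lemma you propose cannot substitute for them, since these problematic inputs are admissible. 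Until the measure is shown to decrease on the shared-variable cases too --- or those cases are disposed of by special rules as in the paper --- the argument has a hole precisely where the paper does most of its work.
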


The termination of $\SATRIS$ and the finiteness of the number of
non-determi\-nistic choices generated during its computation guarantee the
finiteness of the number of $\mathcal{RIS}$-formulas non-deterministically
returned by $\SATRIS$. Therefore, $\SATRIS$\ applied to an admissible
$\mathcal{RIS}$-formula $\Phi$ always terminates, returning either $\false$ or
a finite collection of satisfiable $\mathcal{RIS}$-formulas in solved form.

Thanks to Theorems \ref{satisf}, \ref{equisatisfiable} and
\ref{termination-glob} we can conclude that, given an admissible
$\mathcal{RIS}$-formula $\Phi$, $\Phi$ is satisfiable with respect to the
intended interpretation structure $\iS$ if and only if there is a
non-deterministic choice in $\SATRIS(\Phi)$ that returns a
$\mathcal{RIS}$-formula $\Phi_\mathcal{S} \land \Phi_\Ur$ where
$\Phi_\mathcal{S}$ is in solved form---i.e., there is a choice that does not
terminate with $\false$.
%if $\SATREL$ can rewrite $\Phi$ only to $\false$ then $\Phi$ is unsatisfiable.
Hence, $\SATRIS$ is a decision procedure for testing satisfiability of
admissible $\mathcal{RIS}$-formulas.

\begin{remark}
Definition \ref{d:admissible_formula} gives only \emph{sufficient} conditions.
In fact, not all formulas classified as non-admissible are indeed formulas that
our solver cannot deal with. Given the formula:
\begin{equation}\label{eq:1}
\riss{x : A}{\false}{(x,y)} \subseteq A \land Y \in A
\end{equation}
% subset(ris(X in A,false,[X]),A)
any set $A$ satisfying $Y \in A$ is a solution of it. So, we should accept it.
However, according to Definition \ref{d:admissible_formula}, this formula is
classified as non-admissible. Note that the similar formula where the filter is
$\true$ admits only an infinite set solution and is (correctly) classified as
non-admissible.

Accepting or not formula \eqref{eq:1} depends on the satisfiability of the RIS
filter. Checking the satisfiability of the filter, however, cannot be done, in
general, by simple syntactic analysis, i.e., without running the solver on it.
Thus, when aiming at providing a syntactic characterization of admissible
formulas, we must classify formulas disregarding the form of the RIS filters
possibly occurring in them. Finer characterizations would be feasible,
however, considering special forms of RIS filters, such as $\false$ and
$\true$.
\qed
\end{remark}

\begin{remark}
In practice many interesting theories are undecidable and only semi-decision
procedures exist for them. On the other hand, in fact, all of the theoretical
results presented so far apply provided RIS' filters belong  to a decidable
\emph{fragment} of the considered parameter theory, and a solver for this
fragment is called from $\SATRIS$. Hence, the condition on the availability of
a decision procedure for \emph{all} $\Ur$-formulas can be often relaxed.
Instead, the existence of some algorithm capable of deciding the satisfiability
of a significant fragment of $\Ur$-formulas can be assumed. If such an
algorithm exists and the user writes formulas inside the corresponding
fragment, all of our results still apply. For these reason, in coming sections,
we sometimes give examples where $\Ur$ is not necessarily a decidable theory.
\end{remark}

\subsection{Complexity}

$\SATRIS$ strongly relies on set unification. Basically, rewrite rules dealing
with RIS ``extract'' one element at a time from the domain of a RIS by means of
set unification and construct the corresponding extensional set again through
set unification. Hence, complexity of our decision procedure strongly depends
on complexity of set unification. As observed in \cite{Dovier03}, the decision
problem for set unification is NP-complete. A simple proof of the NP-hardness
of this problem has been given in \cite{DBLP:journals/jlp/DovierOPR96}. The
proof is based on a reduction of 3-SAT to a set unification problem. Concerning
NP-completeness, the algorithm presented here clearly does not belong to NP
since it applies syntactic substitutions. Nevertheless, it is possible to
encode this algorithm using well-known techniques that avoid explicit
substitutions, maintaining a polynomial time complexity along each
non-deterministic branch of the computation.

Besides, $\SATRIS$ deals not only
with the decision problem for set unification but also with the associated
function problem (i.e., it can compute solutions for the problem at hand).
Since solving the function problem clearly implies solving the related decision
problem, the complexity of $\SATRIS$ can be no better than the complexity of
the decision problem for set unification. Finally, since $\SATRIS$ is
parametric w.r.t. $\SATX$, its complexity is at least the maximum between the
complexity of both.

\section{Encoding restricted universal quantifiers and partial functions} \label{sec:uses}

In this section we show how RIS can be used to encode restricted universal
quantifiers (Section \ref{ruq}) and partial functions (Section \ref{pf}).

\subsection{Restricted universal quantifiers}\label{ruq}

RIS can be used to encode restricted universal quantifiers. That is, if $A$ is
a finite set and $\Fpv$ is a formula of some theory $\Ur$, the formula:
\begin{equation}
\forall x \in A: \Fpv(x)
\end{equation}
is equivalent to the $\LRIS$  formula (see Proposition
\ref{prop:1} in the appendix):
\begin{equation}
A \subseteq \{x:A | \Fpv(x)\}
\end{equation}
where $\subseteq$ is a derived constraint based on $\Cup$ (see Section
\ref{expressiveness}).

%Since $\Fpv$ does not depend on $A$,
This formula lies inside of the decision procedure of $\LRIS$. Therefore,
$\LRIS$ is able to reason about universally quantified formulas.

The following example illustrates a possible use of RIS to express a restricted
universal quantification.

\begin{example}\label{ex:quantif}
The formula $y \in S \land S \subseteq \{x:S | y \leq x\}$ states that $y$ is
the minimum of a set of integers $S$. If, for instance, $S = \{2,4,1,6\}$, then
$y$ is bound to $1$. The same RIS can be used in conjunction with a
partially specified $S$ to properly constraint its unknown elements. For
instance, if $S = \{2,4,z,6\}$, then the $\LRIS$ solver returns either $y = 2
\land z \geq 2$ or $y = z \land z \leq 2$, provided the underlying $\LX$ solver
is able to manage simple integer disequations. Similarly, if $R$ is a variable
and $S = \{1 \plus R\}$, then the first answer returned by $\LRIS$ will be $y =
1 \land R \subseteq \{x:R | 1 \leq x\}$, where $R$ represents the set of all
integer numbers less or equal to $1$.
\qed
\end{example}

Restricted universal quantifiers can be exploited, for instance, in program
verification. As a matter of fact, many programs operate on finite, unbounded
inputs and states. Hence, if the formal specification of such a program uses an
universally quantified formula, indeed it can be replaced with a restricted
universally quantified formula. This implies that if theory $\Ur$ is decidable,
then the formal verification of those programs that can be specified as
$\LRIS(\Ur)$ formulas, is decidable as well.

For the sake of convenience, we introduce the following derived constraint:
\begin{equation}
\Forall(x \in A,\Fpv) \defs A \subseteq \{x:A | \Fpv\}
\defs
  \Cup(A,\{x:A | \Fpv\},\{x:A | \Fpv\})
\end{equation}
Hereafter, we will use $\Forall$ in place of the equivalent
$\subseteq$-constraint.

\begin{example}[Linear integer algebra]
It is a known fact that linear algebra over $\num$ is decidible; let us call
this theory $\mathcal{Z}$. Then, any admissible $\LRIS$-formula including the
$\Forall$-constraints listed in Table \ref{t:forall} fall inside the decision
procedure of the restricted quantified theory $\LRIS(\mathcal{Z})$.
\end{example}

\begin{figure}
\begin{tabularx}{\textwidth}{Xl}
\toprule \\
All the even numbers in $A$ & $\Forall(x \in A,x \mod 2 = 0)$ \\[1mm]
Intersection of $B$ with the less-than relation & $\Forall((x,y) \in B,x \leq y)$ \\[1mm]
All the solutions of an equation with a free variable in
$C$ & $\Forall(y \in C,3x + 2y - 14 = 0)$ \\[1mm]
Intersection of $D$ with the successor function & $\Forall((x,y) \in D,y = x + 1)$ \\
\bottomrule
\end{tabularx}
\caption{\label{t:forall} Examples of decidable restricted quantified formulas
over linear integer algebra}
\end{figure}

\begin{remark}
Although the rules for $\Cup$ shown in Figure \ref{f:un} can deal with the
$\Forall$ predicate, we introduce a set of specialized rewrite rules to process
this specific kind of predicates more efficiently (see Figure \ref{f:forall}).
The key rule is \eqref{forall:iter} because it corresponds to the formula
$\Forall(x \in \set{t}{A},\Fpv(x))$, where $x$ is a variable. In turn, this
formula can be seen as an iterative program whose iteration variable is $x$,
the range of iteration is $\set{t}{A}$, and the body is $\Fpv$. In fact, rule
\eqref{forall:iter} basically iterates over $\set{t}{A}$ and evaluates $\Fpv$
for each element in that set. If one of these elements does not satisfy $\Fpv$
then the loop terminates immediately, otherwise it continues until the empty
set is found (i.e., rule \eqref{forall:empty}) or a variable is found (i.e.,
rule \eqref{forall:var}).

\begin{figure}
\hrule
\begin{gather}
% A il vuoto
\Cup(\varnothing,
     \risnocp{A}{\Fpv},\risnocp{A}{\Fpv})
 \rightarrow \true \label{forall:empty} \\[2mm]
% A variabile
\Cup(\bar{A},\risnocp{\bar{A}}{\Fpv},\risnocp{\bar{A}}{\Fpv})
  \rightarrow \text{irreducible} \label{forall:var} \\[2mm]
% A un estensionale
%\text{{rule simplified, see proof at the end of the document}} \notag\\
\Cup(\set{t}{A},
      \risnocp{\set{t}{A}}{\Fpv},\risnocp{\set{t}{A}}{\Fpv})
  \rightarrow
    \Fpv(t)
     \land \Cup(A,
                \risnocp{A}{\Fpv},
                \risnocp{A}{\Fpv})  \label{forall:iter}
\end{gather}
\hrule
\caption{\label{f:forall}Specialized rewrite rules to deal with
restricted universal quantifiers}
\end{figure}

\end{remark}

\subsection{Partial functions}\label{pf}

Another important use of RIS is to define \emph{(partial) functions} by giving
their domains and the expressions that define them. In general, a RIS of the
form
\begin{equation}
\{x:D | F @ (x,f(x))\}
\end{equation}
where $f$ is any $\LX$ function symbol, defines a partial function. Such a RIS
contains ordered pairs whose first components belong to $D$ which cannot have
duplicates (because it is a set). Then, if no two pairs share the same first
component, then the RIS is a function. Given that RIS are sets, then, in
$\LRIS$, functions are sets of ordered pairs as in Z and B. Therefore, through
standard set operators, functions can be evaluated, compared and point-wise
composed; and by means of constraint solving, the inverse of a function can
also be computed. The following examples illustrate these properties.

\begin{example}
The square of 5 can be calculated by: $(5,y) \in \{x:D @ (x,x*x)\}$, yielding
$y = 25$. The same RIS calculates the square root of a given number: $(x,36)
\in \{x:D @ (x,x*x)\}$, returning $x = 6$ and $x = -6$. Set membership can also
be used for the point-wise composition of functions. The function $f(x) = x^2 +
8$ can be evaluated on 5 as follows: $(5,y) \in \{x:D @ (x,x*x)\} \land (y,z)
\in \{e:E @ (e,e + 8)\}$ returning $y = 25$ and $z = 33$. \qed
\end{example}

\section{Extensions}\label{sec:extension}

The formula $\Phi$ of a (general) intensional set $\{x : \Phi(x)\}$ may depend
on existentially quantified variables, declared inside the set. For example, if
$R$ is a set of ordered pairs and $D$ is a set, then the subset of $R$ where all
the first components belong to $D$ can be denoted by
\begin{equation}\label{IS}
\{p : \exists x, y(x \in D \land (x,y) \in R \land p = (x,y))\}.
\end{equation}
We will refer to these existentially quantified variables as
\emph{parameters}.

Allowing parameters in RIS rises major problems when RIS have to be manipulated
through the rewrite rules considered in the previous section. In fact, if
$\vec{p}$ is the vector of parameters possibly occurring in a RIS, then literals
of the form $\lnot \nFdd$, occurring in the rules (e.g., \eqref{e:eset}), should
be replaced with the more complex universally quantified formula $\forall
\vec{p}(\lnot \flt(d,\vec{p}))$. This, in turn, would require that the theory
$\Ur$ is equipped with a solver able to deal with such kind of formulas.

In this section we describe two extensions to $\LRIS$ that increase its
expressiveness concerning the presence of parameters, without compromising
decidability. The first one deals with control terms (and patterns), while the
second one deals with special kinds of predicates occurring in RIS filters,
called \emph{functional predicates}.

Both extensions are supported by the implementation of $\LRIS$ within \setlog
(see Section \ref{impl}).

\subsection{Control terms and patterns}\label{sec:controlPatterns}

The first of these extensions is the possibility to use more general forms of
control terms and patterns in RIS.

As stated in Definition \ref{RIS-terms}, $\LRIS$ is
designed to allow for terms of the form $(x,y)$ in place of the quantified
control variable of set comprehensions. In effect, the
idea of allowing control terms (not just control variables) steams from
the observation that many uses of parameters can be avoided by a proper use of
the control term of a RIS.

\begin{proposition}\label{prop:3}
If $\vp$ is a vector of existentially quantified variables declared inside
an intensional set then:
\begin{gather*}
S = \{x:D_1 | \Fpv(x,\vp) \land \vp \in D_2 @ \Ppv(x,\vp)\} \\
\iff S = \{(x,\vp):D_1 \times D_2 | \Fpv((x,\vp)) @ \Ppv((x,\vp))\}
\end{gather*}
\end{proposition}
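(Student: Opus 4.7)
The plan is to prove the two sides are equal by unfolding the semantics of the RIS constructor given in Definition \ref{def:int_funct} and then applying the standard characterization of Cartesian product membership.

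First, I would apply Definition \ref{def:int_funct} to the right-hand side of the first equation. Since the control term is just the variable $x$, the existential quantifier is introduced for $x$ alone, while $\vp$ is a free variable of the RIS according to the semantics. However, because $\vp$ is declared as an existentially quantified parameter inside the intensional set (as stated in the hypothesis), its occurrence in the filter must also be quantified. Thus the first set unfolds to
\[
\{y : \exists x, \vp\, (x \in_\Ur D_1 \land \Fpv(x,\vp) \land \vp \in D_2 \land y =_\Ur \Ppv(x,\vp))\}.
\]

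Next, I would unfold the second RIS. Here the control term is the pair $(x,\vp)$, so by Definition \ref{def:int_funct} the vector $\vec{x}$ of variables occurring in the control term is precisely $\langle x,\vp\rangle$, and these become the existentially quantified variables. Hence the second set unfolds to
\[
\{y : \exists x, \vp\, ((x,\vp) \in_\Ur D_1 \times D_2 \land \Fpv((x,\vp)) \land y =_\Ur \Ppv((x,\vp)))\}.
\]

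Then I would apply the standard defining property of the Cartesian product, namely $(x,\vp) \in D_1 \times D_2 \iff x \in D_1 \land \vp \in D_2$, to rewrite the membership condition in the second set. Both existentially quantified formulas then become literally the same (up to re-association of the conjuncts and the trivial identification of $\Fpv(x,\vp)$ with $\Fpv((x,\vp))$ and similarly for $\Ppv$), so the two sets denote the same object in $\idom_\sSet$, and therefore $S$ equals one iff it equals the other.

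The only subtle step is the treatment of $\vp$ on the left-hand side: without the hypothesis that $\vp$ is existentially quantified inside the intensional set, $\vp$ would be a free variable of the RIS according to the clause for RIS in Definition \ref{def:int_funct}, and the two sides would differ (the left would parametrically depend on $\vp$, while the right would quantify over it). The proof therefore crucially relies on reading the left-hand side under the convention made explicit in the hypothesis; granted that, the equivalence reduces to an elementary manipulation of first-order formulas together with the unfolding of Cartesian product membership. Assuming the Cartesian product operator is available in $\Ur$ with the expected interpretation, no further machinery is required.
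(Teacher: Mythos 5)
Your proof is correct and follows essentially the same route as the paper's: both unfold the two RIS terms into existentially quantified first-order membership conditions via the interpretation of RIS, apply the characterization $(x,\vp) \in D_1 \times D_2 \iff x \in D_1 \land \vp \in D_2$, and conclude by extensionality. Your extra remark on why the hypothesis that $\vp$ is existentially quantified inside the set is needed to justify quantifying over $\vp$ on the left-hand side is a point the paper leaves implicit, but it does not change the argument.
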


This result can be applied to the example mentioned above.

\begin{example}
The intensional set \eqref{IS} can be expressed with a
RIS (hence, without parameters) as follows: $\{(x,y):R | x \in D\}$. If $R$ is
for instance $\{(a,1),(b,2),(a,2)\}$ and $D$ is $\{a\}$, then the formula
$\{(x,y):R | x \in D\} = \{(a,1),(a,2)\}$ is (correctly) found to be
satisfiable by $\SATRIS$.
\end{example}

Therefore, it would be interesting to extend RIS to allow more general forms of
control terms without loosing completeness of the solver. In this respect, it
is important to note that the proof of Theorem \ref{equisatisfiable} relies on
a particular relation between the control term and the pattern of a RIS and a
property that patterns allowed in the language must verify. In order to state
these conditions we need the following definitions.

\begin{definition}[Bijective pattern]
Let $\{x:D | \flt(x) @ \ptt(x)\}$ be a RIS, then its pattern is \emph{bijective}
if $\ptt: \{x : x \in D \land \flt(x)\} \fun Y$ is a bijective function, where
$Y$ is the set of images of $\ptt$.
\end{definition}

\begin{definition}[Co-injective patterns]
Two patterns, $\P$ and $\Q$, are said to be \emph{co-injective} if for any $x$
and $y$, if $\P(x) = \Q(y)$ then $x = y$.
\end{definition}

Then, Theorem \ref{equisatisfiable} can be proved provided all patterns are
bijective and pairwise co-injective.

Through Definition \ref{RIS-terms}, $\LRIS$ is designed to guarantee that
patterns verify those conditions. In particular, when all the patterns are the
corresponding control terms, they verify the above conditions as they are
trivially bijective and pairwise co-injective. This is important because this
subclass of RIS is used to encode restricted universal quantifiers (see Section
\ref{ruq}).

The bijectivity of a pattern depends on the form of the control term. For
example, if the control term is $(x,y)$ and the pattern depends only on one
between $x$ and $y$, then it cannot be bijective because for a given $x$ or $y$
there are many $(x,y)$. Conversely, if $x$ is the control term, then any
pattern of the form $(x,\cdot)$ is bijective.

Besides the terms considered in Definition \ref{RIS-terms}, however, others can
be bijective patterns, although they might not be pairwise co-injective.

\begin{example}
\begin{itemize}
\item If $c$ is any control term then a pattern of the form
$(\cdot,c)$ is bijective. Further, these patterns are also pairwise
co-injective. However, if the language allows these patterns in
conjunction with patterns of the form $(c,\cdot)$, then pairwise
co-injectivity is lost.
%I wanted to emphasize the fact that if the pattern is an ordered pair whose
%second component is the control term, then the pattern is bijective regardless
%of the form of the control term
\item If $x$ is the control term and $n$ is a constant, then $x+n$ is a
bijective pattern. However, these patterns are not always pairwise
co-injective.

\item If $x$ is the control term, then $x*x$ is not bijective as $x$ and
$-x$ have $x*x$ as image, while $(x,x*x)$ is indeed a bijective pattern,
allowed in $\LRIS$. \qed
\end{itemize}
\end{example}

The intuitive reason to ask for bijective patterns is that if $y$ belongs to a
RIS whose pattern, $\ptt$, is not bijective then there may be two or more
elements in the RIS domain, say $x_1$ and $x_2$, such that $\ptt(x_1) =
\ptt(x_2) = y$. If this is the case, then eliminating, say, $x_1$ from the
domain is not enough to eliminate $y$ from the RIS. And this makes it
difficult, for instance, to prove the equality between a variable-RIS and a set
(extensional or RIS) having at least one element.

In turn pairwise co-injectivity is necessary to solve equations such as:
\begin{equation}\label{eq:pairwise}
  \{x:X | \G @ \Q\} = \{t \plus \{y:X | \F @ \P\}\}
\end{equation}
in a finite number of iterations. Let's assume for a moment that patterns of
the form $x+n$ ($n$ constant) are allowed and \eqref{eq:pairwise} is
instantiated as follows:
\begin{equation}\label{eq:pairwise1}
  \{x:X | \true @ x+2\} = \{5 \plus \{y:X | \true @ y+8\}\}
\end{equation}
Over the integers, this equation is satisfiable only if $X$ equals
$\num$, as we ask for the non-empty image of two different lines
over the same domain to be equal. $\SATRIS$ would be unable to
conclude this in a finite number of iterations. Given that 5 must
belong to the l.h.s. set, then $3 \in X$. But this implies that 11
belongs to the r.h.s. and so, to keep the sets equal, it must belong
to the l.h.s., then $9 \in X$ and so we have an infinite loop. On
the other hand, if we ask whether the two \emph{lines} are equal or
not:
\begin{equation}\label{eq:pairwise2}
  \{x:X | \true @ (x,x+2)\} = \{(3,5) \plus \{y:X | \true @ (y,y+8)\}\}
\end{equation}
$\SATRIS$ is able to give the right answer. Given that $(3,5)$ must
belong to the l.h.s. set then $3 \in X$. But this implies that
$(3,11)$ belongs to the r.h.s. and so, to keep the sets equal, it
must belong to the l.h.s. but this is impossible as $(3,11)$ is not
a point in the line with equation $y = x+2$. The difference between
\eqref{eq:pairwise1} and \eqref{eq:pairwise2} is that in the former
patterns are not pairwise co-injective while in the latter they are.

Unfortunately, these properties cannot be easily syntactically assessed. Thus
we prefer to restrict $\LRIS$ by adopting a more restrictive definition of
admissible pattern that can be syntactically checked. From a more practical
point of view, however, other patterns could be admitted instead of those given
in Definition \ref{RIS-terms}, with the assumption that if they verify the
conditions stated above the result is surely safe; while if they do not, it is
not safe.

\subsection{Functional predicate symbols}\label{ssec:fps}

Although in general is undecidable to assert the satisfiability of formulas of
the form $\forall \vp(\lnot \Fpv(x,\vp))$, where $\Fpv$ is a formula of an
arbitrary theory $\mathcal{T}$, we have identified a non-trivial fragment of
$\mathcal{T}$ whose satisfiability can be decided. Intuitively, this fragment is
composed by those formulas where $\Fpv$ is of the form $\Fpv_p \land \Fpv_r$
such that $\forall \vp(\lnot (\Fpv_p(x,\vp) \land \Fpv_r(x,\vp)))$ is equivalent
to an existentially quantified formula.

\begin{definition}[Functional predicate]
\label{fps} Let $p \in \Pi_\Ur$ be a predicate symbol of $\Ur$ with arity $n
\geq 2$ . $p(x_1,\dots,x_{n-1},y)$ is said to be a \emph{functional predicate}
if given $x_1,\dots,x_{n-1}$ there exists \emph{at most} one $y$ such that
$p(x_1,\dots,x_{n-1},y)$ holds. \qed
\end{definition}

Functional predicates are a form of encoding functions in a logic language.
Indeed, if $p(x,y)$ is a functional predicate then it is equivalent to $p(x) =
y$ if $p$ is seen as a function. Further, $y$ can be seen as the name given to
the expression $p(x)$. In other words, we can say that $y$ is the
\emph{definition} of $p(x)$.

Note that, for any functional predicate $p$ and given $x_1,\dots,x_{n-1}$, it
may be the case that $p(x_1,\dots,x_{n-1},y)$ does not hold for all $y$. For
this reason we define the following.

\begin{definition}[Pre-condition of a functional predicate]
\label{prefps} Let $p(x_1,\dots,x_{n-1},y)$ be a functional predicate. Let
$\Fpv_q$ be an $\Ur$-formula based on symbols in $\Pi_\Ur \setminus \{p\}$, and
$\vec x$ a subset of $x_1,\dots,x_{n-1}$. $\Fpv_q$ is said to be the
\emph{pre-condition} of $p(x_1,\dots,x_{n-1},y)$ if and only if for any given
$x_1,\dots,x_{n-1}$:
\[
 \Fpv_q(\vec x) \iff \exists y (p(x_1,\dots,x_{n-1},y))
\]
\qed
\end{definition}

The following proposition characterizes a class of formulas including an
existential quantification whose negation is free of universal quantifiers. See
the proof in Appendix \ref{ap:proofs}.

\begin{proposition}\label{negparam}
Let $\Fpv_q$ be the precondition of a functional predicate
$p(x_1,\dots,x_{n-1},y)$. Let $\Fpv_r$ be an $\Ur$-formula, and $\vec x_q$ and
$\vec x_r$ subsets of $x_1,\dots,x_{n-1}$. Then given $x_1,\dots,x_{n-1}$ the
following holds:
\begin{gather*}
\forall y (\lnot (p(x_1,\dots,x_{n-1},y) \land \Fpv_r(\vec x_r,y) )) \\
\iff \lnot \Fpv_q(\vec x_q)
       \lor \Fpv_q(\vec x_q)
            \land \exists z (p(x_1,\dots,x_{n-1},z)
                             \land \lnot \Fpv_r(\vec x_r,z))
\end{gather*}
\qed
\end{proposition}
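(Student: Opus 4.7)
The plan is to prove the biconditional by a clean two-case split on whether the precondition $\Fpv_q(\vec x_q)$ holds at the fixed tuple $x_1,\dots,x_{n-1}$. In each case both sides of the claimed equivalence will be reduced to a common simpler form, after which the biconditional follows immediately. No properties of $\Ur$ beyond Definitions \ref{fps} and \ref{prefps} are needed.

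In the first case, assume $\lnot \Fpv_q(\vec x_q)$. By Definition \ref{prefps} this is equivalent to $\lnot \exists y\, p(x_1,\dots,x_{n-1},y)$, so $p(x_1,\dots,x_{n-1},y)$ fails for every $y$. Consequently the conjunction $p(x_1,\dots,x_{n-1},y) \land \Fpv_r(\vec x_r,y)$ is false for every $y$, which makes the left-hand side vacuously true. The right-hand side is also true by its first disjunct. This case is thus immediate.

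In the second case, assume $\Fpv_q(\vec x_q)$. By Definition \ref{prefps} there exists some $y_0$ with $p(x_1,\dots,x_{n-1},y_0)$, and by Definition \ref{fps} (the ``at most one'' clause that makes $p$ functional in its last argument) this $y_0$ is unique. I would then argue that the left-hand side collapses to $\lnot \Fpv_r(\vec x_r, y_0)$: for every $y \neq y_0$ the atom $p(x_1,\dots,x_{n-1},y)$ is false, so the conjunct is false there, and the universal quantifier only has content at $y = y_0$. For the right-hand side, the first disjunct is false under the present hypothesis, so the right-hand side reduces to $\exists z(p(x_1,\dots,x_{n-1},z) \land \lnot \Fpv_r(\vec x_r,z))$; applying uniqueness once more, this existential is equivalent to $\lnot \Fpv_r(\vec x_r, y_0)$. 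Thus both sides reduce to the same formula and the biconditional holds.

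The proof is essentially a routine case analysis; there is no genuine obstacle. The only point requiring care is the deliberate combined use of the two halves of ``existence of a unique $y$'': the ``at least one'' half comes from the precondition via Definition \ref{prefps}, while the ``at most one'' half comes from functionality of $p$ via Definition \ref{fps}. Together they let us collapse both the inner universal quantifier on the left and the inner existential on the right to statements about the single witness $y_0$, which is what makes the two sides line up.
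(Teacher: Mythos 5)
Your proof is correct and uses essentially the same ingredients as the paper's: Definition \ref{prefps} supplies the ``at least one'' witness when the precondition holds, and Definition \ref{fps} supplies the ``at most one'' needed to collapse the quantifiers. The only difference is organizational --- you do a single case split on $\Fpv_q(\vec x_q)$ and reduce both sides to the common form $\lnot \Fpv_r(\vec x_r,y_0)$, whereas the paper argues the two implications separately --- which is a tidier presentation of the same argument.
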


If a RIS has a filter of the form $\exists n (\Fpv(x,n))$ then its negation is
of the form $\forall n (\lnot \Fpv(x,n))$. However, if $\Fpv$ can be written as
$q(x) \land p(x,n) \land r(x,n)$ where $q$ is the pre-condition of the
functional predicate symbol $p$, then Proposition \ref{negparam} permits to
transform $\forall n (\lnot \Fpv(x,n))$ into a formula free of universal
quantifiers. In effect, we have:
\begin{gather*}
\forall n (\lnot \Fpv(d,n)) \\
\iff \forall n (\lnot (q(x) \land p(x,n) \land r(x,n)))
        \why{definition of $F$} \\
\iff \forall n (\lnot q(x) \lor \lnot (p(x,n) \land r(x,n)))
       \why{distributivity} \\
\iff \lnot q(x) \lor \forall n (\lnot (p(x,n) \land r(x,n)))
       \why{$q$ does not depend on $n$} \\
\iff \lnot q(x)
     \lor \lnot q(x)
     \lor \exists z (q(x) \land p(x,z) \land \lnot r(x,z))
       \why{Proposition \ref{negparam}} \\
\iff \lnot q(x)
     \lor \exists z (q(x) \land p(x,z) \land \lnot r(x,z))
\end{gather*}

Hence, this proposition applied to RIS' filters considerably enlarges the
fragment of decidable $\LRIS$ formulas, as is shown by the following example.

\begin{example}\label{ex:funcpred}
Recently Cristi\'a and Rossi extended \CLPSET, and its \setlog implementation,
to support binary relations and partial functions
\cite{DBLP:journals/tplp/CristiaRF15,Cristia2019}. In this theory, called
$\mathcal{BR}$, binary relations and functions are sets of ordered pairs, and a
number of relational operators are provided as constraints. Such theory has
been proved to be semi-decidable for formulas involving conjunctions,
disjunctions and negations of its constraints. Notwithstanding, in the
remaining of this paper, we
will consider the instance of $\LRIS$ based on the decidable fragment of
$\mathcal{BR}$ implemented in \setlog.

Among others, $\mathcal{BR}$ provides the following constraints, where $F, G,
H$ are binary relations or partial functions: $\Apply(F,x,y)$, whose
interpretation is $F(x) = y$; $\Comp(F,G,H)$ which is interpreted as the
composition of $F$ and $G$, i.e., $H = R \comp T$; and $\Pfun(F)$ which
constrains set $F$ to be a partial function.

$\Apply(F,x,y)$ is a functional predicate with precondition $\Pfun(F) \land
\Ncomp(\{(x,x)\},F,\e)$, for any $F$ and $x$, where $\Ncomp$ is $\lnot \Comp$
(notice that, if $F$ is a partial function, $\Ncomp(\{(x,x)\},F,\e)$ is true iff
$x$ belongs to the domain of $F$). Hence, the following $\LRIS$ formula is
decidable for any $F$, $x_1$, $x_2$ and $x_3$:
\begin{equation}\label{eq:apply}
F = \{(x_1,x_2),(x_1,x_3)\} \land \{y:\{1\} | \exists z (\Apply(F,y,z) \land z
\neq 0)\} = \e
%F={[X1,X2],[X1,X3]} & ris(Y in {1},[Z],Z neq 0,Y,apply(F,Y,Z)) = {}.
\end{equation}
as the negation of $\exists z (\Apply(F,y,z) \land z \neq 0)$ can be turned
into the following decidable $\Ur$-formula, due to Proposition \ref{negparam}:
\[
\lnot (\Pfun(F) \land \Ncomp(\{(x,x)\},F,\e))
              \lor (\Pfun(F) \land \Ncomp(\{(x,x)\},F,\e) \land \Apply(F,y,n) \land n = 0)
\]
for some new $n$. Possible solutions for \eqref{eq:apply} are: $x_2 \neq x_3$
(i.e., $F$ is not a function), $x_1 \neq 1$ (i.e., $1$ does not belong to the
domain of $F$); $x_1 = 1 \land x_2 = 0 \land x_3 = 0$ (i.e., $F = \{(1,0)\}$).
\qed
%computed answers (duplicated answers removed)
%F = {[X1,X2],[X1,X3]} Constraint: X2 neq X3
%F = {[X1,X2],[X1,X3]} Constraint: X1 neq 1
%F = {[1,0]}, X1 = 1, X2 = 0, X3 = 0
\end{example}

%%%%%%%%%%%%%%%%%%%%%%%%%%%%%%%%%%%%%%%%%%%%%%%%%%%%%%%%%%%%%%%%%%%%%%%%%%%%%%
%\section{\label{impl}A Working Implementation for RIS}
\section{RIS in Practice}\label{impl}

RIS have been implemented in Prolog as an extension of \setlog \cite{setlog}, a
freely available implementation of $\mathcal{BR}$ \cite{Cristia2019} extended
with FD-constraints \cite{Palu:2003:IFD:888251.888272}. In this case, the
theory $\Ur$ is basically the theory of \emph{hereditarily finite hybrid sets
and binary relations}, augmented with that of \CLPFD, that is integer
arithmetic over finite domains. This theory provides the same function symbols
as $\LRIS$ for building extensional set terms (namely, $\e$ and $\ww$), along
with a set of predicate symbols including those of $\LRIS$, with the same
interpretation. In addition, $\mathcal{BR}$ provides predicate symbols such as
composition of binary relations, converse of binary relations, domain, range,
etc., and the usual function symbols representing operations over integer
numbers (e.g., $+,-,\Mod$, etc.), as well as the predicate symbols
$\mathsf{size}$, representing set cardinality, and $\leq$, representing the
order relation on the integers. One notable difference w.r.t. $\LRIS$ is that
set elements can be either finite sets or non-set elements of any sort (i.e.,
nested sets are allowed).

The theory underlying \setlog is endowed with a constraint solver which is
proved to be a semi-decision procedure for its formulas. More precisely, the
constraint solver is a decision procedure for the subclass of
$\mathcal{BR}$-formulas not involving (two particular forms of) relational
composition \cite{Cristia2019}.

Syntactic differences between the abstract syntax used in this paper and the
concrete syntax used in \setlog are made evident by the following examples.

\begin{example}
The $\mathcal{RIS}$-formula:
\[
\{5\} \in \{x:\{y \plus D\} | x \neq \emptyset
\land 5 \notin x @ x \}
\]
is written in \setlog as:
\[
\{5\} \In \Ris(X \In \{Y /
D\},X \Neq \{\} \And 5 \Nin X,X)
\]
where $\Ris$ is a function symbol whose
arguments are: \emph{i}) a constraint of the form $X \In A$ where $X$ is the
control term and $A$ the domain of the RIS; \emph{ii}) the filter given as a
\setlog formula; and \emph{iii}) the pattern given as a \setlog term. Filters
and patterns can be omitted as in $\LRIS$. Variables must start with an
uppercase letter; the set constructor symbols for both $\LRIS$ and \setlog
sets terms are written as $\{\cdot / \cdot\}$. If this formula is provided to
\setlog it answers $\mathsf{no}$ because the formula is unsatisfiable.
%since the RIS denotes a set of sets with at least two element}.
% {5} in ris(X in {Y / D},X neq {} & 5 nin X,X)
\qed
\end{example}

The following are more examples of RIS that can be written in \setlog.

%Note that in these examples the RIS patterns are omitted since they are the
%same as the corresponding control terms

\begin{example}\label{ex:noparam}
\mbox{}
\begin{itemize}
\item The multiples of $N$ belonging to $D$:
\[
\Ris(X \In D, 0 \Is X \Mod N)
\]
where $\Is$ is the Prolog built-in predicate that forces the evaluation
of the arithmetic expression at its right-hand side.

\item The sets belonging to $D$ containing a given set $A$:
\[
\Ris(S \In D, \mathsf{subset}(A,S))
\]

\item A function that maps integers to their squares:
\[
\Ris([X,Y] \In D, Y \Is X*X)
\]
where ordered pairs are written using $[\cdot,\cdot]$; note that the pattern
can be omitted since it is the same as the control term, that is $[X,Y]$.
\qed
\end{itemize}
\end{example}

Actually \setlog implements the extended version of $\LRIS$ described in
Section \ref{sec:extension}. In particular, in \setlog a RIS can contain
parameters, i.e., existentially quantified variables local to the RIS; and
functional predicates can be conveniently declared. Parameters
are listed as the second argument of the $\Ris$ term, while functional
predicates must be located just
after the pattern, as the last argument. Both parameters and functional
predicates are optional arguments of a RIS term.

\begin{example}
A function that maps sets to their cardinalities
provided they are greater than $1$:
\[
\Ris(S \In D,[C], C > 1,[S,C],\mathsf{size}(S,C))
\]
where $C$ is a parameter and $\mathsf{size}(S,C)$ is a functional
predicate, whose intuitive meaning is  $C = \lvert S \rvert$.
\qed
\end{example}

RIS patterns in \setlog can be any term (including $\{\cdot / \cdot\}$). If
they verify the conditions given in Section \ref{sec:controlPatterns}, then the
solver is guaranteed to be a decision procedure; otherwise this may be not the
case.

\begin{example}
The formula $\{x:[1,4] @ 2*x\} = \{2,4,6,8\}$ can be
written in \setlog as:
\[
\Ris(X \In \Int(1,4),\mathsf{true},2*X) = \{2,4,6,8\}
%  ris(X in int(1,10),[],true,X*X)
\]
where $\Int(1,4)$ is the \setlog syntax to denote the interval$[1,4]$. If this
formula is provided to \setlog it answers $\mathsf{yes}$. Note that $2*X$ is a
bijective pattern and since it is the only pattern in the formula is trivially
pairwise co-injective. Caution must be taken if this pattern is mixed with
others. \qed
\end{example}

When the domain of a RIS is at least partially specified it is also possible to
explicitly enumerate the elements of the set denoted by the RIS by means of the
$\Is$ operator.

\begin{example}
When
\[
\mathit{Even} \Is \Ris(X \In
\Int(1,100),\mathsf{true},2*X)
\]
is run on \setlog it immediately answers $Even
= \{2,4,\dots,200\}$.
\qed
\end{example}

In \setlog the language of the RIS and the language of the parameter theory
$\Ur$ are completely amalgamated. Thus, it is possible for example to use
literals of the latter in formulas of the former, as well as to share variables
of both. The following example exploits this feature.

\begin{example}
A formula to find out whether $N$ is prime or not:
\[
N > 1 \And \mathit{MD} \Is N \Div 2 \And \Ris(X \In \Int(2,\mathit{MD}), 0 \Is
N \Mod X) = \emptyset
\]
The idea is to check if the set of proper divisors of $N$ (i.e.,
$\{x:[2,\mathit{MD}] | 0 = N \mod x\}$) is empty or not. Then, if $N$ is bound
to, say, 20, \setlog answers \textsf{no}; but if it is bound to 101 it answers
$N = 101, \mathit{MD} = 50$.
\qed
\end{example}

\begin{remark}
In \setlog checks for detecting non-admissible formulas (see Definition
\ref{d:admissible_formula}) are limited to atomic formulas (i.e.,
$\mathcal{RIS}$-constraints). As a consequence, if the solver has to deal with
general non-admissible formulas, then there is a risk that it
will go into an infinite loop. As observed at the end of Section
\ref{ssec:admissible}, however, these formulas are quite ``unusual'', hence
this behavior is not perceived as a problem in practice.
\qed
\end{remark}

\subsection{Case studies}\label{casestudies}

In this subsection we present two case studies showing the capabilities of
\setlog concerning RIS. The first one shows how \setlog can be used as an
automated theorem prover when restricted universal quantifiers are involved. In
this case study a non-trivial security property of a security model is proved.
In the second case study RIS are used to specify a simplified version of the
\texttt{grep} program. There we show that RIS provide a sort of second-order
language because it is possible to iterate over sets of sets.

\begin{casestudy}[Bell-LaPadula's security condition]
Bell and LaPadula proposed a security model of an operating system enforcing a
security policy known as multi-level security \cite{BLP1,BLP2}. The model is a
state machine described with set theory and first-order logic. This model
verifies two state invariants, one of which is called \emph{security
condition}. This condition can be stated as follows:
\begin{equation}
\forall (p,f) \in proctbl (\mathit{scf}(f) \preceq scp(p))
\end{equation}
where $proctbl$ is the process table represented as a set of ordered pairs of
the form $(p,f)$ where $p$ is a process and $f$ a file opened in read
mode by $p$; and $\mathit{scf}$ and $scp$ are functions returning the
\emph{security class} of files and processes, respectively. A security class is
an ordered pair of the form $(l,C)$ where $l \in \nat$ is called \emph{security
level}, and $C$ is a \emph{set of categories}. Security classes are partially
ordered by the \emph{dominates} relation ($\preceq$) defined as: $(l_1,C_1)
\preceq (l_2,C_2) \iff l_1 \leq l_2 \land C_1 \subseteq C_2$.

Bell-LaPadula's security condition is expressible in \setlog using a $\Forall$
constraint as follows (recall Example \ref{ex:funcpred}):
\begin{equation}\label{blp1}
\begin{split}
\mathsf{sec}&\mathsf{Cond}(\mathit{Scf},Scp,Proctbl) \defs \\
& \Forall([P,F] \in Proctbl,
        \Applysl(\mathit{Scf},F,S_f)
        \And \Applysl(Scp,P,S_p)
        \And \mathsf{dominates}(S_f,S_p))
\end{split}
\end{equation}
where $S_f$ and $S_p$ are parameters (cf. Section \ref{ssec:fps}). Since
$\Applysl(G,X,Y) $ is a functional predicate, formula \eqref{blp1} lies inside of
the results given in Section \ref{ssec:fps}\footnote{Observe that, if $f$ and
$g$ are two functional predicates of arity 2, then we can introduce a new
predicate $h$, $h(x_1,x_2,w) \iff w = (n_1,n_2) \land f(x_1,n_1) \land
g(x_2,n_2)$, where $h(x_1,x_2,w)$ is trivially a functional predicate.}. In
turn, the $\mathsf{dominates}$ relation can be defined easily:
\begin{equation}
\mathsf{dominates}(S_1,S_2) \defs S_1 = [L_1,C_1] \And S_2 = [L_2,C_2]
\And L_1 \leq L_2 \And C_1 \subseteq C_2
\end{equation}

More importantly, although \setlog implements a semi-decision procedure for
formulas involving partial functions, it can be used to (automatically) prove
whether or not a given operation in Bell-LaPadula's model preserves the
security condition. For example, the operation describing process $P$
requesting file $F$ to be opened in read mode can be described with a \setlog
formula:
\begin{equation}
\begin{split}
\mathsf{openRead}&(\mathit{Scf},Scp,Proctbl,P,F,Proctbl') \defs \\
& [P,F] \notin Proctbl
  \And \Applysl(\mathit{Scf},F,S_f)
   \And \Applysl(Scp,P,S_p)  \And \mathsf{dominates}(S_f,S_p) \\
&  \And Proctbl' = \{[P,F] \plus Proctbl\}
\end{split}
\end{equation}
where $Proctbl'$ represents the value of $Proctbl$ in the next state (as is
usual in formal notations such as B and Z).

Hence, the proof obligation asserting that $\mathsf{openRead}$ preserves
$\mathsf{secCond}$ as an invariant is, informally:
\begin{gather*}
\text{$\mathsf{secCond}$ is true of $Proctbl$} \\
{}\land \text{$\mathsf{openRead}$ is called on $Proctbl$, thus returning $Proctbl'$} \\
{}\implies \text{$\mathsf{secCond}$ is true of $Proctbl'$}
\end{gather*}
However, if \setlog is going to be used to discharge this proof obligation, it
should be submitted in negated form:
\begin{gather}
\mathsf{secCond}(\mathit{Scf},Scp,Proctbl) \notag \\
{}\And \mathsf{openRead}(Proctbl,P,F,Proctbl') \label{eq:openreadpi} \\
{}\And \mathsf{nforall}([P,F] \In Proctbl',
        \Applysl(\mathit{Scf},F_1,S_f)
        \And \Applysl(Scp,P_1,S_p) \And \mathit{dominates}(S_f,S_p))
     \notag
\end{gather}
in which case in $\approx$ .5 s \setlog answers $\false$, as expected.

Besides, assume the specifier makes a mistake in $\mathsf{openRead}$ such that the
security condition is no longer preserved. For instance, (s)he forgets
$\mathsf{dominates}(S_f,S_p)$ as a pre-condition. When (s)he attempts to
discharge \eqref{eq:openreadpi}, \setlog will return a counterexample showing
why the security condition has been violated\footnote{Only the core of the
counterexample is shown.}:
\begin{gather*}
Scf = \{[F_1,[Lf,\{N_6 \plus N_5\}]] \plus N_4\}, \\
ScP = \{[P_1,[Lp,Cp]] \plus N_2\}, \\
Proctbl' = \{[P_1,F_1] \plus Proctbl\},  \\
\text{Constraint: } N_6 \notin Cp
\end{gather*}
That is, the security class of $P_1$ does not dominate the security class of
$F_1$ because $N_6$ does not belong to $Cp$.

The source code of this case study can be found in Appendix \ref{app:blp}.
\qed
\end{casestudy}

In $\LRIS$ it is possible to iterate or quantify over sets whose elements are
other sets---i.e., $\LRIS$ offers a sort of second-order language. Differently
from previous versions of \setlog, however, quantified set variables can be
also intensional sets (namely, RIS), not only extensional sets.
%without any need to replace the intensional definition with the corresponding
%extensional definition.
In other words, in $\LRIS$ intensional sets are real first-class citizens of the
constraint language. The following case study illustrates this feature of
$\LRIS$.

\begin{casestudy}[A simple \texttt{grep} program]
We can model a text file as a set of lines and each
line as a set of words, i.e., the file is modeled as a
collection of sets. With this representation it is easy to implement a basic
version of the usual $\mathsf{find}$ function using a formula based on set unification:
\begin{equation}
\mathsf{find}(File,Word) \defs File = \{\{Word / Line\} / Rest\}
\end{equation}

However, no formula based on set unification can describe the \verb+grep+
program which collects \emph{all} the lines of the file where a
given word is found. The reason is that it requires to inspect all the lines
of the file---i.e., all the sets of a set. For such program a RIS can do the
job:
\begin{equation}\label{eq:grep}
\mathsf{grep}(File,\mathit{Word},Result) \defs
  Result = \Ris(Line \In File, \mathit{Word} \In Line)
\end{equation}

Furthermore, by adding a parameter to the definition of $\mathsf{grep}$ we can
implement the \verb+-v+ option which reverses the search:
\begin{equation}
\begin{split}
\mathsf{grep}&(Opt,File,\mathit{Word},Result) \defs \\
    & Opt = \text{' '} \And Result = \Ris(Line \In File, \mathit{Word} \In Line) \\
    & \Or Opt = \text{'v'} \And Result = \Ris(Line \In File, \mathit{Word} \Nin Line)
\end{split}
\end{equation}

In this way $\mathsf{grep}$ is both a program \emph{and} a formula. This means
that we can use \setlog to run $\mathsf{grep}$ and to prove properties of it. For
instance, we can run:
\begin{equation}
\mathsf{grep(\text{' '},\{\{hello, world\},\{to, be, or, not, to, be\},\{i,said,hello,sir\}\},hello},R)
\end{equation}
and the answer will be:
\begin{equation}
R = \mathsf{\{\{i,said,hello,sir\},\{hello,world\}\}}
\end{equation}

Now we can use \setlog to prove the following two general properties of
$\mathsf{grep}$:
\begin{gather}
\mathsf{grep}(\text{' '},F,W,R_1) \land \mathsf{grep}(\text{'v'},F,W,R_2) \implies \Cupsl(R_1,R_2,F) \\
\mathsf{grep}(\text{' '},F,W,R_1) \land \mathsf{grep}(\text{'v'},F,W,R_2) \implies \Disjsl(R_1,R_2)
\end{gather}
which, as always, should be submitted to \setlog in negated form:
\begin{gather}
\mathsf{grep}(\text{' '},F,W,R_1) \And \mathsf{grep}(\text{'v'},F,W,R_2) \And \Ncupsl(R_1,R_2,F) \label{eq:grep_prop1} \\
\mathsf{grep}(\text{' '},F,W,R_1) \And \mathsf{grep}(\text{'v'},F,W,R_2) \And \Ndisjsl(R_1,R_2)
\end{gather}
in which case \setlog answers $\false$, as expected. \qed
\end{casestudy}

This is a significant extension w.r.t. previous versions of \setlog. In
fact, in those versions it was possible to give a definition such as
\eqref{eq:grep} by using general intensional sets; but, in that case, the
solver would fail to prove the unsatisfiability of a general formula such as
\eqref{eq:grep_prop1}, simply because it tries to replace the intensional
sets with the corresponding extensional definitions, instead of solving the
relevant constraints directly over the intensional sets, as it does when RIS
are used.

%%%%%%%%%%%%%%%%%%%%%%%%%%%%%%%%%%%%%%%%%%%%%%%%%%%%%%%%%%%%%%%%%%%%%%%%%%%%%
\section{Empirical evaluation}\label{experiments}

The goal of this empirical evaluation is to provide experimental evidence that
the \setlog implementation of $\SATRIS$ works in practice. To this end we
selected 176 problems from the SET collection of the TPTP library \cite{Sut09}.
The selected problems are those representing quantifier-free, first-order set
theory results involving conjunctions, disjunctions and negations of set
equality, membership, union, intersection, difference, disjointness and
complement that can be encoded in \setlog. The TPTP.SET problem collection has
been used to empirically evaluate previous versions of \setlog
\cite{Cristia2019}.

From these 176 formulas we generated two larger collections of formulas involving RIS:
the \textsc{Negative} collection, where formulas are expected to be
unsatisfiable; and the \textsc{Positive} collection, where formulas are
expected to be satisfiable, although a few of them are not.

The \textsc{Negative} collection was generated as follows. Let $\Theta$ be the
formula of one of the selected 176 problems. Let $\mathit{vars}(\Theta)$ be the
set of free variables of $\Theta$. Then $\Theta$ is transformed into:
\[
\Theta'  \bigwedge_{A \in \mathit{vars}(\Theta)}
   A = \Ris(X \In
D_A,[\mathscr{V}],\mathscr{F},\mathscr{P},\mathscr{C})
%
%\Theta[\{\}/\Ris(X \In \mathfrak{D},[\;],\mathfrak{F}_\emptyset)] \bigwedge_{A
%\in \mathit{vars}(\Theta)}
%   A = \Ris(X \In
%D_A,[\mathfrak{V}],\mathfrak{F},\mathfrak{P},\mathfrak{C})
%
\]
where: $X$ and $D_A$ are variables; $\Theta' $ is obtained from $\Theta$ by
replacing every occurrence of $\{\}$ by $\Ris(X \In
\mathscr{D},[\;],\mathscr{F}_\emptyset)$; and $\mathscr{D}$,
$\mathscr{F}_\emptyset$, $\mathscr{V}$, $\mathscr{F}$, $\mathscr{P}$ and
$\mathscr{C}$ are instantiated as depicted in Table
\ref{tab:neggen}. This means that for each instance of $\mathscr{V}$, $\mathscr{F}$ and
$\mathscr{C}$, an instance of $\mathscr{P}$ and an instance of $\mathscr{D}$
and $\mathscr{F}_\emptyset$ are selected. Besides, note that for each $A \in
\mathit{vars}(\Theta)$ a different domain variable is generated (i.e., $D_A$).
In this way, the initial 176 formulas are transformed into 5813 formulas where
all variables and every occurrence of the empty set are RIS terms.

\begin{table}
\tbl{\label{tab:neggen}Domain, filter, pattern and functional predicates
used to generate RIS formulas. Recall that in \setlog the fifth argument of a RIS
term is used to specify the functional predicates possibly occurring in the
RIS.}{%
\begin{tabularx}{\textwidth}{XX}
\toprule
%\rowcolor{gray!30}
\multicolumn{2}{c}{\textsc{Instances of $\mathscr{V}$, $\mathscr{F}$
and $\mathscr{C}$}} \\
\multicolumn{2}{l}{$B_i$ and $D$ are new variables for each instance}
\\[1mm]
%\midrule
$\Ris(X \In D,[\;],\True,\mathscr{P},\True)$ &
$\Ris(X \In D,[\;],X \In B,\mathscr{P},\True)$ \\
$\Ris(X \In D,[\;],X \Notinsl B,\mathscr{P},\True)$ &
$\Ris(X \In D,[B],X \In B,\mathscr{P},\Cupsl(B_1,B_2,B))$ \\
$\Ris(X \In D,[B],X \Notinsl B,\mathscr{P},\Cupsl(B_1,B_2,B))$ &
$\Ris(X \In D,[\;],\Disjsl(X,B_1),\mathscr{P},\True)$ \\
$\Ris(X \In D,[\;],\Ndisjsl(X,B_1),\mathscr{P},\True)$ &
$\Ris(X \In D,[\;],X = B_1,\mathscr{P},\True)$ \\
$\Ris(X \In D,[\;],X \Neqsl B_1),\mathscr{P},\True)$ & \\
\midrule
%\rowcolor{gray!30}
\multicolumn{2}{c}{\textsc{Instances of $\mathscr{P}$}} \\
\multicolumn{2}{l}{$X$ is the control variable and $V$ is a new variable
for each instance} \\[1mm]
%\midrule
$X$ & $[X,V]$ \\
\midrule
%\rowcolor{gray!30}
\multicolumn{2}{c}{\textsc{Instances of $\mathscr{D}$ and
$\mathscr{F}_\emptyset$}} \\
\multicolumn{2}{l}{$B_i$ are new variables for each instance} \\[1mm]
$\Ris(X \In \{\},[\;],\False)$ &
$\Ris(X \In \{\},[\;],\True)$ \\
$\Ris(X \In \{B_1/B_2\},[\;],X \Neqsl B_1 \And X \Notinsl B_2)$ & \\
\bottomrule
\end{tabularx}
}
\end{table}

In turn, the \textsc{Positive} collection is generated as follows. First, one
$\mathcal{RIS}$-constraint of each of the 176 original formulas is negated
(specifically, it is replaced by its negated version). In general this turns
the formula from unsatisfiable to satisfiable. However, 27 formulas become
trivial and so were removed. Then, if $\Theta$ is one of the remaining 149
formulas is transformed as follows:
\[
\Theta' \bigwedge_{A \in \mathit{vars}(\Theta)} A = \Ris(X \In
D_A,[\mathscr{V}],\mathscr{F},\mathscr{P},\mathscr{C}) \land D_A \Neqsl \{\}
\]
where the same considerations of the \textsc{Negative} collection apply. The
last conjunct implies that every non-empty RIS has at least one element. In
this way the \textsc{Positive} collection has 4728 problems.

The union of the \textsc{Negative} and \textsc{Positive} collections makes a
10541 problems benchmark to evaluate the automated processing of RIS.

We run two experiments to empirically assess the effectiveness and efficiency
of \setlog on the automated processing of RIS. In these experiments we measure
the number of formulas that \setlog solves before a given timeout and the
average time in doing so.

The results of these experiments are summarized in Table \ref{tab:experim}. In
the first experiment the timeout is 2 seconds; in the second experiment the
timeout is 60 seconds but only the unsolved formulas of the first experiment
are considered. That is, the 976 `negative' formulas and the 586 `positive'
formulas used in the 60 seconds experiment are those that remain unsolved in
the 2 seconds experiment. The last row of the table shows a summary of both
experiments.

According to these figures, it can be said that augmenting the timeout from 2
seconds to 60 seconds produces a modest gain in the number of solved formulas,
but given that most of the formulas are solved in a few milliseconds, setting a
higher timeout would not be harmful. Finally, considering both experiments
\setlog solves 89\% of the benchmark in just below three quarters of a second in average.

\begin{table}
\tbl{\label{tab:experim}Summary of the empirical evaluation}{%
\begin{tabularx}{\textwidth}{Xrrrr}
\toprule \textsc{Collection} & \textsc{Problems} & \textsc{Solved} &
\textsc{Percentage} &
\textsc{Average Time} \\
\midrule \multicolumn{5}{c}{\textsc{First experiment (2 seconds timeout)}}
\\[1mm]
\textsc{Negative} & 5813 & 4837 & 83\% & 0.069 s \\
\textsc{Positive} & 4728 & 4142 & 88\% & 0.035 s \\
\midrule \multicolumn{5}{c}{\textsc{Second experiment (60 seconds timeout)}}
\\[1mm]
\textsc{Negative} & 976 & 266 & 27\% & 13.718 s \\
\textsc{Positive} & 586 & 173 & 30\% & 13.680 s \\
\midrule
\textsc{Summary}  & 10541 & 9418 & 89\% & 0.690 s \\
\bottomrule
\end{tabularx}
}
\end{table}

Although these results are good, there is room for improvements. One such
improvement is a more efficient implementation of derived constraints (cf.
Section \ref{expressiveness}). In a previous work \cite{Cristia2019}, we have
used the initial 176 formulas (that is, formulas where set variables are
\emph{not} bound to RIS terms) to assess a version of \setlog where
intersection, subset and difference are implemented as derived constraints. That
is, for instance, an $\Cap$ constraint is rewritten as a formula based on $\Cup$
and $\Disj$ instead of being processed by an \emph{ad-hoc} rewriting procedure.
When these formulas are run on that version of \setlog, it solves 94\% in 0.078
seconds in average; but when just intersection and subset are processed by
\emph{ad-hoc} rewriting procedures, \setlog solves all of them. In the current
version of \setlog, intersection, subset and difference on RIS terms are treated
as derived constraints. Based on the results obtained with the previous work, it
is reasonable to assume that extending the \emph{ad-hoc} rewriting procedures
for these constraints to RIS terms might yield significantly better results than
those reported in Table \ref{tab:experim}.

\subsection{\label{details}Technical details of the empirical evaluation}

The experiments were performed on a Latitude E7470 (06DC) with a 4 core
Intel(R) Core\texttrademark{} i7-6600U CPU at 2.60GHz with 8 Gb of main memory,
running Linux Ubuntu 18.04.3 (LTS) 64-bit with kernel 4.15.0-65-generic.
\setlog 4.9.6-17i over SWI-Prolog (multi-threaded, 64 bits, version 7.6.4) was
used during the experiments.

Each \setlog formula was run within the following Prolog program:
\begin{verbatim}
use_module(library(dialect/sicstus/timeout)).
consult('setlog.pl').
consult_lib.
set_prolog_flag(toplevel_print_options,[quoted(true),portray(true)]).
get_time(Tini).
time_out(setlog(<FORMULA>), <TIMEOUT>, _RES).
get_time(Tend).
\end{verbatim}
where \verb+<FORMULA>+ is replaced by each formula and \verb+<TIMEOUT>+ is
either 2000 or 60000 depending on the experiment. Each of these programs was
run from the command line as follows:
\begin{verbatim}
prolog -q < <PROG>
\end{verbatim}
The execution time for each problem is calculated as \texttt{Tend - Tini}.

The full data set containing all these Prolog programs can be downloaded from
\url{https://www.dropbox.com/s/d1ysiq3eji2xg9a/exTOPLAS.tar.gz?dl=0}.

%%%%%%%%%%%%%%%%%%%%%%%%%%%%%%%%%%%%%%%%%%%%%%%%%%%%%%%%%%%%%%%%%%%%%%%%%%%%%
\section{Related Work}\label{relwork}

As mentioned in Section \ref{sec:intro}, some form of intensional sets is
offered by the CLP language \CLPSET. Specifically, \CLPSET supports general
intensional sets by implementing set-grouping on top of the language itself
(i.e., not as a primitive feature of the language). Hence, formulas involving
intensional sets fall outside the scope of \CLPSET's decision procedure. As an
example, the formula $A \cap B \neq \{x : x \in A \land x \in B\}$, which is
written in \CLPSET as
\[
\mathsf{inters}(A,B,C) \And D = \{X : X \hspace{2pt}\mathsf{in}\hspace{2pt} A
\And X \hspace{2pt}\mathsf{in}\hspace{2pt} B\} \And C \hspace{2pt} \mathsf{neq}
\hspace{2pt} D
\]
is (wrongly) found to be $\true$ by the \CLPSET resolution
procedure. Conversely, the same formula but written using RIS is
(correctly) found to be unsatisfiable by $\SATRIS$.

A very general proposal providing real \emph{intensional set constraints},
where intensional set processing is embedded within a general set constraint
solver, is CLP($\{\mathcal{D}\}$) \cite{DBLP:conf/iclp/DovierPR03}.
CLP($\{\mathcal{D}\}$) is a CLP language offering arbitrarily nested
extensional and intensional sets of elements over a generic constraint domain
$\mathcal{D}$. No working implementation of this proposal, however, has been
developed. As observed in \cite{DBLP:conf/iclp/DovierPR03}, the presence of
undecidable constraints such as $\{x : p(x)\} = \{x : q(x)\}$ (where $p$ and
$q$ can have an infinite number of solutions) ``prevents us from developing a
parametric and complete solver''. Conversely, this problem can be
``approximated'' using RIS as $\{x:D_1 | p(x)\} = \{x:D_2 | q(x)\}$, $D_1$,
$D_2$ variables. For $\SATRIS$, this is a solved form formula admitting at
least one solution, namely $D_1 = D_2 = \e$; hence, it is simply returned
unchanged by the solver. Generally speaking, finding a fragment of intensional
sets that is both decidable and expressive is a key issue for the development
of an effective tool for reasoning with intensional sets. RIS, as presented
here, may be one step toward this goal.
%On the other hand, completely general intensional sets are non computable, as
%they are equivalent to first-order logic.

The usefulness of ``a direct support for reasoning about set
comprehension'' in the context of SMT solvers has been also recently advocated
by Lam and Cervesato \cite{DBLP:conf/smt/LamC14}.
%Lam,Cervesato: Reasoning about Set Comprehension
In their approach, however, no ad-hoc solver for intensional set constraints is
indeed developed; rather, the satisfiability problem for formulas featuring
intensional sets over a standard theory (e.g., linear integer arithmetic) is
reduced to solving satisfiability constraints over this same theory, extended
with an uninterpreted sort for sets and an uninterpreted binary predicate
encoding set membership.

A number of works in the area of Computable Set Theory (CST) have studied the
satisfiability problem of different fragments of set theory with quantifiers.
Clearly, the availability of quantifiers plus set constraints opens the door to
intensional sets. Cantone and Zarba \cite{DBLP:conf/tableaux/CantoneZ00}
introduce the language \textbf{2LS}($\mathcal{L}$) which is parametric w.r.t. a
first-order language $\mathcal{L}$. \textbf{2LS}($\mathcal{L}$) extends
$\mathcal{L}$ with set constants, functional symbols, standard Boolean set
operators, set membership and equality. The authors show that, if the
first-order theory underlying $\mathcal{L}$, $\mathcal{T}$, is ground-decidable
and the collection of ground terms of $\mathcal{T}$ is finite, then the
$\mathcal{T}$-satisfiability problem for $\exists\forall$-sentences of
\textbf{2LS}($\mathcal{L}$), i.e., formulas where all terms not involving any
set-theoretic symbol are \emph{ground}, is decidable.

In another work Cantone and Longo present the $\forall_{0,2}^{\pi}$ language
\cite{DBLP:journals/tcs/CantoneL14}. $\forall_{0,2}^{\pi}$ is a two-sorted
quantified fragment of set theory which allows restricted quantifiers of the
forms $(\forall x \in A)$, $(\exists x \in A)$, $(\forall (x,y) \in R)$,
$(\exists (x,y) \in R)$ and literals of the forms $x \in A$, $(x,y) \in R$, $A
= B$, $R = S$, where $A$ and $B$ are set variables (i.e., variables ranging
over sets) and $R$ and $S$ are relation variables (i.e., variables ranging over
binary relations). $\forall_{0,2}^{\pi}$-formulas can be conjunctions of RUQ
and conjunctions of Restricted Existential Quantifiers (REQ),
with some mild restrictions. This language is expressive enough as to describe
all the Boolean set operators and disequalities of relational operators such as
composition, domain and relational image (e.g., $R \comp S \subseteq T$ but not
$T \subseteq R \comp S$). Although $\forall_{0,2}^{\pi}$ is not parametric
w.r.t. a first-order language, certain forms of intensional sets can be
described as well. Cantone and Longo show that $\forall_{0,2}^{\pi}$ is
decidable.

One important difference between our work and those on CST is that the later is
mainly concerned with decidability results for fragments of set theory, while
the rewriting systems used in our works are also able to generate a finite
representation of all the (possibly infinitely many) solutions of each input
formula.

Several logics (e.g.,
\cite{DBLP:conf/vmcai/DragoiHVWZ14,DBLP:conf/lpar/VeanesS08,DBLP:conf/frocos/WiesPK09})
provide some forms of intensional sets. However, in some cases, for the formula
to be inside the decision procedure, the intensional sets must have a ground
domain; in others, set operators do not include set equality; and in others,
they present a semi-decision procedure. Handling intensional sets can be
related also to handling universal quantifiers in a logical setting, since
intensional sets ``hide'' a universal quantifier. Tools such as SMT solvers
deal with this kind of problems (see, e.g., \cite{Deharbe7} and
\cite{DBLP:conf/sas/BjornerMR13}), although in general they are complete only
in quite restricted cases \cite{DBLP:conf/cav/GeM09}. Recently, a language
admitting some forms of quantified formulas over sets was proven to be
decidable in the context of separation logic \cite{DBLP:conf/sofsem/GaoCW19}.

Our decision procedure finds models for formulas with \emph{finite} but
\emph{unbounded} domains, i.e., their cardinalities are not constrained by a
fixed value. The field of finite model finding faces a similar problem but
usually with \emph{bounded} domains. There are two basic styles of model
finding: the MACE-style in which the formula is transformed into a SAT problem
\cite{Claessen00}; and the SEM-style which uses constraint solving techniques
\cite{DBLP:conf/cade/ZhangZ96}. Our approach is closer to the SEM-style as it
is based on constraint programming. However, since both styles do not deal with
quantified domains as sets, then they cannot reduce the domain every time an
element is identified, as we do with RIS---for instance, in rule
\eqref{e:empty2}. Instead, they set a size for the domain and try to find a
model at most as large as that.

Ideas from finite model finding were taken as inspiration by Reynolds et al.
\cite{DBLP:conf/cav/ReynoldsTGK13} for handling universal quantifiers in SMT.
These authors propose to find finite models for infinite universally quantified
formulas by considering finite domains.
In particular, Reynolds et
al. make use of the cardinality operator for the sorts of quantified variables
and propose a solver for a theory based on this operator. Then, they make a
guess of the cardinality for a quantified sort and use the solver to try to
find a model there. In the default strategy, the initial guess is 1 and it is
incremented in 1. Note that our approach does not need a cardinality operator
because it operates directly over a theory of sets.

%%%%%%%%%%%%%%%%%%%%%%%%%%%%%%%%%%%%%%%%%%%%%%%%%%%%%%%%%%%%%%%%%%%%%%%%%%%%%
\section{Concluding Remarks}\label{concl}

In this paper we have shown a decision procedure for an expressive class
of intensional sets, called Restricted Intensional Sets (RIS). Key features of
this procedure are: it returns a finite representation of all possible
solutions of the input formula; it allows set elements to be variables; it is
parametric with respect to any first-order theory endowed with a decision
procedure; and it is implemented as part of the \setlog tool. Besides, we have
shown through a number of examples and two case studies that, although RIS are
a subclass of general intensional sets, they are still sufficiently expressive
as to encode and solve many interesting problems. Finally, an extensive
empirical evaluation provides evidence that the tool can be used in practice.

We foresee some promising lines of work concerning RIS:
\begin{itemize}
\item Some set-theoretic operators (e.g., intersection),
which are provided in $\LRIS$ as derived constraints, are first rewritten into
a $\LRIS$ formula and then $\SATRIS$ is applied to the resulting formula. This
threatens the efficiency of $\SATRIS$. Hence, a future work is to implement
specific rewriting procedures for some widely used constraints such as
intersection.
\item The relational operators available in $\mathcal{BR}$ could be extended to
allow for RIS as arguments.
\item As it is now, $\LRIS$ admits only RUQ of the form $\forall x \in A (\phi(x))$,
that is, a single bound variable and its domain. Extending $\LRIS$ to allow
multiple bound variables, each with its own domain, would require to admit
certain $\mathcal{RIS}$-formulas as RIS filters. For example, $\forall x \in A
(\forall y \in B (\phi(x,y)))$ can be encoded as:
\[
A \subseteq \{x:A | B \subseteq \{y:B | \phi(x,y)\}\}
\]
but this is not a $\mathcal{RIS}$-formula because the filter of the outermost
RIS is a $\mathcal{RIS}$-formula. Our intuition is that this would not only be
decidable but relatively efficient too, which, in the end, would be aligned
with some results of CST.

An alternative way of encoding RUQ with multiple variables is by allowing
the recently added Cartesian products \cite{DBLP:conf/RelMiCS/CristiaR18}
as RIS domains. We plan to assess both approaches.
\end{itemize}

\bibliographystyle{ACM-Reference-Format-Journals}
\bibliography{/home/mcristia/escritos/biblio}

%%%%%%%%%%%%%%%%%%%%%%%%%%%%%%%%%%%%%%%%%%%%%%%%%%%%%%%%%%%%%%%%%%%%%%%%%%
%%%%%%%%%%%%%%%%%%%%%%%%%%%%%%%%%%%%%%%%%%%%%%%%%%%%%%%%%%%%%%%%%%%%%%%%%%
%%%%%%%%%%%%%%%%%%%%%%%%%%%%%%%%%%%%%%%%%%%%%%%%%%%%%%%%%%%%%%%%%%%%%%%%%%
%%%%%%%%%%%%%%%%%%%%%%%%%%%%%%%%%%%%%%%%%%%%%%%%%%%%%%%%%%%%%%%%%%%%%%%%%%

\pagebreak
\appendix

%%%%%%%%%%%%%%%%%%%%%%%%%%%%%%%%%%%%%%%%%%%%%%%%%%%%%%%%%%%%%%%%%%%%%%%%%%

\section{Rewrite rules}\label{ap:rules}

This section lists all the $\LRIS$ rewrite rules for $=$, $\neq$, $\in$ and
$\notin$ constraints not given in Section \ref{rules}, along with the
rules for $\Set$ and $\isx$ constraints; rules for $\Cup$ and $\Disj$
constraints are instead all shown in Section \ref{rules} and are not repeated
here. Many of the rules listed in this appendix are borrowed directly from
\cite{Dovier00}.

We adopt the following notational conventions:
$s,t,u$ (possibly subscripted) stand for arbitrary
$\Ur$-terms;
$A,B,C,D$ stand for arbitrary
$\mathcal{RIS}$-terms of sort $\sSet$ (either extensional or intensional,
variable or not);
$\bar D,\bar E$ represent either variables of sort $\sSet$ or variable-RIS;
$X,N$ are variables of sort $\sSet$ representing extensional sets (not RIS)
while $x$ is a variable of sort $\sU$; $\varnothing$ represents either
$\emptyset$ or a RIS with empty domain (e.g., $\{\emptyset | \flt @ \ptt\}$);
finally, $\textbf{\textit{X}}$ represents either the variable $X$ or a set term
containing $X$ as its innermost variable set part, i.e., $\{t_1,\dots,t_n
\sqcup X\}$. Note that, when $n=0$, $\{t_1,\dots,t_n \sqcup X\}$ is just
$X$.

In all rules, variables appearing in the right-hand side but not in the
left-hand side are assumed to be fresh variables.
%$t_1 \equiv t_2$ (resp., $t_1 \not\equiv t_2$), for any terms $t_1$ and $t_2$,
%means that $t_1$ is (resp., is not) of the same (syntactic) form as $t_2$;

Besides, recall that:
a) the rules are given for RIS whose domain is not another RIS (see Appendix
\ref{ap:domris} for further details);
b) the control term of RIS terms is assumed to be a
\emph{variable} in all cases (see Appendix \ref{ap:ctrlris} for
further details);
c) $\Fpv(d)$, $\Gpv(d)$, $\Ppv(d)$ and $\Qpv(d)$ are shorthands for
$\Fpv[x \mapsto d]$,
$\Gpv[x \mapsto d]$, $\Ppv[x \mapsto d]$ and $\Qpv[x \mapsto d]$, respectively,
where $[x \mapsto d]$ represents variable substitution; and
d) we use $=$ and $\in$ in place of $=_\mathcal{X}$ and $\in_\mathcal{X}$
whenever it is clear from the context.

\subsection*{Equality}

\setcounter{equation}{0}
\renewcommand{\theequation}{$=_{\arabic{equation}}$}
\begin{gather}
\varnothing = \varnothing  \lfun \true \label{e=e} %\\[2mm]
\end{gather}
\begin{gather}
 X = X  \lfun \true
 \label{x=x} %\\[2mm]
\end{gather}
\begin{gather}
\text{If $S \equiv X$ or $S \equiv \{c:\bi{X} | \F @ \P\}$ with $c \equiv \P$:} \notag \\
 X = \{t_0,\dots, t_k \plus S\} \lfun X = \{t_0,\dots, t_k \plus S[X \mapsto N]\}
 \label{special2} %\\[2mm]
\end{gather}
%
%\text{If $S \equiv \{c:\textbf{\textit{X}} | \F @ \P\}$ with $c \not\equiv \P$:} \notag \\
% X = \{t_0,\dots, t_k \plus S\}  \lfun X = \{n \plus N\} \land \{n \plus N\} =
% \{t_0,\dots, t_k \plus S[X / \{n \plus N\}]\}
% \label{special1} \\[2mm]
%
\begin{gather}
\text{If $X$ occurs in other constraints in the input formula and
 $A \not \equiv \defrisinit$:} \notag\\
 X = A  \lfun  X = A \text{ and substitute $X$ by $A$ in the rest of the formula}
 \label{x=s} %\\[2mm]
\end{gather}
\begin{gather}
 \varnothing = \{t \plus A\} \lfun \false
 \label{e=s} %\\[2mm]
\end{gather}
\begin{gather}
\text{If ($R \equiv X$ and $S \equiv X$) or ($R \equiv
\{c:\bi{X} | \F @ \P\}$ and $S \equiv X$) or } \notag \\
\text{($R \equiv \{c:\bi{X} | \F @ \P\}$ and $S \equiv
\{d:X | \G @ \Q\}$) with $c \equiv \P$ and $d \equiv \Q$:} \notag \\
\begin{split}
 \{t_0, & \dots, t_m \plus R\} =
   \{s_0, \dots, s_k \plus S\} \lfun \\
  & t_0 = s_j \\
  &\quad{}\land \{t_0 \plus N_1\} = R
          \land t_0 \notin N_1
          \land \{t_0 \plus N_2\} = S
          \land t_0 \notin N_2 \\
  &\quad{}\land \{t_1, \dots, t_m \plus N_1\}
                = \{s_0, \dots, s_{j-1},s_{j+1}, \dots, s_k
                    \plus N_2\} \\
  &{}\lor t_0 = s_j \\
  &\quad{}\land t_0 \notin R
          \land t_0 \notin S \\
  &\quad{}\land \{t_1, \dots, t_m \plus R\}
                = \{s_0, \dots, s_{j-1},s_{j+1}, \dots, s_k
                    \plus S\} \\
  &{}\lor t_0 = s_j \\
  &\quad{}\land \{t_0 \plus N\} = R
          \land t_0 \notin N
          \land t_0 \notin S \\
  &\quad{}\land \{t_1, \dots, t_m \plus N\}
                = \{s_0, \dots, s_{j-1},s_{j+1}, \dots, s_k
                    \plus S\} \\
  &{}\lor t_0 = s_j \\
  &\quad{}\land t_0 \notin R
          \land \{s_j \plus N\} = S
          \land s_j \notin N \\
  &\quad{}\land \{t_1, \dots, t_m \plus R\}
                = \{s_0, \dots, s_{j-1},s_{j+1}, \dots, s_k
                    \plus N\} \\
  &{}\lor t_0 = s_j
          \land \{t_1, \dots, t_m \plus R\}
                = \{s_0, \dots, s_k \plus S\} \\
  &{}\lor t_0 = s_j
          \land \{t_0, \dots, t_m \plus R\}
                = \{s_0, \dots, s_{j-1},s_{j+1}, \dots, s_k
                    \plus S\} \\
  &{}\lor X = \{t_0 \plus N\} \\
  &{}\quad{}\land \text{\textbf{if} $S \equiv X$
                        \textbf{then} $\true$ \textbf{else} $\G(t_0)$} \\
  &{}\quad{}\land \{t_1, \dots, t_m \plus R[X \mapsto N]\}
                  = \{s_0, \dots, s_k \plus S[X \mapsto N]\}
 \end{split}
 \label{special3} %\\[2mm]
\end{gather}
\begin{gather}
\text{If $R \equiv X$ and $S \equiv \{d:X | \G @ \Q\}$ with $d \equiv \Q$:} \notag \\
 \begin{split}
 \{t_0, \dots, t_m \plus R\} = \{s_0, \dots, s_k \plus S\} \lfun
        \{s_0, \dots, s_k \plus S\} = \{t_0, \dots, t_m \plus R\}
 \end{split}
\label{special5}
\end{gather}

\begin{gather}
\text{If $R \equiv \{c:\bi{X} | \F @ \P\}$ and $S \equiv \{d:X | \G @ \Q\}$
with $c \not\equiv \P$ and $d \not\equiv \Q$}: \notag \\
 \begin{split}
 \{t_0, & \dots, t_m \plus R\} =
   \{s_0, \dots, s_k \plus S\} \lfun \\
  & t_0 = s_j \\
  &\quad{}\land \{t_0 \plus N_1\} = R
          \land t_0 \notin N_1
          \land \{t_0 \plus N_2\} = S
          \land t_0 \notin N_2 \\
  &\quad{}\land \{t_1, \dots, t_m \plus N_1\}
                = \{s_0, \dots, s_{j-1},s_{j+1}, \dots, s_k
                    \plus N_2\} \\
  &{}\lor t_0 = s_j \\
  &\quad{}\land t_0 \notin R
          \land t_0 \notin S \\
  &\quad{}\land \{t_1, \dots, t_m \plus R\}
                = \{s_0, \dots, s_{j-1},s_{j+1}, \dots, s_k
                    \plus S\} \\
  &{}\lor t_0 = s_j \\
  &\quad{}\land \{t_0 \plus N\} = R
          \land t_0 \notin N
          \land t_0 \notin S \\
  &\quad{}\land \{t_1, \dots, t_m \plus N\}
                = \{s_0, \dots, s_{j-1},s_{j+1}, \dots, s_k
                    \plus S\} \\
  &{}\lor t_0 = s_j \\
  &\quad{}\land t_0 \notin R
          \land \{s_j \plus N\} = S
          \land s_j \notin N \\
  &\quad{}\land \{t_1, \dots, t_m \plus R\}
                = \{s_0, \dots, s_{j-1},s_{j+1}, \dots, s_k
                    \plus N\} \\
  &{}\lor t_0 = s_j
          \land \{t_1, \dots, t_m \plus R\}
                = \{s_0, \dots, s_k \plus S\} \\
  &{}\lor t_0 = s_j
          \land \{t_0, \dots, t_m \plus R\}
                = \{s_0, \dots, s_{j-1},s_{j+1}, \dots, s_k
                    \plus S\} \\
  &{}\lor X = \{n \plus N\} \land \Gpv(n) \land t_0 = \Qpv(n) \\
          &\quad{}\land (\neg\F(n)
                         \lor (\F(n)
                              \land \P(n) \notin \{s_0, \dots,s_k\} \land t_0 = \P(n))) \\
          &\quad{}\land \{t_1, \dots, t_m \plus R[X \mapsto N]\}
          = \{s_0, \dots, s_k \plus S[X \mapsto N]\}\} \\
  &{}\lor X = \{n \plus N\} \land \Gpv(n) \land t_0 = \Qpv(n)
          \land \F(n) \land \P(n) = s_j \\
          &\quad{}\land \{\P(n),t_1, \dots, t_m \plus R[X \mapsto N]\}
          = \{s_0, \dots, s_k \plus S[X \mapsto N]\}\} \\
 \end{split} \label{special6} %\\[2mm]
\end{gather}
\begin{gather}
 \begin{split}
 \{t \plus & A\} = \{s \plus B\} \lfun \\
       & (t = s \land A = \{s \plus B\})  \lor (t = s \land \{s \plus A\} = B) \\
       & \lor (t = s \land A = B)
       \lor (A = \{s \plus N\} \land \{t \plus N\} = B)
 \end{split}
 \label{s1=s2} %\\[2mm]
\end{gather}
\begin{gather}
 \text{\emph{(rule \eqref{e:empty2} of Fig. \ref{f:eq})}} \notag \\
 \{\{t \plus D\} | \F @ \P\} = \varnothing \lfun
 \lnot \Fpv(t) \land \{D | \F @ \P\} = \e
 %\label{e:empty2}
 \label{app.e:empty2} %\\[2mm]
\end{gather}
\begin{gather}
 \text{\emph{(rule \eqref{e:eset} of Fig. \ref{f:eq})}} \notag \\
 \begin{split}
% & \text{If $B$ is any set term except $\varnothing$:} \\
 & \{\{t \plus D\} | \F @ \P\} = B \lfun \\
 & (\Fpv(t) \land \{\Ppv(t) \plus \{D | \F @ \P\}\} = B)
 \lor (\lnot \Fpv(t) \land \{D | \F @ \P\} = B)
 \end{split}
 %\label{e:eset}
 \label{app.e:eset} %\\[2mm]
\end{gather}
\begin{gather}
\text{If $S \equiv X$ or $S \equiv \{c:\bi{X} | \F @ \P\}$ with
  $c \equiv \P$ and $d \equiv \Q$:} \notag \\
 \begin{split}
 \{d: & X  | \G @ \Q\} = \{t_0,t_1,\dots,t_k \plus S\} \lfun \\
      & X = \{t_0 \plus N\} \land \Gpv(t_0)
        \land \{d: N | \G @ \Q\}
              = \{t_1,\dots,t_k \plus S[X \mapsto N]\}
 \end{split}
 \label{e:v=e2}
\end{gather}
\begin{gather}
 \text{If $S \equiv \{c:\bi{X} | \F @ \P\}$ with
  $c \not\equiv \P$ and $d \not\equiv \Q$:} \notag \\
 \begin{split}
 \{d: & X  | \G @ \Q\} = \{t_0,t_1,\dots,t_k \plus S\} \lfun \\
      & X = \{n \plus N\} \land \Gpv(n) \land t_0 = \Qpv(n)
        \land ( \neg\F(n) \lor t_0 = \P(n)) \\
      & \land \{d:N | \G @ \Q\}
              = \{t_1,\dots,t_k \plus S[X \mapsto N]\}
 \end{split}
 \label{special4} %\\[2mm]
\end{gather}
\begin{gather}
\text{\emph{(rule \eqref{e:v=e} of Fig. \ref{f:eq})}} \notag \\
 \begin{split}
 & \{\bar D | \F @ \P\} = \{t \plus A\} \lfun \\
 & \bar D = \{n \plus N\} \land \flt(n) \land t = \ptt(n) \land \{N
 | \F @ \P\} = A
 \end{split}
 %\label{e:v=e}
 \label{app.e:v=e}
\end{gather}

There is a symmetric rule for each of the following: $\eqref{special2}$,
%$\eqref{special1}$,
$\eqref{x=s}$, $\eqref{e=s}$, \eqref{app.e:empty2},
\eqref{app.e:eset}, \eqref{e:v=e2}, \eqref{special4} and $\eqref{app.e:v=e}$.
That is, these rules apply when the l.h.s. and the r.h.s. are switched.

All other admissible $=$-constraints
are in solved form; hence, they are dealt with as irreducible constraints (see
Section \ref{solved}).

\begin{remark}
\begin{itemize}
\item Rules \eqref{e=e} and \eqref{e=s} include rules
\eqref{e:empty1} and \eqref{e:false} of Figure \ref{f:eq} as special cases.
\item The condition $A \not \equiv \defrisinit$ in rule \eqref{x=s} is
motivated by the fact that the case where $A \equiv \defrisinit$ is dealt with
by rule \eqref{app.e:eset}.
\item The set part of an extensional set can be also a RIS term. All rules listed in
this section still continue to work also in these cases. In particular, rule
\eqref{special2} deals also with constraints of the form $X =
\{t_0,\dots, t_n \plus \{X | \F @ \P\}\}$ where the domain of the RIS is the
same variable occurring in the left-hand side of the equality. This constraint
is rewritten to $X = \{t_0,\dots, t_n \plus \{N | \F @ \P\}\}$ where $N$ is a
fresh variable. For example, the equality $X = \{a,b \plus \{x:X | \true @
x\}\}$ is rewritten to $X = \{a,b \plus \{x:N | \true @ x\}\}$. Note that, if
$N = \e$, then $X = \{a,b\}$, which is clearly a solution of the given
constraint.
\item Rules \eqref{e:v=e2} and \eqref{special4} (resp., \eqref{special3},
\eqref{special5} and \eqref{special6}) are motivated by the observation
that rule \eqref{app.e:v=e} (resp., \eqref{s1=s2}) does not work
satisfactory (it loops forever) whenever the same variable $X$ occurs in both
sides of the equation. As an example, the rewriting of the simple constraint
$\{x : D @ x\} = \{1 \plus D\}$ does not terminate using rule
\eqref{app.e:v=e}, though it has the obvious solution $D = \{1 \plus N\}$, $N$
a fresh variable. Thus, rules \eqref{e:v=e2} and \eqref{special4} (resp.,
\eqref{special3}, \eqref{special5} and \eqref{special6}) are introduced
to deal with these special cases.
%It is the same as rule \eqref{e:v=e}, but the RIS $\{N | \F @ \P\}$ in the last
%conjunct is set equal to the new variable $N$.
Note that these rules are, in a sense, the analogous of rule
\eqref{special2}
%and \eqref{special1}
which deals with equations of the form $X = \{t_0,\dots, t_n \plus
\textbf{\textit{X}}\}$ where the same variable $X$ occurs in both sides of the
equation. In turn, rule \eqref{special2} is a generalization to RIS of rule (6)
listed in Figure 3 of \cite{Dovier00}.
\qed
\end{itemize}
\end{remark}

\subsection*{Inequality}

\setcounter{equation}{0}
\renewcommand{\theequation}{$\neq_{\arabic{equation}}$}
\begin{gather}
%t \neq t \lfun \false\text{, if $t$ is a constant or a variable} \\
\varnothing \neq \varnothing  \lfun \false \label{e neq e} %\\[2mm]
\end{gather}
\begin{gather}
X \neq X  \lfun \false \label{x neq x} %\\[2mm]
\end{gather}
\begin{gather}
\text{If $A$ is neither a variable nor a RIS:} \notag\\
A \neq X \lfun X \neq A \label{a neq x} %\\[2mm]
\end{gather}
\begin{gather}
\varnothing \neq \{t \plus A\} \lfun \true \label{e_neq_s} %\\[2mm]
\end{gather}
%
%X \neq \{t_1, \dots, t_n \plus X\}  \lfun
%       (t_1 \notin X \lor \dots \lor t_n \notin X)
%\label{neq-special1} \\[2mm]
%
\begin{gather}
\text{If $S \equiv X$ or 
$S \equiv \{D | \F @ \P\}$:} \notag \\
\begin{split}
X \neq & \{t_0, \dots, t_n \plus S\}  \lfun \\
   & (n \in X \land n \notin \{t_0, \dots, t_n \plus S\})
    \lor (n \notin X \land n \in \{t_0, \dots, t_n \plus S\})
\end{split} \label{neq-special1}
%\\[2mm]
\end{gather}
\begin{gather}
\begin{split}
\{t \plus &  A \} \neq \{s \plus B\}  \lfun \\
          & (n \in \{t \plus A\} \land n \notin \{s \plus B\})
                    \lor (n \notin \{t \plus A\} \land n \in \{s \plus B\})
\end{split} \label{s-neq-s}
%\\[2mm]
\end{gather}
\begin{gather}
\text{\emph{(rule \eqref{e:neq} of Fig. \ref{f:eq})}} \notag \\
\begin{split}
\{D | & \F @ \P\} \neq A  \lfun \\
   & (n \in \{D | \F @ \P\} \land n \notin A)
    \lor (n \notin \{D | \F @ \P\} \land n \in A)
\end{split}
\label{app.e:neq}
\end{gather}

There is a symmetric rule for each of the following: $\eqref{e_neq_s}$ and
$\eqref{app.e:neq}$. That is, these rules apply when the l.h.s. and the r.h.s.
are switched.

All other admissible $\neq$-constraints are in solved form (see Section
\ref{solved}).

\begin{remark}
The fact that $X \neq \{D | \F @ \P\}$ (rule \eqref{app.e:neq})
and $X \neq \{t_0, \dots, t_n \plus \{D | \F @ \P\}\}$ (rule
\eqref{neq-special1}) are not considered in solved form
as it is $X \neq S$ when $S$ is not a RIS term is motivated by the observation
that, while determining the satisfiability of $X \neq S$ is immediate, the
satisfiability of of the inequalities
involving RIS depends on $D$, $\F$ and $\P$, and hence requires further
simplification of the constraint. \qed
\end{remark}

\subsection*{Set membership}

\setcounter{equation}{0}
\renewcommand{\theequation}{$\in_{\arabic{equation}}$}
\begin{gather}
t \in \varnothing \lfun \false \label{t_in_e}
\\[2mm]
t \in \{s \plus A\} \lfun t = s \lor t \in A \label{t_in_s}
\\[2mm]
t \in X \lfun X = \{t \plus N\}
\label{t_in_X}
\\[2mm]
\text{\emph{(rule \eqref{in:V} of Fig. \ref{f:in})}} \notag \\
t \in \{ D | \F @ \P\} \lfun
  n \in  D \land \F(n) \land t = \P(n)
  \label{app.in:V}
%\\\notag
\end{gather}

No other admissible $\in$-constraint.

\subsection*{Not set membership}

\setcounter{equation}{0}
\renewcommand{\theequation}{$\notin_{\arabic{equation}}$}
\begin{gather}
t \notin \varnothing  \lfun \true \label{t_nin_e} \\[2mm]
t \notin \{s \plus A\} \lfun t \neq s \land t \notin A \label{t_nin_ext}
\\[2mm]
\text{\emph{(rule \eqref{nin:nV} of Fig. \ref{f:in})}} \notag \\
\begin{split}
& t \notin \{\{d \plus D\} | \F @ \P\} \lfun \\
& (\Fdd \land t \neq \Pdd \land t \notin \{D | \F @ \P\}) \lor (\lnot \nFdd
\land t \notin \{D | \F @ \P\})
\end{split}
\label{app.nin:nV}
\end{gather}

All other admissible $\notin$-constraints are in solved form (see Section
\ref{solved}).

\begin{remark}[Membership/not membership]
Rules \eqref{t_in_e} and \eqref{t_nin_e} include rules
\eqref{in:e} and \eqref{nin:e} of Figure \ref{f:in} as special cases.
\qed
\end{remark}

\subsection*{Sort constraints}

\setcounter{equation}{0}
\renewcommand{\theequation}{$\Set_{\arabic{equation}}$}
\begin{gather}
\label{set(e)} \Set(\e) \lfun \true \\[2mm]
\label{set(s)} \Set(\{t \plus A\}) \lfun \Set(A) \\[2mm]
\label{set(ris)} \Set(\{D | \F @ \P\}) \lfun \true \\[2mm]
\label{set(t)} \Set(t) \lfun \false
\end{gather}

\setcounter{equation}{0}
\renewcommand{\theequation}{$\isx_{\arabic{equation}}$}
\begin{gather}
\label{nset(e)} \isx(\e) \lfun \false \\[2mm]
\label{nset(s)} \isx(t \plus A\} \lfun \false \\[2mm]
\label{nset(ris)} \isx(\{D | \F @ \P\}) \lfun \false \\[2mm]
\label{nset(t)} \isx(t) \lfun \true
\\\notag
\end{gather}

\begin{remark}
In Algorithm \ref{glob} (see Section \ref{solver}), $\SATRIS$ calls $\SATX$
only once, at the end of the computation. $\SATX$ is called by passing it the
whole collection of $\Ur$ literals previously accumulated in the current
formula $\Phi$ by the repeated applications of the rewrite rules within
\textsf{STEP}. Alternatively, and more efficiently, $\SATX$ could be called
repeatedly in the inner loop of the solver, just after the \textsf{STEP}
procedure has been called. This would allow possible inconsistencies to be
detected as soon as possible instead of being deferred to the last step of the
decision procedure. For example, if $\Phi$ contains the equation $\{1\} =
\{2\}$, which is rewritten by \textsf{STEP} as $1 =_\mathcal{X} 2$, calling
$\SATX$ just after \textsf{STEP} ends allows the solver to immediately detect
that $\Phi$ is unsatisfiable. Similarly, variable substitutions entailed by
equalities possibly returned by $\SATX$ are propagated to the whole formula
$\Phi$ as soon as possible. \qed
\end{remark}

\setcounter{equation}{0}
\renewcommand{\theequation}{\arabic{equation}}

%\section{\label{ap:domris}Introduction to rewrite rules for RIS whose domain is another RIS}
\subsection{\label{ap:domris}Nested RIS domains}

According to the syntax of $\LRIS$ (see Section \ref{lris}), RIS domains can be
a nested chain of RIS, ending in a variable or an extensional set. On the other
hand, the rewrite rules presented in Section \ref{rules} and Appendix
\ref{ap:rules} apply only to RIS whose domain is not another RIS. We do so
because the rewrite rules for the most general case are more complex, thus they
would hinder understanding of the decision procedure.

These more general rules, however, can be easily generated from the rules for
the simpler case. In this section we show how this generalization can be done
by showing how one of the rules presented in Section \ref{rules}, namely rule
\eqref{e:eset}, is adapted to deal with the more general case. All other
rewrite rules can be generalized in the same way.

Consider a non-variable RIS whose domain is a nested chain of RIS where the
innermost domain is an extensional set. The generalization  of rule
\eqref{e:eset} to a RIS of this form is a follows:
\begin{gather*}
\{ \{ \dots \{\{d \plus D\} | \F_1 @ \P_1\}\dots | \F_{m-1} @ \P_{m-1}\} | \F_m
@
\P_m\} = B  \lfun \\
\begin{split}
& \{\dots \{\{d \plus D\} | \F_1 @ \P_1\}\dots | \F_{m-1} @ \P_{m-1}\} = \{n \plus N\}\\
& \land (\F(n)
  \land \{\P_m(n) \plus \{N | \F_m @ \P_m\}\} = B \\
& \quad \lor \lnot \F(n) \land \{N | \F_m @ \P_m\} = B)
\end{split}
\end{gather*}
where $n$ and $N$ are two new variables. Note that the first element $n$ of the
domain of the outermost RIS is obtained by the recursive application of the
same rules for equality, over the domain itself (possibly another RIS) and the
extensional set $\{n \plus N\}$. Note also that when $m = 1$ this rule boils
down to rule \eqref{e:eset} of Figure \ref{f:eq}.

\subsection{\label{ap:ctrlris}Control Terms}

As with nested RIS domains, we preferred not to show the rewrite rules when the
control term is not a variable, as these rules are somewhat more complex than
the others.

According to Definition \ref{RIS-terms}, when the control term is not a
variable then it is an ordered pair of the form $(x,y)$ where both components
are variables. Consider the following RIS:
\[
\{(x,y):\{(1,2),55\} | \F @ \P\}
\]
The problem with this RIS is that $(x,y)$ does not unify with $55$, for all $x$
and $y$. The semantics of RIS stipulates that $55$ must not be considered as a
possible value on which evaluate $\F$ and $\P$.

As this example shows, it is necessary to consider one more case (i.e., one
more non-deterministic choice) in each rewriting rule. For example, if in rule
\eqref{e:eset} we consider a general control term $c$, and not just a variable,
the rule is split into two rules:
\begin{gather*}
\begin{split}
& \text{If $c \in \Var$ or $d \in \Var$ or }
  \text{$(c \equiv f(x_1,\dots,x_n)$ and $d \equiv f(t_1,\dots,t_n))$: }\\
&        \{c:\{d \plus D\} | \F @ \P\} = B \lfun \\
& c = d \land \F(d) \land \{\P(d) \plus \{c:D | \F @ \P\}\} = B \\
& \lor c = d \land \lnot \F(d) \land \{c:D | \F @ \P\} = B
\end{split}\\[2mm]
\begin{split}
& \text{If $c \equiv f(x_1,\dots,x_n)$ and $d \equiv g(t_1,\dots,t_m)$ and $(f
\not\equiv g$ or $n \not\equiv m)$: } \\
&        \{c:\{d \plus D\} | \F @ \P\} = B \lfun
  \{c:D | \F @ \P\} = B
\end{split}
\end{gather*}

Note how the second rule simply skips $d$.

%%%%%%%%%%%%%%%%%%%%%%%%%%%%%%%%%%%%%%%%%%%%%%%%%%%%%%%%%%%%%%%%%%%%%%%

\section{Source code of the Bell-LaPadula case study}\label{app:blp}

\begin{verbatim}

:- int_solver(clpq).

dominates([L1,C1],[L2,C2]) :-
  subset(C1,C2) & L1 =< L2.

openRead(ScF,ScP,Proctbl,P,F,Proctbl_) :-
  [P,F] nin Proctbl &
  apply(ScF,F,[Lf,Cf]) &
  apply(ScP,P,[Lp,Cp]) &
  dominates([Lf,Cf],[Lp,Cp]) &
  Proctbl_ = {[P,F]/Proctbl}.

seccond(ScF,ScP,Proctbl) :-
  foreach([Pi,Fi] in Proctbl,
          [Lf1,Cf1,Lp1,Cp1],
          subset(Cf1,Cp1) & Lf1 =< Lp1,
          apply(ScF,Fi,[Lf1,Cf1]) & apply(ScP,Pi,[Lp1,Cp1])
         ).

openReadPreservesSeccond(ScF,ScP,Proctbl) :-
  seccond(ScF,ScP,Proctbl) &
  openRead(ScF,ScP,Proctbl,P,F,Proctbl_) &
  nforeach([Pi,Fi] in Proctbl_,
           [Lf1,Cf1,Lp1,Cp1],
           subset(Cf1,Cp1) & Lf1 =< Lp1,
           apply(ScF,Fi,[Lf1,Cf1]) & apply(ScP,Pi,[Lp1,Cp1])
          ).

\end{verbatim}

%%%%%%%%%%%%%%%%%%%%%%%%%%%%%%%%%%%%%%%%%%%%%%%%%%%%%%%%%%%%%%%%%%%%%%%%

%%%%%%%%%%%%%%%%%%%%%%%%%%%%%%%
\allowdisplaybreaks[1]
%%%%%%%%%%%%%%%%%%%%%%%%%%%%%%%

\section{Detailed Proofs}\label{ap:proofs}
\catcode`\|=\active
\def|{\mid}

This section contains detailed proofs of the theorems stated in the main text,
along with some results that justify some of our claims. We start with the
justification that RIS can be used to encode restricted universal quantifiers.

\begin{proposition}\label{prop:1}
\begin{gather*}
D \subseteq \{x:D | F\} \iff \forall x (x \in D \implies F)
\end{gather*}
\end{proposition}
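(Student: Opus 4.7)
The plan is to reduce both sides to equivalent predicate-logic statements by unfolding the definitions, and then carry out a short two-direction argument. The whole proof is essentially a semantic unfolding, so the main work is bookkeeping rather than insight.

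First I would expand the right-hand side using the interpretation of RIS terms given in Definition \ref{def:int_funct}. Since the pattern of $\{x:D | F\}$ is the control variable $x$ itself (by the convention recalled in Remark \ref{r:notation}), the semantics yields
\[
\{x:D | F\}^\iS = \{\, y : \exists x\, (x \in D \land F \land y = x)\,\} = \{\, x : x \in D \land F\,\}.
\]
Next I would unfold the derived constraint $\subseteq$ together with the standard set-theoretic meaning of inclusion, obtaining that $D \subseteq \{x:D | F\}$ holds iff
\[
\forall y\, \bigl( y \in D \implies y \in \{x : x \in D \land F\}\bigr),
\]
and hence iff $\forall y\, (y \in D \implies y \in D \land F[x \mapsto y])$.

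Then I would prove the equivalence with $\forall x (x \in D \implies F)$ by two easy implications. For the forward direction, assume $D \subseteq \{x:D | F\}$ and fix an arbitrary $x_0 \in D$; by the unfolding, $x_0 \in D \land F[x \mapsto x_0]$, so in particular $F[x \mapsto x_0]$, giving $\forall x (x \in D \implies F)$. For the converse, assume $\forall x (x \in D \implies F)$ and fix $y \in D$; then $F[x \mapsto y]$ holds, so $y \in D \land F[x \mapsto y]$, meaning $y$ belongs to $\{x:D | F\}^\iS$; hence $D \subseteq \{x:D | F\}$.

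The only potential obstacle is notational: making precise that the bound variable $x$ of the RIS can be identified with the universally quantified $x$ on the right-hand side (up to $\alpha$-renaming), and that the semantics of $\subseteq$ used here coincides with its definition as a derived constraint based on $\Cup$ in Section \ref{expressiveness}. Both are immediate from the definitions given in the paper, so no further machinery is needed.
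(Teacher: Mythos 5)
Your proposal is correct and follows essentially the same route as the paper's own proof: both directions are established by unfolding the RIS semantics so that $y \in \{x:D \mid F\}$ becomes $y \in D \land F[x \mapsto y]$, and then chaining the obvious implications. The extra care you take about the pattern being the control variable and about $\alpha$-renaming is sound but not needed beyond what the paper already assumes.
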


\begin{proof}

\noindent$\Longrightarrow$)
\begin{gather*}
x \in D \\
\imp x \in \{x:D | F\} \why{H} \\
\imp x \in D \land F(x) \why{RIS def.} \\
\implies F(x)
\end{gather*}

\noindent$\Longleftarrow$)
\begin{gather*}
x \in D \\
\imp F(x) \why{H} \\
\imp x \in D \land F(x) \\
\imp x \in \{x:D | F\} \why{RIS def.}
\end{gather*}
\end{proof}

The following proposition supports the claim that to force a RIS to be
empty it is enough to consider its filter.

\begin{proposition}\label{prop:2}
If $D$ is a non-empty set, then:
\begin{gather*}
\{x:D | \F @ \P\} =  \e \iff \forall x (x \in D \implies \lnot\F(x))
\end{gather*}
\end{proposition}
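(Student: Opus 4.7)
My plan is to prove both directions simultaneously as a chain of equivalences, by unfolding the semantic definition of a RIS given in Definition~\ref{def:int_funct}. According to that definition, $\{x:D \mid F @ P\}$ is interpreted as the set $\{y : \exists x (x \in D \land F(x) \land y = P(x))\}$. Therefore, $\{x:D \mid F @ P\} = \emptyset$ holds if and only if $\lnot \exists y \exists x (x \in D \land F(x) \land y = P(x))$.

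Next I would push the non-existence over $y$ inward. Since $P(x)$ is a well-formed $\Ur$-term, for every $x$ the formula $\exists y (y = P(x))$ is valid, so the outer existential over $y$ can be eliminated. Formally, $\exists x \exists y(x \in D \land F(x) \land y = P(x))$ is equivalent to $\exists x(x \in D \land F(x) \land \exists y (y = P(x)))$, which in turn is equivalent to $\exists x(x \in D \land F(x))$. Negating both sides gives $\lnot \exists x(x \in D \land F(x))$, and applying De Morgan together with the standard rewriting of implication yields $\forall x(\lnot(x \in D \land F(x))) \iff \forall x(x \in D \implies \lnot F(x))$, which is exactly the right-hand side of the statement.

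This is really a purely logical unfolding, so there is no substantial obstacle; the only subtlety is notational, namely that the $x$ bound inside the RIS term and the $x$ universally quantified on the right-hand side are distinct variable occurrences with the same name, and renaming them apart (or being explicit that we are reusing the name) is needed to make the equivalences unambiguous. I would also briefly remark that the assumption ``$D$ is non-empty'' is not actually used in the argument—both sides reduce to $\mathit{true}$ when $D = \emptyset$—but it reflects the way the proposition is invoked inside rule~\eqref{e:empty2}, where the RIS domain has already been split as $\{d \plus D\}$ and is thus manifestly non-empty.
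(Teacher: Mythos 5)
Your proof is correct and follows essentially the same route as the paper's: both unfold the semantic definition of the RIS as $\{y : \exists x (x \in D \land \F(x) \land y = \P(x))\}$ and then reduce the emptiness condition by a purely logical manipulation of the quantifiers, your elimination of the $\exists y$ via the validity of $\exists y\,(y = \P(x))$ being the same step the paper performs implicitly when it collapses $\forall y\,\forall d\,(d \notin D \lor \lnot\F(d) \lor y \neq \P(d))$ to $\forall d\,(d \in D \implies \lnot\F(d))$. Your side remarks on variable renaming and on the unused non-emptiness hypothesis are accurate but do not change the substance.
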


\begin{proof}
\begin{gather*}
\{x:D | \F @ \P\} =  \e \\
\iff \{y : \exists d (x \in D \land \F \land y = \P)\} = \e \\
\iff \forall y (\lnot\exists d (d \in D \land \F \land y = \P)) \\
\iff \forall y (\forall d (d \notin D \lor \lnot\F(d) \lor y \neq \P(d))) \\
\iff \forall d (d \in D \implies \lnot\F(d))
\end{gather*}
\end{proof}

The following proposition supports the claim that many RIS parameters can
be avoided by a convenient control term.

\begin{proof}[of Proposition \ref{prop:3}]
\mbox{}

$\Longrightarrow$)
\begin{gather*}
a \in S \\
\iff \exists x,\vp
       (x \in D
        \land \vp \in D_2 \land F(x,\vv,\vp) \land P(x,\vv,\vp) = a)
     \why{H; RIS def.} \\
\iff \exists x,\vp
       ((x,\vp) \in D \times D_2
        \land F(x,\vv,\vp) \land P(x,\vv,\vp) = a)
     \why{$\times$ def.} \\
\iff a \in \{(x,\vp):D_1 \times D_2 | F((x,\vp),\vv) @ P((x,\vp),\vv)\}
     \why{RIS def.}
\end{gather*}

$\Longleftarrow$) Similar to the previous case.
\end{proof}

Now we prove Proposition \ref{negparam} which gives the conditions to eliminate
existential quantifiers appearing in RIS filters in relation to functional
predicates.

\begin{proof}[of Proposition \ref{negparam}]
Note that:
\begin{gather}
\forall y (\lnot (p(x_1,\dots,x_{n-1},y) \land \Fpv_r(\vec x_r,y))) \notag\\
\iff \forall y (\lnot p(x_1,\dots,x_{n-1},y)
                \lor \lnot \Fpv_r(\vec x_r,y)) \label{neg1}
\end{gather}

Now we divide the proof into two implications.

\mbox{}

\noindent$\Longrightarrow$) If $\Fpv_q(\vec x_q)$ does not hold the conclusion
is proved. Now assume $\Fpv_q(\vec x_q)$ holds. Then $p(x_1,\dots,x_{n-1},z)$
holds for some $z$ due to the hypothesis and Definition \ref{prefps}. Hence for
\eqref{neg1} to be true, $\Fpv_r(\vec x_r,z)$ must be false (because otherwise
the disjunction would be true for $z$). So in this case we have $\Fpv_q(\vec
x_q) \land p(x_1,\dots,x_{n-1},z) \land \lnot \Fpv_r(\vec x_r,z)$, which proves
the conclusion.

\mbox{}

\noindent$\Longleftarrow$) If $\Fpv_q(\vec x_q)$ does not hold then
$p(x_1,\dots,x_{n-1},y)$ does not hold for all $y$ due to the hypothesis and
Definition \ref{prefps}. Hence the conclusion.

Now assume $z$ is such that $\Fpv_q(\vec x_q) \land p(x_1,\dots,x_{n-1},z) \land
\lnot \Fpv_r(\vec x_r,z)$ is true. Then the conclusion is true for $z$ because
$\Fpv_r(\vec x_r,z)$ is false. Now consider any $y \neq z$. Given that $p$ is a
functional predicate symbol (hypothesis), then we have $\lnot
p(x_1,\dots,x_{n-1},y)$ because $p$ can hold for at most one value of its last
parameter (and we know it holds for $z$). So the conclusion.
\end{proof}

The next subsections provide the proofs of the theorems stated in the main
text. All these proofs concern the base $\LRIS$ language, not the possible
extensions discussed in Section \ref{sec:extension} nor those presented in
Appendixes \ref{ap:domris} and \ref{ap:ctrlris}.

\subsection{\label{ap:solved}Satisfiability of the solved form (Theorem \ref{satisf})}

Basically, the proof of this theorem uses the fact that, given a pure
$\mathcal{RIS}$-formula $\Phi$ verifying the conditions of the theorem, it is
possible to guarantee the existence of a successful assignment of values to all
variables of $\Phi$ using pure sets only, with the only exception of the
variables $X$ occurring in terms of the form $X = u$---which are obviously
already assigned. In particular, the solved forms involving variable RIS verify
the following ($X_i$ are variables; $R_i$ are variables or variable-RIS whose
domain is variable $X_i$):
\begin{itemize}
\item $t \notin \{X_1 | \flt @ \ptt\}$
\item $\{X_1 | \flt @ \ptt\} = \varnothing$
\item $\{X_1 | \flt_1 @ \ptt_1\} = \{X_2 | \flt_2 @ \ptt_2\}$
\item $\Cup(R_3,R_4,R_5)$
\item $R_3 \disj R_4$
\end{itemize}
are solved with $X_i = \e$ and $R_i = \e$ for those $R_i$ that are variables.

In the proof we use the auxiliary function $\mathit{find}$:
\[
  \mathit{find}(x,t) =
\begin{cases}
        \emptyset & \text{if } t =  \emptyset, x \neq \emptyset \\
        \{0\}     & \text{if $t =  x$}\\
        \{1 + n : n \in \mathit{find}(x,y)\} & \text{if } t =  \{y \plus \e\} \\
        \{1 + n : n \in \mathit{find}(x,y)\} \cup \mathit{find}(x,s) &
                \text{if } t =  \{y \plus s\}, s \neq \e
\end{cases}
\]
which returns the set of `depths' at which a given element $x$ occurs in the
set $t$.

\begin{proof}
Consider a pure $\mathcal{RIS}$-formula $\Phi$ in solved form. The proof is
basically the construction of a mapping for the variables of $\Phi$ of sort
$\sSet$
%I think that we must specify somewhere that this mechanism applies
%only to the LRIS variables and not to those of LX
%It should be enough saying that it applies to variables of sort $\sSet$;
%anyway, the remark in parenthesis can be useful
into the interpretation domain $D_\sSet$ (see Section \ref{ap:satisf2} to see
how variables of sort $\Ur$ are managed). The construction is divided into two
parts by dividing $\Phi$ as $\Phi_= \land \Phi_r$, where $\Phi_=$ is a
conjunction of equalities whose l.h.s is a variable, and $\Phi_r$ is the rest
of $\Phi$. In the first part $\Phi_=$ is not considered.  A solution for
$\Phi_r$ is computed by looking for valuations\footnote{A \emph{valuation}
$\sigma$ of a $\Sigma$-formula $\varphi$ is an assignment of values from the
interpretation domain $D_\sU$ to the free variables of $\varphi$ which respects
the sorts of the variables.} of the form:
\begin{equation}\label{e:ne}
X_{i} \mapsto \underbrace{\{\cdots\{}_{n_{i}} \e
  \}\cdots\}\vspace*{-2ex}
\end{equation}
fulfilling all $\neq$ and $\notin$ constraints. We will briefly refer to the
r.h.s. of \eqref{e:ne} as $\{ \e \}^{n_i}$. In particular, RIS domains are
mapped onto $\e$ ($n_i = 0$) and the numbers $n_{i}$ for the other variables
are computed choosing one possible solution of a system of integer equations
and disequations, that trivially admits solutions. Such system is obtained by
analyzing the `depth' of the occurrences of the variables in the terms. Then,
all the variables occurring in $\Phi$ only in r.h.s. of equations of $\Phi_{=}$
are bound to $\e$ and the mappings for the variables of the l.h.s. are bound to
the  uniquely induced valuation.

In  detail, let $X_1,\dots, X_m$ be all the variables occurring in $\Phi$, save
those occurring in the l.h.s. of equalities, and let $X_1,\dots, X_h$, $h \leq
m$, be those variables occurring as domains of RIS terms. Let $n_1,\dots, n_m$
be auxiliary variables ranging over $\mathbb{N}$. We build the system {\it
Syst} as follows:
\begin{itemize}
\item For all $i\leq h$, add the equation $n_i=0$.

\item For all $h<i\leq m$, add the following disequations:
$$\begin{array}{ll}
n_i \neq n_j+c & \forall X_i \neq t \text{ in } \Phi \text{ and } c \in \mathit{find}(X_j,t)\\
n_i \neq c & \forall X_i \neq t \text{ in } \Phi \text{ and } t \equiv \{\e\}^c\\
n_i \neq n_j+c+1 & \forall t \notin X_i \text{ in } \Phi \text{ and } c \in \mathit{find}(X_j,t) \\
n_i \neq c+1 & \forall t \notin X_i \text{ in } \Phi \text{ and } t \equiv
\{\e\}^c
\end{array}$$
\end{itemize}

If $m = h$, then $n_i = 0$ for all $i=1,\dots,m$ is the unique solution of {\it
Syst}. Otherwise, it is easy to observe that {\it Syst} admits
infinitely many solutions. Let:
\begin{itemize}
\item $\{n_1=0,\dots, n_h=0, n_{h+1}=\bar{n}_{h+1},\dots,
n_m=\bar{n}_m\}$  be one arbitrarily chosen solution of {\it Syst}.
\item
$\theta$ be the valuation such that $\theta(X_i)=\{\e\}^{n_i}$ for all $i\leq
m$.
\item
$Y_1,\dots, Y_k$ be all the variables of $\Phi$ which appear only in the l.h.s.
of equalities of the form $Y_i=t_i$.
\item $\sigma$ be the valuation such that $\sigma(Y_i)=\theta(t_i)$.
\end{itemize}

We prove that $\mathcal{R} \models \Phi[\theta\sigma]$ by case analysis on the
form of the atoms in $\Phi$:

\begin{itemize}
\item $Y_i=t_i$ \hspace{2mm} It is satisfied, since $\sigma(Y_i)$ has been
  defined as a ground term and equal to $\theta(t_i)$.

\item $X_i\neq t$ \hspace{2mm} If $t$ is a ground term, then we have two cases:
if $t$ is not of the form $\{\e\}^c$, then it is immediate that
$\theta(X_i)\neq t$; if $t$ is of the form $\{\e\}^c$, for some $c$, then we
have $n_i\neq c$, by construction, and hence $\theta(X_i)\neq t$.

If $t$ is not ground, then if $\theta(X_i)=\theta(t)$, then there exists
 a variable $X_j$ in $t$ such that
$\bar{n}_i=\bar{n}_j+c$ for some $c\in \mathit{find}(X_j,t)$; this cannot be the case
         since we started from a solution of $Syst$.

\item $t \not\in X_i$ \hspace{2mm} Similar to the case above.

\item $\{X_i | \flt @ \ptt\} = \varnothing$ \hspace{2mm} This means that
$\bar{n}_i=0$ and $\theta(X_i)=\theta(X_j)=\theta(X_k)=\e$.

\item $\{X_i | \flt_1 @ \ptt_1\} = \{X_j | \flt_2 @ \ptt_2\}$ \hspace{2mm} This means that
$\bar{n}_i=\bar{n}_j=0$ and $\theta(X_i)=\theta(X_j)=\e$.

\item
$\Cup(R_i,R_j,R_k)$ \hspace{2mm} Recall that $R_i,R_j,R_k$ can be either
variables or variable-RIS whose domain are variables $X_i,X_j,X_k$,
respectively. Now, those $R$ that are variables are replaced by a corresponding
$X$ and for those that are not the valuation is computed for the domains (which
are variables, by construction). For example, if $R_i$ and $R_k$ are
variable-RIS and $R_j$ is a variable, we have $\Cup(R_i,X_j,R_k)$, $X_i$ is the
domain of $R_i$, $X_k$ is the domain of $R_k$ and the valuation is computed for
$X_i$, $X_j$ and $X_k$. Also recall that from item \ref{i:solvedneq} of
Definition \ref{def:solved}, there are no $\neq$ constraints involving any of
the $X$ participating of the $\Cup$ constraint. Then, this means that
$\bar{n}_i=\bar{n}_j=\bar{n}_k=0$ and $\theta(X_i)=\theta(X_j)=\theta(X_k)=\e$.

\item $R_i \disj R_j$ \hspace{2mm} Similar considerations for $R_i$ and $R_j$
to the previous case apply. Then:
\begin{itemize}
\item If $i,j \leq h$, then $\theta(X_i)=\theta(X_j)=\e$
\item If $i > h$ (the same if $j > h$), then $\bar{n}_i \neq \bar{n}_j$ and
so $\theta(X_i)=\{ \e \}^{n_i}$ is disjoint from $\theta(X_j)=\{ \e \}^{n_j}$.
\end{itemize}
\end{itemize}
\end{proof}

\subsection{\label{ap:satisf2}Satisfiability of $\Phi_\mathcal{S} \land \Phi_\Ur$
(Theorem \ref{satisf2})}

The satisfiability of $\Phi_\Ur$ is determined by $\SATX$. Since $\SATX$ is a
decision procedure for $\Ur$-formulas, if $\Phi_\Ur$ is unsatisfiable, then
$\SATX$ returns $\false$; hence, $\Phi$ is unsatisfiable. If $\Phi_\Ur$ is
satisfiable, then $\SATX$ rewrites $\Phi_\Ur$ into an $\Ur$-formula in a
simplified form which is guaranteed to be satisfiable w.r.t. the interpretation
structure of $\LX$. This rewriting, however, may cause non-set variables
in $\Phi_\Ur$, i.e., variables of sort $\sU$, to get values for which the
formula is satisfied. These variables can occur in both $\Phi_\Ur$ and
$\Phi_\mathcal{S}$. Given that, at this point, $\Phi_\mathcal{S}$ is in solved
form, variables of sort $\sU$ can only appear in
constraints that are in solved form. Specifically, if $x$ is a variable of sort $\sU$, the following
are all solved form constraints that may contain $x$:
\begin{enumerate}
\item $X = S(x)$ or $X = \{Y | \flt(x) @ \ptt(x)\}$ (i.e., $x$ is a free
variable in the RIS)
\item $\{X | \flt(x) @ \ptt(x)\} = \varnothing$ or $\varnothing = \{X | \flt(x) @
\ptt(x)\}$
\item $\{X | \flt_1(x) @ \ptt_1(x)\} = \{Y | \flt_2(x) @ \ptt_2(x)\}$.
\item $X \neq S(x)$
\item $t(x) \notin \bar D$
\item $\Cup(\bar C,\bar D,\bar E)$ and $\bar C$ or $\bar D$ or $\bar E$
are of the form $\{X | \flt(x) @ \ptt(x)\}$
\item $\bar C \disj \bar D$, and $\bar C$ or $\bar D$
are of the form $\{X | \flt(x) @ \ptt(x)\}$
\end{enumerate}
All these constraints remains in solved form regardless of the value bound to
$x$. Hence, $\Phi$ is satisfiable.

\subsection{\label{ap:termination}Termination of $\SATRIS$ (Theorem \ref{termination-glob})}

In order to prove termination of $\SATRIS$ we use the rules give in Appendix
\ref{ap:rules} for equality, inequality, set membership, and not set
membership, and those for union and disjointness given in Figures \ref{f:un}
and \ref{f:disj} in the main text.

First of all, it is worth noting that the requirement that the set of variables
ranging on $\mathcal{RIS}$-terms (i.e., variables of sort $\sSet$) and the set
of variables ranging on $\Ur$-terms (i.e., variables of sort $\sU$) are
disjoint sets prevents us from creating recursively defined RIS, which could
compromise the finiteness property of the sets we are dealing with. In fact, a
formula such as $X = \{D | F(X) @ P\}$, where $F$
contains the variable $X$, is not an admissible $\mathcal{RIS}$-constraint,
since the outer $X$ should be of sort $\sSet$ whereas the inner $X$ should be
of sort $\sU$ (recall that the filter is a $\Ur$-formula). Note that, on the
contrary, a formula such as $X = \{D(X) | F @ P\}$ is an admissible formula,
which, in many cases, is suitably handled by our decision procedure.

Let a \emph{rewriting procedure for $\pi$} be the repeated application of the rewrite rules for a specific
$\mathcal{RIS}$-constraint $\pi$ until either the initial formula becomes
$\false$ or no rules for $\pi$ apply. Following \cite{Dovier00}, we begin by
proving that each individual rewriting procedure, applied to an admissible
formula, is \emph{locally terminating}, that is each call to such procedures
will stop in a finite number of steps. For all the rules inherited from \CLPSET
we assume the results in \cite{Dovier00}. Then we prove local termination only
for the new rules dealing with RIS.

The following are non-recursive rules thus they terminate trivially:
\eqref{e=e}, \eqref{x=x}, \eqref{special2}, \eqref{x=s}, \eqref{e=s},
\eqref{special5}, \eqref{e neq e}, \eqref{x neq x}, \eqref{a neq x},
\eqref{e_neq_s}, \eqref{neq-special1}, \eqref{s-neq-s}, \eqref{app.e:neq},
\eqref{t_in_e}, \eqref{t_in_X}, \eqref{app.in:V}, \eqref{t_nin_e},
\eqref{un:equalvars}, \eqref{un:empty}, \eqref{un:empty2}, \eqref{un:empty3},
\eqref{disj:id}, \eqref{disj:rempty} and \eqref{disj:lempty}.

Now we consider rules which contain direct recursive calls or calls to other
rules of the same rewriting procedure and involve at least one RIS.  Figures
\ref{f:eq-calls}-\ref{f:disj-calls} show what rewrite rule calls other rewrite rules of the same rewriting procedure, for $=$,
$\Cup$ and $\disj$, respectively. We only depict these because the other
rewriting procedures are simpler. We will pay special attention to the possible
loops that can be seen in all three graphs.

\begin{figure}
\begin{center}
\tikz [
    nonterminal/.style={rectangle},
    >=Latex, arrows={[round]}] {
\node [nonterminal] (10) at (0,0) {\eqref{s1=s2}}; %10
\node [nonterminal,draw] (3) at (1,2) {\eqref{special2}};
\node [nonterminal,draw] (1314) at (-2,2) {\eqref{e:v=e2} \eqref{special4}};
\node [nonterminal] (12) at (-4,0) {\eqref{app.e:eset}};
\node [nonterminal] (15) at (0,-2) {\eqref{app.e:v=e}};
\node [nonterminal,draw] (68) at (-7,0) {\eqref{special3}\eqref{special6}};
\draw (10) edge[->] (3)
      (10) edge[->] (1314)
%      (10) edge[<-] node[below,pos=0.5] {{\footnotesize $B$ extensional}} (12)
      (10) edge[<-]  (12)
      (10) edge[->] (15)
%      (12) edge[<-] node[below,pos=0.5,rotate=-28] {{\footnotesize $A$ non-variable RIS}} (15)
      (12) edge[->] (1314)
%      (12) edge[<-,bend left]  (10)
      (15) edge[<-,bend left] (12)
      (68) edge[->] (12)
      ;
}
\end{center}
\caption{\label{f:eq-calls}Calls relation between rules of the $=$-rewriting
procedure}
\end{figure}

\begin{figure}
\begin{center}
\tikz [
    nonterminal/.style={rectangle},
    >=Latex, arrows={[round]}] {
\node [nonterminal] (19) at (0,0) {\eqref{un:ext}};
\node [nonterminal] (17) at (-1,1) {\eqref{un:ext1}};
\node [nonterminal] (18) at (1,1) {\eqref{un:ext2}};
\node [nonterminal] (20) at (0,3) {\eqref{un:ris}};
\draw (-1.5,-0.5) rectangle (1.5,1.5);
\draw (17) edge[<->] (18)
      (19) edge[->] (17)
      (19) edge[->] (18)
      (20) edge[->,bend left] (0.1,1.5)
      (-0.1,1.5) edge[->,bend left] node[pos=0.5,left] {{\footnotesize ($\Cup$-a)}} (20)
      ;
}
\end{center}
\caption{\label{f:un-calls}Calls relation between rules of the $\Cup$-rewriting procedure}
\end{figure}

\begin{figure}
\begin{center}
\tikz [
    nonterminal/.style={rectangle},
    >=Latex, arrows={[round]}] {
\node [nonterminal] (24) at (0,0) {\eqref{disj:risext}};
\node [nonterminal] (25) at (0,2) {\eqref{disj:extris}};
\node [nonterminal] (26) at (2,0) {\eqref{disj:rris}};
\node [nonterminal] (27) at (2,2) {\eqref{disj:rrisSym}};
\draw (24) edge[<->] (25)
      (24) edge[<->] (26)
      (24) edge[<->] (27)
      (25) edge[<->] (26)
      (25) edge[<->] (27)
      (26) edge[<->] (27)
      ;
}
\end{center}
\caption{\label{f:disj-calls}Calls relation between rules of the $\disj$-rewriting procedure}
\end{figure}

\begin{description}
% eq
\begin{comment}
\item[Rule \eqref{special1}] Clearly, this rule generates an infinite loop.
However, if a formula matches this rule then is non-admissible. In effect, note
that $X$ should be a subset of a function ($\{c:\textbf{\textit{X}} | \F @
\P\}$) whose domain is $X$ itself. In other words, the formula is implicitly
asking to compare, through $\subseteq$, a function and its domain.

Hence, given that this rule applies only to non-admissible formulas, it does
not compromise termination as stated in the theorem.
\end{comment}
%
\item[Rule \eqref{special3}] In all branches, the recursive call is
made with at least one argument whose size is strictly smaller than the one of
the input formula. In particular, in the last branch the size of $X$ will not
affect the size of the arguments of the recursive call because $X$ is a
variable so it cannot add elements to $N$, $N$ is a variable itself and $X$ is
removed from the recursive call.
\item[Rule \eqref{special6}] In the first seven branches, the recursive call is
made with at least one argument whose size is strictly smaller than the one of
the input formula. In the last branch, the recursive call is made with
arguments whose size is equal to those of the input formula. Indeed, we remove
$t_0$ from the l.h.s. set but we add $\P(n)$ while variable $X$ is substituted
by the new variable $N$. However, we now know that $\P(n) = s_j$ for some $j
\in [0,k]$. Hence, the recursive call will be processed by one of the first
seven branches which we know reduce the size of at least one of their
arguments.
\item[Rule \eqref{s1=s2}] We will prove termination of this rule by
describing a particular abstract implementation.
\begin{enumerate}
  \item If the set parts of $A$ and $B$ are not RIS, then the results of
  \cite{Dovier00} apply.
  \item\label{i:empty} If the set part of either $A$ or $B$ are RIS, then:
  \begin{enumerate}
    \item `Empty' their domains. This means that if one of the RIS is of the form:
    \[
    \riss{\{d_1,\dots,d_k \plus D\}}{\F}{\P}
    \]
    then rewrite this set into:
    \[
    \{\P(d_1) \plus \riss{\{d_2,\dots,d_k \plus D\}}{\F}{\P}\}
    \]
    or
    \[
    \riss{\{d_2,\dots,d_k \plus D\}}{\F}{\P}
    \]
    depending on whether $\F(d_1)$ holds or not. Continue with this rewriting
    until all the $d_i$ have been processed. This process will transform the
    initial RIS into a number of sets of the form:
    \begin{equation}\label{eq:emptieddomain}
    \{\P(d_{i_1}),\dots,\P(d_{i_m}) \plus \riss{D}{\F}{\P}\}
    \end{equation}
    with $\{d_{i_1},\dots,d_{i_m}\} \subseteq \{d_1,\dots,d_k\}$ and $D$
    variable or the empty set.

    Hence, at the end of this process the domains of the RIS so generated
    are either the empty set or a variable.
    \item If the domain of one of the RIS obtained in the previous step
    is the empty set then substitute the RIS by the empty set. Then RIS of
    the form \eqref{eq:emptieddomain} become
    $\{\P(d_{i_1}),\dots,\P(d_{i_m}) \plus \e\}$ which is equal to
    $\{\P(d_{i_1}),\dots,\P(d_{i_m})\}$.

    Hence, at the end of this process the domains of the RIS so generated
    are variables.
    \item Substitute the RIS by new variables. Formally (when at both sides
    there are RIS):
  \begin{gather*}
  \{t_0,\dots,t_m \plus \riss{D}{\F}{\P}\}
    = \{s_0,\dots,s_k \plus \riss{E}{\G}{\Q}\} \\
  \lfun \\
  \{t_0,\dots,t_m \plus N_1\}
    = \{s_0,\dots,s_k \plus N_2\} \tag{a} \label{eq:a}\\
  {}\land N_1 = \riss{D}{\F}{\P} \land N_2 = \riss{E}{\G}{\Q}
  \end{gather*}
  Note that might be no more RIS at this point, so there is nothing to substitute.
  \end{enumerate}
  \item Now apply rule \eqref{s1=s2} only to the first conjunct of \eqref{eq:a} as in
  \cite{Dovier00} until it terminates, delaying the processing of the remaining
  conjuncts in \eqref{eq:a}. As no RIS are in the first conjunct of \eqref{eq:a}
  the results of \cite{Dovier00} apply.
  \item Once rule \eqref{s1=s2} applied to the first conjunct terminates,
  variables $N_i$ are substituted back by the corresponding RIS, thus obtaining
  equalities of the following forms:
  \begin{gather*}
  \riss{D}{\F}{\P} = \riss{E}{\G}{\Q} \\
  \riss{D}{\F}{\P} = \{\cdot \plus N\}\\
  \riss{D}{\F}{\P} = \{\cdot \plus \riss{E}{\G}{\Q}\}
  \end{gather*}
  where $D$ and $E$ are variables (due to the process described in step \ref{i:empty}).

  Note that, without RIS, the final formulas returned by \eqref{s1=s2} would be
  in solved form as their l.h.s. would be variables.
  \item As can be seen, by the form of the returned equalities and by Figure
  \ref{f:eq-calls}, these equalities are processed by some of the rules added
  to deal with RIS.
\end{enumerate}

So \eqref{s1=s2} either terminates as in \cite{Dovier00}, or calls other rules
for equality (which is an added behavior compared to \cite{Dovier00}).

\begin{comment}
We will say that $A$ \emph{ends in a variable} if the set part is either a
variable or a RIS such that the set part of its domain is a variable; and the
same for $B$. In that case we will refer to that variable as the \emph{last
variable}.

Observe that since rules are applied in the order they are listed, if arrived
at this point then: if $A$ and $B$ end in the same variable then the formula is
actually processed by one of the previous rules (\eqref{special3} or
\eqref{special6}) or is a non admissible formula. The latter occurs when $A$ is
not a RIS or is a RIS with pattern equal to the control term, and $B$ is a RIS
whose pattern is not the control term; or vice versa. Hence, at this point if
$A$ and $B$ end in a variable, they are different.

The fact that $A$ and $B$ do not end in the same variable prevent a situation
like $\{t \plus X\} = \{s \plus X\}$ where the last branch of the rule would
yield $X = \{s \plus N\} \land X = \{t \plus N\}$ which in turn yields $\{s
\plus N\} = \{t \plus N\}$, which is structurally identical to the input
formula, thus generating an infinite loop.

In the case that there are no RIS in the initial formula, then the results of
\cite{Dovier00} apply.

If the set part of either $A$ or $B$ is a RIS then some of the rules of Figure
\ref{f:eq} are called.
\end{comment}
%
\item[Rule \eqref{app.e:empty2}] The recursive call is made with one
argument whose size is strictly smaller than in the initial formula, since $t$
is removed from the domain of the RIS.
\item[Rule \eqref{app.e:eset}] In the first branch there is no recursion
on the same rule because the l.h.s. of the equality constraint is no longer a
RIS but an extensional set. In particular, the size of the arguments are the
same w.r.t. the initial ones. However, the rules that can be fired in this case
(see Figure \ref{f:eq-calls} and the analysis for the corresponding rules), all
reduces the size of at least one of its arguments. In particular, if rules
\eqref{s1=s2} or \eqref{app.e:v=e} are called, if they call this rule back, it
will be done with strictly smaller arguments. Then, this loop will run a finite
number of times.

In the second branch the recursive call is made with one argument whose size is
strictly smaller than in the initial formula, since $t$ is removed from the
domain of the RIS.
\item[Rule \eqref{e:v=e2}] The size of the r.h.s. argument of the equality
constraint is strictly smaller than the one of the input formula because $t_0$
has been removed. Furthermore, the size of $X$ will not affect the size of the
arguments of the recursive call because $X$ is a variable so it cannot add
elements to $N$, $N$ is a variable itself and $X$ is removed from the recursive
call.
\item[Rule \eqref{special4}] Same arguments of previous rule apply.
%
\begin{comment}
\item[Rule \eqref{special7}] This rule generates an infinite loop. However,
if a formula matches this rule then is non-admissible. In effect, note that $X$
should be a subset of a function ($\{c:X | \G @ \Q\}$) whose domain is $X$
itself. In other words, the formula is implicitly asking to compare, through
$\subseteq$, a function and its domain.

Hence, given that this rule applies only to non-admissible formulas, it does
not compromise termination as stated in the theorem.
\end{comment}
%
\item[Rule \eqref{app.e:v=e}] The recursive call is made with one
argument whose size is strictly smaller than in the initial formula, since $t$
is removed from the r.h.s. The size of $D$ can affect the size of the arguments
of the recursive call only if $D$ is the set part of $A$. However, this case is
processed by one of the following rules: \eqref{e:v=e2} or \eqref{special4}.
% or \eqref{special7}.

% in
\item[Rule \eqref{t_in_s}] The recursive call is made with a r.h.s.
argument whose size is strictly smaller than the initial one because $s$ is
removed from the set.

% nin
\item[Rule \eqref{t_nin_ext}] Same arguments of previous rule apply.
\item[Rule \eqref{app.nin:nV}] In either branch the recursive call
is made with a r.h.s. argument whose size is strictly smaller than the initial
one because $d$ is removed from the domain of the RIS.

% un
\item[Rule \eqref{un:ext1}] In the first branch the recursive call
is made with a first argument whose size is strictly
smaller than the initial one because $t$ is removed from it. In the second
branch, it is both arguments whose sizes are strictly
smaller than the initial ones because $t$ is effectively removed from $A$ given
that we know that $t \in A$.

Note that in Figure \ref{f:un-calls} this rule may call rule \eqref{un:ris},
which in turn may call back this rule or rules \eqref{un:ext2} and
\eqref{un:ext}, thus possibly generating an infinite loop. However, this rule
can call \eqref{un:ris} only when the first or second
argument is a non-variable RIS. In that case, rule \eqref{un:ris} removes
one element of the domain of the RIS but it does not necessarily reduces the
size of the arguments. This may fire the callback to
this rule. But this rule reduces the size of its
arguments, so after a finite number of calls the loop finishes. This
shows how the arrow labeled ($\Cup$-a) in Figure \ref{f:un-calls} can be
traversed only a finite number of times.

See below for an analysis of rule \eqref{un:ris}.

\item[Rule \eqref{un:ext2}] Symmetric arguments to previous rule apply.
\item[Rule \eqref{un:ext}] In all branches the size of the third argument
of the recursive call is strictly smaller than the initial one because $t$ is
removed from it.
\item[\label{i:20}Rule \eqref{un:ris}] Consider the non-variable RIS arguments
of this rule. The rule transforms these RIS into either extensional sets or new RIS.
\begin{enumerate}
  \item If all such arguments are transformed into extensional sets, then
  the rule does not recur on itself (but calls other rules of the $\Cup$
  constraint).
  \item If all such arguments are transformed into variable RIS, then the
  rule does not recur on itself (but calls other rules of the $\Cup$ constraint
  or, possibly, is in solved form).
  \item If at least one of such arguments is still a non-variable RIS, then
  there is a recursive call but the size of that argument is strictly smaller
  than the initial one. In effect, this case occurs when the initial argument in
  question is of the following form $\riss{x:\{a,b \plus D\}}{\F}{\P}$ with $a
  \neq b$ and $\lnot\F(a)$. That is, the initial argument is a non-variable RIS
  with at least two elements in its domain but the ``first'' one does not satisfy
  the filter. Then, $\P(a)$ does not belong to the RIS and so it is equal to
  $\riss{x:\{b \plus D\}}{\F}{\P}$. Precisely, this last RIS is used as the
  argument to the recursive call. Hence, at least one argument of the recursive
  call is strictly smaller than the initial one.

Nevertheless, if $\F(a)$ holds, then $a$ is removed from the domain but the RIS
becomes the set part of an extensional set with $\P(a)$ as its element part. In
this case, the size of this argument is equal to the initial one. However, in
this case, there is no recursive call but a call to one of rules
\eqref{un:ext1}, \eqref{un:ext2} and \eqref{un:ext}, which effectively reduce
the size of one of its.
\end{enumerate}
% un({A / ris(X in D,[],true)},{B / ris(X in D,[],true)},{C / ris(X in D,[],true)})
% ADMISSIBLE

% un({A / ris(X in D,[],true,[X,1])},{B / ris(X in D,[],true)},{C / ris(X in D,[],true)})
% non-admissible

% un({A / ris(X in D,[],true)},{B / ris(X in D,[],true,[X,2])},{C / ris(X in D,[],true)})
% non-admissible

% un({A / ris(X in D,[],true)},{B / ris(X in D,[],true)},{C / ris(X in D,[],true,[X,3])})
% non-admissible

% un({A / ris(X in D,[],true,[X,1])},{B / ris(X in D,[],true)},{C / ris(X in D,[],true,[X,3])})
% non-admissible

% un({A / ris(X in D,[],true)},{B / ris(X in D,[],true,[X,2])},{C / ris(X in D,[],true,[X,3])})
% non-admissible

% un({A / ris(X in D,[],true,[X,1])},{B / ris(X in D,[],true,[X,2])},{C / ris(X in D,[],true,[X,3])})
% ADMISSIBLE

% disj
\item[Rule \eqref{disj:risext}] The recursive call is made with a r.h.s.
argument whose size is strictly smaller than the initial one because $t$ is
removed from the set.
\item[Rule \eqref{disj:extris}] Symmetric arguments of previous rule apply.
\item[Rule \eqref{disj:rris}] In either branch the recursive call is made
with a r.h.s. argument whose size is strictly smaller than the initial one
because $d$ is removed from the domain of the RIS.
\item[Rule \eqref{disj:rrisSym}] Symmetric arguments of previous rule apply.

Therefore, the loops shown in Figure \ref{f:disj-calls} end in a finite number
of times since rules \eqref{disj:risext}-\eqref{disj:rrisSym} reduce the size
of at least one of their arguments every time they are
called.
\end{description}

Local termination of each individual procedure, however, does not guarantee
global termination of $\SATRIS$\/, since the different procedures may be
dependent on each other. However, we observe that rewrite rules not involving
RIS in their left-hand sides do not construct any new RIS term in their
right-hand sides. They simply treat RIS terms as any other term. Hence the
presence of RIS terms do not affect their termination, which has been proved in
\cite{Dovier00}. Hence, it is enough to consider only the new rules involving
RIS terms.

Rules involving RIS generate both $\mathcal{RIS}$- and $\Ur$-formulas. The
$\SATX$ solver solves the $\Ur$-formulas without producing
$\mathcal{RIS}$-formulas. Then, for each constraint $\pi$ and each recursive
rewrite rule for it, we will analyze what $\mathcal{RIS}$-constraints other
than $\pi$ are generated. In this way we will see if there are cycles between
rewriting procedures. We proceed by rule inspection.
\begin{enumerate}
  \item Rules \eqref{neq-special1}, \eqref{s-neq-s} and \eqref{app.e:neq}
  generate $\in$- and $\notin$-constraints.
  \item Rule \eqref{t_in_X} generates $=$-constraints.
  \item Rules \eqref{un:ext1}, \eqref{un:ext2} and \eqref{un:ext} generate
  $=$-constraints.
  \item Rules \eqref{disj:risext}, \eqref{disj:extris}, \eqref{disj:rris}
  and \eqref{disj:rrisSym}, generate $\notin$-constraints.
\end{enumerate}
This is depicted in Figure \ref{f:calls}. Hence, as can be seen, there are no
loops between the rewriting procedures.

\begin{figure}
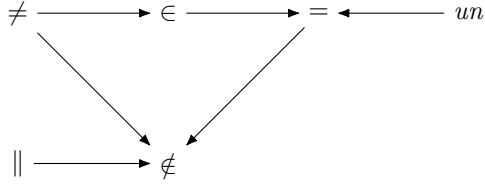

\begin{center}
\tikz [
    nonterminal/.style={rectangle},
    >=Latex, arrows={[round]}] {
\node [nonterminal] (neq) at (0,0) {$\neq$};
\node [nonterminal] (in) at (2,0) {$\in$};
\node [nonterminal] (eq) at (4,0) {$=$};
\node [nonterminal] (disj) at (0,-2) {$\disj$};
\node [nonterminal] (nin) at (2,-2) {$\notin$};
\node [nonterminal] (un) at (6,0) {$\Cup$};
\draw (neq) edge[->] (in)
      (in) edge[->] (eq)
      (neq) edge[->] (nin)
      (disj) edge[->] (nin)
      (un) edge[->] (eq)
      (eq) edge[->] (nin)
      ;
}
\end{center}
\caption{\label{f:calls}Calls relation between rewriting procedures}
\end{figure}

\subsection{\label{ap:equi}Equisatisfiability (Theorem \ref{equisatisfiable})}

This section contains the detailed proofs on the equisatisfiability of the
rewrite rules involving RIS presented in Sect. \ref{rules} and Appendix
\ref{ap:rules}; the
equisatisfiability of the remaining rules has been proved elsewhere
\cite{Dovier00}. Hence, these proofs use the rules considering that the control
expression is a variable and that the domain of RIS are not other RIS. These
proofs can be easily extended to the more general case.

In the following theorems and proofs, $\Fpv$, $\Gpv$, $\Ppv$ and $\Qpv$ are
shorthands for $F(x)$, $P(x)$, $G(x)$ and $Q(x)$, respectively.
Moreover,
note that a set of the form $\{P(x) : F(x)\}$ (where pattern and filter
are separated by a colon ($:$), instead of a bar ($|$), and the pattern is
\emph{before} the colon) is a shorthand for $\{y : \exists x (P(x) = y
\land F(x))\}$. That is, the set is written in the classic notation for
intensional sets used in mathematics. Finally, $H$ denotes the current
hypothesis.

$\P$ and $\Q$ are assumed to be bijective patterns. Recall that all patterns
allowed in $\LRIS$ fulfill this condition. The condition on the bijection of
patterns is necessary to prove equisatisfiability for rule
\eqref{e:v=e}.

The proof of Theorem \ref{equisatisfiable} rests on a series of lemmas each of
which shows that the set of solutions of left and right-hand sides of each
rewrite rule is the same.

\begin{lemma}[Equivalence of rule \eqref{e=e}]
We consider only the following case; all the others covered by this rule
are either symmetric or trivial.
\begin{gather*}
\{x:\e | \Fpv @ \Ppv\} = \e
\end{gather*}
\end{lemma}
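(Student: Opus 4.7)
The plan is to unfold the interpretation of both sides using Definition \ref{def:int_funct} and show that each side denotes the empty set under every valuation of the free variables, hence the equality is satisfied by every valuation (i.e., it is equisatisfiable with $\true$).

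First, I would recall from Definition \ref{def:int_funct} that for any valuation $\sigma$ over free variables $\vv$ of $\Fpv$ and $\Ppv$, the interpretation of $\{x:\e | \Fpv @ \Ppv\}$ is
\[
\{y : \exists x (x \in_\Ur \e^\iS \land \Fpv(x,\vv) \land y =_\Ur \Ppv(x,\vv))\}.
\]
Since $\e^\iS = \emptyset$, the atom $x \in_\Ur \emptyset$ is false for every $x \in \idom_\sU$, so the existentially quantified conjunction is false for every $y$. Hence the set denoted by the RIS is $\emptyset$, which is exactly $\e^\iS$.

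Therefore, for every valuation $\sigma$ of the free variables of $\Fpv$ and $\Ppv$, both sides of the equality denote $\emptyset$, so $\iS \models_\sigma \{x:\e | \Fpv @ \Ppv\} = \e$. This coincides with the set of solutions of $\true$, establishing equisatisfiability of the rewrite rule. The argument contains no obstacle of substance; the only subtlety is to observe that the free variables $\vv$ possibly occurring in $\Fpv$ and $\Ppv$ play no role, since the empty domain collapses the comprehension regardless of their values.
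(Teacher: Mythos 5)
Your proof is correct and follows essentially the same route as the paper's: both unfold the semantics of the RIS term to $\{y : \exists x\,(x \in \emptyset \land \Fpv \land y = \Ppv)\}$, note that $x \in \emptyset$ is false, and conclude the comprehension collapses to $\emptyset$ regardless of the filter, pattern, or any free variables. The paper merely writes this more tersely using its $\{\Ppv(x) : \flt(x)\}$ shorthand for the same set.
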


\begin{proof}
\begin{gather*}
\{x:\e | \Fpv @ \Ppv\} \\
= \{\Pz : x \in \e \land \Fz\} \\
= \{\Pz : \false \land \Fz\} \\
= \{\Pz : \false\} \\
= \e
\end{gather*}
\end{proof}

\begin{proposition}\label{l:dD}
\begin{gather*}
\begin{split}
\forall d, & D: \\
                & \{x:\{d \plus D\} | \Fpv @ \Ppv\} = \{\Pd | \Fd\}
                \cup \{\Pz | x \in D \land \Fz\}
\end{split}
\end{gather*}
\end{proposition}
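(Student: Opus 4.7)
The plan is to unfold both sides using the semantic definition of RIS given in Definition \ref{def:int_funct} and then apply elementary set-theoretic reasoning. The proof is essentially a chain of equalities reducing a comprehension over $\{d \plus D\}$ to the union of two comprehensions, one over the singleton part and one over the tail.

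First I would expand the left-hand side by the interpretation of RIS, obtaining
\[
\{x : \{d \plus D\} | \Fpv @ \Ppv\} = \{\Pz : x \in \{d \plus D\} \land \Fz\}.
\]
Then I would use the fact that $x \in \{d \plus D\} \iff x = d \lor x \in D$ (from the interpretation of $\ww$ given in Definition \ref{def:int_funct}) to rewrite the filtering condition as a disjunction. Distributing $\land$ over $\lor$ inside the comprehension yields
\[
\{\Pz : (x = d \land \Fz) \lor (x \in D \land \Fz)\}.
\]

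Next I would apply the standard fact that a comprehension whose defining formula is a disjunction splits as a union:
\[
\{\Pz : (x = d \land \Fz) \lor (x \in D \land \Fz)\}
  = \{\Pz : x = d \land \Fz\} \cup \{\Pz : x \in D \land \Fz\}.
\]
Finally, the first component is simplified by the substitution $x \mapsto d$: the condition $x = d$ forces $\Pz$ to become $\Pd$ and $\Fz$ to become $\Fd$, giving $\{\Pd | \Fd\}$ in the notational convention of the paper (the singleton $\{\Pd\}$ when $\Fd$ holds, and $\emptyset$ otherwise). The second component is already in the stated form $\{\Pz | x \in D \land \Fz\}$, completing the equality.

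There is no real obstacle here: the lemma is a purely semantic unfolding and does not depend on any property of $\Ppv$ (such as bijectivity) nor on whether $d \in D$ or not, since the right-hand side is a union and duplicates collapse automatically in $\idom_\sSet$. The only point requiring minor care is the notation: one must read $\{\Pd | \Fd\}$ as the image of the one-element comprehension $\{x : \{d\} | \Fpv @ \Ppv\}$, which coincides with $\{\Pz : x = d \land \Fz\}$ after the substitution step.
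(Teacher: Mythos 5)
Your proof is correct and follows exactly the same chain of equalities as the paper's own argument: unfold the RIS semantics, rewrite membership in $\{d \plus D\}$ as a disjunction, distribute, split the comprehension into a union, and collapse the first component by the substitution $x \mapsto d$. Nothing further is needed.
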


\begin{proof}
Taking any $d$ and $D$ we have:
\begin{gather*}
\{x:\{d \plus D\} | \F @ \P\} \\
= \{\Pz : x \in \{d \plus D\} \land \Fz\} \\
= \{\Pz : (x = d \lor x \in D) \land \Fz\} \\
= \{\Pz : (x = d \land \Fz) \lor (x \in D \land \Fz)\} \\
= \{\Pz : x = d \land \Fz\} \cup \{\Pz | x \in D \land \Fz\} \\
= \{\Pd : \Fd\} \cup\{\Pz | x \in D \land \Fz\}
\end{gather*}
\end{proof}

\begin{lemma}[Equivalence of rule \eqref{special2}]
We will consider only the case where $S \equiv \{c:\textbf{\textit{X}} | \F @ \P\}$ with $c \equiv \P$, because the other one is proved in \cite{Dovier00}:
\begin{gather*}
\begin{split}
\forall X, & t_0,\dots,t_k: \\
 & X = \{t_0,\dots, t_k \plus \{c:\textbf{\textit{X}} | \F @ \P\}\} \\
 & \iff X = \{t_0,\dots, t_k \plus \{c:\textbf{\textit{X}}[X/N] | \F @ \P\}\}
\end{split}
\end{gather*}
where $N$ is a new variable.
\end{lemma}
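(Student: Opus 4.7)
The plan is to prove both directions of the equivalence by unfolding the RIS according to Definition \ref{def:int_funct} and exploiting the crucial hypothesis $c \equiv \P$, which makes the RIS a subset of its domain; that is, $\{c:D | \F @ \P\}^\iS = \{c \in D^\iS : \F(c)^\iS\} \subseteq D^\iS$ whenever the pattern equals the control term.

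For the forward direction ($\Rightarrow$), I would let $\sigma$ be any valuation satisfying the LHS and extend it to $\sigma'$ by setting $\sigma'(N) := \sigma(X)$ (legitimate because $N$ is fresh). Since the substitution $[X \mapsto N]$ replaces the sole occurrence of $X$ in $\textbf{\textit{X}}$ by $N$, evaluating under $\sigma'$ gives $(\textbf{\textit{X}}[X/N])^{\iS,\sigma'} = \textbf{\textit{X}}^{\iS,\sigma}$. Hence the RHS evaluates to the very same set equation as the LHS and is therefore satisfied by $\sigma'$.

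For the backward direction ($\Leftarrow$), suppose $\sigma'$ satisfies the RHS, so that
\begin{equation*}
X^\iS = \{t_0,\dots,t_k\}^\iS \cup \{c \in \textbf{\textit{X}}[X/N]^\iS : \F(c)^\iS\}.
\end{equation*}
I would then compute the LHS expression $\{t_0,\dots,t_k\}^\iS \cup \{c \in \textbf{\textit{X}}^\iS : \F(c)^\iS\}$ by observing that, since $\textbf{\textit{X}}$ contains $X$ only as its innermost variable set part (i.e.\ $\textbf{\textit{X}}$ is either $X$ itself or of the form $\{s_1,\dots,s_m \plus X\}$), the set $\textbf{\textit{X}}^\iS$ differs from $\textbf{\textit{X}}[X/N]^\iS$ only by the substitution of $N^\iS$ by $X^\iS$. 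Substituting the RHS expression for $X^\iS$ into $\textbf{\textit{X}}^\iS$, one gets
\begin{equation*}
\{c \in \textbf{\textit{X}}^\iS : \F(c)\} = \{c \in \textbf{\textit{X}}[X/N]^\iS : \F(c)\} \cup \{c \in \{t_0,\dots,t_k\}^\iS : \F(c)\},
\end{equation*}
and taking the union with $\{t_0,\dots,t_k\}^\iS$ absorbs the second summand, yielding exactly the RHS expression for $X^\iS$. Hence $\sigma'$ also satisfies the LHS.

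The main delicate point is handling the fact that $\textbf{\textit{X}}$ may wrap $X$ inside extensional set constructors rather than being $X$ itself; this is routine since $\ww$ commutes with valuations and the extra elements $s_1,\dots,s_m$ contribute the same filtered subset on both sides. The absolutely essential ingredient is the hypothesis $c \equiv \P$: without it, the RIS would not be a subset of its domain and the critical absorption step $\{c \in \{t_0,\dots,t_k\}^\iS : \F(c)\} \subseteq \{t_0,\dots,t_k\}^\iS$ used in the backward direction would fail.
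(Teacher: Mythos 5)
Your proof is correct, but it takes a genuinely different route from the paper's. The paper unfolds the right-hand side to $X = \{t_0,\dots,t_k\} \cup \{c:\{s_1,\dots,s_a\} \mid \F @ \P\} \cup \{c:X \mid \F @ \P\}$ and then splits into two cases according to whether $\{c:X \mid \F @ \P\} = X$ or $\{c:X \mid \F @ \P\} \subset X$, constructing a different witness for $N$ in each case (in the first, an $N$ ``completing'' $\{c:X \mid \F @ \P\}$ via the equivalence $A = B \cup A \iff B \subseteq A$; in the second, the proper subset of $X$ whose filtered image coincides with that of $X$), the whole argument being presented as a chain of biconditionals. You avoid the case split entirely: the forward direction is discharged uniformly by the trivial witness $\sigma'(N) := \sigma(X)$, under which the two formulas are literally the same equation, and the backward direction is a direct computation — substitute the value of $X$ given by the right-hand side into $\bi{X}^\iS$, use idempotence of filtering on $\{c : N \mid \F\}$, and absorb $\{c \in \{t_0,\dots,t_k\} : \F(c)\}$ into $\{t_0,\dots,t_k\}$. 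You correctly isolate $c \equiv \P$ as the essential hypothesis (it is what makes every filtered set a subset of its domain, which both the idempotence and the absorption step need), and your treatment of the wrapper elements $s_1,\dots,s_a$, though brief, is sound since their filtered contribution is common to both sides. What your version buys is a shorter, case-free argument with the two directions of the equisatisfiability claim cleanly separated; what the paper's version buys is an explicit description of the witnesses the solver can actually instantiate (in particular a proper subset of $X$ when one exists), which is closer to how the rewrite rule is meant to make progress in practice.
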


\begin{proof}
Taking any $X, t_0,\dots,t_k$ we have:
\begin{gather*}
X = \{t_0,\dots, t_k \plus \{c:\textbf{\textit{X}} | \F @ \P\}\}
  \why{def. of \textbf{\textit{X}} with $a \geq 0$} \\
\iff X = \{t_0,\dots, t_k\} \cup \{c:\{s_1,\dots,s_a \plus X\} | \F @ \P\}
  \why{def. $\plus$} \\
\iff X = \{t_0,\dots, t_k\} \cup \{c:\{s_1,\dots,s_a\} \cup X | \F @ \P\}
  \why{prop. intensional sets} \\
\iff X = \{t_0,\dots, t_k\} \cup \{c:\{s_1,\dots,s_a\} | \F @ \P\} \cup \{c:X | \F @ \P\}
\end{gather*}

Now note that if $c \equiv \P$, then:
\[
\{c:X | \F @ \P\} \subseteq X
\]

If $\{c:X | \F @ \P\} = X$, then:
\begin{gather*}
X = \{t_0,\dots, t_k\} \cup \{c:\{s_1,\dots,s_a\} | \F @ \P\} \cup \{c:X | \F @ \P\}
  \why{$A = B \cup A \iff B \subseteq A$} \\
\iff \{t_0,\dots, t_k\} \cup \{c:\{s_1,\dots,s_a\} | \F @ \P\} \subseteq \{c:X | \F @ \P\}
  \why{exists $N$ completing $\{c:X | \F @ \P\}$}\\
\iff \{t_0,\dots, t_k\} \cup \{c:\{s_1,\dots,s_a\} | \F @ \P\} \cup \{c:N | \F @ \P\} = \{c:X | \F @ \P\} \\
\iff \{t_0,\dots, t_k\} \cup \{c:\{s_1,\dots,s_a \plus N\} | \F @ \P\} = \{c:X | \F @ \P\} \\
\iff X = \{t_0,\dots, t_k\} \cup \{c:\{s_1,\dots,s_a \plus N\} | \F @ \P\} \\
\iff X = \{t_0,\dots, t_k\} \cup \{c:\textbf{\textit{X}}[N/X] | \F @ \P\}
\end{gather*}

If $\{c:X | \F @ \P\} \subset X$, then let $N$ be the (proper) subset of $X$ such that $\{c:N | \F @ \P\} = \{c:X | \F @ \P\}$. Hence:
\begin{gather*}
X = \{t_0,\dots, t_k\} \cup \{c:\{s_1,\dots,s_a\} | \F @ \P\} \cup \{c:X | \F @ \P\} \\
\iff X = \{t_0,\dots, t_k\} \cup \{c:\{s_1,\dots,s_a\} | \F @ \P\} \cup \{c:N | \F @ \P\} \\
\iff X = \{t_0,\dots, t_k\} \cup \{c:\{s_1,\dots,s_a \plus N\} | \F @ \P\} \\
\iff X = \{t_0,\dots, t_k\} \cup \{c:\textbf{\textit{X}}[N/X] | \F @ \P\}
\end{gather*}
\end{proof}

\begin{comment}
\begin{lemma}[Equivalence of rule \eqref{special1}]
If $c \not\equiv \P$:
\begin{gather*}
\begin{split}
\forall X, & t_0,\dots,t_k: \\
 & X = \{t_0,\dots, t_k \plus \{c:\textbf{\textit{X}} | \F @ \P\}\} \\
 & \iff X = \{t_0 \plus N\} \land N =
 \{t_1,\dots, t_k \plus \{c:\textbf{\textit{X}}[\{t_0 \plus N\} / X] | \F @ \P\}\}
\end{split}
\end{gather*}
where $N$ is a new variable.
\end{lemma}

\begin{proof}
\begin{gather*}
X = \{t_0,\dots, t_k \plus \{c:\textbf{\textit{X}} | \F @ \P\}\}
  \why{def. $\plus$} \\
\iff X = \{t_0 \plus \{t_1,\dots, t_k \plus \{c:\textbf{\textit{X}} | \F @ \P\}\}\}
  \why{renamig} \\
\iff X = \{t_0 \plus N\}
     \land N = \{t_1,\dots, t_k \plus \{c:\textbf{\textit{X}} | \F @ \P\}\}
  \why{$X = \{t_0 \plus N\} $} \\
\iff X = \{t_0 \plus N\}
     \land N = \{t_1,\dots, t_k \plus
                \{c:\textbf{\textit{X}}[\{t_0 \plus N\} / X] | \F @ \P\}\}
\end{gather*}

Note that in this case we cannot substitute $X$ by $N$ in the domain of the RIS
because $c \not\equiv \P$ means that $t_0 \neq \P(t_0)$ for any term $t_0$, and
thus $\P(t_0)$ might be a new element of the set.
\end{proof}
\end{comment}

In the following lemma, the case where $R \equiv X$ and $S \equiv X$ is covered
in \cite{Dovier00}. Besides, given that the case where $R \equiv \{c:\bi{X} |
\F @ \P\}$ and $S \equiv X$ is covered by the case where $R \equiv \{c:\bi{X} |
\F @ \P\}$ and $S \equiv \{d:X | \G @ \Q\}$, we will prove only the last one.
Indeed, given that $c \equiv \P$, $d \equiv \Q$, we have $\{d:X | \G @ \Q\}
\subseteq X$. Then, the last case covers the second case when $\{d:X | \G @
\Q\} = X$. In fact, in this and the following lemma we will write $\{\bi{s}
\plus X\}$ instead of $\bi{X}$, where $\bi{s}$ denotes zero or more elements;
if $\bi{s}$ denotes zero elements then $\{\bi{s} \plus X\}$ is just $X$.

\begin{lemma}[Equivalence of rule \eqref{special3}]
If $c \equiv \P$ and $d \equiv \Q$:
\begin{gather*}
\begin{split}
\forall X, & t_0,\dots,t_m,s_0,\dots,s_k: \\
& \begin{split}
 \{t_0, & \dots, t_m \plus \{c:\{\bi{s} \plus X\} | \F @ \P\}\} =
   \{s_0, \dots, s_k \plus \{d:X | \G @ \Q\}\} \\
 {}\iff{}
  & t_0 = s_j \\
  &\quad{}\land \{t_0 \plus N_1\} = \{c:\{\bi{s} \plus X\} | \F @ \P\}
          \land t_0 \notin N_1 \\
  &\quad{}\land \{t_0 \plus N_2\} = \{d:X | \G @ \Q\}
          \land t_0 \notin N_2 \\
  &\quad{}\land \{t_1, \dots, t_m \plus N_1\}
                = \{s_0, \dots, s_{j-1},s_{j+1}, \dots, s_k
                    \plus N_2\} \\
  &{}\lor t_0 = s_j \\
  &\quad{}\land t_0 \notin \{c:\{\bi{s} \plus X\} | \F @ \P\}
          \land t_0 \notin \{d:X | \G @ \Q\} \\
  &\quad{}\land \{t_1, \dots, t_m \plus \{c:\{\bi{s} \plus X\} | \F @ \P\}\}
                = \{s_0, \dots, s_{j-1},s_{j+1}, \dots, s_k
                    \plus \{d:X | \G @ \Q\}\} \\
  &{}\lor t_0 = s_j \\
  &\quad{}\land \{t_0 \plus N\} = \{c:\{\bi{s} \plus X\} | \F @ \P\}
          \land t_0 \notin N
          \land t_0 \notin \{d:X | \G @ \Q\} \\
  &\quad{}\land \{t_1, \dots, t_m \plus N\}
                = \{s_0, \dots, s_{j-1},s_{j+1}, \dots, s_k
                    \plus \{d:X | \G @ \Q\}\} \\
  &{}\lor t_0 = s_j \\
  &\quad{}\land t_0 \notin \{c:\{\bi{s} \plus X\} | \F @ \P\}
          \land \{s_j \plus N\} = \{d:X | \G @ \Q\}
          \land s_j \notin N \\
  &\quad{}\land \{t_1, \dots, t_m \plus \{c:\{\bi{s} \plus X\} | \F @ \P\}\}
                = \{s_0, \dots, s_{j-1},s_{j+1}, \dots, s_k
                    \plus N\} \\
  &{}\lor t_0 = s_j \\
  &{}\quad\land \{t_1, \dots, t_m \plus \{c:\{\bi{s} \plus X\} | \F @ \P\}\}
                = \{s_0, \dots, s_k \plus \{d:X | \G @ \Q\}\} \\
  &{}\lor t_0 = s_j \\
  &\quad{}\land \{t_0, \dots, t_m \plus \{c:\{\bi{s} \plus X\} | \F @ \P\}\}
                = \{s_0, \dots, s_{j-1},s_{j+1}, \dots, s_k
                    \plus \{d:X | \G @ \Q\}\} \\
  &{}\lor X = \{t_0 \plus N\} \land \G(t_0)  \\
  &{}\quad\land \{t_1, \dots, t_m \plus \{c:\{\bi{s} \plus N\} | \F @ \P\}\}
                = \{s_0, \dots, s_k \plus \{d:N | \G @ \Q\}\}
 \end{split}
 \end{split}
\end{gather*}
\end{lemma}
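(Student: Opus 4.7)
My plan is to prove this as an equisatisfiability result by proving both directions separately, following the pattern established for the standard set-unification rule (s1=s2) of \cite{Dovier00}, and isolating the new case (the seventh disjunct) that is peculiar to this RIS-specific situation. The key structural facts I will exploit are that $c \equiv \P$ forces $\{c:\{\bi{s} \plus X\} \mid \F @ \P\} \subseteq \{\bi{s} \plus X\}$ and, symmetrically, $d \equiv \Q$ forces $\{d:X \mid \G @ \Q\} \subseteq X$; together with Proposition \ref{l:dD}, these identities justify ``extracting'' an element out of a RIS domain and replacing it with a pair (element, rest).

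For soundness ($\Leftarrow$), I would verify, disjunct by disjunct, that each right-hand side entails the original equation. Disjuncts 1--4 mirror exactly the four branches of rule \eqref{s1=s2}, rearranging the first element $t_0$ on the left against some $s_j$ on the right and distributing the remaining elements between the ``rests'' of the two sides using fresh variables $N$, $N_1$, $N_2$; these are handled by direct semantic expansion together with the axioms $(Ab)$ and $(C\ell)$. Disjuncts 5 and 6 correspond to the situations where $t_0$ is a duplicate inside the other side and can be absorbed by $(Ab)$ before stripping it, so that no fresh variable is needed. Disjunct 7 is the new one: assuming $X = \{t_0 \plus N\}$ and $\G(t_0)$, the identity $\{d:X \mid \G @ \Q\} = \{t_0\} \cup \{d:N \mid \G @ \Q\}$ (which follows from $d \equiv \Q$ and Proposition \ref{l:dD}) lets us rewrite the right-hand side with $t_0$ ``exposed'', and symmetrically, substituting $X$ by $\{t_0 \plus N\}$ inside the RIS on the left gives $\{c:\{\bi{s}, t_0 \plus N\} \mid \F @ \P\}$, which by the same Proposition and $c \equiv \P$ adds $t_0$ to the left side exactly when $\F(t_0)$ holds; in either case, $(Ab)$ allows us to reintegrate $t_0$ into the $\{t_0,\dots,t_m\}$ prefix on the left and $\{s_0,\dots,s_k \plus \cdot\}$ on the right, reducing the residual equation (the last conjunct of case 7) to the original one.

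For completeness ($\Rightarrow$), I would fix an arbitrary model $\sigma$ of the left-hand equation and trace where the element $\sigma(t_0)$ appears inside the right-hand set. Since $\sigma(t_0)$ belongs to the right-hand side, either $\sigma(t_0) = \sigma(s_j)$ for some $j$---in which case, by a finite case split on whether $\sigma(t_0)$ additionally lies in the rest of the left RIS, in the rest of the right RIS, or both, and accounting for absorption against $\{t_1,\dots,t_m\}$ and $\{s_0,\dots,s_{j-1},s_{j+1},\dots,s_k\}$, one of disjuncts 1--6 is witnessed---or $\sigma(t_0) \in \sigma(\{d:X \mid \G @ \Q\})$ with $\sigma(t_0) \neq \sigma(s_j)$ for every $j$. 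In the latter situation the hypothesis $d \equiv \Q$ forces $\sigma(t_0) \in \sigma(X)$ and $\G(\sigma(t_0))$, so $\sigma(X)$ decomposes as $\{\sigma(t_0)\} \cup M$ for some $M$, yielding the binding $X = \{t_0 \plus N\}$ with $\sigma(N) = M$; substituting back and removing $\sigma(t_0)$ from both sides (legal by $(Ab)$ since $\sigma(t_0) \in \sigma(\{t_0,\dots,t_m\})$) gives exactly the residual equation of disjunct 7.

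The main obstacle I expect lies in the completeness direction at disjunct 7. After substituting $X \mapsto \{t_0 \plus N\}$, the element $t_0$ may also become a new member of the left-hand RIS (precisely when $\F(t_0)$ holds, since $c \equiv \P$), and dually a new member of the right-hand RIS via $\G(t_0)$; one must check carefully that the absorption axiom $(Ab)$ applied to the left prefix $\{t_0,\dots,t_m\}$ and to the right prefix $\{s_0,\dots,s_k \plus \cdot\}$ correctly ``swallows'' these newly exposed occurrences so that the residual equation is genuinely equivalent to the original one. The assumptions $c \equiv \P$ and $d \equiv \Q$ are indispensable here: without them, removing $t_0$ from the domains $X$ and $\{\bi{s} \plus X\}$ would not correspond to removing $t_0$ from the RIS themselves (since patterns could inject multiple distinct domain elements to the same image), breaking the bookkeeping of this case.
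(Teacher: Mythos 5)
Your proposal is correct and follows essentially the same route as the paper's proof: both directions are handled by tracing where $t_0$ lands in the right-hand side, with disjuncts 1--6 reducing to the standard set-unification cases and disjunct 7 arising when $t_0$ is not among the $s_i$ and hence must be produced by the right-hand RIS, forcing $t_0 \in X$ and $\G(t_0)$. The ``main obstacle'' you flag at the end is exactly where the paper's proof does its only nontrivial work: it splits disjunct 7 into the sub-case $t_0 \in \{t_1,\dots,t_m\}\cup\{\bi{s}\} \land \F(t_0)$ (where one takes $N = X$ and absorption closes the gap) and its complement (where one takes $N = X\setminus\{t_0\}$ and subtracts $\{t_0\}$ from both sides, sound precisely because $t_0$ then occurs nowhere else on either side).
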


\begin{proof}
\mbox{}\\
\noindent $\implies)$ \\
By H, $t_0$ must belong the the r.h.s. set. We will divide the proof in the following cases each of which corresponds to one of the branches at the r.h.s. Note that the disjunction of all the preconditions covers all possible cases.
\begin{itemize}
  \item $t_0 \in \{s_0,\dots, s_k\}$, then $t_0 = s_j$, for some $j$
    \begin{itemize}
      \item $t_0 \notin \{s_0, \dots, s_{j-1},s_{j+1}, \dots, s_k\}$
      \begin{itemize}
        \item $s_j \notin \{t_1, \dots, t_m\}$
        \begin{itemize}
        \item First branch: $\{t_0 \plus N_1\} = \{c:\{\bi{s} \plus X\} | \F @ \P\}
        \land \{t_0 \plus N_2\} = \{d:X | \G @ \Q\} \land t_0 \notin N_1 \land t_0 \notin N_2$.
\begin{gather*}
\{t_1, \dots, t_m \plus N_1\}
                = \{s_0, \dots, s_{j-1},s_{j+1}, \dots, s_k
                    \plus N_2\} \\
\iff \why{assumptions $t_0$ does not belong to either set} \\
\{t_0,t_1, \dots, t_m \plus N_1\}
                = \{s_0, \dots, s_{j-1},t_0,s_{j+1}, \dots, s_k
                    \plus N_2\} \\
\iff \why{absorption on the left, semantics of $\plus$ and $s_j = t_0$} \\
\{t_0, \dots, t_m \plus \{t_0 \plus N_1\}\}
                = \{s_0, \dots, s_{j-1},s_j,s_{j+1}, \dots, s_k \plus
                    \{t_0 \plus N_2\}\} \\
\iff \why{assumptions of this branch} \\
\{t_0, \dots, t_m \plus \{c:\{\bi{s} \plus X\} | \F @ \P\}\}
                = \{s_0, \dots, s_k \plus \{d:X | \G @ \Q\}\}
\end{gather*}
Since the last equality is H then the first equality holds.

        \item Second branch: $t_0 \notin \{c:\{\bi{s} \plus X\} | \F @ \P\} \land t_0 \notin \{d:X | \G @ \Q\}$
\begin{gather*}
\{t_1, \dots, t_m \plus \{c:\{\bi{s} \plus X\} | \F @ \P\}\}
                = \{s_0, \dots, s_{j-1},s_{j+1}, \dots, s_k
                    \plus \{d:X | \G @ \Q\}\} \\
\iff \why{assumptions $t_0$ does not belong to either set}\\
\{t_0,t_1, \dots, t_m \plus \{c:\{\bi{s} \plus X\} | \F @ \P\}\}
                = \{s_0, \dots, s_{j-1},t_0,s_{j+1}, \dots, s_k
                    \plus \{d:X | \G @ \Q\}\} \\
\iff \why{$s_j = t_0$} \\
\{t_0, \dots, t_m \plus \{c:\{\bi{s} \plus X\} | \F @ \P\}\}
                = \{s_0, \dots, s_{j-1},s_j,s_{j+1}, \dots, s_k \plus
                    \{d:X | \G @ \Q\}\}
\end{gather*}
Since the last equality is H then the first equality holds.

        \item Third branch: $\{t_0 \plus N\} = \{c:\{\bi{s} \plus X\} | \F @ \P\} \land t_0 \notin \{d:X | \G @ \Q\} \land t_0 \notin N$.
\begin{gather*}
\{t_1, \dots, t_m \plus N\}
                = \{s_0, \dots, s_{j-1},s_{j+1}, \dots, s_k
                    \plus \{d:X | \G @ \Q\}\} \\
\iff \why{assumptions $t_0$ does not belong to either set}\\
\{t_0,t_1, \dots, t_m \plus N\}
                = \{s_0, \dots, s_{j-1},t_0,s_{j+1}, \dots, s_k
                    \plus \{d:X | \G @ \Q\}\} \\
\iff \why{absorption on the left, semantics of $\plus$ and $s_j = t_0$} \\
\{t_0,t_1, \dots, t_m \plus \{t_0 \plus N\}\}
                = \{s_0, \dots, s_{j-1},s_j,s_{j+1}, \dots, s_k
                    \plus \{d:X | \G @ \Q\}\} \\
\iff \why{assumption of this branch} \\
\{t_0, \dots, t_m \plus \{c:\{\bi{s} \plus X\} | \F @ \P\}\}
                = \{s_0, \dots, s_k \plus
                    \{d:X | \G @ \Q\}\}
\end{gather*}
Since the last equality is H then the first equality holds.

        \item Fourth branch: $t_0 \notin \{c:\{\bi{s} \plus X\} | \F @ \P\} \land \{s_j \plus N\} = \{d:X | \G @ \Q\} \land s_j \notin N$.
\begin{gather*}
\{t_1, \dots, t_m \plus \{c:\{\bi{s} \plus X\} | \F @ \P\}\}
                = \{s_0, \dots, s_{j-1},s_{j+1}, \dots, s_k
                    \plus N\} \\
\iff \why{assumptions $t_0$ does not belong to either set}\\
\{t_0,t_1, \dots, t_m \plus \{c:\{\bi{s} \plus X\} | \F @ \P\}\}
                = \{s_0, \dots, s_{j-1},t_0,s_{j+1}, \dots, s_k
                    \plus N\} \\
\iff \why{$s_j = t_0$} \\
\{t_0, \dots, t_m \plus \{c:\{\bi{s} \plus X\} | \F @ \P\}\}
                = \{s_0, \dots, s_{j-1},s_j,s_{j+1}, \dots, s_k \plus
                    N\}
\iff \why{absorption on the left and semantics of $\plus$} \\
\{t_0, \dots, t_m \plus \{c:\{\bi{s} \plus X\} | \F @ \P\}\}
                = \{s_0, \dots, s_{j-1},s_j,s_{j+1}, \dots, s_k \plus
                    \{s_j \plus N\}\} \\
\iff \why{assumption of this branch} \\
\{t_0, \dots, t_m \plus \{c:\{\bi{s} \plus X\} | \F @ \P\}\}
                = \{s_0, \dots, s_k \plus
                    \{d:X | \G @ \Q\}\}
\end{gather*}
Since the last equality is H then the first equality holds.
      \end{itemize}

        \item $s_j \in \{t_1, \dots, t_m\}$

        Now we consider the fifth branch.

        Given that $s_j = t_0$ and that $s_j \in \{t_1, \dots, t_m\}$, then $t_0 \in \{t_1, \dots, t_m\}$.
        \begin{gather*}
        \{t_1, \dots, t_m \plus \{c:\{\bi{s} \plus X\} | \F @ \P\}\}
          \why{$t_0 \in \{t_1, \dots, t_m\}$} \\
        = \{t_0,t_1, \dots, t_m \plus \{c:\{\bi{s} \plus X\} | \F @ \P\}\}
          \why{H} \\
        = \{s_0, \dots, s_k \plus \{d:X | \G @ \Q\}\}
        \end{gather*}
     \end{itemize}

      \item $t_0 \in \{s_0, \dots, s_{j-1},s_{j+1}, \dots, s_k\}$

      Now we consider the sixth branch.

      Given that $s_j = t_0$ and that $t_0 \in \{s_0, \dots, s_{j-1},s_{j+1}, \dots, s_k\}$, then $s_j \in \{s_0, \dots, s_{j-1},s_{j+1}, \dots, s_k\}$.
      \begin{gather*}
      \{s_0, \dots, s_{j-1},s_{j+1}, \dots, s_k
        \plus \{d:X | \G @ \Q\}\}
          \why{$s_j \in
                \{s_0, \dots, s_{j-1},s_{j+1}, \dots, s_k\}$} \\
      = \{s_0, \dots s_k \plus \{d:X | \G @ \Q\}\}
        \why{H} \\
      = \{t_0, \dots, t_m \plus \{c:\{\bi{s} \plus X\} | \F @ \P\}\}
      \end{gather*}
    \end{itemize}
  \item $t_0 \notin \{s_0,\dots, s_k\}$

  Now we consider the last branch.

  Given that $t_0 \notin \{s_0,\dots, s_k\}$, then necessarily $t_0 \in \{d:X | \G @ \Q\}$, then $t_0 \in X$ and $\G(t_0)$ holds. Now, the proof is divided in two cases.
  \begin{itemize}
    \item $t_0 \in \{t_1,\dots, t_m\} \cup \{\bi{s}\} \land \F(t_0)$

    In this case we take $N = X$. Then $ \{t_0 \plus N\} = \{t_0 \plus X\} = X$, where the last equality holds because $t_0 \in X$. Now:
    \begin{gather*}
    \{t_1, \dots, t_m \plus \{d:\{\bi{s} \plus N\} | \F @ \P\}\}
      \why{$t_0 \in \{t_1,\dots, t_m\} \cup \{\bi{s}\} \land \F(t_0)$} \\
    = \{t_0,t_1, \dots, t_m \plus \{d:\{\bi{s} \plus N\} | \F @ \P\}\}
      \why{$N=X$} \\
    = \{t_0,t_1, \dots, t_m \plus \{d:\{\bi{s} \plus X\} | \F @ \P\}\}
      \why{H} \\
    = \{s_0, \dots, s_k \plus \{d:X | \G @ \Q\}\}
      \why{$N=S$} \\
    = \{s_0, \dots, s_k \plus \{d:N | \G @ \Q\}\}
    \end{gather*}

    \item $t_0 \notin \{t_1,\dots, t_m\} \cup \{\bi{s}\} \lor \lnot\F(t_0)$

    In this case we take $N = X \setminus \{t_0\}$, then $X = \{t_0 \plus N\}$ because $t_0 \in X$. Besides $t_0 \notin N$. Now we start from H.
    \begin{gather*}
    \{t_0, \dots, t_m \plus \{c:\{\bi{s} \plus X\} | \F @ \P\}\}
      = \{s_0, \dots, s_k \plus \{d:X | \G @ \Q\}\} \\
    \iff \why{$X = \{t_0 \plus N\}$} \\
    \{t_0, \dots, t_m \plus \{c:\{\bi{s} \plus \{t_0 \plus N\}\} | \F @ \P\}\}
      = \{s_0, \dots, s_k \plus \{d:\{t_0 \plus N\} | \G @ \Q\}\} \\
    \iff \why{properties of $\plus$ and intensional sets} \\
    \{t_0, \dots, t_m \plus \{c:\{\bi{s} \plus N\} | \F @ \P\}\}
        \cup \{c:\{t_0\} | \F @ \P\} \\
    \quad{} = \{s_0, \dots, s_k \plus \{d:N | \G @ \Q\}\}
        \cup \{d:\{t_0\} | \G @ \Q\} \\
    \iff \why{subtract $\{t_0\}$ on both sides} \\
    (\{t_0, \dots, t_m \plus \{c:\{\bi{s} \plus N\} | \F @ \P\}\}
        \cup \{c:\{t_0\} | \F @ \P\}) \setminus \{t_0\} \\
    \quad{} = (\{s_0, \dots, s_k \plus \{d:N | \G @ \Q\}\}
        \cup \{d:\{t_0\} | \G @ \Q\}) \setminus \{t_0\} \\
    \iff \why{$\setminus$ distributes over $\cup$} \\
    (\{t_0, \dots, t_m \plus \{c:\{\bi{s} \plus N\} | \F @ \P\}\} \setminus \{t_0\})
        \cup (\{c:\{t_0\} | \F @ \P\} \setminus \{t_0\}) \\
    \quad{} = (\{s_0, \dots, s_k \plus \{d:N | \G @ \Q\}\}
                 \setminus \{t_0\})
        \cup (\{d:\{t_0\} | \G @ \Q\} \setminus \{t_0\}) \\
    \iff \why{$t_0 \notin \{t_1,\dots, t_m\} \cup \{\bi{s}\}$, $t_0 \notin N$, $t_0 \notin \{s_0,\dots, s_k\}$} \\
    \{t_1, \dots, t_m \plus \{c:\{\bi{s} \plus N\} | \F @ \P\}\}
    = \{s_0, \dots, s_k \plus \{d:N | \G @ \Q\}\}
    \end{gather*}
  \end{itemize}
\end{itemize}

\mbox{}\\
\noindent $\Longleftarrow)$ \\
We will consider each branch and will prove that the equality holds.
\begin{itemize}
  \item First branch.
  \begin{gather*}
  \{t_0, \dots, t_m \plus \{c:\{\bi{s} \plus X\} | \F @ \P\}\}
    \why{H} \\
  =  \{t_0, \dots, t_m \plus \{t_0 \plus N_1\}\}
    \why{properties of $\plus$} \\
  =  \{t_0,t_0, \dots, t_m \plus N_1\}
    \why{absorption on the left} \\
  =  \{t_0, \dots, t_m  \plus N_1\} \\
  =  \{t_0\} \cup \{t_1, \dots, t_m \plus N_1\}
    \why{H} \\
  =  \{t_0\}
     \cup \{s_0, \dots, s_{j-1},s_{j+1}, \dots, s_k \plus N_2\}
     \why{semantics of $\plus$} \\
  =  \{t_0,s_0, \dots, s_{j-1},s_{j+1}, \dots, s_k \plus N_2\}
    \why{absorption on the left} \\
  =  \{t_0,s_0, \dots, s_{j-1},t_0,s_{j+1}, \dots, s_k \plus N_2\}
    \why{H} \\
  =  \{s_0, \dots, s_{j-1},s_j,s_{j+1}, \dots, s_k
       \plus \{t_0 \plus N_2\}\}
    \why{H} \\
  = \{s_0, \dots, s_k \plus \{d:X | \G @ \Q\}\}
  \end{gather*}

  \item Second branch.
  \begin{gather*}
  \{t_0, \dots, t_m \plus \{c:\{\bi{s} \plus X\} | \F @ \P\}\} \why{semantics of $\plus$}\\
  =  \{t_0\} \cup \{t_1, \dots, t_m \plus \{c:\{\bi{s} \plus X\} | \F @ \P\}\}
    \why{H} \\
  =  \{t_0\}
     \cup \{s_0, \dots, s_{j-1},s_{j+1}, \dots, s_k
            \plus \{d:X | \G @ \Q\}\}
     \why{semantics of $\plus$} \\
  =  \{s_0, \dots, s_{j-1},t_0,s_{j+1}, \dots, s_k
       \plus \{d:X | \G @ \Q\}\}
    \why{H} \\
  =  \{s_0, \dots, s_{j-1},s_j,s_{j+1}, \dots, s_k
       \plus \{d:X | \G @ \Q\}\} \\
  = \{s_0, \dots, s_k \plus \{d:X | \G @ \Q\}\}
  \end{gather*}

  \item Third branch.
  \begin{gather*}
  \{t_0, \dots, t_m \plus \{c:\{\bi{s} \plus X\} | \F @ \P\}\}
    \why{H} \\
  =  \{t_0, \dots, t_m \plus \{t_0 \plus N\}\}
    \why{properties of $\plus$} \\
  =  \{t_0,t_0, \dots, t_m \plus N\}
    \why{absorption on the left} \\
  =  \{t_0, \dots, t_m  \plus N\} \\
  =  \{t_0\} \cup \{t_1, \dots, t_m \plus N\}
    \why{H} \\
  =  \{t_0\}
     \cup \{s_0, \dots, s_{j-1},s_{j+1}, \dots, s_k
            \plus \{d:X | \G @ \Q\}\}
     \why{semantics of $\plus$} \\
  =  \{s_0, \dots, s_{j-1},t_0,s_{j+1}, \dots, s_k
            \plus \{d:X | \G @ \Q\}\}
    \why{H} \\
  =  \{s_0, \dots, s_{j-1},s_j,s_{j+1}, \dots, s_k
            \plus \{d:X | \G @ \Q\}\}
    \why{H} \\
  = \{s_0, \dots, s_k \plus \{d:X | \G @ \Q\}\}
  \end{gather*}

  \item Fourth branch.
  \begin{gather*}
  \{s_0, \dots, s_k \plus \{d:X | \G @ \Q\}\}
    \why{H} \\
  = \{s_0, \dots, s_k \plus \{s_j \plus N\}\}
    \why{properties of $\plus$} \\
  = \{s_j,s_0, \dots, s_k \plus N\}
    \why{absorption on the left} \\
  = \{s_0, \dots, s_k \plus N\}
    \why{semantics of $\plus$} \\
  = \{s_j\} \cup \{s_0, \dots, s_{j-1},s_{j+1}, \dots, s_k
                   \plus N\}
    \why{H} \\
  = \{s_j\} \cup \{t_1, \dots, t_m \plus \{c:\{\bi{s} \plus X\} | \F @ \P\}\}
    \why{semantics of $\plus$} \\
  = \{s_j,t_1, \dots, t_m \plus \{c:\{\bi{s} \plus X\} | \F @ \P\}\}
    \why{H} \\
  = \{t_0,t_1, \dots, t_m \plus \{c:\{\bi{s} \plus X\} | \F @ \P\}\}
  \end{gather*}

  \item Fifth branch.
  \begin{gather*}
  \{t_0, \dots, t_m \plus \{c:\{\bi{s} \plus X\} | \F @ \P\}\}
    \why{semantics of $\plus$} \\
  = \{t_0\} \cup \{t_1, \dots, t_m \plus \{c:\{\bi{s} \plus X\} | \F @ \P\}\}
    \why{H} \\
  = \{t_0\}
     \cup \{s_0, \dots, s_k \plus \{d:X | \G @ \Q\}\}
     \why{semantics of $\plus$} \\
  = \{t_0,s_0, \dots, s_k \plus \{d:X | \G @ \Q\}\}
    \why{H} \\
  = \{s_j,s_0, \dots, s_k \plus \{d:X | \G @ \Q\}\}
    \why{absorption on the left} \\
  = \{s_0, \dots, s_k \plus \{d:X | \G @ \Q\}\}
  \end{gather*}

  \item Sixth branch.
  \begin{gather*}
  \{t_0, \dots, t_m \plus \{c:\{\bi{s} \plus X\} | \F @ \P\}\}
    \why{absorption on the left} \\
  = \{t_0,t_0, \dots, t_m \plus \{c:\{\bi{s} \plus X\} | \F @ \P\}\}
    \why{semantics of $\plus$} \\
  = \{t_0\} \cup \{t_0, \dots, t_m \plus \{c:\{\bi{s} \plus X\} | \F @ \P\}\}
     \why{H} \\
  = \{t_0\}
    \cup \{s_0, \dots, s_{j-1},s_{j+1}, \dots, s_k
           \plus \{d:X | \G @ \Q\}\}
    \why{semantics of $\plus$} \\
  = \{s_0, \dots, s_{j-1},t_0,s_{j+1}, \dots, s_k
           \plus \{d:X | \G @ \Q\}\}
    \why{H} \\
  = \{s_0, \dots, s_{j-1},s_j,s_{j+1}, \dots, s_k
           \plus \{d:X | \G @ \Q\}\} \\
  = \{s_0, \dots, s_k \plus \{d:X | \G @ \Q\}\}
  \end{gather*}

  \item Seventh (last) branch.
  \begin{gather*}
  \{t_0, \dots, t_m \plus \{c:\{\bi{s} \plus X\} | \F @ \P\}\}
    \why{H} \\
  = \{t_0, \dots, t_m \plus \{c:\{\bi{s} \plus \{t_0 \plus N\}\} | \F @ \P\}\}
    \why{semantics of $\plus$, property intensional sets} \\
  = \{t_0\}
    \cup \{c:\{t_0\} | \F @ \P\}
    \cup \{t_1, \dots, t_m \plus \{c:\{\bi{s} \plus N\} | \F @ \P\}\}
     \why{H} \\
  = \{t_0\}
    \cup \{c:\{t_0\} | \F @ \P\}
    \cup \{s_0, \dots, s_k \plus \{d:N | \G @ \Q\}\}
    \why{$\{c:\{t_0\} | \F @ \P\} \subseteq \{t_0\}$} \\
  = \{t_0\}
    \cup \{s_0, \dots, s_k \plus \{d:N | \G @ \Q\}\}
    \why{semantics of $\plus$} \\
  = \{t_0,s_0, \dots, s_k \plus \{d:N | \G @ \Q\}\}
    \why{semantics of $\plus$} \\
  = \{s_0, \dots, s_k \plus \{t_0 \plus \{d:N | \G @ \Q\}\}\}
    \why{H, semantics of $\plus$ and property of intensional sets} \\
  = \{s_0, \dots, s_k \plus \{d:\{t_0 \plus N\} | \G @ \Q\}\}
    \why{H} \\
  = \{s_0, \dots, s_k \plus \{d:X | \G @ \Q\}\}
  \end{gather*}

\end{itemize}
\end{proof}

%%%%%%%%%%%%%%%

\begin{lemma}[Equivalence of rule \eqref{special6}]
If $c \not\equiv \P$ and $d \not\equiv \Q$:
\begin{gather*}
\begin{split}
\forall X, & t_0,\dots,t_m,s_0,\dots,s_k: \\
& \begin{split}
 \{t_0, & \dots, t_m \plus \{c:\{\bi{s} \plus X\} | \F @ \P\}\} =
   \{s_0, \dots, s_k \plus \{d:X | \G @ \Q\}\} \\
 {}\iff{}
  & t_0 = s_j \\
  &\quad{}\land \{t_0 \plus N_1\} = \{c:\{\bi{s} \plus X\} | \F @ \P\}
          \land t_0 \notin N_1 \\
  &\quad{}\land \{t_0 \plus N_2\} = \{d:X | \G @ \Q\}
          \land t_0 \notin N_2 \\
  &\quad{}\land \{t_1, \dots, t_m \plus N_1\}
                = \{s_0, \dots, s_{j-1},s_{j+1}, \dots, s_k
                    \plus N_2\} \\
  &{}\lor t_0 = s_j \\
  &\quad{}\land t_0 \notin \{c:\{\bi{s} \plus X\} | \F @ \P\}
          \land t_0 \notin \{d:X | \G @ \Q\} \\
  &\quad{}\land \{t_1, \dots, t_m \plus \{c:X | \F @ \P\}\}
                = \{s_0, \dots, s_{j-1},s_{j+1}, \dots, s_k
                    \plus \{d:X | \G @ \Q\}\} \\
  &{}\lor t_0 = s_j \\
  &\quad{}\land \{t_0 \plus N\} = \{c:\{\bi{s} \plus X\} | \F @ \P\}
          \land t_0 \notin N
          \land t_0 \notin \{d:X | \G @ \Q\} \\
  &\quad{}\land \{t_1, \dots, t_m \plus N\}
                = \{s_0, \dots, s_{j-1},s_{j+1}, \dots, s_k
                    \plus \{d:X | \G @ \Q\}\} \\
  &{}\lor t_0 = s_j \\
  &\quad{}\land t_0 \notin \{c:\{\bi{s} \plus X\} | \F @ \P\}
          \land \{s_j \plus N\} = \{d:X | \G @ \Q\}
          \land s_j \notin N \\
  &\quad{}\land \{t_1, \dots, t_m \plus \{c:\{\bi{s} \plus X\} | \F @ \P\}\}
                = \{s_0, \dots, s_{j-1},s_{j+1}, \dots, s_k
                    \plus N\} \\
  &{}\lor t_0 = s_j \\
  &{}\quad\land \{t_1, \dots, t_m \plus \{c:\{\bi{s} \plus X\} | \F @ \P\}\}
                = \{s_0, \dots, s_k \plus \{d:X | \G @ \Q\}\} \\
  &{}\lor t_0 = s_j \\
  &\quad{}\land \{t_0, \dots, t_m \plus \{c:\{\bi{s} \plus X\} | \F @ \P\}\}
                = \{s_0, \dots, s_{j-1},s_{j+1}, \dots, s_k
                    \plus \{d:X | \G @ \Q\}\} \\
  &{}\lor X = \{n \plus N\} \land \Gpv(n) \land t_0 = \Qpv(n) \land
      %%%%%\\&\quad{} \land \{c:\{\bi{s} \plus X\} | \F @ \P\} = \{\P(n) \plus N_1\} \land \P(n) \notin N_1
      \F(n) \land \P(n) \notin \{s_0, \dots, s_k\} \\
  &\quad{}\land t_0 = \P(n)
          \land \{t_1, \dots, t_m \plus \{c:\{\bi{s} \plus N\} | \F @ \P\}\}
                = \{s_0, \dots, s_k \plus \{d:N | \G @ \Q\}\} \\
  &{}\lor X = \{n \plus N\} \land \Gpv(n) \land t_0 = \Qpv(n) \land
  %%%%\\&\quad{}\land \{c:\{\bi{s} \plus X\} | \F @ \P\} = \{\P(n) \plus N_1\} \land \P(n) \notin N_1 \land
  \F(n) \land \P(n) = s_j \\
  &\quad{}\land \{\P(n),t_1, \dots, t_m \plus \{c:\{\bi{s} \plus N\} | \F @ \P\}\} =
                \{s_0, \dots,  s_k \plus \{d:N | \G @ \Q\}\} \\
  &{}\lor X = \{n \plus N\} \land \Gpv(n) \land t_0 = \Qpv(n)
          \land \lnot\F(n) \\
  &\quad{}\land \{t_1, \dots, t_m \plus \{c:\{\bi{s} \plus N\} | \F @ \P\}\} =
                \{s_0, \dots, s_k \plus \{d:N | \G @ \Q\}\}
 \end{split}
 \end{split}
\end{gather*}
\end{lemma}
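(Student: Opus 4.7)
The approach is to follow the template of the immediately preceding equivalence proof (for rule \eqref{special3}), but with additional branches and a more delicate invocation of pattern properties. My plan is to prove both directions by case analysis on where the distinguished element $t_0$ of the LHS lives on the RHS, using the bijectivity and pairwise co-injectivity of patterns assumed in Section \ref{sec:controlPatterns}.

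For the forward direction, I would assume the equality holds. Since $t_0$ belongs to the LHS, it must belong to the RHS, so either $t_0 = s_j$ for some $j \in \{0,\dots,k\}$, or $t_0 \in \{d : X \mid \G @ \Q\}$. The first alternative (Case A) reproduces the six sub-cases of the \eqref{special3} proof (splitting on whether $t_0$ has further occurrences in the RHS extensional part, whether $s_j$ reappears in $\{t_1,\dots,t_m\}$, and whether $t_0$ appears inside each of the two RIS), yielding branches 1--6 by exactly the same arguments used there. The second alternative (Case B) is new: by bijectivity of $\Q$ there is a unique $n \in X$ with $\G(n)$ and $\Q(n) = t_0$; set $N = X \setminus \{n\}$ so $X = \{n \plus N\}$ and $n \notin N$. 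Within Case B I would split on $\F(n)$: if $\lnot \F(n)$, the LHS RIS is unchanged when $n$ is peeled off, giving branch 9; if $\F(n)$, then $\P(n)$ belongs to the LHS RIS and hence must occur on the RHS.

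The key lemma driving Case B with $\F(n)$ is that $\P(n)$ cannot hide inside $\{d : N \mid \G @ \Q\}$: if it did, there would be $n' \in N$ with $\G(n')$ and $\P(n) = \Q(n')$, and pairwise co-injectivity of $\{\P,\Q\}$ would force $n = n' \in N$, contradicting $n \notin N$. Thus $\P(n) \in \{t_0, s_0,\dots,s_k\}$, which splits into branch 7 ($\P(n) = t_0$, and consequently $\P(n) \notin \{s_0,\dots,s_k\}$) and branch 8 ($\P(n) = s_j$ for some $j$). In each case one then verifies that the residual equality after removing both $t_0$ from the LHS and $\Q(n)$ from the RHS (and, in branch 8, reinserting $\P(n) = s_j$ as an explicit element on the LHS to track its correspondence with $s_j$) is precisely the recursive equality stated.

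For the backward direction, each of the nine branches is shown to entail the original equality by straightforward manipulations: absorption \eqref{Ab}, commutativity on the left \eqref{Cl}, the semantics of $\plus$, Proposition \ref{l:dD} for unfolding $\{c : \{d \plus D\} \mid \F @ \P\}$, and the hypotheses provided by the branch. Branches 1--6 are computed exactly as in the proof of \eqref{special3}; branches 7 and 9 are essentially the two sub-cases of the last branch of \eqref{special3}'s proof adapted to a non-identity pattern (here we use $X = \{n \plus N\}$ together with $\Q(n) = t_0$, and the condition $\F(n)$ or $\lnot\F(n)$ to account for the extra element $\P(n)$ contributed to the LHS RIS); branch 8 requires the additional step of using $\P(n) = s_j$ together with absorption so that adding $\P(n)$ to the LHS extensional part does not change the denoted set while making the element balance with $s_j$ on the RHS.

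I expect the main obstacle to be bookkeeping rather than genuine novelty: ensuring that the nine branches indeed cover \emph{all} configurations in Case B (which is where the co-injectivity argument above is essential), and carrying out the calculation for branch 8 where $\P(n)$ is simultaneously an element of the LHS RIS and must be matched against an $s_j$ on the RHS without double-counting. Once co-injectivity has ruled out the parasitic case $\P(n) \in \{d : N \mid \G @ \Q\}$, the remaining steps are mechanical.
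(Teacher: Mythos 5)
Your overall strategy coincides with the paper's: branches 1--6 are inherited verbatim from the proof for rule \eqref{special3}; for the remaining branches one locates a witness $n\in X$ with $\G(n)$ and $t_0=\Q(n)$, splits on $\F(n)$ and on where $\P(n)$ lands, and rules out $\P(n)$ hiding inside the residual right-hand RIS. Where the paper computes explicitly with the concrete pair form $\P(x)=(x,f_\P(x))$, $\Q(x)=(x,f_\Q(x))$ to derive the contradiction $f_\P(n)=f_\Q(n)\land f_\Q(n)\neq f_\P(n)$, you invoke pairwise co-injectivity abstractly; since the admissible patterns here are exactly pairs with the control variable as first component, the two arguments are interchangeable, and yours is, if anything, cleaner.

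There is, however, one concrete gap in your forward direction: you fix the witness $N=X\setminus\{n\}$ once and for all in Case B. That choice does not validate the residual equalities when the extracted element reoccurs elsewhere on the left-hand side, e.g.\ when $t_0=t_i$ for some $i\ge 1$ or $t_0\in\{c:\{\bi{s}\}|\F@\P\}$. In that situation, injectivity of $\Q$ guarantees that $t_0$ disappears from $\{d:N|\G@\Q\}$ (and $t_0\notin\{s_0,\dots,s_k\}$ by the case assumption), so the residual right-hand side no longer contains $t_0$, while the residual left-hand side still does; the stated recursive equality is then false even though the original equality holds, and the branch fails to cover a genuine solution. The paper's proof avoids this by splitting each of the three new branches into two further sub-cases and taking $N=X$ (legitimate, since $n\in X$ gives $\{n\plus X\}=X$) precisely when $t_0$, resp.\ $\P(n)$, has other occurrences on the left. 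Relatedly, your remark that branches 7 and 9 ``are the two sub-cases of the last branch of \eqref{special3}'s proof'' conflates two orthogonal splits: branches 7--9 are indexed by the mutually exclusive conditions on $\F(n)$ and $\P(n)$, and each of them separately needs the internal $N=X$ versus $N=X\setminus\{n\}$ dichotomy. Once that split is restored, the remainder of your argument goes through as in the paper.
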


\begin{1}
\hlg{COMMENT.} The eighth branch is a little bit strange. The point is that we
can't remove $\P(n)$ from any set because we don't know whether it's repeated
or not. Then, we leave it but the size of the constraint is strictly smaller
because we removed $t_0$ from the l.h.s. set. We can write that branch as
follows if you like:
\begin{gather*}
X = \{n \plus N\} \land \Gpv(n) \land t_0 = \Qpv(n) \\
  \quad{}\land \{c:\{\bi{s} \plus X\} | \F @ \P\} = \{\P(n) \plus N_1\} \land \P(n) \notin N_1 \land \P(n) = s_j \\
  \quad{}\land \{t_1, \dots, t_m,\P(n) \plus N_1\} =
                \{s_0, \dots,  s_k \plus \{d:N | \G @ \Q\}\}
\end{gather*}
I couldn't use the solution you proposed in the email because the problem was
also in the old seventh branch because it was wrong. So first, I had to change
it and then I had to add two more branches. I guess I thought that branch was
like rule =13 but the added complexity here is that there are elements at both
sides, while in rule =13 there are elements only at the r.h.s. set.
\end{1}

\begin{proof}
The first six branches of this lemma are proved exactly as in the previous
lemma, in both directions, because those branches are equal in both lemmas.
Hence, we will prove only the last three branches, in both directions.
\mbox{}\\
\noindent $\implies)$ \\
In order to prove these branches we need to recall that $\P$ and $\Q$ are
ordered pairs whose first component is the control variable and whose second
component is a $\Ur$-term. Then, from now on we will take $\P(x) = (x,f_\P(x))$
for some term $f_\P$ and $\Q(x) = (x,f_\Q(x))$, for some term $f_\Q$.

Besides, all these branches are proved under the assumption as in the previous
lemma:
\begin{equation}\label{l:special6:as1}
t_0 \notin \{s_0,\dots, s_k\}
\end{equation}

The first step in the proof is that we can conclude that $t_0$ must belong the
the r.h.s. set, by H. Hence, due to \eqref{l:special6:as1}, $t_0$ must belong
to $\{d:X  | \G @ \Q\}$, which means that there exists $n \in X$, $\G(n)$ holds
and $t_0 = \Q(n)$. Observe that $t_0 = \Q(n)$ means that $t_0 = (n,f_\Q(n))$.

Now, note that the following conditions of the branches under consideration:
\begin{gather*}
\F(n) \land  \P(n) \notin \{s_0, \dots, s_k\} \\
\F(n) \land \P(n) = s_j \\
\lnot\F(n)
\end{gather*}
are mutually exclusive. Since $n \in X$ then in the first two cases we have $\P(n) \in \{c:\{\bi{s} \plus X\} | \F @ \P\}$, which implies that $\P(n)$ belongs to the l.h.s. set. If $\P(n)$ belongs to the l.h.s. set of H, then it must belong
to the r.h.s. of H. Then, in the first case we assume $\P(n)$ belongs to $\{d:X
| \G @ \Q\}$ while in the second we assume is one of the $s_i$.

Hence, in order to prove each branch we will assume the corresponding condition
and we will prove the remaining predicates of the conclusion.

\begin{itemize}
\item  Assume $\F(n) \land \P(n) \notin \{s_0, \dots, s_k\}$.

Will we prove that $t_0 = \P(n)$ by contradiction. We know that $\P(n) \in
\{c:\{\bi{s} \plus X\} | \F @ \P\}$. So:
\begin{gather*}
(n,f_\P(n)) \in \{c:\{\bi{s} \plus X\} | \F @ \P\}
         \land t_0 \neq (n,f_\P(n))
         \why{assumption of this branch} \\
\implies (n,f_\P(n)) \in \{d:X  | \G @ \Q\}
         \land f_\Q(n) \neq f_\P(n)
         \why{semant. intensional sets, def. of $\Q$}\\
\implies \G(n) \land (n,f_\P(n)) = (n,f_\Q(n))
         \land f_\Q(n) \neq f_\P(n)
         \why{drop $\G(n)$ and equality of pairs} \\
\implies f_\P(n) = f_\Q(n) \land f_\Q(n) \neq f_\P(n)
\end{gather*}
which is clearly a contradiction. Hence, $t_0 = \P(n)$.

\smallskip

Now we will prove that $\{t_1, \dots, t_m \plus \{c:N | \F @ \P\}\} = \{s_0,
\dots, s_k \plus \{d:N | \G @ \Q\}\}$, by dividing the proof in two cases.
  \begin{itemize}
    \item $t_0 \in \{t_1,\dots, t_m\} \cup \{c:\bi{s} | \F @ \P\}$

    In this case we take $N = X$. Then $ \{n \plus N\} = \{n \plus X\} = X$, where the last equality holds because $n \in X$.

    \begin{itemize}
  \item $t_0 \in \{t_1,\dots,t_m\}$
\begin{gather*}
\{s_0, \dots, s_k \plus \{d:N | \G @ \Q\}\}
  \why{$N = X$} \\
= \{s_0, \dots, s_k \plus \{d:X | \G @ \Q\}\}
  \why{H} \\
= \{t_0,t_1,\dots,t_m \plus \{c:\{\bi{s} \plus X\} | \F @ \P\}\}
  \why{$t_0 \in \{t_1,\dots,t_m\}$, left absorption} \\
= \{t_1,\dots,t_m \plus \{c:\{\bi{s} \plus X\} | \F @ \P\}\}
  \why{$X = N$} \\
= \{t_1,\dots,t_m \plus \{c:\{\bi{s} \plus N\} | \F @ \P\}\}
\end{gather*}
  \item $t_0 \in \{c:\{\bi{s}\} | \F @ \P\}$. Note that in this case $\{c:\{\bi{s} \plus X\} | \F @ \P\} = \{t_0 \plus \{c:\{\bi{s} \plus X\} | \F @ \P\}\}$.
\begin{gather*}
\{s_0, \dots, s_k \plus \{d:N | \G @ \Q\}\}
  \why{$N = X$} \\
= \{s_0, \dots, s_k \plus \{d:X | \G @ \Q\}\}
  \why{H} \\
= \{t_0,t_1,\dots,t_m \plus \{c:\{\bi{s} \plus X\} | \F @ \P\}\}
  \why{semantics of $\plus$} \\
= \{t_1,\dots,t_m \plus \{t_0 \plus \{c:\{\bi{s} \plus X\} | \F @ \P\}\}\}
  \why{$t_0 \in \{c:\{\bi{s}\} | \F @ \P\}$} \\
= \{t_1,\dots,t_m \plus \{c:\{\bi{s} \plus X\} | \F @ \P\}\}
  \why{$X = N$} \\
= \{t_1,\dots,t_m \plus \{c:\{\bi{s} \plus N\} | \F @ \P\}\}
\end{gather*}
\end{itemize}

    \item $t_0 \notin \{t_1,\dots, t_m\} \cup \{c:\bi{s} | \F @ \P\}$

    In this case we take $N = X \setminus \{n\}$, then $X = \{n \plus N\}$ because $n \in X$. Besides $n \notin N$. Now we start from H.
    \begin{gather*}
    \{t_0, \dots, t_m \plus \{c:\{\bi{s} \plus X\} | \F @ \P\}\}
      = \{s_0, \dots, s_k \plus \{d:X | \G @ \Q\}\} \\
    \iff \why{$X = \{n \plus N\}$} \\
    \{t_0, \dots, t_m \plus \{c:\{\bi{s} \plus \{n \plus N\}\} | \F @ \P\}\}
      = \{s_0, \dots, s_k \plus \{d:\{n \plus N\} | \G @ \Q\}\} \\
    \iff \why{properties of $\plus$ and intensional sets} \\
    \{t_0, \dots, t_m \plus \{c:\{\bi{s} \plus N\} | \F @ \P\}\}
        \cup \{c:\{n\} | \F @ \P\} \\
    \quad{} = \{s_0, \dots, s_k \plus \{d:N | \G @ \Q\}\}
        \cup \{d:\{n\} | \G @ \Q\} \\
    \iff \why{subtract $\{t_0\}$ at both sides} \\
    (\{t_0, \dots, t_m \plus \{c:\{\bi{s} \plus N\} | \F @ \P\}\}
        \cup \{c:\{n\} | \F @ \P\}) \setminus \{t_0\} \\
    \quad{} = (\{s_0, \dots, s_k \plus \{d:N | \G @ \Q\}\}
        \cup \{d:\{n\} | \G @ \Q\}) \setminus \{t_0\} \\
    \iff \why{$\setminus$ distributes over $\cup$} \\
    (\{t_0, \dots, t_m \plus \{c:\{\bi{s} \plus N\} | \F @ \P\}\} \setminus \{t_0\})
        \cup (\{c:\{n\} | \F @ \P\} \setminus \{t_0\}) \\
    \quad{} = (\{s_0, \dots, s_k \plus \{d:N | \G @ \Q\}\}
                 \setminus \{t_0\})
        \cup (\{d:\{n\} | \G @ \Q\} \setminus \{t_0\}) \\
    \iff \why{$(\dagger)$} \\
    \{t_1, \dots, t_m \plus \{c:\{\bi{s} \plus N\} | \F @ \P\}\}
    = \{s_0, \dots, s_k \plus \{d:N | \G @ \Q\}\}
    \end{gather*}
  \end{itemize}
The justification of $(\dagger)$ is the following:
\begin{enumerate}
\item Note that $\{c:\{n\} | \F @ \P\} = \{\P(n)\} = \{t_0\}$ \by{assumption of this branch and because in this branch $\P(n) = t_0$}, then $\{c:\{n\} | \F @ \P\} \setminus \{t_0\} = \e$.

\item $\{d:\{n\} | \G @ \Q\} = \{t_0\}$ \by{$\G(n) \land t_0 = \Q(n)$}, then $\{d:\{n\} | \G @ \Q\}  \setminus \{t_0\} = \e$.

\item $t_0 \notin \{c:N | \F @ \P\}$ because if it does then there exists $n' \in N$ such that $\F(n')$ holds and $t_0 = \P(n')$. But, $t_0 = \P(n') \iff t_0 = (n',f_\P(n'))$. On the other hand $t_0 = (n,f_\Q(n))$ and so $(n,f_\Q(n)) = (n',f_\P(n'))$ which implies that $n = n'$. But this is impossible because we have $n' \in N$ and $n \notin N$.

\item $\{t_0, \dots, t_m \plus \{c:\{\bi{s} \plus N\} | \F @ \P\}\} \setminus \{t_0\} = \{t_1, \dots, t_m \plus \{c:\{\bi{s} \plus N\} | \F @ \P\}\}$ \by{assumption: $t_0 \notin \{t_1,\dots, t_m\} \cup \{c:\bi{s} | \F @ \P\}$; and by $t_0 \notin \{c:N | \F @ \P\}$}.

\item $\{s_0, \dots, s_k \plus \{d:N | \G @ \Q\}\} \setminus \{t_0\} = \{s_0, \dots, s_k \plus \{d:N | \G @ \Q\}\}$ \by{$t_0 \notin \{s_0,\dots, s_k\}$ and $n \notin N$}.
\end{enumerate}
\item Assume $\F(n) \land \P(n) = s_j$.

We have to prove that $\{\P(n),t_1, \dots, t_m \plus \{c:\{\bi{s} \plus N\} | \F @
\P\}\} =
                \{s_0, \dots, s_k \plus \{d:N | \G @ \Q\}\}$. We proceed as with the previous branch. In fact, the first case is identical (note that there we used assumptions that are also available in this branch). In the second case, we take the same value for $N$ and start from H as in the previous branch.
    \begin{gather*}
    \{t_0, \dots, t_m \plus \{c:\{\bi{s} \plus X\} | \F @ \P\}\}
      = \{s_0, \dots, s_k \plus \{d:X | \G @ \Q\}\} \\
    \iff \why{$X = \{n \plus N\}$} \\
    \{t_0, \dots, t_m \plus \{c:\{\bi{s},n \plus N\} | \F @ \P\}\}
      = \{s_0, \dots, s_k \plus \{d:\{n \plus N\} | \G @ \Q\}\} \\
    \iff \why{properties of $\plus$ and intensional sets and $\F(n)$ holds} \\
    \{t_0, \dots, t_m \plus \{\P(n) \plus \{c:\{\bi{s} \plus N\} | \F @ \P\}\}\} \\
     \quad{} = \{s_0, \dots, s_k \plus \{d:N | \G @ \Q\}\}
        \cup \{d:\{n\} | \G @ \Q\} \\
    \iff \why{semantics of $\plus$} \\
    \{\P(n),t_0,\dots, t_m \plus \{c:\{\bi{s} \plus N\} | \F @ \P\}\}
      = \{s_0, \dots, s_k \plus \{d:N | \G @ \Q\}\}
        \cup \{d:\{n\} | \G @ \Q\} \\
    \iff \why{subtract $\{t_0\}$ at both sides} \\
    \{\P(n),t_0, \dots, t_m \plus \{c:\{\bi{s} \plus N\} | \F @ \P\}\}
      \setminus \{t_0\} \\
    \quad{} = (\{s_0, \dots, s_k \plus \{d:N | \G @ \Q\}\}
        \cup \{d:\{n\} | \G @ \Q\}) \setminus \{t_0\} \\
    \iff \why{$\setminus$ distributes over $\cup$} \\
    \{\P(n),t_0, \dots, t_m \plus \{c:\{\bi{s} \plus N\} | \F @ \P\}\}
       \setminus \{t_0\} \\
    \quad{} = (\{s_0, \dots, s_k \plus \{d:N | \G @ \Q\}\}
                 \setminus \{t_0\})
        \cup (\{d:\{n\} | \G @ \Q\} \setminus \{t_0\}) \\
    \iff \why{$(\dagger)$} \\
    \{\P(n),t_1, \dots, t_m \plus \{c:\{\bi{s} \plus N\} | \F @ \P\}\}
    = \{s_0, \dots, s_k \plus \{d:N | \G @ \Q\}\}
    \end{gather*}
The justification of $(\dagger)$ is the following:
\begin{enumerate}
\item $\{d:\{n\} | \G @ \Q\} = \{t_0\}$ \by{$\G(n) \land t_0 = \Q(n)$}, then $\{d:\{n\} | \G @ \Q\}  \setminus \{t_0\} = \e$.

\item $t_0 \neq \P(n)$ because if not then $t_0 = s_j$ which is in contraction with \eqref{l:special6:as1}.

\item $t_0 \notin \{c:N | \F @ \P\}$ because if it belongs then there exists $n' \in N$ such that $\F(n')$ holds and $t_0 = \P(n')$. But, $t_0 = \P(n') \iff t_0 = (n',f_\P(n'))$. On the other hand $t_0 = (n,f_\Q(n))$ and so $(n,f_\Q(n)) = (n',f_\P(n'))$ which implies that $n = n'$. But this is impossible because we have $n' \in N$ and $n \notin N$.

\item $\{\P(n),t_0, \dots, t_m \plus \{c:\{\bi{s} \plus N\} | \F @ \P\}\} \setminus \{t_0\} = \{\P(n),t_1, \dots, t_m \plus \{c:\{\bi{s} \plus N\} | \F @ \P\}\}$ \by{assumption: $t_0 \notin \{t_1,\dots, t_m\} \cup \{c:\bi{s} | \F @ \P\}$; by $t_0 \neq \P(n)$; and by $t_0 \notin \{c:N | \F @ \P\}$}.

\item $\{s_0, \dots, s_k \plus \{d:N | \G @ \Q\}\} \setminus \{t_0\} = \{s_0, \dots, s_k \plus \{d:N | \G @ \Q\}\}$ \by{$t_0 \notin \{s_0,\dots, s_k\}$ and $n \notin N$}.
\end{enumerate}
\item Assume $\lnot\F(n)$.

We have to prove that $\{t_1, \dots, t_m \plus \{c:\{\bi{s} \plus N\} | \F @
\P\}\} =
                \{s_0, \dots, s_k \plus \{d:N | \G @ \Q\}\}$.We proceed as in the first branch considered in this lemma. In fact, the first case is identical (note that there we used assumptions that are also available in this branch). In the second case, we take the same value for $N$ and proceed in the same way. The only difference is in the first item of the justification of $(\dagger)$ because now we have $\{c:\{n\} | \F @ \P\} = \e$ because $\lnot\F(n)$.
\end{itemize}

\mbox{}\\
\noindent $\Longleftarrow)$ \\
We will consider only the last three branches and will prove that the equality
holds.

Note that when $\F(n)$ holds then $\{c:\{n\} | \F @ \P\} = \{\P(n)\}$; and when $\F(n)$ does not hold then $\{c:\{n\} | \F @ \P\} = \e$.
\begin{itemize}
\item First branch.
  \begin{gather*}
  \{t_0, \dots, t_m \plus \{c:\{\bi{s} \plus X\} | \F @ \P\}\}
    \why{H} \\
  = \{t_0, \dots, t_m \plus \{c:\{\bi{s} \plus \{n \plus N\}\} | \F @ \P\}\}
    \why{semantics of $\plus$; property of intensional sets} \\
  = \{t_0\}
    \cup \{c:\{n\} | \F @ \P\}
    \cup \{t_1, \dots, t_m \plus \{c:\{\bi{s} \plus N\} | \F @ \P\}\}
     \why{H} \\
  = \{t_0\}
    \cup \{c:\{n\} | \F @ \P\}
    \cup \{s_0, \dots, s_k \plus \{d:N | \G @ \Q\}\}
    \why{$\{c:\{n\} | \F @ \P\} = \{t_0\}$} \\
  = \{t_0\}
    \cup \{s_0, \dots, s_k \plus \{d:N | \G @ \Q\}\}
    \why{semantics of $\plus$} \\
  = \{t_0,s_0, \dots, s_k \plus \{d:N | \G @ \Q\}\}
    \why{semantics of $\plus$} \\
  = \{s_0, \dots, s_k \plus \{t_0 \plus \{d:N | \G @ \Q\}\}\}
    \why{H, semantics of $\plus$ and property of intensional sets} \\
  = \{s_0, \dots, s_k \plus \{d:\{n \plus N\} | \G @ \Q\}\}
    \why{H} \\
  = \{s_0, \dots, s_k \plus \{d:X | \G @ \Q\}\}
  \end{gather*}
\item Second branch.
  \begin{gather*}
  \{t_0, \dots, t_m \plus \{c:\{\bi{s} \plus X\} | \F @ \P\}\}
    \why{semantics of $\plus$ and $X = \{n \plus N\}$} \\
  = \{t_0\}
    \cup \{t_1, \dots, t_m \plus \{c:\{\bi{s},n \plus N\} | \F @ \P\}\}
     \why{$\F(n)$ holds, properties of intensional sets} \\
  = \{t_0\}
    \cup \{t_1, \dots, t_m \plus \{\P(n) \plus \{c:\{\bi{s} \plus N\} | \F @ \P\}\}\}
     \why{semantics of $\plus$} \\
  = \{t_0\}
    \cup \{\P(n),t_1, \dots, t_m \plus \{c:\{\bi{s} \plus N\} | \F @ \P\}\}
     \why{H} \\
  = \{t_0\}
    \cup \{s_0, \dots, s_k \plus \{d:N | \G @ \Q\}\}
    \why{semantics of $\plus$} \\
  = \{t_0,s_0, \dots, s_k \plus \{d:N | \G @ \Q\}\}
    \why{semantics of $\plus$} \\
  = \{s_0, \dots, s_k \plus \{t_0 \plus \{d:N | \G @ \Q\}\}\}
    \why{H, semantics of $\plus$ and property of intensional sets} \\
  = \{s_0, \dots, s_k \plus \{d:\{n \plus N\} | \G @ \Q\}\}
    \why{H} \\
  = \{s_0, \dots, s_k \plus \{d:X | \G @ \Q\}\}
  \end{gather*}
\item Third branch.
  \begin{gather*}
  \{t_0, \dots, t_m \plus \{c:\{\bi{s} \plus X\} | \F @ \P\}\}
    \why{H} \\
  = \{t_0, \dots, t_m \plus \{c:\{\bi{s} \plus \{n \plus N\}\} | \F @ \P\}\}
    \why{semantics of $\plus$; property of intensional sets} \\
  = \{t_0\}
    \cup \{c:\{n\} | \F @ \P\}
    \cup \{t_1, \dots, t_m \plus \{c:\{\bi{s} \plus N\} | \F @ \P\}\}
     \why{H} \\
  = \{t_0\}
    \cup \{c:\{n\} | \F @ \P\}
    \cup \{s_0, \dots, s_k \plus \{d:N | \G @ \Q\}\}
    \why{$\{c:\{n\} | \F @ \P\} = \e$ by $\lnot\F(n)$} \\
  = \{t_0\}
    \cup \{s_0, \dots, s_k \plus \{d:N | \G @ \Q\}\}
    \why{semantics of $\plus$} \\
  = \{t_0,s_0, \dots, s_k \plus \{d:N | \G @ \Q\}\}
    \why{semantics of $\plus$} \\
  = \{s_0, \dots, s_k \plus \{t_0 \plus \{d:N | \G @ \Q\}\}\}
    \why{H, semantics of $\plus$ and property of intensional sets} \\
  = \{s_0, \dots, s_k \plus \{d:\{n \plus N\} | \G @ \Q\}\}
    \why{H} \\
  = \{s_0, \dots, s_k \plus \{d:X | \G @ \Q\}\}
  \end{gather*}
\end{itemize}
\end{proof}

%%%%%%%%%%%%%%%%%
The equivalence of rule \eqref{s1=s2} has been proved previously for any finite
set (extensional or intensional) \cite{Dovier00}.

\begin{lemma}[Equivalence of rule \eqref{app.e:empty2}]
We consider only the following case because the others covered by this rule are
similar.
\begin{gather*}
\begin{split}
\forall d, & D: \\
           & \begin{split}
                & \{x:\{d \plus D\} | \Fpv @ \Ppv\} = \e \\
                & \iff \lnot \Fd \land \{x:D | \Fpv @ \Ppv\} = \e
             \end{split}
\end{split}
\end{gather*}
\end{lemma}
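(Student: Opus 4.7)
The plan is to reduce the claim to Proposition~\ref{l:dD}, which already decomposes a RIS over an extensional domain of the form $\{d \plus D\}$ into the union of its ``head contribution'' and its ``tail contribution''. Once that decomposition is in place, the equivalence will follow from the elementary fact that a union of two sets is empty iff both sets are empty, together with the observation that the head contribution $\{\Pd : \Fd\}$ collapses to $\emptyset$ precisely when $\lnot\Fd$ holds.

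Concretely, I would first apply Proposition~\ref{l:dD} to rewrite the left-hand side as
\[
 \{\Pd : \Fd\} \cup \{\Pz : x \in D \land \Fz\} = \e.
\]
A union of two sets equals $\emptyset$ iff each operand equals $\emptyset$, so this is equivalent to the conjunction
\[
 \{\Pd : \Fd\} = \e \;\land\; \{\Pz : x \in D \land \Fz\} = \e.
\]
The second conjunct is, by unfolding the intensional-set notation and the definition of a RIS, exactly $\{x : D \mid \Fpv @ \Ppv\} = \e$. For the first conjunct, observe that $\{\Pd : \Fd\}$ is by definition $\{y : \Fd \land y = \Pd\}$: if $\Fd$ holds this set is the singleton $\{\Pd\}$ and hence nonempty, while if $\lnot\Fd$ holds the defining condition is unsatisfiable and the set is empty. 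Thus $\{\Pd : \Fd\} = \e \iff \lnot\Fd$, and the two directions of the desired equivalence are both established.

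The proof is essentially routine once Proposition~\ref{l:dD} is invoked, so there is no real obstacle; the only subtlety worth highlighting is that the equivalence holds pointwise in every valuation of the free variables of $\Fpv$ and $\Ppv$ and of any variables occurring in $D$, because Proposition~\ref{l:dD} is stated as an equality of sets for \emph{every} choice of $d$ and $D$ and its proof does not depend on the interpretation of the filter or pattern. Hence the chain of equivalences above can be read uniformly in $\iS$, yielding the required equisatisfiability.
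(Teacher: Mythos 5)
Your proposal is correct and follows essentially the same route as the paper's proof: both decompose the RIS via Proposition~\ref{l:dD}, use that a union is empty iff both operands are empty, and observe that $\{\Pd : \Fd\} = \e \iff \lnot\Fd$. Your write-up merely makes explicit a couple of steps the paper leaves implicit.
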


\begin{proof}
Taking any $d$ and $D$ we have:
\begin{gather*}
\{x:\{d \plus D\} | \F @ \P\} = \e  \why{Prop. \ref{l:dD}} \\
\iff \{\Pd : \Fd\} \cup \{\Pz : x \in D \land \Fz\} = \e \\
\iff \{\Pd : \Fd\} = \e \land \{\Pz : x \in D \land \Fz\} = \e \\
\iff \lnot \Fd \land \{x:D | \F @ \P\} = \e
\end{gather*}
\end{proof}

\begin{lemma}[Equivalence of rule \eqref{app.e:eset}]\label{th:eset}
\begin{gather*}
\begin{split}
\forall d, & D, B: \\
           & \begin{split}
             & \{x:\{d \plus D\} | \Fpv @ \Ppv\} = B  \\
             & \begin{split}
               \iff & \Fd \land \{\Pd \plus \{x:D | \Fpv @ \Ppv\}\}
                                = B \\
                    & \lor \lnot \Fd \land \{x:D | \Fpv @ \Ppv\} = B
               \end{split}
\end{split}
\end{split}
\end{gather*}
\end{lemma}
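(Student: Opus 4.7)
The plan is to prove this lemma by a straightforward case analysis on whether $\Fd$ holds, using Proposition \ref{l:dD} as the key stepping stone. Since Proposition \ref{l:dD} already gives us
\[
\{x:\{d \plus D\} | \Fpv @ \Ppv\} = \{\Pd | \Fd\} \cup \{\Pz | x \in D \land \Fz\},
\]
and the second summand is just the semantic unfolding of $\{x:D | \Fpv @ \Ppv\}$, the whole argument reduces to simplifying the singleton-or-empty set $\{\Pd | \Fd\}$ according to the truth value of $\Fd$.

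For the $\Longrightarrow$ direction, I would take arbitrary $d, D, B$ and assume $\{x:\{d \plus D\} | \Fpv @ \Ppv\} = B$. By Proposition \ref{l:dD}, this equals $\{\Pd | \Fd\} \cup \{x:D | \Fpv @ \Ppv\} = B$. Then I would split on $\Fd$: if $\Fd$ holds, then $\{\Pd | \Fd\} = \{\Pd\}$, so the equation becomes $\{\Pd\} \cup \{x:D | \Fpv @ \Ppv\} = B$, which by the semantics of $\plus$ is exactly $\{\Pd \plus \{x:D | \Fpv @ \Ppv\}\} = B$, yielding the left disjunct. If $\lnot \Fd$, then $\{\Pd | \Fd\} = \e$ and the equation simplifies to $\{x:D | \Fpv @ \Ppv\} = B$, yielding the right disjunct.

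For the $\Longleftarrow$ direction, I would simply reverse each of the two cases: assuming the left disjunct, $\Fd$ together with $\{\Pd \plus \{x:D | \Fpv @ \Ppv\}\} = B$ gives $\{\Pd\} \cup \{x:D | \Fpv @ \Ppv\} = B$, and since $\{\Pd | \Fd\} = \{\Pd\}$ under $\Fd$, Proposition \ref{l:dD} lets us rewrite the left-hand side as $\{x:\{d \plus D\} | \Fpv @ \Ppv\}$. The case of the right disjunct is analogous, with $\{\Pd | \Fd\}$ collapsing to $\e$ under $\lnot \Fd$.

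There is no real obstacle here: the whole lemma is essentially the observation that the semantic decomposition supplied by Proposition \ref{l:dD} is equivalent to a syntactic case split on the filter $\Fd$. The only point requiring a bit of care is treating $\{\Pd | \Fd\}$ as a ``conditional singleton'' (equal to $\{\Pd\}$ or $\e$) and making explicit the set-theoretic identity $\{t\} \cup S = \{t \plus S\}$ to match the syntactic form used in the right-hand side of the rule. All other steps are pure rewriting justified by set extensionality and the semantics of the constructors $\ww$ and $\{\cdot : \cdot | \cdot\, @\, \cdot\}$ as given in Definition \ref{def:int_funct}.
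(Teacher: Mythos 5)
Your proof is correct and follows essentially the same route as the paper: both decompose the RIS via Proposition \ref{l:dD}, case-split on $\Fd$ to collapse the conditional singleton $\{\Pd : \Fd\}$ to either $\{\Pd\}$ or $\e$, and then use the identity $\{t\} \cup S = \{t \plus S\}$ to recover the syntactic form of the rule. The paper merely presents the two directions as a single chain of $\iff$ rewrites rather than splitting them, which is an inessential difference.
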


\begin{proof}
Taking any $d$, $D$ and $B$ we have:
\begin{gather*}
\{x:\{d \plus D\} | \Fpv @ \Ppv\} = B \why{Prop. \ref{l:dD}} \\
\iff \{\Pd : \Fd\} \cup \{\Pz : x \in D \land \Fz\} = B \\
\iff \{\Pd : \Fd\} \cup \{x:D | \Fpv @ \Ppv\} = B
\end{gather*}
Now assume $\Fd$, then:
\begin{gather*}
\{\Pd : \Fd\} \cup \{x:D | \Fpv @ \Ppv\} = B \\
\iff \{\Pd\} \cup \{x:D | \Fpv @ \Ppv\} = B \why{semantics of $\plus$} \\
\iff \{\Pd \plus \{x:D | \Fpv @ \Ppv\}\} = B
\end{gather*}
Now assume $\lnot \Fd$, then:
\begin{gather*}
\{\Pd : \Fd\} \cup \{x:D | \Fpv @ \Ppv\} = B \\
\e \cup \{x:D | \Fpv @ \Ppv\} = B \\
\{x:D | \Fpv @ \Ppv\} = B
\end{gather*}
which finishes the proof.
\end{proof}

\begin{remark}
Note that in Theorem \ref{th:eset} when $B$ is:
\begin{itemize}
\item An extensional set of the form $\{y \plus A\}$, then the equality in
the first disjunct becomes an equality between two extensional sets:
\begin{equation*}
\{\Pd \plus \{x:D | \Fpv @ \Ppv\}\} = \{y \plus A\}
\end{equation*}
which is solved by the rules described in \cite{Dovier00}. In turn, the
equality in the second disjunct becomes an equality between a RIS and an
extensional set:
\begin{equation*}
\{x:D | \Fpv @ \Ppv\} = \{y \plus A\}
\end{equation*}
This equality is managed by either rule \eqref{app.e:eset} itself (if $D$ is
not a variable) or by rule \eqref{app.e:v=e} (if $D$ is a variable).

\item A non-variable RIS of the form $\{x:\{e \plus E\} | \Gpv @ \Qpv\}$,
then the equality in the first disjunct becomes an equality between an
extensional set and a non-variable RIS:
\begin{equation*}
\{\Pd \plus \{x:D | \Fpv @ \Ppv\}\} = \{x:\{e \plus E\} | \Gpv @ \Qpv\}
\end{equation*}
which is managed again by rule \eqref{app.e:eset}. In turn, the equality in the
second disjunct becomes an equality between a RIS and a non-variable RIS:
\begin{equation*}
\{x:D | \Fpv @ \Ppv\} = \{x:\{e \plus E\} | \Gpv @ \Qpv\}
\end{equation*}
which is managed by the same rule once more.

Moreover, note that in this case there can be up to four cases (and thus up to
four solutions) considering all the possible combinations of truth values of
$\Fpv$ and $\Gpv$

\item A variable RIS of the form $\{x:E | \Gpv @ \Qpv\}$, then the equality
in the first disjunct becomes an equality between an extensional set and a
variable RIS:
\begin{equation*}
\{\Pd \plus \{x:D | \Fpv @ \Ppv\}\} = \{x:E | \Gpv @ \Qpv\}
\end{equation*}
which is managed by rule \eqref{app.e:v=e}. In turn, the equality in the second
disjunct becomes an equality between a RIS and a variable RIS:
\begin{equation*}
\{x:D | \Fpv @ \Ppv\} = \{x:E | \Gpv @ \Qpv\}
\end{equation*}
which is no further processed (if $D$ is a variable) or is processed by rule
\eqref{app.e:eset} again (if $D$ is not a variable).
\end{itemize}
\qed
\end{remark}

For the following rule we only consider the case where $S \equiv \{c:\bi{X} |
\F @ \P\}$ because the other one covered by the rule corresponds to results
presented in \cite{Dovier00}. In this and the following lemma we will write
$\{\bi{s} \plus X\}$ instead of $\bi{X}$, where $\bi{s}$ denotes zero or more
elements; if $\bi{s}$ denotes zero elements then $\{\bi{s} \plus X\}$ is just
$X$.

\begin{lemma}[Equivalence of rule \eqref{e:v=e2}]
If $c \equiv \P$ and $d \equiv \Q$:
\begin{gather*}
\forall X,t_0,t_1,\dots,t_k: \\
\quad \{d: X  | \G @ \Q\}
      = \{t_0,t_1,\dots,t_k \plus
            \{c:\{\bi{s} \plus X\} | \F @ \P\}\} \\
\quad{}\iff X = \{t_0 \plus N\} \land \Gpv(t_0)
            \land \{d:N | \G @ \Q\}
                  = \{t_1,\dots,t_k \plus
                       \{c:\{\bi{s} \plus N\} | \F @ \P\}\}
\end{gather*}
\end{lemma}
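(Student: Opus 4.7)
The key simplification is that, because $c \equiv \P$ and $d \equiv \Q$, both RIS reduce to plain filter-subsets of their domains:
\[
\{d:X | \G @ \Q\} = \{x \in X : \G(x)\} \quad\text{and}\quad \{c:\{\bi{s} \plus X\} | \F @ \P\} = \{x \in \{\bi{s} \plus X\} : \F(x)\}.
\]
Let me abbreviate these two sets as $Y_G$ and $Y_F$ respectively. The original equation then reads $Y_G = \{t_0,t_1,\dots,t_k\} \cup Y_F$, and the right-hand side of the rule analogously asserts $\{x \in N : \G(x)\} = \{t_1,\dots,t_k\} \cup \{x \in \{\bi{s} \plus N\} : \F(x)\}$ together with $X = \{t_0 \plus N\}$ and $\G(t_0)$.

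For the forward direction, $t_0$ belongs to the right-hand side, hence to $Y_G$, so $t_0 \in X$ and $\G(t_0)$ hold. I would then split on three mutually exhaustive cases. \emph{(i)} If $t_0 \in \{t_1,\dots,t_k\}$, take $N := X$: then absorption gives $\{t_1,\dots,t_k\} \cup Y_F = \{t_0,t_1,\dots,t_k\} \cup Y_F = Y_G$, which is exactly $\{d:N | \G @ \Q\}$. \emph{(ii)} If $t_0 \notin \{t_1,\dots,t_k\}$ but $\F(t_0)$ holds, take $N := X$ again: since $t_0 \in X \subseteq \{\bi{s} \plus X\}$ and $\F(t_0)$, we have $t_0 \in Y_F$ and the same absorption argument concludes. \emph{(iii)} Otherwise $t_0 \notin \{t_1,\dots,t_k\}$ and $\lnot\F(t_0)$; take $N := X \setminus \{t_0\}$ so that $X = \{t_0 \plus N\}$ and $t_0 \notin N$. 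Then $\{d:N | \G @ \Q\} = Y_G \setminus \{t_0\}$, while $\{c:\{\bi{s} \plus N\} | \F @ \P\} = Y_F$ because $\lnot\F(t_0)$ keeps $t_0$ out of $Y_F$ regardless of whether $t_0 \in \{\bi{s}\}$. Subtracting $\{t_0\}$ from both sides of $Y_G = \{t_0,\dots,t_k\} \cup Y_F$ then yields the desired recursive equation.

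For the backward direction, assume $X = \{t_0 \plus N\}$, $\G(t_0)$, and $\{d:N | \G @ \Q\} = \{t_1,\dots,t_k \plus \{c:\{\bi{s} \plus N\} | \F @ \P\}\}$. I would add $\{t_0\}$ to both sides. On the left, Proposition~\ref{l:dD} together with $\G(t_0)$ gives $\{t_0\} \cup \{d:N | \G @ \Q\} = \{d:\{t_0 \plus N\} | \G @ \Q\} = \{d:X | \G @ \Q\}$. On the right, one splits on $\F(t_0)$: if $\F(t_0)$ holds, $\{c:\{\bi{s} \plus X\} | \F @ \P\} = \{t_0\} \cup \{c:\{\bi{s} \plus N\} | \F @ \P\}$ by Proposition~\ref{l:dD}, so adding $t_0$ to $\{t_0,\dots,t_k\} \cup \{c:\{\bi{s} \plus N\} | \F @ \P\}$ is absorbed; if $\lnot\F(t_0)$, then $\{c:\{\bi{s} \plus X\} | \F @ \P\} = \{c:\{\bi{s} \plus N\} | \F @ \P\}$ directly, and the extra $t_0$ merges into $\{t_0,t_1,\dots,t_k\}$.

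The main obstacle is the careful case analysis in the forward direction for choosing $N$: the choice $N=X$ is needed when $t_0$ is ``absorbed'' elsewhere (either in the $t_i$ or in $Y_F$), while $N = X\setminus\{t_0\}$ is needed otherwise; mixing these up produces an equation that can be structurally identical to the input (causing non-termination in the associated rewriting, as already noted for rule~\eqref{app.e:v=e} in the excerpt). The assumption $c \equiv \P$, $d \equiv \Q$ is essential throughout because it collapses the RIS to ordinary filter-subsets, so that the bijectivity/co-injectivity conditions discussed in Section~\ref{sec:controlPatterns} are automatic and membership of $t_0$ in each RIS is equivalent to a simple conjunction on the domain and filter.
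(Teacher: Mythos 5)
Your proof is correct and follows essentially the same route as the paper's: reduce both RIS to filter-subsets of their domains (using $c\equiv\P$, $d\equiv\Q$), prove the forward direction by a case analysis that picks $N=X$ when $t_0$ is absorbed elsewhere and $N=X\setminus\{t_0\}$ otherwise, and prove the backward direction by unfolding $X=\{t_0\plus N\}$ and splitting on $\F(t_0)$. Your three-way case split (on $t_0\in\{t_1,\dots,t_k\}$ first, then on $\F(t_0)$) is in fact organized a bit more cleanly than the paper's two-way split, since it avoids having the subcase ``$t_0\in\{t_1,\dots,t_k\}$ with $\lnot\F(t_0)$'' fall into the $N=X\setminus\{t_0\}$ branch, where the double-inclusion argument would need $t_0\notin\{t_1,\dots,t_k\}$.
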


\begin{proof}
In order to simplify the proof we will not write $\P$ and $\Q$ in the
intensional sets because $c \equiv \P$ and $d \equiv \Q$.

\mbox{}\\
\noindent $\implies)$ \\
Clearly, $t_0$ must belong to $\{d: X  | \G\}$, then $t_0 \in X$ and $\G(t_0)$ holds.

If $t_0 \in \{t_1,\dots,t_k\} \cup \{\bi{s}\} \land \F(t_0)$ we take $N = X$.
In this case we have: $\{t_0 \plus N\} = \{t_0 \plus X\} = X$ because $t_0 \in
X$. Now we prove the the last conjunct:
\begin{itemize}
  \item $t_0 \in \{t_1,\dots,t_k\}$
\begin{gather*}
\{t_1,\dots,t_k \plus \{c:\{\bi{s} \plus N\} | \F\}\}
  \why{$\{t_0,t_1,\dots,t_k\} = \{t_1,\dots,t_k\}$ and $X = N$} \\
= \{t_0,t_1,\dots,t_k \plus \{c:\{\bi{s} \plus X\} | \F\}\}
  \why{H} \\
= \{d: X  | \G\}
  \why{$X = N$} \\
= \{d: N  | \G\}
\end{gather*}
\item $t_0 \in \{\bi{s}\}$. Note that in this case $\{c:\{\bi{s} \plus N\} | \F\} =
  \{t_0 \plus \{c:\{\bi{s} \plus N\} | \F\}\}$ because $\F(t_0)$ holds by assumption.
\begin{gather*}
\{t_1,\dots,t_k \plus \{c:\{\bi{s} \plus N\} | \F\}\}
  \why{$\{\bi{s}\} = \{t_0,\bi{s}\}$, $\F(t_0)$ holds and $X = N$} \\
= \{t_0,t_1,\dots,t_k \plus
    \{t_0 \plus \{c:\{\bi{s} \plus N\} | \F\}\}\}
  \why{H} \\
= \{t_0,t_1,\dots,t_k \plus
    \{c:\{\bi{s} \plus N\} | \F\}\}
  \why{semantics of $\plus$} \\
= \{d: X  | \G\}
  \why{$X = N$} \\
= \{d: N  | \G\}
\end{gather*}
\end{itemize}

If $t_0 \notin \{t_1,\dots,t_k\} \cup \{\bi{s}\} \lor \lnot\F(t_0)$ we take $N$
to be $X \setminus \{t_0\}$. Then, $X = \{t_0 \plus N\}$. Besides, $t_0 \notin
N$. Now we will prove the last conjunct by double inclusion.

Take any $x \in \{d:N | \G\}$, then $x \in N$ and $\G(x)$ holds. Hence, $x \in
\{d: X  | \G\}$ because $N \subseteq X$ \by{construction}. Now $x \in
\{t_0,t_1,\dots,t_k \plus \{c:\{\bi{s} \plus X\} | \F\}\}$ \by{H}. If $x = t_0$
then there is a contradiction with the fact that $x \in N$ because $t_0 \notin
N$. So $x$ cannot be $t_0$. If $x = t_i$ ($i \neq 0$), then $x$ trivially
belongs to $\{t_1,\dots,t_k \plus \{c:\{\bi{s} \plus N\} | \F\}$. Finally, if
$x \in \{c:\{\bi{s} \plus X\} | \F\}$, then $\F(x)$ holds. But $x \in N$ so $x
\in \{c:\{\bi{s} \plus N\} | \F\}$.

Now the other inclusion. Take any $x \in \{t_1,\dots,t_k \plus \{c:\{\bi{s}
\plus N\} | \F\}\}$. If $x = t_i$, then $x \in \{t_1,\dots,t_k \plus
\{c:\{\bi{s} \plus X\} | \F\}\}$ and so $x \in \{d: X  | \G\}$ \by{H}, which
implies $x \in X$ and $\G(x)$ holds. Given that $x \neq t_0$ \by{$t_0 \notin
\{t_1,\dots,t_k\}$} and $x \in X$, then $x \in N$ \by{construction}. Since $x
\in N$ and $\G(x)$ holds, then $x \in \{d:N | \G\}$. If $x \in \{c:\{\bi{s}
\plus N\} | \F\}$ then $\F(x)$ holds and $x \neq t_0$ \by{assumption}. Now $x
\in \{t_0,\dots,t_k \plus \{c:\{\bi{s} \plus X\} | \F\}\}$ \by{$N \subseteq
X$}. Then, by H, $x \in \{d: X | \G\}$ which means that $x \in X$ and $\G(x)$
holds. Since $x \neq t_0$, then $x \in X$ implies $x \in N$. Then $x \in \{d:N
| \G\}$.

\bigskip\noindent $\Longleftarrow)$ \\
\begin{gather*}
\{t_0,t_1,\dots,t_k \plus \{c:\{\bi{s} \plus X\} | \F\}\}
  \why{H} \\
= \{t_0,t_1,\dots,t_k \plus \{c:\{\bi{s} \plus \{t_0 \plus N\}\} | \F\}\}
  \why{semantics of $\plus$} \\
= \{t_0,t_1,\dots,t_k \plus \{c:\{\bi{s},t_0 \plus N\} | \F\}\}
  \why{properties of intensional sets and semantics $\plus$} \\
= \{t_0,t_1,\dots,t_k \plus \{c:\{\bi{s} \plus N\} | \F\}\} \cup \{c:\{t_0\} | \F\}
  \why{semantics of $\plus$} \\
= \{t_0\} \cup \{t_1,\dots,t_k \plus \{c:\{\bi{s} \plus N\} | \F\}\} \cup \{c:\{t_0\} | \F\}
  \why{$\{c:\{t_0\} | \F\}  \subseteq \{t_0\}$} \\
= \{t_0\} \cup \{t_1,\dots,t_k \plus \{c:\{\bi{s} \plus N\} | \F\}\}
  \why{H} \\
= \{t_0\} \cup \{d: N | \G\}
  \why{semantics $\plus$} \\
= \{t_0 \plus \{d: N | \G\}
  \why{H, semantics of $\plus$ and properties of intensional sets} \\
= \{d:\{t_0 \plus N\} | \G\}
  \why{H} \\
= \{d: X | \G\}
\end{gather*}
\end{proof}

\begin{lemma}[Equivalence of rule \eqref{special4}]
If $c \not\equiv \P$ and $d \not\equiv \Q$:
\begin{gather*}
\forall X, t_0,t_1,\dots,t_k: \\
\quad \{d:X  | \G @ \Q\}
      = \{t_0,t_1,\dots,t_k \plus
           \{c:\{\bi{s} \plus X\} | \F @ \P\}\} \\
\quad{}\iff X = \{n \plus N\} \land \Gpv(n) \land t_0 = \Qpv(n)
            \land (\F(n) \implies t_0 = \P(n)) \\
            \qquad\quad{}
            \land \{d:N | \G @ \Q\}
                  = \{t_1,\dots,t_k \plus
                       \{c:\{\bi{s} \plus N\} | \F @ \P\}\}
\end{gather*}
\end{lemma}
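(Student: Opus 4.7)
The plan is to mimic the structure of the proof of rule \eqref{e:v=e2}, but with two modifications needed because now $c \not\equiv \P$ and $d \not\equiv \Q$: (i) by Definition \ref{RIS-terms} the patterns must be of the form $\P(x) = (x, f_\P(x))$ and $\Q(x) = (x, f_\Q(x))$ for $\Ur$-terms $f_\P, f_\Q$, and these patterns are bijective and pairwise co-injective; (ii) the presence of $\P$ in the tail RIS may force $\P(n)$ to also appear on the left, which is exactly what the new condition $(\lnot\F(n) \lor t_0 = \P(n))$ controls.

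For the forward direction, I would start from the observation that $t_0$ must belong to the LHS (since equality with the RHS holds and $t_0$ is on the RHS), so there exists $n \in X$ with $\G(n)$ and $t_0 = \Q(n)$. To justify $\F(n) \Longrightarrow t_0 = \P(n)$, note that if $\F(n)$ holds then $\P(n) \in \{c:\{\bi{s}\plus X\} | \F @ \P\}$, hence $\P(n)$ lies in the RHS, hence in the LHS; so $\P(n) = \Q(m)$ for some $m \in X$ with $\G(m)$. Pairwise co-injectivity forces $n = m$ and thus $\P(n) = \Q(n) = t_0$. To exhibit $N$ with $X = \{n \plus N\}$ I would split, exactly as in the previous lemma, on whether $t_0 \in \{t_1,\dots,t_k\}\cup\{\bi{s}\}$ or not: in the former case take $N = X$, in the latter take $N = X \setminus \{n\}$. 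The remaining equality $\{d:N | \G @ \Q\} = \{t_1,\dots,t_k \plus \{c:\{\bi{s}\plus N\} | \F @ \P\}\}$ follows by substituting $X = \{n \plus N\}$ into the hypothesis, separating $\{\P(n) : \F(n)\}$ and $\{\Q(n) : \G(n)\} = \{t_0\}$ via the distributivity of $\plus$ over intensional sets, and then removing $\{t_0\}$ from both sides; pairwise co-injectivity together with $n \notin N$ guarantees $t_0 \notin \{c:N | \F @ \P\}$, so the subtraction behaves as expected.

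For the backward direction, I would chain the equalities in the style used for the last three branches of special6. Starting from the RHS, I would substitute $X = \{n \plus N\}$, split $\{c:\{\bi{s},n \plus N\} | \F @ \P\}$ as $\{c:\{n\} | \F @ \P\} \cup \{c:\{\bi{s}\plus N\} | \F @ \P\}$, and use the hypothesis $\F(n) \Longrightarrow t_0 = \P(n)$ to collapse $\{c:\{n\} | \F @ \P\}$ either into $\{t_0\}$ (which is absorbed by the leading $t_0$) or into $\e$. Then the recursive equality on $N$ completes the chain, together with the fact that $\{t_0 \plus \{d:N | \G @ \Q\}\} = \{d:\{n \plus N\} | \G @ \Q\} = \{d:X | \G @ \Q\}$ since $\G(n)$ and $t_0 = \Q(n)$.

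The main obstacle will be the bookkeeping when $\F(n)$ holds and $t_0 = \P(n)$ but $n$ might coincide with some element already named among $\{t_1,\dots,t_k\}\cup\{\bi{s}\}$: this is handled by the two-case choice of $N$, exactly as in the proofs of rules \eqref{e:v=e2} and \eqref{special6}. The only genuinely new ingredient, compared with those proofs, is the use of pairwise co-injectivity to derive the condition $(\lnot\F(n) \lor t_0 = \P(n))$; this property is guaranteed syntactically by the shape of patterns allowed in Definition \ref{RIS-terms}, which makes the argument above sound without further assumptions on the underlying theory $\Ur$.
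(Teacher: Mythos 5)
Your proposal matches the paper's proof in all essentials: the forward direction extracts $n \in X$ with $\Gpv(n)$ and $t_0 = \Qpv(n)$, derives $\F(n) \Rightarrow t_0 = \P(n)$ from the pair-shaped (hence pairwise co-injective) patterns exactly as the paper does via $f_\P(n)=f_\Q(n)$, chooses $N = X$ or $N = X\setminus\{n\}$ by a case split, and the backward direction is the same chain of rewritings using the $(\dagger)$-style collapse of $\{c:\{n\}\mid\F @ \P\}$ into $\{t_0\}$ or $\e$. The only slip is that the case split must be on whether $t_0 \in \{t_1,\dots,t_k\} \cup \{c:\{\bi{s}\}\mid\F @ \P\}$ (the \emph{image} of $\bi{s}$ under the RIS, since $t_0$ is a pair while the $\bi{s}$ are domain elements), not on $t_0 \in \{t_1,\dots,t_k\}\cup\{\bi{s}\}$; with that correction the argument goes through, whether one finishes the residual equality by the paper's double inclusion or by your subtraction of $\{t_0\}$, both of which rest on the same fact $t_0 \notin \{c:N\mid\F @ \P\}$.
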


\begin{proof}
In order to prove this lemma we need to recall that $\P$ and $\Q$ are ordered
pairs whose first component is the control variable and whose second component
is an $\Ur$-term. Then, from now on we will take $\P(x) = (x,f_\P(x))$ for some
term $f_\P$ and $\Q(x) = (x,f_\Q(x))$, for some term $f_\Q$.
\mbox{}\\
\noindent $\implies)$ \\
Clearly, $t_0$ must belong to $\{d:X  | \G @ \Q\}$, which means that there
exists $n \in X$, $\G(n)$ holds and $t_0 = \Q(n)$. Observe that $t_0 = \Q(n)$
means that $t_0 = (n,f_\Q(n))$.

Now will we prove that $\F(n) \implies t_0 = \P(n)$. So assume, $\F(n)$ holds
but $t_0 \neq \P(n)$:
\begin{gather*}
\F(n) \land t_0 \neq \P(n)
  \why{$n \in X$ and definitions of $t_0$ and $\P$} \\
\implies (n,f_\P(n)) \in \{c:\{\bi{s} \plus X\} | \F @ \P\}
         \land (n,f_\Q(n)) \neq (n,f_\P(n))
         \why{H: $\{c:\{\bi{s} \plus X\} | \F @ \P\} \subseteq \{d:X  | \G @ \Q\}$} \\
\implies (n,f_\P(n)) \in \{d:X  | \G @ \Q\}
         \land f_\Q(n) \neq f_\P(n)
         \why{semantics intensional sets and def. of $\Q$}\\
\implies \G(n) \land (n,f_\P(n)) = (n,f_\Q(n))
         \land f_\Q(n) \neq f_\P(n)
         \why{drop $\G(n)$ and equality of pairs} \\
\implies f_\P(n) = f_\Q(n) \land f_\Q(n) \neq f_\P(n)
\end{gather*}
which is clearly a contradiction. Hence, if $\F(n)$ holds then $t_0 = \P(n)$.

Finally, we will prove $\{d:N | \G @ \Q\} = \{t_1,\dots,t_k \plus \{c:\{\bi{s}
\plus N\} | \F @ \P\}\}$ by distinguishing two cases. In the first case, we
assume $t_0 \in \{t_1,\dots,t_k\} \cup \{c:\{\bi{s}\} | \F @ \P\}$, and so we
take $N = X$ which guarantees that $X = \{n \plus N\}$ \by{$X = N$ and $n \in
X$}. Now we prove the set equality (last conjunct):
\begin{itemize}
  \item $t_0 \in \{t_1,\dots,t_k\}$
\begin{gather*}
\{d:N  | \G @ \Q\}
  \why{$N = X$} \\
= \{d:X | \G @ \Q\}
  \why{H} \\
= \{t_0,t_1,\dots,t_k \plus \{c:\{\bi{s} \plus X\} | \F @ \P\}\}
  \why{$t_0 \in \{t_1,\dots,t_k\}$, left absorption} \\
= \{t_1,\dots,t_k \plus \{c:\{\bi{s} \plus X\} | \F @ \P\}\}
  \why{$X = N$} \\
= \{t_1,\dots,t_k \plus \{c:\{\bi{s} \plus N\} | \F @ \P\}\}
\end{gather*}
  \item $t_0 \in \{c:\{\bi{s}\} | \F @ \P\}$. Note that in this case
  $\{c:\{\bi{s} \plus X\} | \F @ \P\} = \{t_0 \plus \{c:\{\bi{s} \plus X\} | \F @ \P\}\}$.
\begin{gather*}
\{d:N  | \G @ \Q\}
  \why{$N = X$} \\
= \{d:X  | \G @ \Q\}
  \why{H} \\
= \{t_0,t_1,\dots,t_k \plus \{c:\{\bi{s} \plus X\} | \F @ \P\}\}
  \why{semantics of $\plus$} \\
= \{t_1,\dots,t_k \plus \{t_0 \plus \{c:\{\bi{s} \plus X\} | \F @ \P\}\}\}
  \why{$t_0 \in \{c:\{\bi{s}\} | \F @ \P\}$} \\
= \{t_1,\dots,t_k \plus \{c:\{\bi{s} \plus X\} | \F @ \P\}\}
  \why{$X = N$} \\
= \{t_1,\dots,t_k \plus \{c:\{\bi{s} \plus N\} | \F @ \P\}\}
\end{gather*}
\end{itemize}

In the second case, we assume $t_0 \notin \{t_1,\dots,t_k\} \cup \{c:\{\bi{s}\}
| \F @ \P\}$, and so we take $N = X \setminus \{n\}$. This means that $X = \{n
\plus N\}$ \by{$n \in X$} and $n \notin N$.

Now we proceed by double inclusion. Take any $(x,f_\Q(x)) \in \{d:N | \G @
\Q\}$, then $x \in N$ and $\G(x)$ holds. As $x \in N$ then $x \neq n$ and so
$(x,f_\Q(x)) \neq (n,f_\Q(n)) = t_0$. Given that $N \subseteq X$, then
$(x,f_\Q(x)) \in \{d:X | \G @ \Q\}$ and so $(x,f_\Q(x)) \in \{t_0,t_1,\dots,t_k
\plus \{c:\{\bi{s} \plus X\} | \F @ \P\}\}$ \by{H}. Given that $(x,f_\Q(x))
\neq t_0$, then $(x,f_\Q(x)) \in \{t_1,\dots,t_k \plus \{c:\{\bi{s} \plus X\} |
\F @ \P\}\}$. But $x \neq n$ and so if $(x,f_\Q(x)) \in \{c:\{\bi{s} \plus X\}
| \F @ \P\}$ then it actually belongs to $\{c:\{\bi{s} \plus N\} | \F @ \P\}$.
Hence $(x,f_\Q(x)) \in \{t_1,\dots,t_k \plus \{c:\{\bi{s} \plus N\} | \F @
\P\}\}$.

Now the other inclusion. Take any $x \in \{t_1,\dots,t_k \plus \{c:\{\bi{s}
\plus N\} | \F @ \P\}\}$. If $x = t_i$ (for some $i \neq 0$) then $x \neq t_0$,
and $x \in \{d:X  | \G @ \Q\}$ \by{H}. So we have $n' \in X$ such that $\G(n')$
holds and $x = (n',f_\Q(n'))$. Now if $n' = n$ then we have $x = (n',f_\Q(n'))
= (n,f_\Q(n)) = t_0$, which is a contradiction with $x \neq t_0$. Hence, $n'$
must be different from $n$ and so $n' \in N$ \by{$n' \in X = \{n \plus N\}$
\by{construction} and $n' \neq n$}. Since $n' \in N$ and $\G(n')$ holds we have
$x = (n',f_\Q(n')) \in \{d:N | \G @ \Q\}$. Finally, if $x \in \{c:\{\bi{s}
\plus N\} | \F @ \P\}$ then $x = (n',f_\P(n'))$ with $n' \in \{\bi{s} \plus
N\}$ and $\F(n')$ holds. If $n' = n$ then $n' \in \{\bi{s}\}$ because $n \notin
N$, and in this case we would have $t_0 \in \{c:\{\bi{s}\} | \F @ \P\}$ because
$\F(n')$ holds, but this a contradiction with the assumption. Hence $n' \neq
n$. On the other hand, $x \in \{c:\{\bi{s} \plus N\} | \F @ \P\}$  implies that
$x = (n',f_\P(n')) \in \{c:\{\bi{s} \plus N\} | \F @ \P\}$ \by{$N \subseteq
X$}, which in turn implies $x = (n',f_\P(n')) \in \{d:X | \G @ \Q\}$ \by{H}.
Then $n' \in X$ and $\G(n')$ holds. Now $n' \in X$ iff $n' \in \{n \plus N\}$
\by{construction}, but we have proved that $n' \neq n$, so actually $n' \in N$.
Therefore, we have $n' \in N$ and $\G(n')$ holds so we have $x = (n',f_\P(n'))
\in \{d:N | \G @ \Q\}$.

\bigskip\noindent $\Longleftarrow)$ \\
\begin{gather*}
\{d:X | \G @ \Q\}
  \why{H} \\
= \{d:\{n \plus N\} | \G @ \Q\}
  \why{semantics of $\plus$ and properties of intensional sets} \\
= \{d:\{n\} | \G @ \Q\} \cup \{d:N | \G @ \Q\}
  \why{H} \\
= \{t_0\} \cup \{d:N | \G @ \Q\}
  \why{H} \\
= \{t_0\} \cup \{t_1,\dots,t_k \plus \{c:\{\bi{s} \plus N\} | \F @ \P\}\}
  \why{semantics of $\plus$} \\
= \{t_0,t_1,\dots,t_k \plus \{c:\{\bi{s} \plus N\} | \F @ \P\}\}
  \why{left absorption} \\
= \{t_0,t_0,t_1,\dots,t_k \plus \{c:\{\bi{s} \plus N\} | \F @ \P\}\}
  \why{semantics of $\plus$} \\
= \{t_0,t_1,\dots,t_k \plus \{t_0 \plus \{c:\{\bi{s} \plus N\} | \F @ \P\}\}\}
  \why{$\Q(x) = (x,f_\Q(x))$ and H: $\Q(n) = t_0$} \\
= \{t_0,t_1,\dots,t_k
    \plus \{(n,f_\Q(n)) \plus \{c:\{\bi{s} \plus N\} | \F @ \P\}\}\}
  \why{H: $\F(n) \implies \P(n) = t_0$; see ($\dagger$) below} \\
= \{t_0,t_1,\dots,t_k \plus \{c:\{n \plus \{\bi{s} \plus N\}\} | \F @ \P\}\}\}
  \why{semantics of $\plus$} \\
= \{t_0,t_1,\dots,t_k \plus \{c:\{n,\bi{s} \plus N\} | \F @ \P\}\}\}
  \why{semantics of $\plus$} \\
= \{t_0,t_1,\dots,t_k \plus \{c:\{\bi{s} \plus \{n \plus N\}\} | \F @ \P\}\}\}
  \why{H} \\
= \{t_0,t_1,\dots,t_k \plus \{c:\{\bi{s} \plus X\} | \F @ \P\}\}\}
\end{gather*}

\noindent ($\dagger$)\hspace{3mm} Note that if $\F(n)$ does not hold then
$\{c:\{n \plus \{\bi{s} \plus N\}\} | \F @ \P\} = \{c:\{\bi{s} \plus N\} | \F @
\P\}$; and if $\F(n)$ holds then $\{c:\{n \plus \{\bi{s} \plus N\}\} | \F @
\P\} = \{(n,f_\P(n))\} \cup \{c:\{\bi{s} \plus N\} | \F @ \P\} = \{t_0\} \cup
\{c:\{\bi{s} \plus N\} | \F @ \P\}$. Besides if $\F(n)$ holds then $f_\P(n) =
f_\Q(n)$ because $t_0 = \Q(n) = \P(n)$.
\end{proof}

\begin{lemma}[Equivalence of rule \eqref{app.e:v=e}]
\begin{gather*}
\begin{split}
\forall & D, y, A: \\
           & \begin{split}
             & \{x:D | \Fpv @ \Ppv\} = \{y \plus A\} \\
             & \iff \exists d, E:
                      D = \{d \plus E\} \land \Fd
                      \land y = \Pd \land \{x:E | \Fpv @ \Ppv\} = A
               \end{split}
\end{split}
\end{gather*}
\end{lemma}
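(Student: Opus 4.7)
The plan is to prove both directions of the biconditional, appealing to Proposition \ref{l:dD} (which lets us split a RIS whose domain has an explicit first element) and, crucially, to the bijectivity of the pattern $\Ppv$ stipulated in Definition \ref{RIS-terms}.

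For the forward direction ($\Longrightarrow$), I would first observe that by the hypothesis, $y \in \{x:D | \Fpv @ \Ppv\}$. Unfolding the semantics of a RIS, this yields the existence of some $d$ such that $d \in D$, $\Fd$ holds, and $y = \Pd$. Since $d \in D$, we can write $D = \{d \plus E\}$ for an appropriate set $E$ (for instance, taking $E = D \setminus \{d\}$ or, if $d$ occurs with multiplicity in the term, leaving one copy in $E$; the equational axioms $(Ab)(C\ell)$ ensure that any such choice yields the same set). The nontrivial obligation that remains is to show $\{x:E | \Fpv @ \Ppv\} = A$. Using Proposition \ref{l:dD} on the l.h.s. of the hypothesis, we get
\[
\{\Pd\} \cup \{x:E | \Fpv @ \Ppv\} = \{y \plus A\},
\]
and since $\Pd = y$, it remains to show that removing $\{y\}$ from both sides produces the desired equality. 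The argument is by double inclusion: take any $w \in \{x:E | \Fpv @ \Ppv\}$, so $w = \Ppv(e)$ for some $e \in E$ with $\Fpv(e)$; clearly $w$ belongs to the l.h.s., hence to $\{y \plus A\}$, and if $w = y = \Pd$, then by bijectivity of $\Ppv$ over its effective domain we would have $e = d$, contradicting the way $E$ was chosen. Hence $w \in A$. The reverse inclusion is symmetric.

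For the backward direction ($\Longleftarrow$), the witnesses $d$ and $E$ are given, together with $D = \{d \plus E\}$, $\Fd$, $y = \Pd$, and $\{x:E | \Fpv @ \Ppv\} = A$. A direct computation yields the claim:
\begin{gather*}
\{x:D | \Fpv @ \Ppv\}
= \{x:\{d \plus E\} | \Fpv @ \Ppv\} \\
= \{\Pd : \Fd\} \cup \{x:E | \Fpv @ \Ppv\}
= \{y\} \cup A = \{y \plus A\},
\end{gather*}
using Proposition \ref{l:dD} in the second step, $\Fd \land y = \Pd$ in the third, and the hypothesis on $A$.

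The main obstacle is the forward direction, and specifically the need to argue that when we ``peel off'' the element $d$ from the domain $D$, we remove exactly the single element $y$ from the RIS, leaving precisely $A$. This is exactly the reason why bijectivity of patterns is required in Definition \ref{RIS-terms} and discussed in Section \ref{sec:controlPatterns}: without it, several distinct $e \in E$ could map via $\Ppv$ to the same value $y$, so $\{x:E | \Fpv @ \Ppv\}$ could still contain $y$ and fail to equal $A$, breaking the equisatisfiability. Once bijectivity is invoked to rule this out, the rest is a routine unfolding of the set-theoretic semantics of RIS.
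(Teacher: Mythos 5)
There is a genuine gap in your forward direction: the claim that ``the reverse inclusion is symmetric'' is false, and it fails exactly in the case the paper treats separately, namely when $y \in A$. Your argument fixes the witness $E = D \setminus \{d\}$ and then, for $w \in \{x:E | \Fpv @ \Ppv\}$, uses bijectivity plus $e \in E$ (so $e \neq d$) to conclude $w \neq y$, hence $w \in A$. But for the inclusion $A \subseteq \{x:E | \Fpv @ \Ppv\}$ you must take $w \in A$, obtain a preimage $a \in D$ with $\Fpv(a)$ and $\Ppv(a) = w$, and argue $a \neq d$; the only way to do that is to know $w \neq y$, which holds only if $y \notin A$. A concrete counterexample to your choice of witness: let $\Fpv$ be $\true$, $\Ppv$ the identity, $D = \{d\}$, $y = d$ and $A = \{d\}$. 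Then $\{x:D | \Fpv @ \Ppv\} = \{d\} = \{y \plus A\}$, so the left-hand side of the lemma holds, but $E = D \setminus \{d\} = \e$ gives $\{x:E | \Fpv @ \Ppv\} = \e \neq A$. The lemma is still true --- one must instead choose $E = D$, so that $\{d \plus E\} = D$ (since $d \in D$) and $\{y \plus A\} = A$ by left absorption, reducing the goal to the hypothesis itself. This is precisely the case split ($y \notin A$ versus $y \in A$) that the paper's proof performs and that your proof omits. Your worry about the multiplicity of $d$ in the \emph{term} denoting $D$ is a red herring: sets are interpreted extensionally, so $D \setminus \{d\}$ is unambiguous; the genuine duplicate problem sits on the other side of the equation, in $\{y \plus A\}$ when $y \in A$.

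Apart from this, your proof follows the paper's route: the backward direction is the same direct computation via Proposition \ref{l:dD}, and your use of bijectivity of $\Ppv$ in the forward inclusion $\{x:E | \Fpv @ \Ppv\} \subseteq A$ is exactly the paper's argument. Patching the proof only requires adding the second case with the alternative witness $E = D$.
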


\begin{proof}
\mbox{}\\
\noindent $\implies)$ \\
By H, $y \in \{x:D | \Fpv @ \Ppv\}$; then:
\begin{equation}\label{p:3}
\exists d: d \in D \land \Fd \land \Pd = y \tag{$\exists$1}
\end{equation}
Then we have proved that $\Fd \land \Pd = y$ holds.

It remains to be proved the existence of $E$ and $\{x:E | \Fpv @ \Ppv\} = A$. To this end, the proof is divided in two cases.

In the first case we assume $y \notin A$. Hence, take $E = D \setminus \{d\}$, which verifies $D = \{d \plus E\}$. We will prove $\{x:E | \Fpv @ \Ppv\} = A$ by proving that:
\[
\{x:E | \Fpv @ \Ppv\} \subseteq A \land A \subseteq \{x:E | \Fpv @ \Ppv\}
\]
\begin{itemize}
\item $\{x:E | \Fpv @ \Ppv\} \subseteq A$. Take any $w \in \{x:E | \Fpv @ \Ppv\}$; then
\begin{equation}\label{p:2}
\exists a: a \in E \land \Fpv(a) \land \Ppv(a) = w \tag{$\exists$2}
\end{equation}
As $D = \{d \plus E\}$ then $a \in D$ which implies $w \in \{x:D | \Fpv @
\Ppv\}$ which implies $w \in \{y \plus A\}$, by H. Since $\P$ is a bijective
pattern, then $\Ppv(a) \neq \Ppv(d)$ because $a \neq d$ because $a \in E = D
\setminus \{d\}$. Given that $\Ppv(d) = y$ \by{\eqref{p:3}} and $\Ppv(a) = w$
\by{\eqref{p:2}}, then $w \neq y$, which implies that $w \in A$.

\item $A \subseteq \{x:E | \Fpv @ \Ppv\}$. Take any $w \in A$;
then, by assumption of this case,  $w \neq y$ $(*)$ and $w \in \{x:E | \Fpv @ \Ppv\}$. Hence:
\begin{equation}\label{p:1}
\exists a: a \in D \land \Fpv(a) \land \Ppv(a) = w \tag{$\exists$3}
\end{equation}
So we need to prove that $a \in E$, which by \eqref{p:1} will imply that $w \in
\{x:E | \Fpv @ \Ppv\}$, which will prove this branch.

Given that $D = \{d \plus E\}$ and that $a \in D$, then $a \in E$ iff $a \neq
d$. If $a = d$ then $\Ppv(a) = \Ppv(d)$, then $w =
y$ because $\Ppv(a) = w$ \by{\eqref{p:1}} and $\Ppv(d) = y$
\by{\eqref{p:3}}. And if $w = y$ then there is a contradiction with $(*)$.
Therefore, $a \neq d$ and so $a \in E$.
\end{itemize}

In the second case we assume $y \in A$. Hence, take $E = D$, then $\{d \plus E\} = \{d \plus D\} = D$ because $d \in D$. If  $y \in A$, then $\{y \plus A\} = A$ \by{semantics of $\plus$}. So, by H, we have:
\begin{gather*}
\{x:D | \Fpv @ \Ppv\} = \{y \plus A\} \\
\iff \why{$\{y \plus A\} = A$} \\
\{x:D | \Fpv @ \Ppv\} = A \\
\iff \why{$D = E$} \\
\{x:E | \Fpv @ \Ppv\} = A
\end{gather*}

\noindent $\Longleftarrow)$ \\
By H, let $d$ and $E$ be such that:
\begin{equation}\label{p:4}
D = \{d \plus E\} \land \Fd \land \Pd = y \land \{x:E | \Fpv @ \Ppv\} = A
\end{equation}
Now:
\begin{gather*}
\{x:D | \Fpv @ \Ppv\} = \{y \plus A\} \\
\iff \{x:\{d \plus E\} | \Fpv @ \Ppv\} = \{y \plus A\} \why{$D = \{d \plus E\}$ in \eqref{p:4}} \\
\iff \{\Pd : \Fd\} \cup \{x:E | \Fpv @ \Ppv\} = \{y \plus A\} \why{Lemma \ref{l:dD}} \\
\iff \{\Pd\} \cup \{x:E | \Fpv @ \Ppv\} = \{y \plus A\} \why{$\Fd$ in \eqref{p:4}} \\
\iff \{y\} \cup A  = \{y \plus A\} \why{$\{x:E | \Fpv @ \Ppv\} = A$ in \eqref{p:4}} \\
\iff \true
\end{gather*}
\end{proof}

The equivalence of the rule for $\neq$ (i.e., rule \eqref{e:neq}) is trivial
because this rule applies the definition of set disequality which is given in
terms of $\in$ and $\notin$.

\begin{lemma}[Equivalence of rule \eqref{app.in:V}]
\begin{gather*}
\begin{split}
\forall & D, y: \\
& y \in \{x:D | \F @ \P\} \iff \exists d: d \in D \land \Fd \land y = \Pd
\end{split}
\end{gather*}
\end{lemma}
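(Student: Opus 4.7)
The plan is to prove both directions of the biconditional directly from the interpretation of RIS terms given in Definition \ref{def:int_funct}. Recall that
\[
\{x:D \mid \F @ \P\}^\iS = \{z : \exists x\, (x \in_\Ur D^\iS \land \F(x) \land z =_\Ur \P(x))\},
\]
so the lemma is essentially a syntactic restatement of this semantic clause.

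For the forward direction ($\Rightarrow$), I would assume $y \in \{x:D \mid \F @ \P\}$. Unfolding the interpretation of the RIS term, $y^\iS$ belongs to $\{z : \exists x\,(x \in D^\iS \land \F(x) \land z = \P(x))\}$. By definition of set-builder, this means there exists some element witnessing the existential; calling this witness $d$, we have $d \in D$, $\F(d)$, and $y = \P(d)$, which is the right-hand side.

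For the backward direction ($\Leftarrow$), I would assume the existence of $d$ with $d \in D$, $\F(d)$, and $y = \P(d)$. Then $d$ is a witness of the existential $\exists x\,(x \in D \land \F(x) \land y = \P(x))$, so by the interpretation of the RIS term, $y \in \{x:D \mid \F @ \P\}$.

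There is no real technical obstacle here: the lemma is essentially a direct reading of the denotational semantics of RIS, re-expressed as a rewrite rule. The only subtlety worth noting is the use of $=_\Ur$ and $\in_\Ur$ on the right-hand side, whose properties (equality and membership over the parameter theory $\Ur$) are assumed to be handled by $\SATX$; no additional set-theoretic machinery is required for this particular rule.
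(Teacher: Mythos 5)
Your proof is correct and takes the same approach as the paper, which simply notes that the rule is the definition of membership in an intensional set; you have just spelled out the two directions of that observation explicitly.
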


\begin{proof}
The proof is trivial since this is the definition of set membership w.r.t. an
intensional set.
\end{proof}

\begin{lemma}[Equivalence of rule \eqref{app.nin:nV}]
\begin{gather*}
\begin{split}
\forall & t, d, D: \\
& t \notin \{\{d \plus D\} | \F @ \P\} \\
& \begin{split}
  \iff & \Fdd \land t \neq \Pdd \land t \notin \{D | \F @ \P\} \\
       & \lor \lnot \nFdd \land t \notin \{D | \F @ \P\}
\end{split}
\end{split}
\end{gather*}
\end{lemma}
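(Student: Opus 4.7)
The plan is to prove this lemma as a routine consequence of Proposition \ref{l:dD} together with a case split on whether the filter $\F$ holds at $d$. I would first apply Proposition \ref{l:dD} to rewrite the RIS on the left-hand side as the union
\[
\{\Pd : \Fd\} \cup \{x : D \mid \F @ \P\},
\]
and then use the fact that $t \notin A \cup B \iff t \notin A \land t \notin B$ to reduce the constraint to $t \notin \{\Pd : \Fd\} \land t \notin \{x:D | \F @ \P\}$.

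Next I would split on $\Fd$. If $\Fd$ holds, then $\{\Pd : \Fd\}$ collapses to $\{\Pd\}$, so $t \notin \{\Pd : \Fd\}$ is equivalent to $t \neq \Pd$; this yields the first disjunct $\Fd \land t \neq \Pd \land t \notin \{D | \F @ \P\}$. If $\lnot \Fd$ holds, then $\{\Pd : \Fd\}$ is $\emptyset$, so the first conjunct is trivially true, and we obtain the second disjunct $\lnot \Fd \land t \notin \{D | \F @ \P\}$. Since $\Fd \lor \lnot \Fd$ is a tautology, the disjunction of these two cases is equivalent to the original constraint.

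For the converse direction, I would assume one of the two disjuncts and reconstruct $t \notin \{\{d \plus D\} | \F @ \P\}$ by reversing the steps above: in either case, $t$ is not in $\{\Pd : \Fd\}$ (since in the first case $t \neq \Pd$ and $\Fd$; in the second case the set is empty) and $t$ is not in $\{x:D | \F @ \P\}$, so by Proposition \ref{l:dD} it is not in the union.

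I expect no serious obstacle here; this lemma is essentially a straightforward propagation of $\notin$ through the union decomposition of Proposition \ref{l:dD}. The only point requiring minor care is that the two disjuncts on the right-hand side are not syntactically exclusive at first glance (they share the conjunct $t \notin \{D | \F @ \P\}$), but the guard $\Fd$ vs.\ $\lnot \Fd$ makes them semantically exclusive and jointly exhaustive, which is exactly what is needed for the equivalence to hold regardless of whether $\Fd$ is evaluated by the underlying $\SATX$ solver.
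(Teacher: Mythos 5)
Your proof is correct, and it reaches the same essential content as the paper's — single out the element $d$ and case-split on $\Fd$ — but by a different mechanism. The paper proves this lemma by unfolding the quantified semantics of $\notin$ directly: it rewrites $t \notin \{\{d \plus D\} | \F @ \P\}$ as a universally quantified $\Ur$-formula over the domain, splits the quantification into the $x = d$ instance and the $x \in D$ part, factors the $d$-instance out of the quantifier, and then recognizes the residual universal statement as $t \notin \{D | \F @ \P\}$. You instead work at the level of sets: you invoke Proposition \ref{l:dD} to decompose the RIS as $\{\Pd : \Fd\} \cup \{x:D | \F @ \P\}$, distribute $\notin$ over the union, and then collapse $\{\Pd : \Fd\}$ to either $\{\Pd\}$ or $\e$ according to $\Fd$. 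Your route is exactly the one the paper itself uses for the analogous disjointness rule \eqref{disj:rris}, so it is well precedented within the paper; it has the advantage of reusing an already-established decomposition lemma and avoiding explicit quantifier manipulation, at the cost of relying on the set-algebraic identity $t \notin A \cup B \iff t \notin A \land t \notin B$ rather than staying inside the first-order semantics. Both arguments are sound, and your closing remark about the two disjuncts being made exclusive and exhaustive by the guards $\Fd$ versus $\lnot\Fd$ is accurate and is precisely what licenses the final equivalence.
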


\begin{proof}
\begin{gather*}
t \notin \{\{d \plus D\} | \F @ \P\} \\
\iff \forall x: x \in \{d \plus D\} \land \Fz \land t \neq \Pz
     \why{RIS semantics} \\
\begin{split}
\iff \forall x: &
       \Fd \land t \neq \Pd \land x \in D \land \Fz \land t \neq \Pz \\
     & \lor \lnot\Fd \land x \in D \land \Fz \land t \neq \Pz
\end{split} \why{single out $d$} \\
\begin{split}
\iff & \Fd \land t \neq \Pd \land \forall x:
       x \in D \land \Fz \land t \neq \Pz \\
     & \lor \lnot\Fd
            \land \forall x: x \in D \land \Fz \land t \neq \Pz
\end{split} \\
\begin{split}
\iff & \Fd \land t \neq \Pd \land t \notin \{D | \F @ \P\} \\
     & \lor \lnot\Fd
            \land t \notin \{D | \F @ \P\}
\end{split} \\
\end{gather*}
\end{proof}

Concerning the $\Cup$ constraint, since rules
\eqref{un:equalvars}-\eqref{un:ext} are extensions of the rules presented by
\cite{Dovier00}, the corresponding proofs are trivial. Rule \eqref{un:ris} is
the only one that truly processes non-variable RIS.

\begin{lemma}[Equivalence of rule \eqref{un:ris}]
If at least one of $A,B,C$ is not a variable nor a variable-RIS:
\begin{gather*}
\Cup(A, B, C) \iff
    \Cup(\svf(A), \svf(B), \svf(C))
    \land \cvf(A) \land \cvf(B) \land \cvf(C)
\end{gather*}
where $\svf$ is a set-valued function
               and $\cvf$ is a constraint-valued function
\begin{gather*}
\svf(\sigma) =
\begin{cases}
N_\sigma & \text{if $\sigma \equiv\defris{\set{d}{D}}$} \notag \\
\sigma & \text{otherwise}
\end{cases} \\[2mm]
\cvf(\sigma) =
\begin{cases}
N_\sigma = (\set{\Pdd}{\defris{D}} \land \Fdd)       & \text{if $\sigma \equiv\defris{\set{d}{D}}$} \notag \\
\quad{}\lor (N_\sigma = \defris{D} \land \lnot \Fdd) & \\
\true & \text{otherwise}
\end{cases}
\end{gather*}
\end{lemma}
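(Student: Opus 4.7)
The plan is to reduce the equivalence to a simpler per-term claim. For every non-variable RIS $\sigma \equiv \{x : \{d \plus D\} | \F @ \P\}$ appearing among $A$, $B$, $C$, the rule introduces a fresh set variable $N_\sigma$, replaces $\sigma$ by $N_\sigma$ inside the $\Cup$ atom, and conjoins the defining constraint $\cvf(\sigma)$, which states (as a disjunction on the truth value of $\F(d)$) that $N_\sigma = \sigma$. Using the elementary equisatisfiability $\varphi(t) \iff \exists N\, (N = t \land \varphi(N))$ for a fresh set variable $N$ and an arbitrary set term $t$, it suffices to prove, for each non-variable RIS $\sigma$, the local equivalence
\begin{equation*}
 N = \sigma
 \iff
 (N = \{\P(d) \plus \{x : D | \F @ \P\}\} \land \F(d))
 \lor
 (N = \{x : D | \F @ \P\} \land \lnot \F(d)).
\end{equation*}
For arguments that are variables or variable-RIS, $\svf$ acts as the identity and $\cvf$ evaluates to $\true$, so there is nothing to show for those positions.

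To establish the local equivalence I would invoke Proposition \ref{l:dD}, which gives $\sigma = \{\P(d) | \F(d)\} \cup \{x : D | \F @ \P\}$, and then case-split on $\F(d)$. If $\F(d)$ holds, then $\{\P(d) | \F(d)\} = \{\P(d)\}$, so by the semantics of $\plus$ we get $\sigma = \{\P(d) \plus \{x : D | \F @ \P\}\}$, matching the first disjunct. If $\lnot \F(d)$ holds, then $\{\P(d) | \F(d)\} = \emptyset$ and $\sigma = \{x : D | \F @ \P\}$, matching the second disjunct. The two cases are exhaustive and mutually exclusive, hence $N = \sigma$ is equivalent to the stated disjunction.

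The main obstacle I expect is not in the per-term case, which is essentially a restatement of Proposition \ref{l:dD}, but in the simultaneous treatment of the three arguments: when two or more of $A$, $B$, $C$ are non-variable RIS, the fresh variables $N_\sigma$ must be pairwise distinct and distinct from every variable already occurring in the input. Provided they are chosen that way, the substitutions are independent and the global equivalence follows by applying the local argument once per non-variable RIS argument, in any order. Equisatisfiability rather than pointwise identity of solution sets is what is required, and it is preserved because each $N_\sigma$ is effectively existentially quantified through its defining constraint, and any solution of $\Cup(A,B,C)$ extends uniquely to the $N_\sigma$ by assigning $N_\sigma$ the value of $\sigma$ under that solution.
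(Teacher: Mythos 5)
Your proof is correct and follows essentially the same route as the paper's: both reduce the rule to the observation that $\cvf(\sigma)$ asserts $N_\sigma = \sigma$ via the case split on $\F(d)$, justified by Proposition \ref{l:dD}. The only difference is presentational --- the paper fully expands one representative of the seven argument configurations into a four-way disjunction and declares the rest similar, whereas you treat each argument position uniformly via fresh-variable introduction, which covers all cases at once.
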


\begin{proof}
Actually, this rule covers several cases that could have been written as different rules. Indeed, this rule applies whenever at least one of its arguments is a non variable-RIS. This means that there are seven cases where this rule is applied.

We will prove the equivalence for one of this seven cases because all the other are proved in a similar way. The case to be considered is the following:
\begin{equation}\label{e:risunex}
\Cup(\defris{\set{d}{D}},B,\riss{\set{e}{E}}{\gamma}{v})
\end{equation}
that is, the first and third arguments are non-variable RIS while the second is a variable.

In this case the rule rewrites \eqref{e:risunex} as follows:
\begin{gather*}
\Cup(\svf(\defris{\set{d}{D}}),\svf(B),\svf(\riss{\set{e}{E}}{\gamma}{v})) \\
{}\quad\land \cvf(\defris{\set{d}{D}}) \\
{}\quad\land \cvf(B) \\
{}\quad\land \cvf(\riss{\set{e}{E}}{\gamma}{v}) \iff
    \why{applying the definition of $\svf$ and $\cvf$} \\
\Cup(N_1,B,N_2) \\
{}\quad\land (N_1 = \set{\Pdd}{\defris{D}} \land \Fdd
              \lor N_1 = \defris{D} \land \lnot\Fdd) \\
{}\quad\land \true \\
{}\quad\land (N_2 = \set{v(e)}{\riss{E}{\gamma}{v}} \land \gamma(e)
              \lor N_2 = \riss{E}{\gamma}{v} \land \lnot\gamma(e))
\end{gather*}

Hence, after performing some trivial simplifications we have:
\begin{gather*}
\Fdd
  \land \gamma(e)
  \land \Cup(\set{\Pdd}{\defris{D}},B,\set{v(e)}{\riss{E}{\gamma}{v}}) \\
\lor \\
\Fdd
  \land \lnot\gamma(e)
  \land \Cup(\set{\Pdd}{\defris{D}},B,\riss{E}{\gamma}{v}) \\
\lor \\
\lnot\Fdd
  \land \gamma(e)
  \land \Cup(\defris{D},B,\set{v(e)}{\riss{E}{\gamma}{v}}) \\
\lor \\
\lnot\Fdd
  \land \lnot\gamma(e)
  \land \Cup(\defris{D},B,\riss{E}{\gamma}{v})
\end{gather*}
That is, $\svf$ and $\cvf$ simply transform each non variable-RIS into an extensional set or another RIS depending on whether the `first' element of each domain is true of the corresponding filter or not. In this way, a disjunction covering all the possible combinations is generated.
\end{proof}

As with union, the only rules for $\Disj$ that truly deal with non trivial RIS terms are \eqref{disj:rris} and \eqref{disj:rrisSym}. However, given that they are symmetric, we only prove the equivalence for the first one.

\begin{lemma}[Equivalence of rule \eqref{disj:rris}]
\begin{gather*}
\begin{split}
A \disj & \riss{\set{d}{D}}{\Fpv}{\Ppv} \iff \\
        & \Fdd \land \Pdd \notin A \land A \disj \riss{D}{\Fpv}{\Ppv}
          \lor \lnot\Fdd \land A \disj \riss{D}{\Fpv}{\Ppv}
\end{split}
\end{gather*}
\end{lemma}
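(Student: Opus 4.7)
The plan is to reduce $\disj$ to its set-theoretic meaning, use Proposition~\ref{l:dD} to split the RIS according to whether the filter holds for the distinguished domain element $d$, and then recombine. Concretely, I will start from the semantic reading of the left-hand side,
\[
A \disj \riss{\set{d}{D}}{\Fpv}{\Ppv}
 \iff A \cap \riss{\set{d}{D}}{\Fpv}{\Ppv} = \emptyset,
\]
and invoke Proposition~\ref{l:dD} to rewrite the RIS on the right as the union $\{\Pd : \Fd\} \cup \riss{D}{\Fpv}{\Ppv}$. Distributing intersection over union, disjointness then becomes the conjunction
\[
A \cap \{\Pd : \Fd\} = \emptyset
\;\land\; A \cap \riss{D}{\Fpv}{\Ppv} = \emptyset,
\]
i.e., $A \disj \{\Pd : \Fd\} \land A \disj \riss{D}{\Fpv}{\Ppv}$.

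Next I would dispatch the first conjunct by case analysis on the truth value of $\Fd$. If $\Fd$ holds then $\{\Pd : \Fd\} = \{\Pd\}$, so $A \disj \{\Pd\}$ is equivalent to $\Pd \notin A$; if $\lnot\Fd$ holds then $\{\Pd : \Fd\} = \emptyset$, so $A \disj \emptyset$ is $\true$ and may be dropped. Combining with the persistent conjunct $A \disj \riss{D}{\Fpv}{\Ppv}$ produces exactly the two branches
\[
\Fd \land \Pd \notin A \land A \disj \riss{D}{\Fpv}{\Ppv}
\;\lor\;
\lnot\Fd \land A \disj \riss{D}{\Fpv}{\Ppv},
\]
which is the right-hand side of the rule. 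Since every step is a logical equivalence, the reverse direction requires no separate argument: in either branch on the right, one reconstructs the original conjunction and thence the disjointness with the whole RIS via Proposition~\ref{l:dD}.

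I do not foresee a significant obstacle here: the only subtlety is to justify that $\{\Pd : \Fd\}$ is really either $\{\Pd\}$ or $\emptyset$ (not a set that might fail to contain $\Pd$ for some other reason), which is immediate from the interpretation of intensional sets given in Definition~\ref{def:int_funct}. The rest is classical reasoning about intersection, union and disjointness, together with the already-established semantic characterisation of RIS.
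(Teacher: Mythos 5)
Your proposal is correct and follows essentially the same route as the paper's proof: apply Proposition~\ref{l:dD} to peel off $\{\Pd : \Fd\}$, distribute disjointness over the union, case-split on $\Fd$ to reduce the singleton comprehension to $\{\Pd\}$ or $\emptyset$, and recombine into the two branches. The only cosmetic difference is that you phrase disjointness via $A \cap \cdot = \emptyset$ whereas the paper manipulates the $\disj$ predicate directly.
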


\begin{proof}
\begin{gather*}
A \disj \riss{\set{d}{D}}{\Fpv}{\Ppv} \\
\iff A \disj (\{\Pdd | \Fdd\} \cup \{\Pz | x \in D \land \Fz\}) \why{Lemma \ref{l:dD}} \\
\iff A \disj \{\Pdd | \Fdd\} \land A \disj \{\Pz | x \in D \land \Fz\} \why{distribution} \\
\iff (\Fdd \land A \disj \{\Pdd\} \lor \lnot \Fdd \land A \disj \e)
     \land A \disj \{\Pz | x \in D \land \Fz\} \why{singleton comprehension}\\
\iff (\Fdd \land \Pdd \notin A \lor \lnot \Fdd \land A \disj \e)
     \land A \disj \{\Pz | x \in D \land \Fz\} \why{disjointness singleton}\\
\iff \Fdd \land \Pdd \notin A \land A \disj \{\Pz | x \in D \land \Fz\} \why{distribution} \\
{}\quad\lor \lnot\Fdd \land A \disj \{\Pz | x \in D \land \Fz\} \why{RIS semantics} \\
\iff \Fdd \land \Pdd \notin A \land A \disj \riss{D}{\Fpv}{\Ppv}
     \lor \lnot\Fdd \land A \disj \riss{D}{\Fpv}{\Ppv}
\end{gather*}
\end{proof}

The last theorems concern the specialized rules for RUQ given in Figure \ref{f:forall}. As can be seen the only one that does not have a trivial proof is the following.

\begin{lemma}[Equivalence of rule \eqref{forall:iter}]
\begin{gather*}
\forall t, A: t \notin A \implies \\
\quad
\set{t}{A} \cup \risnocp{x:\set{t}{A}}{\Fpv} = \risnocp{x:\set{t}{A}}{\Fpv}
  \iff \Fpv(t) \land A \cup \risnocp{x:A}{\Fpv} = \risnocp{x:A}{\Fpv}
\end{gather*}
\end{lemma}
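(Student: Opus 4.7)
My plan is to reduce both sides of the biconditional to restricted universal quantifications via the identity $B \cup C = C \iff B \subseteq C$, and then apply Proposition \ref{prop:1} which states $D \subseteq \risnocp{x:D}{\Fpv} \iff \forall x(x \in D \implies \Fpv(x))$. This is already the semantic content of the rule: a ``$\Forall$'' predicate on $\set{t}{A}$ should peel off one element and recurse on $A$, provided $\Fpv$ holds on the peeled element.

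\textbf{Step 1.} Rewrite the left-hand side. Since $B \cup C = C$ is logically equivalent to $B \subseteq C$, the equation
\[
\set{t}{A} \cup \risnocp{x:\set{t}{A}}{\Fpv} = \risnocp{x:\set{t}{A}}{\Fpv}
\]
is equivalent to $\set{t}{A} \subseteq \risnocp{x:\set{t}{A}}{\Fpv}$. By Proposition \ref{prop:1}, this is in turn equivalent to $\forall x(x \in \set{t}{A} \implies \Fpv(x))$. Expanding membership in $\set{t}{A}$ as $x = t \lor x \in A$ and distributing, I get
\[
\Fpv(t) \,\land\, \forall x(x \in A \implies \Fpv(x)).
\]

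\textbf{Step 2.} Rewrite the right-hand side symmetrically. The conjunct $A \cup \risnocp{x:A}{\Fpv} = \risnocp{x:A}{\Fpv}$ is equivalent to $A \subseteq \risnocp{x:A}{\Fpv}$, which by Proposition \ref{prop:1} is equivalent to $\forall x(x \in A \implies \Fpv(x))$. Conjoining with $\Fpv(t)$ yields exactly the formula produced in Step 1, so the two sides are equivalent. Note that the hypothesis $t \notin A$ is not actually needed for the logical equivalence itself; it plays its role elsewhere, namely in ensuring that the rewriting strictly decreases the size of the domain and thereby in the termination argument (Appendix \ref{ap:termination}), not in equisatisfiability.

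\textbf{Main obstacle.} There is no real obstacle: the proof is essentially a two-line chase through Proposition \ref{prop:1} and the elementary identity $B \cup C = C \iff B \subseteq C$. The only subtlety worth flagging is that both sides quietly re-use the \emph{same} filter $\Fpv$ on a smaller RIS (with control variable $x$ rebound to range over $A$ rather than $\set{t}{A}$); since $\Fpv$ mentions only the bound variable $x$ and the free variables of the ambient formula, the substitution of the domain does not affect its meaning, and the decomposition $\forall x \in \set{t}{A}: \Fpv(x) \iff \Fpv(t) \land \forall x \in A: \Fpv(x)$ is unambiguous.
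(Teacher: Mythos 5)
Your proof is correct, but it takes a genuinely different route from the paper's. The paper argues purely set-algebraically: it decomposes $\risnocp{x:\set{t}{A}}{\Fpv}$ as $\risnocp{x:\{t\}}{\Fpv} \cup \risnocp{x:A}{\Fpv}$ via Proposition \ref{l:dD}, absorbs the first piece using $\risnocp{x:\{t\}}{\Fpv} \subseteq \{t\}$, converts $\Fpv(t)$ into the identity $\risnocp{x:\{t\}}{\Fpv} = \{t\}$, and finally cancels $\{t\}$ from both sides of the union equation --- a step that genuinely uses the hypothesis $t \notin A$ to get the disjointness needed for cancellation. You instead translate both union equations into subset relations via $B \cup C = C \iff B \subseteq C$, then into restricted universal quantifications via Proposition \ref{prop:1}, and finish by splitting $\forall x \in \set{t}{A}$ into $\Fpv(t) \land \forall x \in A$. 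Your route is shorter and more ``logical'' than ``algebraic'', and it buys a small extra insight the paper's proof does not deliver: the equivalence holds even when $t \in A$ (since then $\forall x \in A : \Fpv(x)$ already entails $\Fpv(t)$), so the hypothesis $t \notin A$ is indeed only doing work for termination, not for equisatisfiability. Both arguments are valid; the paper's has the virtue of staying entirely within the set-identity toolkit used for the other rule-equivalence lemmas.
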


\begin{proof}
First note that
\begin{equation}\label{eq:disjt}
t \notin A \implies \{t\} \disj A \land \{t\} \disj \risnocp{A}{\Fpv}
\end{equation}
and
\begin{equation}\label{eq:sub1}
\risnocp{x:\{t\}}{\Fpv} \subseteq \{t\}
\end{equation}
and
\begin{equation}\label{eq:teqphit}
\risnocp{x:\{t\}}{\Fpv} = \{t\} \iff \Fpv(t)
\end{equation}

\begin{gather*}
\set{t}{A} \cup \risnocp{x:\set{t}{A}}{\Fpv} = \risnocp{x:\set{t}{A}}{\Fpv} \\
\iff \{t\} \cup A \cup \risnocp{x:\{t\}}{\Fpv} \cup \risnocp{x:A}{\Fpv}
     = \risnocp{x:\{t\}}{\Fpv} \cup \risnocp{x:A}{\Fpv}
       \why{Lemma \ref{l:dD}; semantics $\plus$} \\
\iff \{t\} \cup A \cup \risnocp{x:A}{\Fpv}
     = \risnocp{x:\{t\}}{\Fpv} \cup \risnocp{x:A}{\Fpv}
       \why{\eqref{eq:sub1}; $\{t\}$ in left-hand side} \\
\iff \Fpv(t) \land \{t\} \cup A \cup \risnocp{x:A}{\Fpv}
     = \{t\} \cup \risnocp{x:A}{\Fpv}
       \why{\eqref{eq:teqphit}} \\
\iff \Fpv(t) \land A \cup \risnocp{x:A}{\Fpv}
     = \risnocp{x:A}{\Fpv}
       \why{\eqref{eq:disjt}; basic property disjointedness and union}
\end{gather*}
\end{proof}

\end{document}